\documentclass[a4paper,UKenglish,cleveref,autoref,thm-restate]{lipics-v2021}

\newboolean{long}

\setboolean{long}{false} %
\ifthenelse{\boolean{long}}{
	\NewDocumentEnvironment{prooflater}{m}{\begin{proof}}{\end{proof}\ignorespacesafterend}
	\NewDocumentEnvironment{proofsketch}{o +b}{}{\ignorespacesafterend}
	\newcommand{\restateref}[1]{}
	\NewDocumentEnvironment{statelater}{m}{}{}

	\NewDocumentCommand{\onlyShort}{+m}{}
	\NewDocumentCommand{\onlyLong}{+m}{#1}
	
}{
	\NewDocumentEnvironment{prooflater}{m +b}{%
		\expandafter\global\expandafter\def\csname#1\endcsname{\begin{proof}#2\end{proof}}%
	}{\ignorespacesafterend}
	\NewDocumentEnvironment{proofsketch}{O{Proof sketch.}}{\begin{proof}[#1]}{\end{proof}\ignorespacesafterend}
	\usepackage{apptools}
    \newcommand{\restateref}[1]{}

	\NewDocumentEnvironment{statelater}{m +b}{%
		\expandafter\global\expandafter\def\csname#1\endcsname{#2}%
	}{\ignorespacesafterend}

	\NewDocumentCommand{\onlyShort}{+m}{#1}
	\NewDocumentCommand{\onlyLong}{+m}{}
	
}

\pdfoutput=1 %
\hideLIPIcs  %

\graphicspath{{./graphics/}}%

\title{The (Parameterized) Complexity of Ordering a Graph While Avoiding a Forbidden Pattern}

\titlerunning{The Complexity of Ordering a Graph While Avoiding a Forbidden Pattern} %

\author{Thomas Depian}{TU Wien, Vienna, Austria}{tdepian@ac.tuwien.ac.at}{https://orcid.org/0009-0003-7498-6271}{Supported by the Vienna Science and Technology Fund (WWTF) [10.47379/ICT22029].}
\author{Simon D.~Fink}{TU Wien, Vienna, Austria}{sfink@ac.tuwien.ac.at}{https://orcid.org/0000-0002-2754-1195}{Supported by the Vienna Science and Technology Fund (WWTF) [10.47379/ICT22029].}
\author{Alexander Firbas}{TU Wien, Vienna, Austria}{afirbas@ac.tuwien.ac.at}{https://orcid.org/0009-0007-2049-2144}{Supported by the Vienna Science and Technology Fund (WWTF) [10.47379/ICT22029] and the Austrian Science Fund (FWF) [10.55776/Y1329].}
\author{Robert Ganian}{TU Wien, Vienna, Austria}{rganian@ac.tuwien.ac.at}{https://orcid.org/0000-0002-7762-8045}{Supported by the Vienna Science and Technology Fund (WWTF) [10.47379/ICT22029] and the Austrian Science Fund (FWF) [10.55776/Y1329].}
\author{Martin Nöllenburg}{TU Wien, Vienna, Austria}{noellenburg@ac.tuwien.ac.at}{https://orcid.org/0000-0003-0454-3937}{Supported by the Vienna Science and Technology Fund (WWTF) [10.47379/ICT22029].}
\author{Marie Diana Sieper}{Universität Würzburg, Germany}{marie.sieper@uni-wuerzburg.de}{https://orcid.org/0009-0003-7491-2811}{}

\authorrunning{T.~Depian, S.~D.~Fink, A.~Firbas, R.~Ganian, M.~Nöllenburg, and M.~D.~Sieper} %

\Copyright{Thomas Depian, Simon D.~Fink, Alexander Firbas, Robert Ganian, Martin Nöllenburg, and Marie Diana Sieper} %

\ccsdesc[500]{Theory of computation~Parameterized complexity and exact algorithms}
\ccsdesc[500]{Theory of computation~Graph algorithms analysis}

\keywords{vertex ordering, forbidden patterns, parameterized algorithms, polynomial hierarchy} %

\nolinenumbers %

\EventEditors{John Q. Open and Joan R. Access}
\EventNoEds{2}
\EventLongTitle{42nd Conference on Very Important Topics (CVIT 2016)}
\EventShortTitle{CVIT 2016}
\EventAcronym{CVIT}
\EventYear{2016}
\EventDate{December 24--27, 2016}
\EventLocation{Little Whinging, United Kingdom}
\EventLogo{}
\SeriesVolume{42}
\ArticleNo{23}

\usepackage{hyperref}
\usepackage{complexity}
\usepackage{mathtools}
\usepackage{mdframed}
\usepackage{xspace}
\usepackage{pifont}
\usepackage{cite}
\usepackage[normalem]{ulem}
\usepackage{booktabs}

\Crefname{claim}{Claim}{Claims}
\Crefname{observation}{Observation}{Observations}

\let\oldrestatable\restatable
\def\restatable{\expandafter\oldrestatable}

\newcommand{\vi}{\textsf{\textup{vi}}}
\newcommand{\vc}{\textsf{\textup{vc}}}
\newcommand{\nd}{\textsf{\textup{nd}}}

\newcommand{\XNLP}{\textsf{\textup{XNLP}}}
\DeclareMathOperator{\Image}{Im}

\newcommand{\Size}[1]{\ensuremath{\left\vert #1 \right\vert}}
\newcommand{\BigO}[1]{\ensuremath{\mathcal{O}(#1)}}
\newcommand{\probname}[1]{{\normalfont\textsc{#1}}}
\newcommand{\probnameHeader}[1]{{\textsc{#1}}}

\newcommand{\probdef}[3]{%
    \begin{mdframed}
		\probname{#1}
		\begin{description}
            \item[Given] #2
            \item[Question] #3
        \end{description}
	\end{mdframed}
}%
\newcommand{\PA}{\probname{Pattern Avoidance}\xspace}
\newcommand{\PAHeader}{\probnameHeader{Pattern Avoidance}\xspace}
\newcommand{\PAOneF}{\probname{1FF \PA}\xspace}

\newcommand{\PAOneFLong}{\probname{1-Forced Edge \PA on Forests}\xspace}
\newcommand{\PAOneFLongUnderline}{\probname{\underline{1}-\underline{F}orced Edge \PA on \underline{F}orests}\xspace}

\newcommand{\PANormalized}{\probname{Normalized}\xspace}
\newcommand{\PAOneFOneS}{\probname{1S\PAOneF}\xspace}
\newcommand{\PAOneFOneSLong}{\probname{1-Side Fixed \PAOneFLong}\xspace}
\newcommand{\PAOneFOneSLongUnderline}{\probname{\underline{1}-\underline{S}ide Fixed \PAOneFLongUnderline}\xspace}
\newcommand{\PAOneFTwoS}{\probname{2S \PA}\xspace}
\newcommand{\PAOneFTwoSLong}{\probname{2-Sides Fixed \PA}\xspace}
\newcommand{\PAOneFTwoSLongUnderline}{\probname{\underline{2}-\underline{S}ides Fixed \PA}\xspace}

\newcommand{\Instance}{\ensuremath{\mathcal{I}}\xspace}
\newcommand{\InstanceLong}{\ensuremath{(G,P)}\xspace}
\newcommand{\InstanceLongOneS}{\ensuremath{(G,P,U,\prec_U)}\xspace}
\newcommand{\PatternLong}{\ensuremath{(V(P), \prec_P, E^+(P), E^-(P))}\xspace}

\NewDocumentCommand{\Pred}{o m}{\ensuremath{\text{pred}(\IfNoValueF{#1}{#1, }#2)}\xspace}
\NewDocumentCommand{\Succ}{o m}{\ensuremath{\text{succ}(\IfNoValueF{#1}{#1, }#2)}\xspace}
\newcommand{\compress}[1]{\ensuremath{c\left(#1\right)}}
\newcommand{\multiset}[1]{\ensuremath{\mathcal{T}\left(#1\right)}}
\newcommand{\multisetsize}[1]{\ensuremath{\Size{#1}_{>0}}}
\newcommand{\eqclasses}[2]{\ensuremath{#1/{#2}}\xspace}

\Crefname{instprop}{Instance Property}{Instance Properties}
\creflabelformat{instprop}{#2{N#1}#3}
\Crefname{solprop}{Solution Property}{Solution Properties}
\creflabelformat{solprop}{#2{S#1}#3}
\Crefname{valprop}{Validity Constraint}{Validity Constraints}
\creflabelformat{valprop}{#2{V#1}#3}

\newcommand{\NewText}[1]{{\color{black}#1}}

\begin{document}

\maketitle

\begin{abstract}
In this paper, we study the \PA\ problem of determining whether a given graph~$G$ admits a linear vertex order which avoids a given {pattern $P$, i.e., a vertex sequence with some forced and forbidden edges, on every suborder}. Such patterns form a natural ordered counterpart to induced subgraphs in the order-invariant setting, and it is known that \PA\ captures a broad variety of graph problems including \textsc{Bandwidth}, \textsc{Vertex Coloring}, \textsc{Queue Number}, {and extends to vertex-deletion problems such as \textsc{Odd Cycle Transversal}}. 

We show that \PA\ is $\Sigma_2^{\P}$-complete and {furthermore} remains intractable (in both the classical and parameterized sense) even under a variety of severe restrictions to both the pattern $P$ and the graph $G$. 
{As our main contributions, we complement these lower bounds with the following tractability results, which provide a unifying framework for recognizing pattern-definable graph classes:}
\begin{itemize}
\item a fixed-parameter algorithm w.r.t.\ the vertex integrity of $G$ plus $|V(P)|$, 
\item a fixed-parameter algorithm w.r.t.\ the neighborhood diversity of $G$ plus $|E(P)|$, and
\item a polynomial algorithm for \PA\ {on forests} for almost all constant-sized patterns. 
\end{itemize}

\end{abstract}

\section{Introduction}

Characterizing graph classes by forbidden structures is a well-studied research topic in graph theory. 
Among the most prominent examples are forbidden-minor characterizations, e.g., for planar graphs (no $K_5$ or $K_{3,3}$ minor), series-parallel graphs (no $K_4$ minor), or outerplanar graphs (no $K_4$ or $K_{2,3}$ minor). 
More generally, Robertson and Seymour's famous graph minor theorem~\cite{DBLP:journals/jct/RobertsonS04} states that every minor-closed graph family has a characterization by a finite set of forbidden minors. 
But minors are not the only type of forbidden structures. 
For instance, chordal graphs are the graphs without induced cycles of length at least four, bipartite graphs are the graphs without odd cycles, %
and cographs are the graphs without an induced path $P_4$.
The strong perfect graph theorem~\cite{crst-spgt-06} %
shows that perfect graphs are exactly the graphs without induced odd holes or odd antiholes. 

In this paper, we are interested in forbidden \emph{patterns} in ordered graphs, which are ordered configurations $P$ of vertices with some vertex pairs forced to have an edge, while others are forbidden to be connected by an edge. 
A graph $G$ \emph{avoids} the pattern $P$ if it permits a linear order of its vertices $V(G)$ such that $P$ does not occur as an induced ordered subgraph in that order (for the formal definition see \Cref{sec:preliminaries}). 
The study of forbidden patterns started more than 40 years ago with characterizations of graphs with forbidden 3-vertex patterns~\cite{s-rbtgcgpigpcgnig-82,d-fos-90} which give rise to polynomial-time and often linear-time recognition algorithms, even for sets of forbidden 3-vertex patterns~\cite{hmr-owfp-14}.
Feuilloley and Habib~\cite{FH.GCF.2021} gave a characterization of graph classes that can be defined via forbidden patterns on $3$ vertices, and already for such simple patterns these classes include forests, bipartite graphs, interval graphs, chordal graphs, comparability graphs and others; for example, bipartite graphs can be characterized by excluding the ordered path on $3$ vertices~\cite{FH.GCF.2021}.
In contrast to the 3-vertex pattern case, Duffus, Ginn, and Rödl~\cite{dgr-ccosr-95} showed \NP-completeness of the recognition problem for almost all biconnected patterns.
Moreover, forbidden patterns on $4$ vertices have been studied in several works to date, albeit predominantly not in an algorithmic setting~\cite{froese_persistent_2021,davies_coloring_2023,hell_monotone_2012,feuilloley_classifying_2021}.

In the context of so-called \emph{linear graph layouts}, larger and not necessarily connected forbidden patterns also play an important role. 
For instance, it is well known that the graphs with stack number 1 or queue number 1~\cite{DujmovicW04} are exactly the graphs avoiding the 2-twist pattern of two crossing edges or the 2-rainbow pattern of two nested edges, respectively. 
For graphs with larger stack number $k$, the corresponding $k$-twist is obviously a forbidden pattern, but its absence is not a sufficient characterization of stack-number-$k$ graphs; it does provide an $\BigO{k \log k}$ upper bound, though~\cite{d-ibccg-22}. 
In contrast, graphs of queue number $k$ can be precisely characterized by forbidding the $k$-rainbow~\cite{DBLP:journals/siamcomp/HeathR92}---see also the recent work on the topic~\cite{haun_et_al:LIPIcs.STACS.2025.45}.
In terms of the computational complexity, recognizing graph classes with a forbidden pattern is generally \NP-complete
even for constant size patterns, e.g., recognizing graphs with queue number 1 (and thus avoiding a 4-vertex, 2-edge pattern) is \NP-complete~\cite{DBLP:journals/siamcomp/HeathR92}, as well as recognizing graphs of bandwidth $k$~\cite{GareyJohnson}, which are characterized by avoiding a $(k+2)$-vertex, 1-edge pattern. 

In this paper, we take a more fine-grained perspective on the complexity of \PA, i.e., the problem of recognizing graphs defined by a single forbidden pattern $P$. 
While a first natural direction to parameterize this problem is bounding the size of $P$, either in its number $|V(P)|$ of vertices or in its number $|E(P)|$ of forced and forbidden edges, the above observations immediately prevent the tractability of \PA when parameterized by the pattern size alone: even for $|V(P)| = 4$ or $|E(P)| = 1$, \PA is \NP-complete. Note that this is in contrast to recognizing graph classes defined by a bounded-size forbidden induced subgraph, which can be done in polynomial time by enumerating all induced subgraphs of the given size. In fact, as the starting point for our considerations, we list several complexity-theoretic lower bounds: %
\PA is
\begin{itemize}
\item $\Sigma_2^P$-complete (\cref{thm:sigma-completeness}, via a new reduction);
\item \XNLP-complete w.r.t.\ $|V(P)|$ even on trees of constant pathwidth (\cref{thm:bandwidth-hardness}(a)\NewText{, exploiting the known \XNLP-completeness of \probname{Bandwidth}}); and
\item \W[1]-hard w.r.t.\ the treedepth of the graph $G$, even on patterns with $|E(P)|=1$ (\cref{thm:bandwidth-hardness}(b)\NewText{, since \PA generalizes \probname{Bandwidth}}).
\end{itemize}

These lower-bound results \NewText{highlight, on the one hand, that the complexity with respect to the treedepth of $G$ plus $\Size{V(P)}$ remains open but suggest, on the other hand,} %
that %
\NewText{(more-restricted)} structural properties of the graph $G$ are needed in order to obtain fixed-parameter tractability of \PA, depending on whether one additionally bounds $|V(P)|$ or $|E(P)|$. %
Our first positive result (\cref{thm:vifpt}) is a fixed-parameter algorithm for \PA, parameterized by the vertex integrity of $G$ plus the number of vertices $|V(P)|$ in the pattern $P$.
As there are many application where the pattern contains many vertices but only few or even constantly-many edges (see above, e.g., for \textsc{Bandwidth}), we further consider the number of edges $|E(P)|$ in the pattern as parameter.
In this setting, we provide a fixed-parameter algorithm for \PA (\cref{thm:fpt-nd-mp}), parameterized by the neighborhood diversity of $G$ plus $|E(P)|$. We remark that \NewText{for} neither of these results %
\NewText{we can instead use} treedepth \NewText{as structural parameter} without a major breakthrough in our current understanding: for the latter this is because \textsc{Bandwidth} is known to remain \W[1]-hard w.r.t.\ the treedepth of $G$, while for the former this would immediately yield a fixed-parameter algorithm for \textsc{Queue Number}---a well-established open problem in the field~\cite{bgmn-paql-22}. %

Lastly, when restricting the graph $G$ to be a forest, we obtain an \XP\ algorithm for \PA\ for almost all patterns (\Cref{thm:non-mixed})---in particular, for all patterns except those with a single forced edge both of whose endpoints overlap with forbidden edges.
For this remaining open case, we provide a hardness reduction which rules out \XP-tractability w.r.t.\ $|V(P)|$ via the techniques developed towards our proof of \Cref{thm:non-mixed}---in particular, we show that testing bandwidth $2$ for a forest when the first and last vertex in the ordering are fixed, is \NP-hard; see~\Cref{thm:one-sided-hard}. 
\Cref{tab:results} summarizes the results presented in this paper.
\begin{table}
	\caption{\NewText{Overview of our results. We indicate with \emph{Par.} that we consider this as parameter.}}
	\label{tab:results}
	\centering
	\begin{tabular}{cccc}
		\toprule
		Graph $G$ & $\Size{V(P)}$ & $\Size{E(P)}$ & Result \\
		\midrule
		-- & -- & -- & $\Sigma_2^\P$-complete (\Cref{thm:sigma-completeness})\\
		Tree, constant pathwidth & Par.\  & 1 & \XNLP-complete (\Cref{thm:bandwidth-hardness})\\
		Treedepth (Par.) & -- & 1 & \W[1]-hard (\Cref{thm:bandwidth-hardness})\\
		Vertex integrity (Par.) & Par.\ & Bnd.\ by $\Size{V(P)}$ & \FPT\ (\Cref{thm:vifpt})\\
		Neighborhood diversity (Par.) & -- & Par.\  & \FPT\ (\Cref{thm:fpt-nd-mp})\\
		Forest & Par.\  & Bnd.\ by $\Size{V(P)}$  & \XP, most patterns (\Cref{thm:non-mixed})\\
		\bottomrule
	\end{tabular}
\end{table}

The above results directly imply and give a unified explanation for, among others:
\begin{itemize}
\item the fixed-parameter tractability of \textsc{Queue Number} w.r.t.\ the vertex integrity of $G$~\cite{DFGS.LLR.2025} \NewText{(pattern in \Cref{fig:pattern-results}a)}.
\item the fixed-parameter tractability of \textsc{Bandwidth} parameterized by the neighborhood diversity $G$~\cite{GKK+.BPC.2025}, and by the target bandwidth plus the vertex integrity of $G$ \NewText{(\Cref{fig:pattern-results}b)}.
\item fixed-parameter algorithms for Vertex Deletion generalizations (cf. Lemma~\ref{lem:deletion_to_pattern}) for all graph classes captured by a forbidden pattern, %
such as:
\begin{itemize}
\item \textsc{Odd Cycle Transversal}~\cite{KratschW14}, when parameterized by the solution size plus vertex integrity, \textbf{or} by the neighborhood diversity alone \NewText{(\Cref{fig:pattern-results}c; recall that after removing the vertices from the odd cycle transversal the resulting graph is bipartite)}.
\item Whether it is possible to delete $x$ vertices in order to obtain a co-comparability graph~\cite{BozykDK0O20},
parameterized by $x$ plus the vertex integrity of $G$ \textbf{or} by the neighborhood diversity of $G$ alone \NewText{(\Cref{fig:pattern-results}d)}.
\item Whether it is possible to delete $x$ vertices in order to achieve queue number $z$, parameterized by $x+z$ plus the vertex integrity of $G$.
\item Whether it is possible to delete $x$ vertices in order to achieve bandwidth $z$, parameterized by the neighborhood diversity of $G$.
\end{itemize}
\end{itemize}

\begin{figure}
	\centering
	\includegraphics[page=2]{pattern-results}
	\caption{\NewText{The pattern $P$ that characterizes graphs 
    \textbf{\textsf{(a)}} with queue number at most $k$, 
    \textbf{\textsf{(b)}} with bandwith at most $k$, 
    \textbf{\textsf{(c)}} having odd-cycle transversal at most $x$, and 
	\textbf{\textsf{(d)}} where we can delete $x$ vertices to obtain a co-comparability graph. Solid and dashed lines represent forced and forbidden edges, respectively.
    The sub-patterns on the right for bipartite and co-comparability graphs in (c) and (d), respectively, are taken from Feuilloley and Habib~\cite{FH.GCF.2021}.
    }}
	\label{fig:pattern-results}
\end{figure}

\subparagraph*{Organization of the Paper.}
The remainder of this paper is organized as follows.
After short preliminaries in \Cref{sec:preliminaries}, we give the full details on our complexity-theoretic lower bounds in \Cref{sec:lower-bounds} (\Cref{thm:bandwidth-hardness,thm:sigma-completeness}).
For our algorithmic results, we provide a technical overview in \Cref{sec:overview}.
The details are then provided in %
the subsequent sections.
\Cref{sec:conclusion} contains concluding remarks and open questions.

\section{Preliminaries}
\label{sec:preliminaries}
For an integer $p \geq 1$, we let $[p]$ denote the set $\{1, 2, \ldots, p\}$.
Let $G$ be a graph with vertex set $V(G)$ and edge set $E(G)$.
All considered graphs are simple and undirected.
For two graphs $H$ and $G$, we write $H \subseteq G$ to denote that $H$ is a \emph{subgraph} of $G$.
For a set $X \subseteq V(G)$, we let $G[X]$ denote the subgraph of $G$ induced on $X$. %
We use the `$+$' and `$-$'-operator to add or remove vertices and edges from a graph, respectively.
For a vertex $v \in V(G)$ we let $N_G(v) \coloneqq \{u \mid uv \in E(G)\}$ denote the set of \emph{neighbors} of $v$ and $d_G(v) \coloneqq \Size{N_G(v)}$ its \emph{degree}.
Let $\prec_G$ be a linear order of the vertices $V(G)$ of $G$.
For a set $X \subseteq V(G)$ we let $\prec_G\mid_X$ denote the restriction of $\prec_G$ to the vertices in $X$.
Similarly, for an integer $i$, we let $\prec_G\mid_i$ denote the order restricted to the first $i$ vertices.
For an order $\prec$, we use $V(\prec)$ to access the vertices in $\prec$ and say that a vertex $v \in V(G)$ is \emph{spanned} by an edge $uw \in E(G)$ if $u \prec_G v \prec_G w$ holds.
We call $v \in V(G)$ the \emph{first} (\emph{last}) \emph{vertex} in $\prec_G$ if there is no vertex $u \in V(G)$ such that $u \prec_G v$ ($v \prec_G u$).
The function $\Pred[\prec_G]{v}$ returns the \emph{predecessor} of $v \in V(G)$ in $\prec_G$, which is the vertex $u$ that immediately precedes $v$ in $\prec_G$ (if it exists).
We define the \emph{successor} function $\Succ[\prec_G]{v}$ analogously.
Note that the first and last vertex have no predecessor and successor, respectively.
For the remainder of the paper, we omit the reference to the graph $G$ if it is clear from the context.
For the linear orders $\prec_A$ and $\prec_B$ on two disjoint sets $A$ and $B$, we denote with $\prec_A \oplus \prec_B$ the \emph{concatenation} of the two linear orders and extend this operation to pairs of sequences as well as sequences and linear orders.

To express some of our bounds, we will occasionally use the Knuth notation $\uparrow\uparrow$ where for an integer $z$, $2\uparrow\uparrow z$ represents an exponential tower of $2$'s of height $z$.

\subparagraph*{Vertex Integrity and Neighborhood Diversity.}
Let $G$ be a graph.
The \emph{vertex integrity} $\vi(G)$ of $G$ is the smallest integer $k$ such that $G$ contains a vertex set $S$ with the property that for every connected component $C$ of $G - S$ we have $\Size{V(C)} + \Size{S} \leq k$.
Computing the vertex integrity of $G$ alongside $S$ is fixed-parameter tractable in $\vi(G)$~\cite{DDvH.CCV.2016}.
To define the neighborhood diversity of $G$, we must first introduce the concept of neighborhood types.
Two vertices $u,v\in V(G)$ have the same \emph{neighborhood type}, indicated as $u \sim_N v$, if and only if $N_G(u) \setminus \{v\} = N_G(v) \setminus \{u\}$.
Observe that $\sim_N$ defines an equivalence relation on $V(G)$.
The \emph{neighborhood diversity} $\nd(G)$ of $G$ is the smallest integer $k$ such that there exists a partition of $V(G)$ into $k$ sets $V_1, V_2, \ldots, V_{k}$ and $u \sim_N v$ holds for all $u,v \in V_i$ and all $i \in [k]$.
Note that this implies that each $V_i$, $i \in [k]$, is either a clique or an independent set.
It is known that we can compute $\nd(G)$ and the partition of $V(G)$ into neighborhood types in polynomial time~\cite{Lam.AMt.2011}.

\subparagraph*{Patterns.}
Our definition of patterns leans on those of earlier work on pattern recognition and detection~\cite{FH.GCF.2021,DFHP.Pdo.2023}.
A \emph{pattern} $P = \PatternLong$ is a four-tuple where $\prec_P$ is a total order of $V(P)$, $E^+(P), E^-(P) \subseteq \binom{V(P)}{2}$, and $E^+(P) \cap E^-(P) = \emptyset$.
We call $V(P)$ the vertices, $E^+(P)$ the \emph{forced edges}, and $E^-(P)$ the \emph{forbidden edges} of the pattern.
For a pattern $P = \PatternLong$, we define $n_P \coloneqq \Size{V(P)}$, $m_P^+ \coloneqq \Size{E^+(P)}$, $m^-_P \coloneqq \Size{E^-(P)}$, $E(P) \coloneqq E^+(P) \cup E^-(P)$, and $m_P \coloneqq \Size{E(P)} = m^+_P + m^-_P$.
Let $H$ be a graph on $n_P$ vertices and $\prec_H$ be a linear order of $V(H)$.
We say that $(H, \prec_H)$ \emph{matches} the pattern $P$ if there exists a bijection $\delta\colon V(H) \to V(P)$ such that 
\begin{itemize}
	\item $u \prec_H v$ if and only if $\delta(u) \prec_P \delta(v)$ for every $u,v\in V(H)$,
	\item for every $uv \in E^+(P)$ we have $\delta^{-1}(u)\delta^{-1}(v) \in E(H)$, and
	\item for every $uv \in E^-(P)$ we have $\delta^{-1}(u)\delta^{-1}(v) \notin E(H)$.
\end{itemize}
For a graph $G$ (not necessarily on $n_P$ vertices) and a linear order $\prec_G$ of $V(G)$, we say that $(G, \prec_G)$ \emph{contains} the pattern $P$ if there exists a set of vertices $X \subseteq V(G)$ such that $(G[X], \prec_G\mid_X)$ matches $P$.
We call $(\prec_G, X)$ a \emph{realization} of $P$ (for the graph $G$).
Note that for a given pattern $P$, the bijection $\delta$ is implicitly defined via $\prec_G$ and $X$.
Therefore, we omit it in the following unless necessary, in which case we use $\delta{(\prec_G, X)}$, or simply $\delta$ if $(\prec_G, X)$ is clear from the context, to refer to it.
We use $\Image(\delta) = X$ to denote the image of $\delta$.
A linear order $\prec_G$ \emph{avoids} $P$ if $(G, \prec_G)$ does not contain $P$.
We let $\mathcal{C}_P$ denote the class of all graphs~$G$ that admit a linear order $\prec_G$ that avoids $P$.
Note that in contrast to earlier work~\cite{FH.GCF.2021}, we allow $G$ to be disconnected.

We study the parameterized complexity of deciding membership in $\mathcal C_P$, that is:

\probdef{\PA}{A graph $G$ and a pattern $P$.}{Does there exist a total order $\prec_G$ of $V(G)$ that avoids the pattern $P$?}
We call a total order $\prec_G$ satisfying the above condition a \emph{solution}.

\section{Complexity-Theoretic Lower Bounds}
\label{sec:lower-bounds}
Already fairly simple patterns are powerful enough to encode that a graph $G$ is, for example, 3-colorable or has bandwidth at most $k$.
This expressive power comes at a cost and in this section, we discuss complexity-theoretic lower bounds of \PA.
First, we show in \Cref{sec:sigma-2-p-complete} that \PA is $\Sigma_2^\P$-complete.
In \Cref{sec:bandwidth-hardness}, we focus on the parameterized complexity of \PA.
We exploit the known (parameterized) hardness of \probname{Bandwidth} to rule out \NewText{(efficient) parameterized algorithms for many natural parameter combinations.
In particular, we argue that the problem remains \NP-hard even on trees with constant pathwidth and $m_P = 1$.
}

\subsection{\PAHeader is $\boldsymbol{\Sigma_2^\P}$-Complete}
\label{sec:sigma-2-p-complete}
We start with showing that \PA is $\Sigma_2^\P$-complete.
(See Papadimitriou~\cite{DBLP:books/daglib/0072413} for the formal definition of $\Sigma_2^\P$).
Intuitively, while \NP\ problems correspond to an existential quantifier---where a positive instance has a polynomial-size certificate verifiable in polynomial time---$\Sigma_2^\P$ problems extend this by one quantifier alternation: membership can be expressed by an ``$\exists\forall$'' sentence with a polynomial-time predicate, which means checking the validity of a fixed certificate is a \coNP\ problem.
\PA is a natural $\Sigma_2^\P$ problem, since for a fixed vertex ordering, deciding whether the ordering avoids a given pattern is in \coNP, as the existence of a realization of the pattern in that order certifies non-avoidance.

To show hardness, we use the following lemma, which allows us to express membership in a graph class up to $s$ vertex deletions within our pattern-avoidance framework.
In particular, it lets us model the task of deleting at most $s$ vertices so that the remaining graph is $K_t$-free, i.e., has clique number less than~$t$.
This vertex-deletion problem is known as the \textsc{Generalized Node Deletion Problem} (\textsc{GNDP}) and is $\Sigma_2^\P$-complete~\cite{DBLP:conf/stacs/Rutenburg91}.

\begin{lemma}\label{lem:deletion_to_pattern}
    Let $P$ be a pattern and let $P^\star$ be obtained from $P$ by adding $d$ isolated vertices at the beginning of $P$'s total order.
    Then $\mathcal{C}_{P^\star}$ consists of the graphs that are in $\mathcal{C}_P$ after at most $d$ vertex-deletions.
\end{lemma}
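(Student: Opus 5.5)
We show both inclusions, using that $P^\star$ is $P$ preceded by $d$ isolated vertices $z_1 \prec_{P^\star} \dots \prec_{P^\star} z_d$. These vertices have no forced or forbidden edges, so they only demand that $d$ arbitrary vertices of $G$ occur before the copy of $P$. Hence $(G,\prec_G)$ contains $P^\star$ if and only if some realization $(\prec_G,X)$ of $P$ has at least $d$ vertices of $G$ preceding the first vertex of $X$. For the forward direction, take $X' = X \cup D$ where $D$ is any $d$ vertices before $\min X$. Conversely, restricting a realization of $P^\star$ to the images of $V(P)$ gives a realization of $P$ whose first vertex has at least $d$ predecessors. This equivalence is the key observation.

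($\supseteq$) Suppose $G - D \in \mathcal{C}_P$ for some $D \subseteq V(G)$ with $\Size{D} \le d$, witnessed by $\prec'$. Place the vertices of $D$ first, in arbitrary order, followed by $\prec'$. Call the result $\prec_G$. Suppose $\prec_G$ contained $P^\star$. By the observation, some realization $X$ of $P$ has at least $d$ vertices before its first vertex. Every vertex of $X$ therefore has at least $d \ge \Size{D}$ predecessors, and we claim this forces $X \cap D = \emptyset$. Indeed, every vertex of $D$ has fewer than $\Size{D} \le d$ predecessors, so no vertex of $D$ can lie in $X$. Thus $X \subseteq V(G) \setminus D$. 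Since induced edges and the relative order are inherited, $X$ is then a realization of $P$ in $(G - D, \prec')$, a contradiction.

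($\subseteq$) Let $\prec_G$ avoid $P^\star$ and let $D$ be the first $\min(d, \Size{V(G)})$ vertices of $\prec_G$. Any realization of $P$ in $(G - D, \prec_G\mid_{V(G)\setminus D})$ is also a realization of $P$ in $(G,\prec_G)$, because it is induced and order-preserving. All of its vertices come after the $d$ vertices of $D$, so by the observation $(G,\prec_G)$ would contain $P^\star$, a contradiction. The degenerate case $\Size{V(G)} < d$ needs separate care. There $P^\star$ cannot occur at all, and correspondingly deleting all vertices leaves the empty graph, which lies in $\mathcal{C}_P$ provided $n_P \ge 1$. The only subtle step is the counting argument in ($\supseteq$) that a realization avoids $D$, together with this boundary case, and both are short.
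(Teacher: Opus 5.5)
Your proof is correct and follows the paper's argument. For $\supseteq$ you prepend the deletion set and count predecessors, which is the paper's ``at most $d-1$ spots'' step. For $\subseteq$ you delete the first $d$ vertices of an avoiding order, which is the paper's contrapositive argument stated directly. Your separate treatment of $\Size{V(G)}<d$ is more careful than the paper's, and the caveat $n_P\ge 1$ is genuinely needed rather than an artifact of your proof. For the empty pattern and $d\ge 1$, the class $\mathcal{C}_P$ is empty, while $\mathcal{C}_{P^\star}$ contains every graph on fewer than $d$ vertices, so the statement itself fails there.
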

\begin{proof}
    $(\Rightarrow):$
    Let $G$ be a graph.
    We first show that $G \in \mathcal{C}_{P^\star}$ implies that there is $X \subseteq V(G)$ of size at most $d$ with $G-X \in \mathcal{C}_{P}$.
    We derive the contrapositive.
    Assume that for every deletion set $X \subseteq V(G)$ with $|X| \leq d$ we have $G-X \not\in \mathcal{C}_P$, i.e., for every total order $\prec_{G-X}$ of $V(G)\setminus X$, $(G-X, \prec_{G-X})$ contains $P$.
    Towards deriving $G \not\in \mathcal{C}_{P^\star}$,
    consider any total order $\prec_G$ of $V(G)$.
    By assumption, there is a realization of $P$ in $(G, \prec_G)$ using none of the first $d$ vertices of $\prec_G$.
    Hence, the first $d$ vertices of $\prec_G$ (acting as the added isolated vertices), together with the realization of $P$ realize $P^\star$ in $(G, \prec_G)$.
    Thus $G \not\in \mathcal{C}_{P^\star}$.

    $(\Leftarrow):$
    Let $G$ be a graph.
    Assume there is $X \subseteq V(G)$ of size at most $d$ with $G-X \in \mathcal{C}_{P}$, that is, there is a total order $\prec_{G-X}$ of $V(G)\setminus X$ that avoids $P$.
    Towards deriving $G \in \mathcal{C}_{P^\star}$,
    let the total order $\prec_G$ of $V(G)$ be obtained by appending the vertices $X$ in any order to the front of $\prec_{G-X}$.
    Towards a contradiction, suppose $(G, \prec_G)$ contains $P^\star$.
    Fix such a realization.
    Suppose the realization maps a vertex $v$ of $V(P)\cap V(P^\star)$ to a vertex in $X$.
    Then, the realization needs to map all $d$ added isolated vertices, i.e., $V(P^\star)\setminus V(P)$ to a spot before $v$ in $\prec_G$.
    But there are at most $d-1$ such spots, yielding a contradiction.
    Hence, restricting the realization of $P^\star$ to subpattern $P$ witnesses that $(G-X, \prec_{G-X})$ contains $P$, contrary to our assumption.
\end{proof}

\begin{theorem}
\label{thm:sigma-completeness}
\PA is $\Sigma_2^\P$-complete.
\end{theorem}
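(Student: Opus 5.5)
The proof has two parts: membership in $\Sigma_2^\P$ and hardness. For membership, I would state the problem as an $\exists\forall$ sentence with a polynomial-time predicate. The existential certificate is a total order $\prec_G$ of $V(G)$, which has polynomial size. The universal quantifier then ranges over all subsets $X \subseteq V(G)$ with $\Size{X} = n_P$. For each such $X$, the bijection $\delta$ is determined by $\prec_G$ restricted to $X$. Whether $(G[X], \prec_G\mid_X)$ matches $P$ can be checked in polynomial time by inspecting every forced and forbidden edge. So $G$ is a yes-instance if and only if some $\prec_G$ exists such that no $X$ matches, and this places \PA in $\Sigma_2^\P$.

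For hardness, I would reduce from \textsc{GNDP}, which is $\Sigma_2^\P$-complete~\cite{DBLP:conf/stacs/Rutenburg91}. An instance is a graph $G$ with integers $s$ and $t$, and the question is whether deleting at most $s$ vertices makes $G$ $K_t$-free. First I need a pattern $P_t$ with $\mathcal{C}_{P_t}$ equal to the class of $K_t$-free graphs. I take $t$ vertices in any order and make all $\binom{t}{2}$ pairs forced edges, with no forbidden edges. Then a graph contains $P_t$ under some order if and only if it contains a $K_t$ as a subgraph, and in that case it does so under every order. Since a clique is preserved under every order, $P_t$ is avoided by some (equivalently every) order exactly when $G$ is $K_t$-free. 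Next I apply \Cref{lem:deletion_to_pattern} with $d = s$. It gives a pattern $P^\star$ on $t+s$ vertices such that $G \in \mathcal{C}_{P^\star}$ if and only if deleting at most $s$ vertices puts $G$ into $\mathcal{C}_{P_t}$, that is, makes it $K_t$-free. The reduction outputs $(G, P^\star)$.

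The step that needs the most care is the size of the output. The pattern $P^\star$ has $t+s$ vertices and $\binom{t}{2}$ forced edges. This is polynomial only if $t$ and $s$ are bounded by the input size. For $s$ we may assume $s \le \Size{V(G)}$, since otherwise the instance is trivially a yes-instance. For $t$, if $t > \Size{V(G)}$ then $G$ is already $K_t$-free and the answer is trivially yes, so we may assume $t \le \Size{V(G)}$. With both assumptions the construction is polynomial. Trivial instances can be mapped to a fixed yes-instance of \PA, for example a single vertex together with a pattern on two vertices.
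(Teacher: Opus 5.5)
Your proof is correct and follows the paper's argument. Membership works by guessing the order and checking avoidance with a universally quantified, polynomial-time verifiable set $X$. Hardness comes from \textsc{GNDP}: take the clique pattern $Q_t$, pad it with $s$ isolated vertices, and apply \Cref{lem:deletion_to_pattern}. Your extra step of assuming $s,t \le \Size{V(G)}$, so that the output pattern has polynomial size, is a detail the paper leaves implicit.
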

\begin{proof}
\emph{Membership.}
Given an instance $(G,P)$ of \PA, we guess a total order $\prec$ of $V(G)$.
It remains to verify that $(G,\prec)$ avoids~$P$.
For a fixed order $\prec$, non-avoidance of $P$ is in $\NP$, as a realization of $P$ in $(G,\prec)$ is a polynomial-size witness that can be checked in polynomial time. Thus, avoidance of $P$ in $(G, \prec)$ is in \coNP\ and $\PA\in\Sigma_2^\P$.

\emph{Hardness.}
We reduce from the $\Sigma_2^\P$-complete \textsc{Generalized Node Deletion Problem} (\textsc{GNDP})
\cite{DBLP:conf/stacs/Rutenburg91}.
An instance of \textsc{GNDP} consists of a graph $G$ and integers $s,t$, and asks
whether there exists a vertex set $X\subseteq V(G)$ with $|X|\le s$ such that $G-X$ is $K_t$-free.

Given such an instance, we construct in polynomial time an equivalent instance $(G,Q_t^\star)$ of \PA.
Here, $Q_t$ is the pattern obtained from the clique $K_t$ with its vertices ordered arbitrarily.
Observe that $\mathcal{C}_{Q_t}$ is exactly the class of $K_t$-free graphs.
Let $Q_t^\star$ be obtained from $Q_t$ by adding $s$ isolated vertices at the beginning of the order.
By \cref{lem:deletion_to_pattern}, the graph $G$ admits a total order avoiding $Q_t^\star$
if and only if there exists a set $X\subseteq V(G)$ with $|X|\le s$ such that
$G-X\in\mathcal{C}_{Q_t}$, i.e., such that $G-X$ contains no copy of $K_t$.
Hence, $(G,s,t)$ is a positive instance of \textsc{GNDP} if and only if $(G,Q_t^\star)$ is a positive instance
of \PA.
\end{proof}

\subsection{Parameterized Lower Bounds for \PAHeader}
\label{sec:bandwidth-hardness}

Our main ingredient is the pattern from \Cref{fig:pattern-results}b that can be used to represent the problem \probname{Bandwidth} as an instance of \PA.
In \probname{Bandwidth}, we are given a graph $G$ and an integer $k$ and want to know whether there exists a linear order $\prec_G$ such that for every edge $uv \in E(G)$ (with $u \prec_G v$) there are at most $k$ vertices $w$ with $u \prec_G w \preceq_G v$; i.e., every edge spans at most $k - 1$ vertices~\cite{GareyJohnson}.
Next, we \NewText{transfer} known lower bounds with respect to the parameterized complexity of \probname{Bandwidth} to our problem.
To this end, observe that the pattern has $k+2$ vertices and one edge, i.e., $n_P = k+2$ and $m_P = m_P^+ = 1$.

\probname{Bandwidth} is known to be \XNLP-complete (and thus \W[$t$]-hard for every $t \geq 1$) parameterized by the bandwidth even on caterpillar graphs of hairlength $3$~\cite{BGNS.PPC.2022},%
and hence in particular on \NewText{trees} with constant pathwidth and feedback edge number. Moreover, it is \W[1]-hard parameterized by the treedepth of $G$~\cite{GHK+.Egt.2022}; see also the overview of the parameterized complexity of \probname{Bandwidth} by Gima et al.~\cite{GKK+.BPC.2025}. %
We summarize the above discussion in the following theorem.
\begin{theorem}
	\label{thm:bandwidth-hardness}
	\PA\ is (a) \XNLP-complete w.r.t.\ $n_P$ even on trees with constant pathwidth, and (b) \W\textup{[1]}-hard w.r.t.\ the treedepth of the input graph.
	\NewText{Both results even apply} %
	to instances with $m_P=1$.
\end{theorem}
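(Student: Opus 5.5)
We prove both parts by a parameterized reduction from \probname{Bandwidth} to \PA. The reduction keeps the graph unchanged and outputs the pattern $P_k$ of \Cref{fig:pattern-results}b. Concretely, $P_k$ has vertices $p_0 \prec_P p_1 \prec_P \dots \prec_P p_{k+1}$, the single forced edge $p_0p_{k+1}$, and no forbidden edges. Thus $n_P = k+2$ and $m_P = m_P^+ = 1$. The map $(G,k) \mapsto (G,P_k)$ is computable in polynomial time, in fact in logarithmic space, which matters for \XNLP. The new parameter $n_P = k+2$ depends only on $k$. The graph $G$ is unchanged, so its treedepth, its pathwidth and the property of being a tree are all preserved.

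The core step is to show that a linear order $\prec_G$ of $V(G)$ avoids $P_k$ if and only if it witnesses bandwidth at most $k$.
\begin{itemize}
\item Suppose $(G,\prec_G)$ contains $P_k$. Then there is an edge $uv$ with $u \prec_G v$ together with $k$ vertices strictly between $u$ and $v$. Counting $v$ as well, there are at least $k+1$ vertices $w$ with $u \prec_G w \preceq_G v$, so $\prec_G$ violates bandwidth $k$.
\item Conversely, suppose some edge $uv$ has at least $k+1$ vertices $w$ with $u \prec_G w \preceq_G v$. Then $uv$ spans at least $k$ vertices. Choosing any $k$ of them, together with $u$ and $v$, gives a realization of $P_k$: the order is respected, the only constraint is the forced edge $uv$, and there are no forbidden edges to check.
\end{itemize}
Hence $G \in \mathcal{C}_{P_k}$ if and only if $G$ has bandwidth at most $k$.

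Hardness then transfers directly. \probname{Bandwidth} is \XNLP-hard parameterized by $k$ on caterpillars of hairlength $3$~\cite{BGNS.PPC.2022}. These are trees of constant pathwidth, so part (a)'s hardness follows because our reduction is a parameterized logspace reduction. \probname{Bandwidth} is also \W[1]-hard parameterized by treedepth~\cite{GHK+.Egt.2022}. Since the graph is unchanged, part (b) follows, and every instance produced by the reduction has $m_P = 1$.

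The only nontrivial part is \XNLP-membership in (a). \XNLP\ is the class of problems solvable by a nondeterministic algorithm using $f(n_P)\log|G|$ space and FPT time. We handle membership only for $m_P = 1$ and trees, which is the regime in which the theorem is stated. My plan is to generalize the standard \XNLP\ algorithm for \probname{Bandwidth}. The algorithm guesses the order vertex by vertex while storing a sliding window of the last $n_P$ placed vertices, using $\BigO{n_P\log n}$ bits. With a single pattern edge $p_ip_j$, any realization is determined by the images of $p_i$ and $p_j$. The realization exists exactly when those images are adjacent (or, for a forbidden edge, non-adjacent) and there are enough vertices before, between and after them. The main obstacle is that the gap between the images is not bounded. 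For a forced edge, however, it suffices to ensure that at placement of each vertex $v$, no neighbor of $v$ placed at least $j-i$ positions earlier has at least $i$ predecessors, and symmetrically for the suffix. In a tree, neighbors already placed and forgotten must be tracked, which needs a counter of "open" edges per window. I would adapt the bookkeeping argument of~\cite{BGNS.PPC.2022}, checking via counters that every forgotten vertex has no unplaced neighbor. If the general case resists this, I would note that \XNLP-completeness is then claimed for the \probname{Bandwidth}-type patterns of the reduction, where membership is inherited directly.
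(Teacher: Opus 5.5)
Your proposal is correct and takes the paper's route. The paper likewise observes that the pattern of \Cref{fig:pattern-results}b ($k+2$ vertices, one forced edge between the first and last) encodes \probname{Bandwidth} on the unchanged graph. It then transfers the known \XNLP-completeness on caterpillars of hairlength~$3$ and the \W[1]-hardness w.r.t.\ treedepth; your explicit proof of the avoidance/bandwidth equivalence and your logspace remark fill in details the paper leaves implicit.

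On membership, the paper gives no argument for general single-edge patterns. Its completeness claim rests exactly on your fallback: for the bandwidth patterns, \PA\ coincides with \probname{Bandwidth}, so membership is inherited. Your unfinished sliding-window sketch is therefore not needed.
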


\section{Technical Overview of Algorithmic Results}
\label{sec:overview}
Due to space constraints, we can only provide in the main body a technical overview of the core ideas behind our algorithmic results.
Therefore, we defer the full details to \cref{sec:vi,sec:nd-plus-mp,sec:poly-trees}.
In \Cref{sec:to:vi}, we give an outline of the Ramsey Pruning technique we use for our \FPT\ result parameterized by $\vi(G) + n_P$ (\Cref{thm:vifpt}), which we describe in full detail in \Cref{sec:vi}.
In \Cref{sec:to:nd-plus-mp}, we sketch the integer linear programming approach that is based on several insights into pattern compression and solution normalization and yields our  \FPT\ algorithm parameterized by $\nd(G) + m_P$ (\Cref{thm:fpt-nd-mp}), which is laid out in full formality in \Cref{sec:nd-plus-mp}.
Finally, we present in \Cref{sec:to:poly-trees} our dynamic programming approach for the \XP-algorithm on forests parameterized by $n_P$ (\Cref{thm:non-mixed}) and outline the one class of patterns for which this approach fails (\Cref{thm:one-sided-hard}).
Full details on both aspects are given in \Cref{sec:poly-trees}.
Short concluding remarks and open questions are given at the end of this paper in \Cref{sec:conclusion}.

\subsection{A Fixed-Parameter Algorithm w.r.t.\ the Vertex Integrity}
\label{sec:to:vi}
Our first algorithmic result concerns a parameterization using the structural vertex integrity parameter together with the number of vertices in the pattern.

\begin{restatable}\restateref{thm:vifpt}{theorem}{thmvifpt}
\label{thm:vifpt}
\PA\ is fixed-parameter tractable when parameterized by $\vi(G)+n_P$.
\end{restatable}

To establish Theorem~\ref{thm:vifpt}, we extend the recently introduced \emph{Ramsey Pruning} technique, which was previously employed to establish the fixed-parameter tractability of \textsc{Twin-Width}~\cite{GanianRocton26}, \textsc{Queue Number}, and \textsc{Stack Number}~\cite{DFGS.LLR.2025} w.r.t.\ $\vi(G)$. We note that in the previous works, it was not necessary to combine a bound on the vertex integrity with problem-specific parameters since the vertex integrity already suffices to upper-bound the twin-width, queue number, and stack number of a graph---however, the same argument cannot be applied for \PA. Case in point: the parameterized complexity of \textsc{Bandwidth} (a special case of \PA\ with $m_P=1$)  parameterized by $\vi(G)$ is an open problem~\cite{GKK+.BPC.2025}. Before we can discuss how the technique needs to be adapted to cover \PA, we first provide a general overview of Ramsey Pruning.

A commonly used approach to establish fixed-parameter tractability for problems w.r.t.\ $\vi(G)$ is to employ kernelization, that is, show that the input graph $G$ can be reduced to have size upper-bounded by a function of the parameter, via gradually removing redundant subgraphs without impacting the (non-)existence of a solution. Typically, such direct kernelization arguments rely on proving that removing one out of sufficiently many ``identical'' copies of a subgraph $H$ is safe in the following sense: \begin{enumerate}[(a)]
\item removing $H$ from $G$ preserves the existence of a solution (which is usually trivial), and
\item adding a copy of $H$ back into $G'=G-H$ preserves the existence of a solution on $G'$ (which is usually the less trivial part).
\end{enumerate}
However, for \PA\ (as well as for \textsc{Queue Number} and \textsc{Stack Number}~\cite{DFGS.LLR.2025}) it is entirely non-obvious how one could establish a proof for point (b) directly: a hypothetical solution on $G'$ can be chosen in a way which makes reinserting $H$ into it impossible, seemingly necessitating the recomputation of an entirely different solution on $G'$.

\begin{figure}
    \centering
    \includegraphics{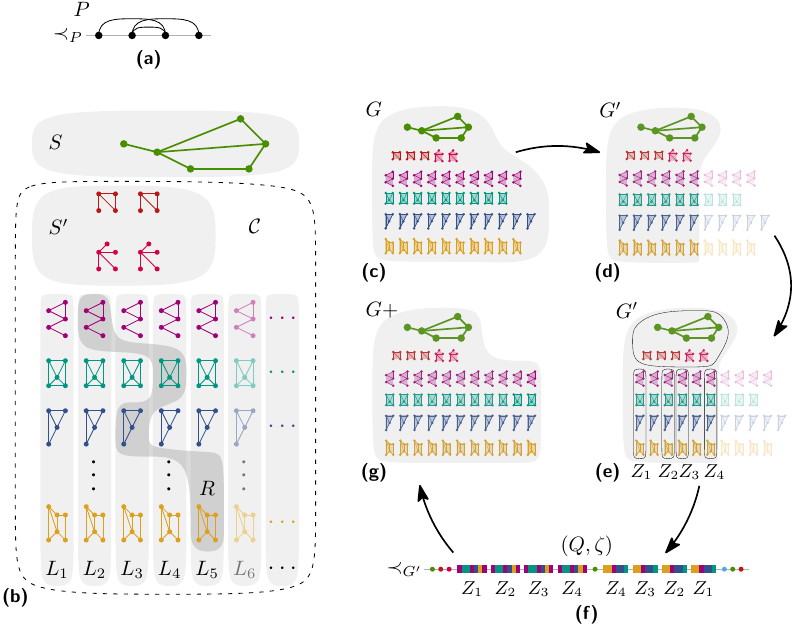}
    \caption{Schematic visualization of the Ramsey pruning technique used to establish \Cref{thm:vifpt}. \textbf{\textsf{(a)}} Shows a forbidden pattern $P$ with $n_P = 4$ and \textbf{\textsf{(b)}} illustrates the used notation.
    \textbf{\textsf{(c)}}--\textbf{\textsf{(g)}} visualize the proof steps, where $G$ is the input graph, $G'$ the reduced graph after applying \Cref{lem:reducedcomputation}.
    In \textbf{\textsf{(e)}} and \textbf{\textsf{(f)}}, $Z_1$ to $Z_4$ are the large groups whose layout we extend to a solution for $G^+$, a supergraph of $G$, and, consequently, also to a solution for $G$.    
    }
    \label{fig:ramsey-overview}
\end{figure}

The core idea underlying Ramsey Pruning is that instead of directly attacking point (b), we use Ramsey-type arguments to argue that every hypothetical solution for $G'$ must contain a carefully selected subsolution which can guide the reconstruction of a ``well-structured'' solution, one whose properties make it amenable to reinserting an arbitrary number of copies of $H$. Thus, while we do not directly prove the safeness of performing any single deletion operation, the proof technique guarantees that the existence of a solution is preserved between the first and final graph in the sequence of reduction rules. For a schematic illustration of the technique (albeit with some additional technical notation), see also \Cref{fig:ramsey-overview}.

Let $S$ be a deletion set witnessing a bound on the vertex integrity of $G$ (cf. Section~\ref{sec:preliminaries}), computed via known algorithms~\cite{DDvH.CCV.2016}.
The initial steps for applying Ramsey Pruning on \PA\ closely follow the previous works~\cite{DFGS.LLR.2025,GanianRocton26}, including the iterative extension of the deletion set $S$ to $S\cup S'$ in order to guarantee that all connected components in $G-(S\cup S')$ occur sufficiently frequently~(\Cref{lem:reducedcomputation}). The main complication occurs when attempting to prove the actual safeness of removing copies of $H$, which is the problem-specific part that is detailed in Subsection~\ref{subsec:VI-Kernel-proof}. Unlike in previous applications of the technique, for \PA\ it is not sufficient to have a constant-sized subsolution to guide the reconstruction of $G$---the generality of \PA\ seems to necessitate that we have a subsolution whose size also depends on a function of $n_P$. Overcoming this obstacle requires the use of a stronger variant of Ramsey's Theorem in our proof and a new line of arguments that exclude the creation of forbidden patterns if we start from a guiding subsolution that is not only well-structured, but also satisfies additional size lower bounds.

\subsection{Parameterization by the Neighborhood Diversity + $\boldsymbol{m_P}$}\label{sec:overview-nd}
\label{sec:to:nd-plus-mp}
Our second main result (see \cref{sec:nd-plus-mp}) establishes the fixed-parameter tractability of \PA when parameterized by the neighborhood diversity of the graph $G$ and the number of edges of the pattern $P$. Recall that with this parameterization, we consider a partition of the vertices of $G$ into a bounded number $\nd(G)$ of neighborhood types, i.e., equivalence classes of vertices w.r.t.\ their neighborhood in $G$ (where, within each class the induced subgraph is either a clique or an independent set). 

\begin{restatable}\restateref{thm:fpt-nd-mp}{theorem}{thmfptndmp}
\label{thm:fpt-nd-mp}
    \PA is fixed-parameter tractable when parameterized by $\nd(G) + m_P$.
\end{restatable}

Our algorithm exploits that the vertices of each neighborhood type are somewhat interchangeable in terms of representing forced and forbidden edges of the pattern, and, additionally, that we can restrict the relevant complexity of the pattern to its set of edges and their incident vertices. 

We proceed in two main steps to prove \cref{thm:fpt-nd-mp}. 
On a high level, in the first step we show that it is sufficient to focus on well-structured solutions of \PA, which bound the number of alternations between maximal consecutive subsequences of vertices of the same neighborhood type (called \emph{segments}) by a function of $\nd(G)$ and $m_P$. 
Once this is established, the second step uses integer linear programming (ILP) to distribute the vertices of all neighborhood types suitably among their respective maximal consecutive subsequences in such a way that the pattern is avoided. 
Since the number of variables of the ILP remains bounded by a function in our parameter, it can be solved via a fixed-parameter algorithm using the classical algorithm of Lenstra~\cite{Len.IPF.1983} and its subsequent improvements~\cite{Kan.MCB.1987,DBLP:journals/siamcomp/HeathR92}. 
Since the number of possible assignments of segments to neighborhood types is upper-bounded by a function of our parameters, we can afford to branch over all such assignments. For each, we then construct and solve an ILP instance that determines whether it is possible to assign the neighborhood types into the segments in a way which avoids the target pattern $P$.

\begin{figure}
	\centering
	\includegraphics[page=1]{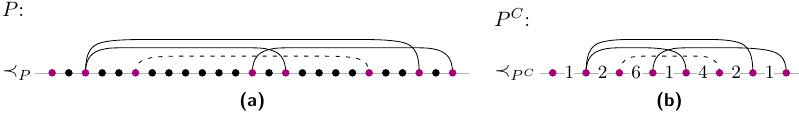}
	\caption{\textbf{\textsf{(a)}} The pattern $P$ and \textbf{\textsf{(b)}} the corresponding compressed pattern $P^C$. Critical vertices are colored purple.}
	\label{fig:compressed-pattern}
\end{figure}

\subparagraph*{Step 1.} We observe that the original pattern $P$ can be \emph{compressed} to a pattern with just $\BigO{m_P}$ \emph{critical} vertices by keeping only the first and last vertices of $P$ as well as all the endpoints of $E(P)$. Note that $P$ is fully characterized by these critical vertices along with the number of non-critical vertices occurring between each consecutive pair of critical vertices; see \Cref{fig:compressed-pattern} for an illustration.
Moreover, to determine whether a subsequence in a total order forms the pattern $P$, it is only necessary to check the neighborhood types of the critical vertices of $P$---the neighborhood types of non-critical vertices are entirely irrelevant.
We now
enumerate all $\nd(G)^{\BigO{m_P}}$ possible assignments of neighborhood types to critical vertices of $P$, and further restrict our attention to so-called \emph{provoking} assignments which would realize $P$ if the number of non-critical vertices in the respective intervals is sufficiently large.

Next, we decompose a hypothetical solution $\prec_G$ into a bounded number of \emph{safe intervals} by introducing and superimposing $\BigO{m_P}$ cuts for each provoking assignment (Lemma~\ref{lem:nd-plus-mp-number-intervals-number}). We show that these safe intervals have the useful property that we can re-order the vertices within them arbitrarily while retaining an equivalent solution to $\prec_G$; this lets us put these vertices into a normalized order sorted by neighborhood types. As an immediate consequence, we obtain that the number of neighborhood-type alternations within each safe interval can be restricted to at most $\nd(G)$, which, together with the bound on the number of safe intervals, yields the desired result of Step 1: every positive instance of \PA admits a solution with $f(m_P,\nd(G))$ neighborhood-type alternations (formalized in \Cref{cor:nd-plus-mp-alternations-bound}). 

\subparagraph{Step 2.} The second step exploits the property established above by first branching over all possibilities of assigning the at most $f(m_P,\nd(G))$ non-empty segments to the $\nd(G)$ different neighborhood types. 
The number of these branches is still bounded by a function in $m_P$ and $\nd(G)$.
It remains to determine whether it is possible to assign vertices to segments in a way which prevents 
any provoking assignment from realizing $P$.
We solve this last aspect in Subsection~\ref{sec:nd-plus-mp-ilp} by constructing an ILP whose constraints ensure that 
\begin{enumerate}[(i)]
\item the cardinality of all segments of the same neighborhood type add up to the cardinality of that neighborhood type in $G$, and 
\item for any possible provoking assignment $\beta$ of the critical vertices to these segments, at least one pair of critical vertices is ``too close'' to each other to realize the pattern $P$, i.e., there are too few vertices between them to be consistent with $P$.
\end{enumerate}
The final difficulty lies in proving that property (ii) is both a necessary and sufficient condition for avoiding $P$. There, we employ an inductive 
construction strategy that finds an occurrence of $P$ given any provoking assignment $\beta$ where no pair of critical vertices is ``too close''.

\subsection{An Algorithm for Almost All Fixed Patterns on Forests}
\label{sub:forests}\label{sec:to:poly-trees}
As our third and final main result (see \cref{sec:poly-trees}), we provide an \XP\ algorithm for \PA\ on forests when parameterized by $n_P$, for \emph{almost all} possible patterns $P$. To formalize our result, let a pattern be \emph{mixed} if, intuitively, 
it (1) contains a single forced edge $e$, and furthermore (2) each of the endpoints of $e$ lies between the endpoints of some forbidden edge; see also \Cref{fig:mixed-pattern}(b).\footnote{A precise definition of the notion, and in particular of what ``between'' means here, is provided in \Cref{def:mixed-pattern}.}

\begin{figure}
	\begin{center}
		\includegraphics[page=1]{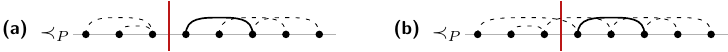}
	\end{center}
	\caption{A \textbf{\textsf{(a)}} left-separated and \textbf{\textsf{(b)}} mixed pattern. The forced edge is solid, forbidden edges are dashed. Note that in (a) no forbidden edge crossed the red line.}
	\label{fig:mixed-pattern}
\end{figure}

\begin{restatable}\restateref{thm:non-mixed}{theorem}{thmforest}
\label{thm:non-mixed}
For every constant-size non-mixed pattern $P$, \PA\ is polynomial-time solvable on forests.
\end{restatable}

\begin{figure}
	\begin{center}
		\includegraphics[page=1]{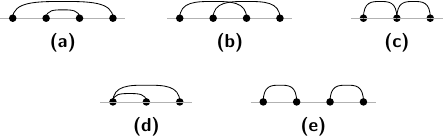}
	\end{center}
	\caption{All patterns (up to reversal of the total order) with exactly two forced edges, no forbidden edges, and no isolated vertices.}
	\label{fig:patterns-two-forced-edges}
\end{figure}

\subparagraph*{Easy Cases.}
We first show that patterns with zero or at least two forced edges admit a simple classification. 
If $P$ contains at least two forced edges, then we consider two of them and all possible ``subpatterns'' that they can form; see \Cref{fig:patterns-two-forced-edges}.
A pattern-avoiding total order exists for each subpattern. 
If $P$ contains no
forced edges, then either the input graph $G$ has size bounded by a function of $|V(P)|$
— in which case the problem can be solved by brute force — or no pattern-avoiding order
exists. Consequently, the only nontrivial case is when $P$ has exactly one forced edge.

\subparagraph*{Left-Separated Patterns.}
For constant-size patterns with a single forced edge, we further distinguish whether $P$ is
\emph{left-separated} (see also \Cref{fig:mixed-pattern}(a)). Up to symmetry, left-separated patterns precisely correspond to the remaining non-mixed cases: they are patterns where the left endpoint of the forced edge separates everything to its left from the right side. We hence proceed with our arguments for the left-separated case.

\subparagraph*{Ignoring the Left Side.}
Our first preprocessing step shows that, for left-separated patterns, we may branch on a
constant number of choices such that everything strictly to the left of the forced edge
can be safely ignored. This reduces the problem to reasoning about the structure on the
right side of the forced edge in $P$, up to a bounded amount of nondeterminism.

\subparagraph*{Hydra Vertices.}
We then turn to the degree structure of the forest. A key insight is that vertices with many
non-leaf neighbors impose strong constraints on any pattern-avoiding order. We call such
vertices \emph{hydra vertices} (see Definition~\ref{def:poly-trees-preprocessing-hydra-vertex}). We prove that, when their own leaves
are ignored, for every hydra vertex there is only a linear number (in $n_P$) of vertices right to it, in any valid total order. This further implies %
that the total number of hydra vertices is bounded by a function of $n_P$ (i.e., a constant).

This allows us to guess a bounded set of vertices at the right end of the order that is
guaranteed to contain all hydra vertices. We refer to this guessed region as the
\emph{fixed side}. Combined with a leaf-compression step, this yields a partial order in
which the rightmost segment is fixed up to permutation, while the remainder of the graph
is significantly simplified.

\subparagraph*{Further Preprocessing.}
With the fixed side in place, we apply additional reductions. First, we remove and solve separately all connected
components that do not intersect the fixed side, as they can be ordered independently.
Second, we show that almost all neighbors of hydra vertices can be removed without affecting
feasibility. After these steps, we obtain an equivalent instance in which every vertex has
bounded ``interesting'' degree, that is, bounded degree after ignoring leaves.

\begin{figure}
	\begin{center}
		\includegraphics[page=1]{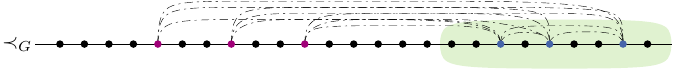}
	\end{center}
	\caption{We ensure that for every set $A$ (purple) of $n_P$ free (i.e., non-fixed) vertices, there is an interdependent set $I_A$ (blue) of $n_P$ vertices in the green fixed side disconnected from $A$.}
	\label{fig:fixed-side}
\end{figure}

\subparagraph*{Expanding the Fixed Side.}
At this point, bounded degree alone is still insufficient for a direct dynamic programming
approach. We therefore further expand the fixed side to ensure a crucial independence
property: for every set of $n_P$ vertices in the non-fixed part, there exists a sufficiently
large independent set in the fixed side that is non-adjacent to all of them; see also \Cref{fig:fixed-side}. This property
guarantees that the behavior of the forbidden pattern can be analyzed locally around each
edge in the non-fixed part.

\subparagraph*{Dynamic Programming.}
With all preprocessing completed, we can apply 
careful dynamic programming to solve the instance. This in particular relies on gradually constructing a total order while keeping track of only a bounded number of vertices surrounding our current position in the order---i.e., we employ a right-to-left ``sliding window'' approach. The dynamic programming procedure maintains information about the vertices to the left and to the right of the sliding window. 
Finally, all of the pruning steps and structural insights come together:
\begin{enumerate}
\item In order to achieve the desired running time bounds, we show that there are only a limited number of ways the vertices outside of the current constant-size sliding window can be partitioned between the ``right'' processed part of the instance, and the ``left'' yet-to-be-processed part. 
\item At the same time, we show that the information we have about the fixed side and the current sliding window guarantees that we will be able to detect a forbidden pattern in the graph as soon as it occurs, even if part of it lies outside of the window. 
\end{enumerate}

\subparagraph*{The Hard Customers: Mixed Patterns.}
The preprocessing steps above intrinsically rely on $P$ being left-separated. For mixed patterns, it is no longer true that hydra vertices must all be on the right---some might be placed on the left instead. At first glance, this complication would seem to admit a simple solution: why not simply extend the notion of having a fixed side to both the leftmost and rightmost ends of our hypothetical total order? 

Unfortunately, in \Cref{thm:one-sided-hard} we show that such a line of attack is doomed to fail: \PA\ turns out to be \NP-hard on forests if the first and last vertex of the total order is fixed on the input. We formalize this below.
\enlargethispage{.5em}

\probdef{\PAOneFTwoSLongUnderline~(\PAOneFTwoS)}{A graph $G$, a pattern $P$, two sets $U_L, U_R \subseteq V(G)$, $U_L \cap U_R = \emptyset$, and two total orders $\prec_{U_L}$ on $U_L$ and $\prec_{U_R}$ on $U_R$.}{Does there exist a total order $\prec_G$ of $V(G)$ that starts with $\prec_{U_L}$, ends with $\prec_{U_R}$, and avoids the pattern $P$?}

\begin{restatable}\restateref{thm:one-sided-hard}{theorem}{thmonesidedhard}
\label{thm:one-sided-hard}
    \PAOneFTwoSLong is \NP-hard, even if $G$ is a forest, $n_P=4$, $m_P = m_P^+ = 1$, and $\Size{U_L} = \Size{U_R} = 1$.
\end{restatable}

\begin{proof}[Proof Sketch.]
We reduce from the strongly \NP-hard \textsc{Bin Packing} problem \cite{GareyJohnson}, which asks whether, given a finite set $U$ of items with sizes $s(u) \in \mathbb{Z}^+$ for each $u \in U$, a bin capacity $B \in \mathbb{Z}^+$, and an integer $K > 0$, the set $U$ can be partitioned into disjoint subsets $U_1, U_2, \ldots, U_K$ such that $\sum_{u \in U_i} s(u) \le B$ for all $i \in [K]$.

Fix an instance of \textsc{Bin Packing}.
We construct an instance of \PAOneFTwoSLong 
with input graph $G$,
pattern $P$,
first vertex $U_L = \{s\}$ (with the trivial total order $\prec_{U_L}$),
and final vertex $U_R = \{t\}$ (with the trivial total order $\prec_{U_R}$),
as follows.
Let the forbidden pattern $P \coloneqq ( \{a,b,c,d\}, a \prec b \prec c \prec d, \{ad\}, \{\})$, i.e., $P$ is the pattern characterizing the graphs of bandwidth at most two displayed in \cref{fig:one-sided-hardness}a (in \Cref{sec:poly-trees-hardness}).
To construct the input graph $G$, 
we begin with a path of length $B \cdot K + 2\cdot(K-1)$, called the \emph{spine}, with first vertex $s$ and last vertex $t$.
For $1 \leq i \leq K-1$, call the $i \cdot(B+2)$'th vertex of the path the \emph{separator vertex $s_i$}.
Attach two pendant vertices to each separator vertex.
We call the $K$ subpaths between consecutive vertices in the sequence $s, s_1, s_2, \ldots, s_{K-1}, t$ (including both endpoints of each subpath) the \emph{bin paths}. Here, we call the subpath from $s$ to $s_1$ the first bin path, and number the remaining bin paths in sequence.
Finally, we construct the \emph{item gadgets}. For each item $u \in U$, add a disjoint path of length $s(u)-1$ (i.e., a path with $s(u)$ vertices) to $G$.
See \cref{fig:one-sided-hardness} (located in \Cref{sec:poly-trees-hardness}) for an example of the reduction.
As \textsc{Bin Packing} is strongly \NP-hard, the construction is feasible in polynomial time. To establish \cref{thm:one-sided-hard}, it remains to show that the reduction is correct (cf.~\Cref{lem:one-sided-hardness-correctness}).
\end{proof}

As an immediate implication, we obtain:

\begin{restatable}\restateref{cor:bandwidth-hard}{corollary}{corbandwidthhard}
    \label{cor:bandwidth-hard}
    Deciding if a forest admits a linear order of bandwidth two with prescribed first and last vertex is \NP-hard.
\end{restatable}

Moreover, in \Cref{thm:one-sided-hard-on-trees} we show that the lower bound also translates to trees, albeit at the cost of a slightly more complex pattern and requiring two fixed vertices on the left side. 
These lower bounds are highly surprising, as testing for constant bandwidth is known to be polynomial-time solvable on trees (and, equivalently, forests)~\cite{Saxe80}. Essentially, while \Cref{thm:one-sided-hard} does not yet directly rule out polynomial-time tractability for fixed mixed patterns, it implies that ``standard'' dynamic-programming style approaches to the problem cannot lead to a solution; in particular, the separation steps described in the \textbf{Further Preprocessing} paragraph fail and thus any conventional dynamic program would need to account for an exponential number of ways of partitioning vertices between the processed and yet-to-be-processed parts.

\NewText{
\subparagraph*{Outlook.}
We leave the above-stated question as well as the complexity of \PA when parameterized by $\vi(G) + m_P$ or the treedepth of $G$ plus $n_P$ open for future work and discuss ways to tackle it in the conclusion; see  \Cref{sec:conclusion}.

This concludes the technical overview of our algorithmic results.
We proceed by providing the technical details of all results discussed in the overview.

}

\section{A Fixed-Parameter Algorithm w.r.t.\ the Vertex Integrity}
\label{sec:vi}
We now give the full details on our first algorithmic result.
We assume throughout this section that $n_P\geq 3$; this is without loss of generality, as the case of $n_P\leq 2$ is trivially solvable in polynomial time.
It will also be useful to recall the Knuth notation at this point (cf. Section~\ref{sec:preliminaries}).

\thmvifpt*

We establish Theorem~\ref{thm:vifpt} by proving the following lemma, which directly implies the claimed result. 
Let $p \coloneqq \vi(G)$.
\begin{lemma}
	\label{lem:kernel}
	There is an 
	$\mathcal{O}(n_P^{n_P+1} \cdot n)$ 
	time algorithm that takes an instance $(G,P)$ of \PA\ and outputs a subgraph $G'$ of $G$ (a \emph{kernel}) of size at most 
	$\big(2\uparrow\uparrow (p\cdot 2^{2p^2})\cdot 2 + 6\big)^{n_P \cdot p}$ 
	 with the following property: $(G',P)$ is a YES-instance of \PA\ if and only if so is $(G,P)$.   
Moreover, a solution for $(G,P)$ can be computed from a solution for $(G',P)$ in polynomial time.
\end{lemma}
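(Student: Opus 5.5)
We follow the Ramsey Pruning template of \cite{DFGS.LLR.2025,GanianRocton26}. First we compute a vertex-integrity deletion set $S$ with $|S|\le p$ using \cite{DDvH.CCV.2016}. Then we repeatedly extend it to $S\cup S'$, as in the reduced-graph computation (\Cref{lem:reducedcomputation}). Call two components $C,C'$ of $G-(S\cup S')$ of the same \emph{type} if there is an isomorphism between $G[V(C)\cup S\cup S']$ and $G[V(C')\cup S\cup S']$ that fixes $S\cup S'$ pointwise. The number of types is at most $2^{\mathcal{O}(p^2)}$. The extension works as follows: whenever some type occurs fewer than a threshold $T$ times, we move all its components into the deletion set. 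This happens at most $p$ rounds, because each component has fewer than $p$ vertices. After these rounds every remaining type occurs at least $T$ times, or not at all. The reduction rule keeps exactly $T'$ copies of each type, for a suitable $T'\ge T$, and deletes the others. We choose $T,T'$ as a Ramsey number of order $2\uparrow\uparrow(p\cdot 2^{2p^2})$ that also depends on $n_P$. This makes the kernel size at most $(\cdots)^{n_P\cdot p}$. The running time is dominated by grouping components into types, which is linear in $n$ up to the stated $n_P^{n_P+1}$ factor.

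The forward direction is easy. Pattern avoidance is closed under taking induced subgraphs: restricting an avoiding order of $G$ to $V(G')$ avoids $P$, since any realization in $G'$ would also be a realization in $G$.

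For the backward direction, take a solution $\prec'$ of $G'$. For each type $\tau$, colour every $\prec'$-ordered tuple of copies of $\tau$ by its \emph{relative configuration}. This configuration records how the vertices of the copies interleave with each other and with $S\cup S'$ in $\prec'$, and it has boundedly many values. A multi-dimensional Ramsey theorem (the stronger variant, applied to tuples of size depending on $n_P$) gives a large homogeneous subfamily $Z_\tau$ with at least $n_P$ copies, together with ``gaps'' between consecutive copies. We now build $\prec$ on a supergraph $G^+\supseteq G$ that has arbitrarily many copies of each type. We start from $\prec'$, keeping $S\cup S'$ fixed, and insert the new copies of $\tau$ so that they replicate the homogeneous configuration of $Z_\tau$, cloning positions next to the copies of $Z_\tau$. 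Restricting this order to $G$ then gives a solution for $G$. Computing $\prec$ in polynomial time is immediate from this construction.

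The main obstacle is proving that $\prec$ avoids $P$. Suppose some realization $X$ of $P$ exists in $(G^+,\prec)$. It uses at most $n_P$ vertices and so meets at most $n_P$ copies. We want to map each inserted copy it touches to a distinct original copy in $Z_\tau$ that has the same configuration relative to $S\cup S'$ and to the other touched copies. This gives a realization in $(G',\prec')$, a contradiction. Making such a map exist is exactly why $|Z_\tau|$ must exceed $n_P$, and why homogeneity must hold for tuples of size about $n_P$, not just for pairs. We also need order-preservation relative to vertices of other types, which we get by choosing the homogeneous sets jointly, or iteratively, over all types. Adjacency is preserved because copies of the same type are twins over $S\cup S'$ and copies are pairwise non-adjacent. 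Order is preserved because of the homogeneous interleaving. I expect the careful choice of the inserted positions to be the technical core, namely interleaving new copies between the replicated ones so that any $n_P$ of them look like $n_P$ of the originals.
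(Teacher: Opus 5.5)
Your overall frame matches the paper's: prune to $S\cup S'$ plus boundedly many copies of each large type, get the forward direction by heredity, and get the backward direction from Ramsey homogeneity plus a layout of a supergraph $G^+$. The backward direction, however, has a genuine gap. The step you defer as ``the technical core'' is the whole content of the lemma, and the construction you do sketch would not work.

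You keep $\prec'$ and add new copies next to those of $Z_\tau$. Homogeneity says nothing about how the copies in $Z_\tau$ interleave with the copies of $G'$ outside the homogeneous family. With per-type Ramsey, it also says nothing about how they interleave with copies of other types. So a realization that mixes an inserted copy with such an uncontrolled copy has no counterpart in $\prec'$, and your map-back step fails. Choosing the homogeneous sets ``jointly over all types'' is not a detail. It is exactly why the paper bundles one component from each large class into a \emph{large group} and colours \emph{pairs of groups} by $\textsf{info}(X,Y)$, the order on $S\cup S'$ plus both groups transported to fixed representatives $L,L'$.

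Placing new copies ``next to'' old ones is also not innocuous. Suppose a component is an edge $r_1r_2$ and two homogeneous copies appear as $r_1r_2r_1'r_2'$. Putting each vertex of a new copy right after its counterpart gives $r_1r_1^{\circ}r_2r_2^{\circ}$. This is a crossing that never occurs among homogeneous copies, so for instance the 2-twist pattern can be created although $\prec'$ avoids it.

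The missing idea is \Cref{lem:structure}. Once $Z_1,\dots,Z_{n_P}$ (with $n_P\ge 3$) are pairwise homogeneous, the common order $\prec'$ of $R\cup S\cup S'$ splits $R$ into consecutive blocks $B$. For each block, $\alpha^{-1}_{Z_1}(B)\cdots\alpha^{-1}_{Z_{n_P}}(B)$, or its reverse, is consecutive in $\prec_{G'}$ restricted to $\Upsilon=S\cup S'\cup Z_1\cup\dots\cup Z_{n_P}$.

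The layout of $G^+$ is then built from scratch, discarding everything of $\prec_{G'}$ outside $\Upsilon$: replace each block of $\prec'$ by its $t$ copies in ascending or descending group order. Restricted to $S\cup S'$ and any $n_P$ groups, this order is isomorphic to $\prec_{G'}$ on $\Upsilon$. A realization meets at most $n_P$ groups, so it would already occur in $\prec_{G'}$, a contradiction.

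Two smaller points. First, this argument needs Ramsey only for pairs. Using $n_P$-uniform hypergraph Ramsey is unnecessary, and it would make each threshold a tower whose height grows with $n_P$, which does not fit the stated size bound. Second, the number of extension rounds is bounded by the number of types (at most $p\cdot 2^{2p^2}$), not by $p$. That count is what produces the tower height $p\cdot 2^{2p^2}$ in the bound.
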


The remainder of this section is hence devoted to proving Lemma~\ref{lem:kernel}. Our proof employs the recently developed \emph{Ramsey Pruning} technique~\cite[Section~3]{DFGS.LLR.2025}, and the structuring of this section closely follows the presentation in that paper. Apart from problem-specific deviations, the fundamental differences occur in Lemmas~\ref{lem:fewlarge} and~\ref{lem:structure} (i.e., in the very core of the proof), which require us to establish stronger guarantees on the kernel.

For the following, let us fix an instance $(G,P)$ and a deletion set $S \subseteq V(G)$ such that each connected component $C$ of $G - S$ satisfies $\Size{V(C)} + \Size{S} \leq \vi(G) = p$; recall from Section~\ref{sec:preliminaries} that such a set can be computed in fixed-parameter time. Further let $\mathcal{C}$ be the set of connected components of $G-S$. 
First, we define a notion of ``component-types'' which groups components in $\mathcal{C}$ that exhibit the same outside connections and internal structure.
\begin{definition} We say two graphs $H_0, H_1 \in \mathcal{C}$ are \emph{twins}, denoted $H_0 \sim H_1$, if there exists a canonical isomorphism $\alpha$ from $H_0$ to $H_1$ such that for each vertex $u \in V(H_0)$ and each $v \in S$, $uv \in E(G)$ if and only if $\alpha(u)v \in E(G)$. 
\end{definition}

\begin{lemma}
\label{obs:size-equiv-class}
Each graph $H\in \mathcal{C}$ has at most $p$ vertices, $\sim$ is an equivalence relation and the number of equivalence classes in $[\sim]$ is upper-bounded by $\frac{p \cdot 2^{2p^2}}{p!}$. Moreover, a partition of connected components into $[\sim]$ can be computed in time at most $\mathcal{O}(\frac{p \cdot 2^{2p^2}}{p!}\cdot n)$.
\end{lemma}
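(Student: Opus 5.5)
The plan is to check the four claims of the statement in turn; each follows almost directly from the definition of the deletion set $S$. \emph{Size of components.} Every $H \in \mathcal{C}$ is a connected component of $G-S$, so by the choice of $S$ we have $\Size{V(H)} + \Size{S} \leq p$. Hence $\Size{V(H)} \leq p$, and we also obtain $\Size{S} \leq p$, which we need later.

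\emph{Equivalence relation.} The three properties come from composing isomorphisms. Reflexivity is witnessed by the identity map. For symmetry, if $\alpha$ witnesses $H_0 \sim H_1$, then $\alpha^{-1}$ witnesses $H_1 \sim H_0$, because the condition ``$uv \in E(G) \Leftrightarrow \alpha(u)v \in E(G)$ for all $v \in S$'' is a biconditional and can be read in either direction. For transitivity, if $\alpha$ witnesses $H_0 \sim H_1$ and $\beta$ witnesses $H_1 \sim H_2$, then $\beta\circ\alpha$ is an isomorphism from $H_0$ to $H_2$. It preserves adjacency to each $v \in S$ because we can chain the two biconditionals. I read ``canonical'' as meaning only ``some fixed isomorphism''. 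The argument goes through as long as the relation is defined by the existence of such a map.

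\emph{Counting classes.} Each class is determined by its \emph{type}. A type consists of the number $q \leq p$ of vertices, the edge set of $H$ on a labeled vertex set $[q]$, and, for each labeled vertex, its neighborhood in $S$. For a fixed $q$, there are at most $2^{\binom{q}{2}} \leq 2^{p^2}$ choices of internal edges and at most $2^{q\Size{S}} \leq 2^{p^2}$ choices of attachments to $S$. Relabeling the vertices identifies up to $q!$ labelings, which accounts for the division by $p!$. Summing over the at most $p$ values of $q$ gives the bound $p \cdot 2^{2p^2}/p!$.

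\emph{Computing the partition.} This step is the only slightly delicate point, because of the division by $p!$. A labeled type is not canonical, so two twins may have different labeled encodings. I would compute a canonical form for each component by brute force. For each of the at most $p!$ orderings of $V(H)$, write down the encoding (adjacency matrix together with the $S$-neighborhood vectors, using a fixed ordering of $S$), and take the lexicographically smallest encoding. Two components are twins exactly when their canonical forms agree. Computing one canonical form takes time bounded by a function of $p$, and summed over all components this is linear in $n$. We then bucket the components by canonical form, using a table indexed by the at most $p\cdot 2^{2p^2}/p!$ possible classes, and each insertion is charged per class. This yields the stated $\mathcal{O}(\frac{p\cdot 2^{2p^2}}{p!}\cdot n)$ running time once the dependence on $p$ in each canonical-form computation is absorbed, which is the same accounting as in \cite{DFGS.LLR.2025}. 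The constants are not important for \Cref{lem:kernel}, where only a bound in terms of $p$ is used.
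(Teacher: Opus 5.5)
Your overall route matches the paper's. You bound $|V(H)|$ via the vertex-integrity witness $S$, count classes as internal graph times attachments to $S$, and classify components by brute force over bijections. You also check that $\sim$ is an equivalence relation, which the paper takes for granted. You classify via canonical forms instead of comparing each component against every class representative, which is cleaner.

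\textbf{The gap.} The step that produces the stated number is argued in the wrong direction. That relabeling ``identifies up to $q!$ labelings'' only gives (number of classes)$\cdot q! \ge$ (number of labeled structures). This is a \emph{lower} bound on the number of classes. A structure with nontrivial automorphisms has fewer than $q!$ distinct labelings; for instance, an edgeless component with no neighbors in $S$ has exactly one. Moreover, for $q<p$, dividing by $p!$ is not even the right orbit size. The paper's proof performs the same division without justification, so you cannot lean on it either.

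\textbf{The repair.} Do not divide at all. Instead keep the slack you discarded when you bounded $2^{\binom{q}{2}}$ by $2^{p^2}$.
\begin{itemize}
\item Every class contains at least one labeled structure, so the number of classes is at most $\sum_{q=1}^{p}2^{\binom{q}{2}+q|S|}\le p\cdot 2^{\binom{p}{2}+p^2}$.
\item Since $i\le 2^{i-1}$ for every $i\ge 1$, we get $p!\le 2^{\binom{p}{2}}$.
\item Hence $p!\cdot 2^{\binom{p}{2}+p^2}\le 2^{p(p-1)+p^2}\le 2^{2p^2}$, which is exactly the claimed bound $p\cdot 2^{2p^2}/p!$.
\end{itemize}

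\textbf{Running time.} The same inequality, in the form $(p!)^2\le 2^{p(p-1)}$, should replace ``absorbed'' in your running-time step. One canonical form costs $\mathcal{O}(p!\cdot p^2)$, and $p!\cdot p^2\le p\cdot 2^{2p^2}/p!$. So your procedure genuinely meets the stated bound, more directly than the paper's description, which compares against every class using up to $p!$ bijections each. Index your bucketing table by the encodings themselves, or sort them, since the classes are not known in advance. There are at most $p\cdot 2^{\binom{p}{2}+p^2}$ encodings, which is again within the bound.
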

\begin{proof}
By definition of $\mathcal{C}$, each $H \in \mathcal{C}$ has at most $p$ vertices. Since the number of non-isomorphic graphs on exactly $p$ vertices is upper-bounded by $2^{p^2}$, we can upper-bound the total number of possible non-isomorphic graphs $H$ by $\frac{p \cdot 2^{p^2}}{p!}$.
Since $|S| \leq p$, there are at most $p^2$ possible edges between $S$ and each $H \in \mathcal{C}$. Hence, we have $[\sim] \leq \frac{p \cdot 2^{2p^2}}{p!}$. For the running time, it suffices to process the connected components of $G-S$ in an arbitrary order and use exhaustive branching over $\alpha$ to determine (in time at most $\mathcal{O}(\frac{p \cdot 2^{2p^2}}{p!})$) which of the equivalence classes in $[\sim]$ it belongs to.
\end{proof}

We now introduce the notion of large equivalence classes based on their size. We then use this to define what we call a large group of vertices---one that contains exactly one representative
from each large equivalence class. Our kernel will later keep a bounded number of these large groups.
\begin{definition}
\label{defn:large-eq-class}
Let $k$ be a positive integer, an equivalence class $[H]$ of $\sim$ is said to be {\em $k$-large} if $|[H]|\geq k$. Further a vertex set $L \subseteq V(G)$ is called a {\em $k$-large group} if the induced subgraph $G[L]$ is a disjoint union of exactly one graph from each $k$-large equivalence class of~$\sim$.
\end{definition}

Next we define a special induced subgraph that will serve as our kernel. The definition is based on carefully choosing a $k$ that is bounded by a computable function of $p$ and $n_P$ so that keeping only $k$ many $k$-large groups along with the small parts of the graph suffices to capture the necessary structure. 
Towards this, let us first fix 
$f(n_P,p,x):=2^{3n_P\cdot ((x+2\cdot p^2\cdot 2^{2p^2})!)^2}+2$ 
to be a computable function that is large enough to apply our Ramsey-type arguments later on; here, $x$ will be an integer that represents the size of a deletion set (initially $S$, but this will be updated iteratively in the proof of Lemma~\ref{lem:reducedcomputation}). Moreover, let $g(n_P,p)$ be a computable function of $n_P$ and $p$ that will upper-bound our nested application of the function $f$ in that same proof; to provide a concrete bound, we set $g(n_P, p):= \big(2\uparrow\uparrow (p\cdot 2^{2p^2})\cdot 2 + 4\big)^{n_P \cdot p}$.

\begin{definition}
\label{def:red-graph}
An induced subgraph $G'$ of $G$ is said to be a {\em reduced graph} of $G$ if there exists a positive integer $x\leq g(n_P,p)$ and a partition of $V(G') = S \uplus S' \uplus Y$ satisfying:
\begin{enumerate}
	\item $|S\cup S'|= x$ and \( S' \) is the set of all vertices in graphs in \( \mathcal{C} \) that do not belong to an $f(n_P,p,x)$-large equivalence class of $\sim$.
	\item If there are no $f(n_P,p,x)$-large equivalence classes of $\sim$, $Y=\emptyset$ and $V(G')=S\uplus S' = V(G)$. Otherwise, it holds that $Y=L_1 \uplus \cdots \uplus L_{f(n_P,p,x)}$ where \( L_i \) 
	is an $f(n_P,p,x)$-large group for each $i\in\{1,\cdots,f(n_P,p,x)\}$, and each pair of $L_i$ and $L_j$, $i \neq j$, is vertex-disjoint.
\end{enumerate}
\end{definition}

Intuitively, the reduced graphs we will be dealing with will consist of $S$, a set $S'$ of all equivalence classes of $\sim$ which are too small to fully saturate our large groups (these will later be treated essentially in the same way as $S$), and a sufficient number of large groups; equivalence classes of size larger than $f(n_P,p,x)$ are ``pruned'' to have size exactly $f(n_P,p,x)$. A schematic overview of this intuition can be found in 
Figures~\ref{fig:ramsey-overview}c and~d later on.

The core of our result is the following lemma, which we prove separately in \Cref{subsec:VI-Kernel-proof}.
\begin{lemma}
\label{lem:VI-kernel-proof}
If $G'$ is a reduced graph of $G$ then $(G,P)$ admits a solution if and only if so does $(G',P)$. Moreover, a solution for the former can be constructed from a solution for the latter in polynomial time.
\end{lemma}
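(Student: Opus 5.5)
The forward direction is immediate and I would dispose of it first. Since $G'$ is an induced subgraph of $G$, restricting any solution $\prec_G$ to $V(G')$ yields a solution for $G'$: any realization of $P$ in $(G',\prec_G\mid_{V(G')})$ is also a realization in $(G,\prec_G)$, because induced ordered subgraphs are preserved. All the work is in the backward direction, namely turning a solution $\prec'$ for $(G',P)$ into a solution for $G$. For this I would go through an intermediate supergraph $G^+ \supseteq G$, in which every $f(n_P,p,x)$-large equivalence class is padded with enough extra twin components that $G$ is an induced subgraph of $G^+$. Once $G^+$ has a solution, restricting it gives one for $G$.

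The first step is Ramsey extraction. Let $x=|S\cup S'|$. Each large group $L_i$ is encoded by a finite "colour": for every pair $i<j$, record the relative order type under $\prec'$ of the vertices of $L_i\cup L_j\cup S\cup S'$. The number of possible colours is roughly $((x+2p^2 2^{2p^2})!)^2$. Because $f(n_P,p,x)$ is an exponential in $n_P$ times this colour count, a multicolour Ramsey theorem for pairs (the stronger variant, so that the homogeneous set has size $3n_P$ rather than a constant) gives a set of $3n_P$ large groups $Z_1,\dots,Z_{3n_P}$. These groups are pairwise "order-isomorphic", and their interleaving with $S\cup S'$ is uniform.

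Next I would pass to a well-structured subsolution. Using homogeneity and pigeonhole on positions relative to the $x$ vertices of $S\cup S'$, I would argue that within each gap of $S\cup S'$, each vertex position of the groups occurs in a fixed pattern. Each fixed position $q$ of a group then forms a block of $3n_P$ vertices, and these blocks either interleave monotonically (copy $1$ before copy $2$ for every position) or nest in a fixed way.

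For the construction, I build $\prec^+$ on $G^+$ from $\prec'$ restricted to $S\cup S'\cup Z_1\cup\dots\cup Z_{3n_P}$ as follows. Copies of components from large classes are inserted by "cloning" a middle group $Z_{n_P+1},\dots,Z_{2n_P}$ arbitrarily many times: each new copy is placed, vertex by vertex, immediately next to the corresponding vertex of a middle group, while still respecting the homogeneous order type. The remaining vertices of $G'$ outside the selected groups are discarded, since they will be replaced by clones. This construction is explicit and polynomial, which gives the "moreover" clause.

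The main obstacle is proving that $\prec^+$ avoids $P$. Suppose a realization $X$ with $|X|=n_P$ exists. $X$ touches at most $n_P$ components from the large classes. I would map each touched clone back to a distinct original group among $Z_1,\dots,Z_{3n_P}$ while preserving the relative order of all of $X$. This requires that for each clone there are at least $n_P$ unused original groups on each side realizing the same order type, which is exactly why $3n_P$ groups are needed: $n_P$ buffer groups on the left, $n_P$ in the middle to clone, and $n_P$ on the right. Adjacency is preserved because each clone is a twin of the group it maps to, and different components are non-adjacent. The resulting image is then a realization of $P$ in $(G',\prec')$, a contradiction. The delicate point is choosing the replacement injectively while keeping the order consistent when several vertices of $X$ lie in clones that sit between the same pair of original vertices. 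I would handle this with a greedy left-to-right assignment, which the ordering of homogeneous blocks supports, with room guaranteed by the buffer groups.
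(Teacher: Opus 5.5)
\textbf{There is a genuine gap.} Your forward direction and your Ramsey extraction (colouring pairs of groups by their order type on $S\cup S'\cup L_i\cup L_j$) are fine. The gap is in the ``well-structured subsolution'' step and in the cloning construction built on it.

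\textbf{The structural claim is false.} Homogeneity does \emph{not} make the copies $q_{Z_1},\dots,q_{Z_{3n_P}}$ of a fixed position $q\in R$ contiguous. Take a large class of single-edge components $ab$, e.g.\ $G$ a large matching. Let $P=(\{x,y,z\},x\prec y\prec z,\{xz\},\emptyset)$, the pattern ``an edge spanning a vertex'', and let $\prec_{G'}$ restricted to the groups read $a_1b_1a_2b_2a_3b_3\cdots$. This is a solution, and every pair of groups has the same order type, yet the $a$-copies are far from contiguous.

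\textbf{The cloning breaks avoidance.} Place a clone $C$ of $Z_k$ with $a_C$ immediately next to $a_{Z_k}$ and $b_C$ immediately next to $b_{Z_k}$. Each of the four side choices yields one of $a_{Z_k}a_Cb_{Z_k}b_C$, $a_{Z_k}a_Cb_Cb_{Z_k}$, $a_Ca_{Z_k}b_{Z_k}b_C$, $a_Ca_{Z_k}b_Cb_{Z_k}$. In every case some edge spans a vertex, so the new order contains $P$ although $\prec_{G'}$ avoids it.

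\textbf{The back-mapping cannot repair this.} The pair $(Z_k,C)$ now has an interleaved or nested order type that no pair of original groups realizes. So there is no original group to which $C$ can be sent, however many buffer groups you keep. ``Immediately next to the corresponding vertex'' and ``respecting the homogeneous order type'' are incompatible in general. Moreover, you never show that a homogeneous family extends to arbitrarily many groups at all.

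\textbf{What is missing} is the structural statement of Lemma~\ref{lem:structure}. With at least three homogeneous groups (this is where $n_P\ge 3$ enters), $R$ partitions into blocks $B$, consecutive in the common internal order and each labelled ascending or descending. For each block, $\alpha^{-1}_{Z_1}(B)\alpha^{-1}_{Z_2}(B)\cdots\alpha^{-1}_{Z_{n_P}}(B)$ (or its reverse) is a contiguous run of $\prec_{G'}$ restricted to $S\cup S'\cup Z_1\cup\dots\cup Z_{n_P}$. In the example above, the only block is $\{a,b\}$.

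\textbf{How the construction should go.} New groups must be inserted blockwise, not vertexwise: as in Lemma~\ref{lem:layout-supergraph}, replace each block $B$ by $\alpha^{-1}_{X_1}(B)\cdots\alpha^{-1}_{X_t}(B)$, reversed for descending blocks. Then $S\cup S'$ together with any $n_P$ of the groups induces, via the maps $\alpha$, exactly the order of $\prec_{G'}$ on $S\cup S'\cup Z_1\cup\dots\cup Z_{n_P}$. A realization of $P$ touches at most $n_P$ groups, so it transfers back immediately. With this, your $3n_P$ buffer groups and the greedy injective reassignment become unnecessary.
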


Before proceeding to that proof, we show how to construct a reduced graph having size bounded by a computable function of $p$ and $n_P$ in polynomial time. 

\begin{lemma}
\label{lem:reducedcomputation}
There exists a reduced graph $G'$ of $G$, and given $S$ and $\sim$ such a graph can be computed in polynomial time. Further, the number of vertices in $G'$ is upper-bounded by $\big(2\uparrow\uparrow (p\cdot 2^{2p^2})\cdot 2 + 6\big)^{n_P \cdot p}$.
\end{lemma}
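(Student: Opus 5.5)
The construction is iterative, following the Ramsey Pruning scheme of~\cite{DFGS.LLR.2025}. Set $x_0 \coloneqq |S|\le p$. Maintain a current deletion set $S_i$ with $x_i \coloneqq |S_i|$, where $S_0 = S$. In round $i$, compute the threshold $f_i \coloneqq f(n_P,p,x_i)$ and use the partition into $[\sim]$ from \Cref{obs:size-equiv-class} to identify the equivalence classes with fewer than $f_i$ members. Let $S_{i+1}$ be $S$ together with all vertices of components in such classes. If no class that was $f_{i-1}$-large becomes small under $f_i$, i.e., $S_{i+1}=S_i$, the process stops. Then we set $S' \coloneqq S_i\setminus S$. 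From every $f_i$-large class we pick exactly $f_i$ components and distribute them into $f_i$ pairwise disjoint large groups $L_1,\dots,L_{f_i}$, one component per class in each group. We take $G' \coloneqq G[S\cup S'\cup L_1\cup\dots\cup L_{f_i}]$, which is a reduced graph by construction. Condition~1 holds because $S'$ is exactly the set of vertices in non-$f(n_P,p,x)$-large classes, where $x = |S\cup S'|$. Condition~2 holds because each large group contains one component from each large class.

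The key monotonicity fact is that $f$ is increasing in $x$ and $x_{i+1}\ge x_i$, so a class that is small in round $i$ stays small afterwards. Moreover, each round that does not terminate moves at least one new class from large to small. Since there are at most $q\coloneqq p\cdot 2^{2p^2}/p!$ classes, there are at most $q+1$ rounds. Every step is polynomial: computing $f_i$ and the class sizes is clearly so, and the partition into classes is already available.

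For the size bound, each class moved into $S'$ in round $i$ contributes fewer than $f_i$ components, each with at most $p$ vertices. Hence $x_{i+1}\le x_i + q\cdot p\cdot f(n_P,p,x_i)$. The final graph has $x + f\cdot(\text{number of large classes})\cdot p \le x + q\,p\, f(n_P,p,x)$ vertices. So we need to bound the iterate of $h(x)\coloneqq x+qp\,f(n_P,p,x)$ applied at most $q+1\le p\cdot 2^{2p^2}$ times, starting from $p$. Note that $f(n_P,p,x) = 2^{3n_P((x+c)!)^2}+2$ with $c = 2p^2 2^{2p^2}$. Since $((x+c)!)^2 \le 2^{2(x+c)^2}$, one application of $h$ raises roughly two levels of exponentiation in the worst case. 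I therefore plan to prove by induction on $j$ that after $j$ rounds $x_j \le (2\uparrow\uparrow (2j)+4)^{n_P p}$, or an analogous tower bound, absorbing the polynomial factors $3n_P$, $qp$ and the additive constant $c$ into the exponent base. This also confirms $x\le g(n_P,p)$, as Definition~\ref{def:red-graph} requires. Adding the final large-group contribution, together with one more application of $f$, gives the claimed bound $\big(2\uparrow\uparrow (p\cdot 2^{2p^2})\cdot 2+6\big)^{n_P\cdot p}$.

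The main obstacle is this last bookkeeping step: making the tower-of-twos induction match the stated expression exactly, in particular showing that the $n_P$-dependence can be kept in the outer exponent $n_P\cdot p$ rather than inside the tower. I would first show the crude inequality $h(y)\le 2^{2^{y}}$ for $y\ge$ some constant depending polynomially on $n_P$ and $p$, and only then tune the constants $+4$ and $+6$. The existence and polynomial-time parts are routine once termination is established.
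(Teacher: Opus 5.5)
Your construction, the fixpoint argument for Condition~1, the running time, and the recurrence $x_{i+1}\le x_i+qp\,f(n_P,p,x_i)$ are correct and match the paper's proof; the paper moves one non-large class at a time instead of all of them per round, which is immaterial. The gap is the step you flag yourself, and tuning constants cannot close it. Once two pruning rounds occur, $n_P$ necessarily climbs into the tower, so no bound of the form $C_p^{\,n_P}$ can hold.

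Here is a concrete counterexample. Let $p=2$, $f_0\coloneqq f(n_P,2,1)$, and let $G$ be a star with centre $s$ and $f_0-1$ leaves, together with $N_B\coloneqq f(n_P,2,f_0)-1$ isolated vertices. Then $S=\{s\}$ is forced, and there are exactly two classes.
\begin{itemize}
\item Since $f$ is strictly increasing in $x$, the leaf class is non-large for every $x\ge 1$, and the isolated class is non-large if and only if $x\ge f_0$.
\item Hence the only $x$ compatible with Condition~1 is $x=f_0+N_B$, and no class is large.
\item So every reduced graph is $G$ itself, with more than $f(n_P,2,f_0)>2^{f_0}>2^{2^{3n_P}}$ vertices.
\end{itemize}
By contrast, $\big(2\uparrow\uparrow (p\cdot 2^{2p^2})\cdot 2+6\big)^{n_P\cdot p}$ equals $C^{\,n_P}$ for a constant $C$ depending only on $p$, so it is exceeded for large $n_P$. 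With the paper's $g$, this $G$ does not even have a reduced graph, since $x\le g(n_P,2)$ fails for large $n_P$.

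The paper's proof does not supply the missing step either. It simply asserts $|X_t|\le g(n_P,p)$ and $g(n_P,p)+p2^{2p^2}f(n_P,p,g(n_P,p))\le\big(2\uparrow\uparrow (p\cdot 2^{2p^2})\cdot 2+6\big)^{n_P\cdot p}$. The second inequality already fails on its own, because $f(n_P,p,g)\gg g$.

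What your argument does prove is $|V(G')|\le h^{(q+1)}(p)$ for $h(y)\coloneqq y+qp\,f(n_P,p,y)$. This holds because $x_0\le p$, $h$ is increasing, and at most $q$ rounds enlarge the deletion set, so the final $x$ is at most $h^{(q)}(p)$. This is also all that \Cref{lem:kernel} and \Cref{thm:vifpt} require. You should therefore redefine $g(n_P,p)\coloneqq h^{(q)}(p)$ in \Cref{def:red-graph}. Then state the size bound as this computable function of $n_P$ and $p$ (a tower whose height depends only on $p$, with $n_P$ near its top), rather than trying to reach the stated closed form.
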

\begin{proof}
We present an algorithm to compute a reduced graph $G'$ below:
\begin{enumerate}
	\item Initialize $X=S$ and $\mathcal{C}':=\mathcal{C}$
	\item As long as there exists an equivalence class $[H]$, $H\in \mathcal{C}'$ that is not $f(n_P,p,|X|)$-large:
	\begin{enumerate}
		\item Set $\mathcal{C}'=\mathcal{C}'\setminus [H]$ and $X=X\cup \{v:v\in H', H'\in [H]\}$
	\end{enumerate}
	\item If $\mathcal{C}'\neq \emptyset$, then set $O:=X\uplus L_1\uplus \cdots \uplus L_{f(n_P,p,|X|)}$ where each $L_{1\leq i\leq f(n_P,p,|X|)}$ is a $f(n_P,p,|X|)$-large group. Else set $O:=X$.
	\item Output $G'=G[O]$
\end{enumerate}

We first bound the size of $V(G')=O$. In each iteration of step~$2$, the vertices of all graphs in exactly one equivalence class $[H]$, $H\in \mathcal{C'}$ are added to the set $X$ and the graphs in $[H]$ are removed from $\mathcal{C'}$. Let $X_i$, $i\in \{1,\cdots,2^{2p^2}\}$ be the set $X$ at the end of the $i^{th}$ iteration of step $2$ and let there be $t$ iterations of step $2$. Note that $t=0$ if there is no successful iteration of step 2. Since there are at most $\frac{p\cdot 2^{2p^2}}{p!}$ equivalence classes in $\sim$, the algorithm repeats step~2 at most $\frac{p \cdot 2^{2p^2}}{p!}$ times, thus $t\leq \frac{p \cdot 2^{2p^2}}{p!}$.

Let $X_0:=S$, observe that $|X_0|=|S|\leq p$. If $t\geq 1$, by construction and by the fact that each graph in $\mathcal{C}$ has at most $p$ vertices, for each $i\in \{1,\cdots,t\}$, $|X_i|\leq |X_{i-1}|+p\cdot f(n_P,p,|X_i|)$. For ease of analysis, let us note that $t\leq p\cdot 2^{2p^2}$, and hence $|X_t|\leq g(n_P,p)\leq \big(2\uparrow\uparrow (p\cdot 2^{2p^2})\cdot 2 + 4\big)^{n_P\cdot p}$.

Further each $f(n_P,p,|X_t|)$-large group has at most $\frac{p \cdot 2^{2p^2}}{p!} \cdot p \leq p\cdot2^{2p^2}$ vertices and thus $f(n_P,p,|X_t|)$ many $f(n_P,p,|X_t|)$-large groups have at most $p\cdot 2^{2p^2}\cdot f(n_P,p,|X_t|)$ vertices. Therefore we have $|O|\leq |X_t|+p\cdot 2^{2p^2}\cdot f(n_P,p,|X_t|)\leq g(n_P,p)+p\cdot 2^{2p^2}\cdot f(n_P,p,g(n_P,p)) \leq \big(2\uparrow\uparrow (p\cdot 2^{2p^2})\cdot 2 + 6\big)^{n_P\cdot p}$.

Next observe that when the algorithm stops all vertices in graphs in $\mathcal{C}$ that do not belong to an $f(n_P,p,|X_t|)$-large equivalence class belongs to $X_t$. Further $S\subseteq X_t$. For the $f(n_P,p,|X_t|)$-large equivalence classes, $f(n_P,p,|X_t|)$ many $f(n_P,p,|X_t|)$-large groups are added to $O$. Thus by construction, the induced graph $G'=G[O]$ output by the algorithm is a reduced subgraph by Definition~\ref{def:red-graph}. This completes the proof.
\end{proof}

Lemma~\ref{lem:reducedcomputation} combined with Lemma~\ref{lem:VI-kernel-proof}
establishes Lemma~\ref{lem:kernel} by constructing a reduced graph which is a kernel of the desired size.
Thus, what remains is to establish Lemma~\ref{lem:VI-kernel-proof}; this is where the core ideas of Ramsey pruning come into play.

\subsection{Proof of Lemma~\ref{lem:VI-kernel-proof}}
\label{subsec:VI-Kernel-proof}
For this subsection, let us fix $G'$ to be a reduced graph of $G$ computed by Lemma~\ref{lem:reducedcomputation}.
Since $G'$ is an induced subgraph of $G$, it is clear that if $(G,P)$ has a solution then so does $(G',P)$. To complete the proof we show that the reverse is true: If there exists a total order $\prec_{G'}$ of $V(G')$ that avoids the pattern $P$, then there also exists a total order $\prec_{G}$ of $V(G)$ that avoids the same pattern.

Recall that $f(n_P,p,x):=2^{3n_P\cdot ((x+2\cdot p^2\cdot 2^{2p^2})!)^2}+2$.
Let $\prec_{G'}$ be a fixed (but hypothetical) solution for $(G',P)$. %
Throughout this section, for any subgraph $H\subseteq G'$, we denote by $\prec_{G'}[H]$ the total order of $H$ obtained by restricting $\prec_{G'}$ to $H$.
If $V(G')=V(G)$ we are done. So let $V(G') = S \uplus S' \uplus L_1 \uplus \cdots \uplus L_{f(n_P, p, |S\cup S'|)}$ be a partition witnessing that $G'$ is a reduced graph. Here each $L_i$ is a $f(n_P, p, |S\cup S'|)$-large group and $S'$ is the set of all vertices in components that do not belong to an $f(n_P, p, |S\cup S'|)$-large equivalence class in $\sim$. For brevity, we will hereinafter use \emph{large} as shorthand for $f(n_P, p, |S\cup S'|)$-large. 

Let $\mathcal{L}=\{L_1,\cdots,L_{f(n_P,p,|S\cup S'|)}\}$. Our key idea is to identify $n_P$ large groups with a special structure (a ``guiding sublayout'') in the solution $\prec_{G'}$ using Ramsey theory. We will then use this pattern to insert the remaining parts of the graph. Before proceeding, we define some notations to be able to identify these groups, starting with a ``template'' $R$. 

\begin{definition}
Let $k$ be the number of large equivalence classes in $\sim$ and let $R_i\in \mathcal{C}$
be a canonical representative of the $i^{th}$ large  equivalence class $[R_i]$. Let $R:=V(R_1)\uplus \cdots \uplus V(R_k)$.
\end{definition}

Recall that by Definition~\ref{defn:large-eq-class}, if $X$ is a large group then $G[X]$ is a disjoint union of exactly one graph from each large equivalence class of $\sim$. 
We now define a natural isomorphism from $G[X]$ to $G[R]$. This will allow us to map consistently between different large groups via $R$.
\begin{definition}
For a large group $X$ with $G[X]=H_1\uplus \cdots \uplus H_k$, $H_i\in [R_i]$ for each $i\in [k]$, let $\alpha_X$ be an isomorphism from $G[X]$ to $G[R]=R_1\uplus \cdots \uplus R_k$ such that for each $i\in [k]$ and vertex $u\in H_i$, $\alpha_X(u)\in V(R_i)$, and for each $v\in S$, $uv\in E(G)$ if and only if $\alpha_X(u)v\in E(G)$. 
\end{definition}

It will later be used to employ $\alpha_X$ (and, more specifically, its inverse) on an ordered set of vertices; towards this, for a totally ordered set $W=(w_1,\dots,w_\circ)$ of vertices from $R$, we let $\alpha^{-1}_X(W)$ denote $\alpha^{-1}_X(w_1),\dots,\alpha^{-1}_X(w_\circ)$.

For any large group $A$ and for each $u\in R$ we will sometimes use $u_A$ as shorthand for $\alpha_A^{-1}(u)$.
For distinct large groups $X,Y\in \mathcal{L}$, we say $X\prec_{G'} Y$ if in $\prec_{G'}$, the first vertex in $X\cup Y$ is from $X$. We now fix two distinct large groups $L$ and $L'$ in $\mathcal{L}$ such that $L\prec_{G'}L'$.

\begin{definition}
For $X\prec_{G'} Y\in \mathcal{L}\setminus \{L,L'\}$, we define $\phi_{X,Y}:S\cup S'\cup X\cup Y \rightarrow S\cup S'\cup L\cup L'$
\begin{itemize}
	\item $\phi_{X,Y}(s)=s$ for each $s\in S\cup S'$
	\item $\phi_{X,Y}(x)=\alpha^{-1}_{L}(\alpha_X(x))$ for each $x\in X$
	\item $\phi_{X,Y}(y)=\alpha^{-1}_{L'}(\alpha_Y(y))$ for each $y\in Y$
\end{itemize}
Note that $\phi_{X,Y}$ is an isomorphism from $G[S\cup S'\cup X\cup Y]$ to $G[S\cup S'\cup L\cup L']$.
\end{definition}
\begin{definition}
Let $X,Y \in \mathcal{L}\setminus\{L,L'\}$ such that $X\prec_{G'} Y$. 
Define $\textsf{info}_{\prec_{G'}}(X, Y)$ as the total order of $G[S\cup S'\cup L\cup L']$ obtained from $\prec_{G'}[G'[S\cup S'\cup X\cup Y]]$ %
using the isomorphism $\phi_{X,Y}$.%
\end{definition}
We are now ready to show the existence of $n_P$ many distinct large groups $Z_1,\dots,Z_{n_P}$ with a useful consistent pattern between them and $S\cup S'$ based on $\textsf{info}$. Essentially, these groups will form the aforementioned guiding sub-order used to argue the correctness of the pruning step, i.e., establish Lemma~\ref{lem:VI-kernel-proof} by showing that we can reinsert all the removed vertices by building on $\prec_{G'}[G'[S\cup S' \cup Z_1 \cup \dots \cup Z_{n_P}]]$. Note that---perhaps counterintuitively---we will do so by discarding all other information about $\prec_{G'}$. We refer to \Cref{fig:ramsey-overview} for an overview of the entire approach and to \Cref{fig:ramsey-blocks}(a) for a visualization of $\textsf{info}$.
\newcommand{\asc}{\ensuremath{\nearrow}}
\newcommand{\desc}{\ensuremath{\searrow}}
\begin{figure}
    \centering
    \includegraphics{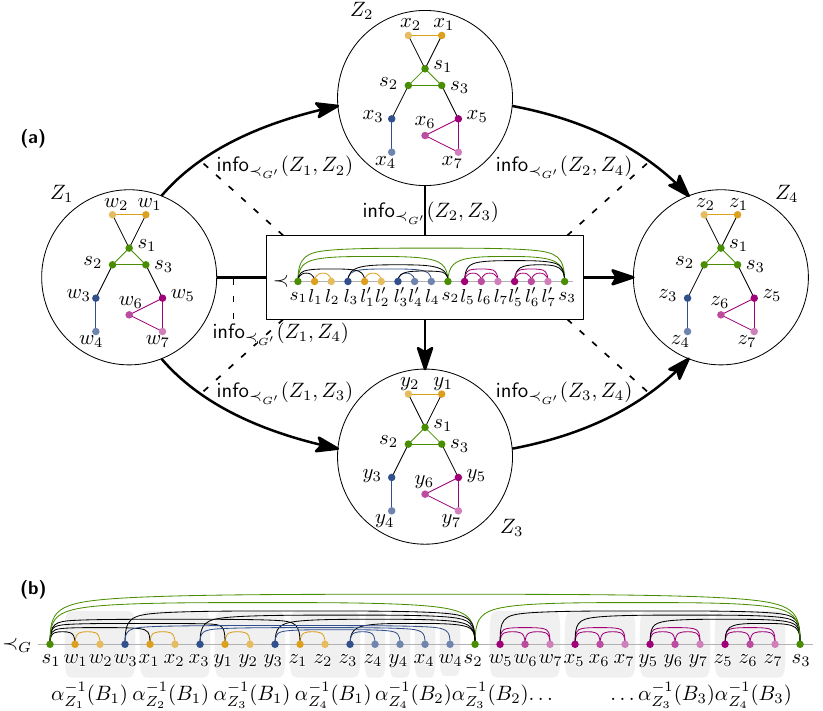}
    \caption{For the pattern from \Cref{fig:ramsey-overview}a, \textbf{\textsf{(a)}} the four large blocks $Z_1 \prec_{G'} Z_2 \prec_{G'} Z_3 \prec_{G'} Z_4 \in \mathcal{L} \setminus \{L, L'\}$ form a monochromatic $K_4$ in $\mathcal{H}$. 
    \textbf{\textsf{(b)}}~Moreover, $(Q, \zeta)$ partitions $R$ into three blocks $B_1 = \{r_1, r_2, r_3\}$, $B_2 = \{r_4\}$, and $B_3 = \{r_5, r_6, r_7\}$ with $\zeta(B_1) = \zeta(B_3) = \asc$ and $\zeta(B_2) = \desc$.}
    \label{fig:ramsey-blocks}
\end{figure}

\begin{lemma}
\label{lem:fewlarge}
There exists distinct large groups $Z_1,\dots,Z_{n_P} \in \mathcal{L} \setminus \{L,L'\} $ with $Z_i \prec_{G'} Z_j$ for each $1\leq i\leq j \leq n_P$ such that the following holds. For each $1\leq i\leq j\leq n_P$ and $1\leq i'\leq j' \leq n_P$, $\textsf{info}_{\prec_{G'}}(Z_i, Z_j)=\textsf{info}_{\prec_{G'}}(Z_{i'}, Z_{j'})$. %
\end{lemma}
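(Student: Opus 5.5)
Build an edge-colored complete graph and apply Ramsey's theorem. Let $\mathcal{H}$ be the complete graph on vertex set $\mathcal{L}\setminus\{L,L'\}$. Color each edge $\{X,Y\}$, with $X\prec_{G'}Y$, by $\textsf{info}_{\prec_{G'}}(X,Y)$, i.e., by a total order of the fixed vertex set $S\cup S'\cup L\cup L'$.

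A monochromatic clique $\{Z_1,\dots,Z_{n_P}\}$ in $\mathcal{H}$, indexed so that $Z_1\prec_{G'}\dots\prec_{G'}Z_{n_P}$, gives exactly the statement for all pairs $i<j$ and $i'<j'$. The pairs $i=j$ lie outside the domain of \textsf{info} (it is only defined for $X\prec_{G'}Y$, hence $X\neq Y$). I read the statement's $\leq$ as $<$, or else treat the diagonal case as vacuous.

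Next, bound the number of colors. Set $x:=|S\cup S'|$. The vertex set $S\cup S'\cup L\cup L'$ has at most $x+2\,p\cdot 2^{2p^2}$ vertices, since a large group has at most $\frac{p\cdot 2^{2p^2}}{p!}\cdot p\le p\cdot 2^{2p^2}$ vertices by \Cref{obs:size-equiv-class}. So the number of colors $c$ is at most $(x+2p\cdot 2^{2p^2})!\le (x+2p^2\cdot 2^{2p^2})!$.

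Now apply multicolor Ramsey. It is convenient to use the standard bound $R_c(n;2)\le c^{cn}$, or the cruder $R(n,\dots,n)\le 2^{c\,n\,\cdot c}$ from the usual neighborhood-halving argument. With $n=n_P$, this is at most $2^{3n_P\cdot c^2}\le 2^{3n_P\cdot((x+2p^2 2^{2p^2})!)^2}$. Hence $f(n_P,p,x)-2=|\mathcal{L}\setminus\{L,L'\}|$ vertices suffice to force a monochromatic $K_{n_P}$. This matches the definition of $f$, and the $+2$ accounts for removing $L$ and $L'$.

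The only place needing care is checking that the explicit Ramsey bound fits $f$. To verify it, I would use the greedy argument: repeatedly pick a vertex, keep the largest color class of its neighbors (losing a factor of $c$ each step), and do this for $c(n_P-1)+1$ steps. Then pigeonhole over the $c$ colors yields $n_P$ chosen vertices whose forward edges all share one color, so they form a monochromatic clique. This needs about $c^{c n_P}\le 2^{c^2 n_P}$ vertices, comfortably within $f$.

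Finally, I check that the coloring is well defined. The isomorphism $\phi_{X,Y}$ is canonical because $\alpha_X$ is fixed per group, so the color of each edge is determined.
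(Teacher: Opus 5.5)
Your proposal is correct and follows the paper's proof. It uses the same auxiliary complete graph on $\mathcal{L}\setminus\{L,L'\}$ colored by $\textsf{info}_{\prec_{G'}}$, the same factorial bound on the number of colors, and a multicolor Ramsey argument showing that $f(n_P,p,x)-2$ vertices force a monochromatic $K_{n_P}$. The only difference is that you derive the Ramsey bound yourself via the greedy argument, whereas the paper cites the known bound that a $t$-colored $K_n$ contains a monochromatic clique of size at least $\log_t(n)/t$; your reading of $i\le j$ as $i<j$ matches the evident intent of the statement.
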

\begin{proof}
Construct a complete edge colored auxiliary graph $\mathcal{H}$ 
with $V(\mathcal{H})=\mathcal{L} \setminus \{L,L'\} $. For $X,Y\in \mathcal{L} \setminus \{L,L'\} $ with $X\prec Y$, we set $\mathsf{color}((X,Y))=\mathsf{info}_{\prec_{G'}}(X,Y)$. Let $x:=|S\cup S'|$.

\begin{claim}
	\label{claim:edge-colors}
	The number of possible distinct edge colors of $\mathcal{H}$ is at most $(x+2\cdot p^2\cdot 2^{2p^2})!$.
\end{claim}
	\begin{claimproof}
		Recall that $\prec_{G'}$ is a solution for $(G',P)$. By definition, $\mathsf{info}_{\prec{G'}}(X,Y)$ for $X\prec Y\in \mathcal{L}\setminus \{L,L'\}$ is a total order of $G[S\cup S'\cup L\cup L']$. 
		The number of possible total orders of $G[S\cup S'\cup L\cup L']$ is at most $(|S\cup S'\cup L\cup L'|)!$. There are 
		at most $p\cdot 2^{2p^2}$ equivalence classes and each graph in any equivalence class has size at most $p$. Thus we can bound the size of large groups $|L|=|L'|\leq p^2\cdot 2^{2p^2}$. Then using $|S\cup S'|=x$
		and $|L|+|L'|\leq 2p^2\cdot 2^{2p^2}$, we bound the number of possible total orders of $G[S\cup S'\cup L\cup L']$ by $(x+2\cdot p^2\cdot 2^{2p^2})!$, as desired.
	\end{claimproof}

We now use a well-known fact (from Ramsey theory~\cite{Alon_asymptoticallytight}) that any edge-colored clique on $n$ vertices colored with $t$ colors has a monochromatic clique of size at least $\log_t(n)/t$. 

We have $|V(\mathcal{H})|=|\mathcal{L}|-2= f(n_P,p,x)-2\geq 2^{3n_P\cdot ((x+2\cdot p^2\cdot 2^{2p^2})!)^2}$.
Thus, it holds $n\geq 2^{3n_P\cdot ((x+2\cdot p^2\cdot 2^{2p^2})!)^2}$ and $t\leq (x+2\cdot p^2\cdot 2^{2p^2})!$ and 
 therefore there is a monochromatic clique of size at least $\log_t(n)/t\geq n_P$. %
The elements of a monochromatic clique of size $n_P$ in $\mathcal{H}$ yield  distinct large groups $Z_1,\dots,Z_{n_P}\in \mathcal{L}\setminus \{L,L'\}$ with the desired properties. 
\end{proof}

Let $Z_1,\dots,Z_{n_P}\in \mathcal{L}$ be a tuple of large groups witnessing the previous lemma. We now show that we can partition the vertices of $Z_1,\dots,Z_{n_P}$ in a nice way that will guide us on how to insert the remaining parts of the graph. 

First, notice that for each $1\leq i\leq j\leq n_P$, each $\prec_{G'}[Z_i\cup S\cup S']$ and $\prec_{G'}[Z_{j}\cup S\cup S']$ must behave in the same way when viewed as isomorphic copies with respect to $R$ due to definition of $\textsf{info}_{prec_{G'}}$; we now formalize this.

\begin{observation}
	\label{obs:internalorder}
	There is an ordering $\prec'$ of vertices of $R\cup S\cup S'$ such that:
	\begin{itemize}
		\item For each $s\in S\cup S'$ and $r\in R$, if $s\prec' r$ then $s\prec_{G'} r_{Z_i}$ for each $i\in [n_P]$;
		\item For each $s\in S\cup S'$ and $r\in R$, if $r\prec' s$ then $r_{Z_i} \prec_{G'} s$ for each $i\in [n_P]$;
		\item For each $u,v\in S\cup S'$, if $u\prec' v$ then $u\prec_{G'} v$;
		\item For each $u,v\in R$ if $u\prec'v$ then $u_{Z_i}\prec_{G'} v_{Z_i}$ for each $i\in [n_P]$.
	\end{itemize} 
\end{observation}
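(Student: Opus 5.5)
We read $\prec'$ directly off the common value of $\textsf{info}$ guaranteed by \Cref{lem:fewlarge}. Fix any pair, say $i<j$ (possible since $n_P\ge 3$), and let $\pi:=\textsf{info}_{\prec_{G'}}(Z_i,Z_j)$. By \Cref{lem:fewlarge}, $\pi$ is the same for all $1\le i\le j\le n_P$. It is a total order on $S\cup S'\cup L\cup L'$. We restrict $\pi$ to $S\cup S'\cup L$ and transport it to $S\cup S'\cup R$ via $\alpha_L$, fixing $S\cup S'$ pointwise; the result is $\prec'$. By definition of $\phi_{X,Y}$, the first group is always sent to $L$ through $\alpha_L^{-1}\circ\alpha_X$. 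Hence, for any pair $(Z_i,Z_j)$ with $i\le j$, the statement $a\prec' b$ for $a,b\in S\cup S'\cup R$ is equivalent to the corresponding comparison in $\prec_{G'}$. On the $R$-side, $r$ is replaced by $r_{Z_i}=\alpha_{Z_i}^{-1}(r)$.

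Next we verify the four items one at a time.

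For the third item, take $u,v\in S\cup S'$. Since $\phi$ is the identity on them, $\pi$ orders $u,v$ exactly as $\prec_{G'}$ does.

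For the first two items, take $s\in S\cup S'$, $r\in R$, and any $i\in[n_P]$. We need a pair in which $Z_i$ plays the role of the first group. If $i<n_P$, use the pair $(Z_i,Z_{n_P})$. If $i=n_P$, use $(Z_{n_P-1},Z_{n_P})$ instead and read $Z_{n_P}$ through $L'$. To avoid this asymmetry, it is cleaner to note that \Cref{lem:fewlarge} also allows $i=j$. In either case $\textsf{info}(Z_i,\cdot)$ equals $\pi$, and the position of $r_L$ relative to $s$ in $\pi$ reflects the relative position of $r_{Z_i}$ and $s$ in $\prec_{G'}$. This gives both implications.

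For the fourth item, take $u,v\in R$ with $u\prec' v$. Then $u_L$ precedes $v_L$ in $\pi$. Pulling back along $\phi_{Z_i,Z_j}$ with $Z_i$ as the first group gives $u_{Z_i}\prec_{G'}v_{Z_i}$.

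The only delicate point is the bookkeeping around the index convention. We must check that every $Z_i$, including $Z_{n_P}$, occurs as the \emph{first} argument of some pair covered by \Cref{lem:fewlarge}, so that it is mapped to $L$ and never only to $L'$. The lemma's quantification over $1\le i\le j\le n_P$ handles this. If one prefers strict pairs, one can instead use the $L'$-side of $\pi$ for $Z_{n_P}$. In that case one must show that the $L$-restriction and the $L'$-restriction of $\pi$ induce the same order on $R\cup S\cup S'$. This follows by comparing $\textsf{info}(Z_1,Z_2)$ with $\textsf{info}(Z_2,Z_3)$: $Z_2$ is read via $L'$ in the first and via $L$ in the second, and the two values are equal. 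Everything else is direct unfolding of definitions.
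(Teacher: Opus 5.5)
Your approach is the paper's: read $\prec'$ off the common value $\pi$ of $\textsf{info}$ from \Cref{lem:fewlarge}. The paper justifies the observation only by a one-line appeal to the definition of $\textsf{info}$. Your argument goes through, but only via the route you present as optional.

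The shortcut ``\Cref{lem:fewlarge} also allows $i=j$'' is not legitimate. $\textsf{info}(Z_i,Z_i)$ is undefined, because $\phi_{Z_i,Z_i}$ would have to send each vertex of $Z_i$ into both $L$ and $L'$. Moreover, the Ramsey argument only makes the colors of edges between \emph{distinct} groups equal. So the ``$\le$'' in that lemma must be read as ``$<$'', and your sentence ``the lemma's quantification over $1\le i\le j\le n_P$ handles this'' should be dropped.

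The consistency argument is therefore the actual proof. Let $\pi_L$ and $\pi_{L'}$ be the orders on $R\cup S\cup S'$ obtained by restricting $\pi$ to $S\cup S'\cup L$ and to $S\cup S'\cup L'$, transported via $\alpha_L$ and $\alpha_{L'}$ respectively.
\begin{itemize}
\item Unfolding $\phi_{Z_i,Z_j}$ for any $i<j$ shows that $Z_i$ induces $\pi_L$ and $Z_j$ induces $\pi_{L'}$.
\item From $\textsf{info}(Z_1,Z_2)=\textsf{info}(Z_2,Z_3)$, the group $Z_2$ induces both, so $\pi_L=\pi_{L'}$.
\item Hence every $Z_i$, including $Z_{n_P}$, induces $\prec':=\pi_L$, and all four items follow.
\end{itemize}
This step is where $n_P\ge 3$ is needed; merely fixing one pair $i<j$ only requires $n_P\ge 2$. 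This middle-group argument is exactly what the paper's remark leaves implicit.
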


Note that Observation~\ref{obs:internalorder} does not provide any information about how the vertices of $Z_i$ interact with those of $Z_j$. Obtaining some structure in that regard will be our next task.
For the following, let us set $\Upsilon=S\cup S' \cup Z_1 \cup \dots \cup Z_{n_P}$. 
Let a \emph{block} be a consecutive subsequence of vertices from $R$ w.r.t.\ $\prec'$, and a \emph{solution-block} of $\prec_{G'}[G'[\Upsilon]]$ be a consecutive subsequence of vertices from $\Upsilon$; see also \Cref{fig:ramsey-blocks-example}(b).
\begin{figure}[h]
    \centering
	\includegraphics{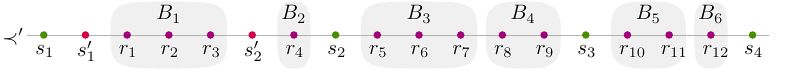}
    \caption{An ordering $\prec'$ of $R \cup S \cup S'$ and a partition of $R$ into the blocks $B_1$ to $B_6$, indicated in gray.
	Note that vertices of $S \cup S'$, colored green and red, respectively, always separate two blocks. However, not every pair of blocks is separated by such a vertex, see, for example, blocks $B_3$ and $B_4$.}
    \label{fig:ramsey-blocks-example}
\end{figure}

We are now ready to prove the structural result which provides strong guardrails on $\prec_{G'}[G'[\Upsilon]]$; in particular, it guarantees that all vertices from each $Z_i$ must occur in consecutive solution-blocks and following a strict ``ascending'' or ``descending'' order. We remark that this structural result is only made possible by the extraction of at least $n_P$ copies of large groups via Lemma~\ref{lem:fewlarge}; see also 
\Cref{fig:ramsey-blocks} for an illustration.

\begin{lemma}
\label{lem:structure}
There exists a pair $(Q,\zeta)$, where $Q$ is a partitioning of $R$ into blocks and $\zeta\colon Q\rightarrow\{\asc, \desc\}$, satisfying the following.
For every block $B\in Q$ with $\zeta(B) = \asc$, the sequence 
$\alpha^{-1}_{Z_1}(B)\alpha^{-1}_{Z_2}(B)\dots \alpha^{-1}_{Z_{n_P}}(B)$ 
is a solution-block of $\prec_{G'}[G'[\Upsilon]]$.
Moreover, for every block $B\in Q$ with $\zeta(B) = \desc$, the sequence $\alpha^{-1}_{Z_{n_P}}(B)\alpha^{-1}_{Z_{n_P-1}}(B)\dots \alpha^{-1}_{Z_1}(B)$ is a solution-block of $\prec_{G'}[G'[\Upsilon]]$.
\end{lemma}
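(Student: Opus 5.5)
The plan is to exploit that, by Lemma~\ref{lem:fewlarge}, the restriction $\prec_{G'}[G'[\Upsilon]]$ is completely determined by a single total order $\textsf{info} := \textsf{info}_{\prec_{G'}}(Z_i,Z_j)$ of $S\cup S'\cup L\cup L'$, which is the same for all $i<j$. Concretely, for $u,v\in R$ and $i<j$, whether $u_{Z_i}\prec_{G'} v_{Z_j}$ holds is read off from the relative position of $\alpha_L^{-1}(u)$ and $\alpha_{L'}^{-1}(v)$ in $\textsf{info}$, and this does not depend on $(i,j)$. Likewise, the position of $u_{Z_i}$ relative to any $s\in S\cup S'$ is fixed, and is recorded by $\prec'$ from Observation~\ref{obs:internalorder}.

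First, I will define a \emph{direction} for each $r\in R$. Applying the uniform comparison with $u=v=r$ shows that either $r_{Z_1}\prec_{G'} r_{Z_2}\prec_{G'}\dots\prec_{G'} r_{Z_{n_P}}$ (set $\zeta(r)=\asc$) or the reverse (set $\zeta(r)=\desc$).

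Next, I will define the partition $Q$. Let $r_1\prec' r_2\prec'\dots\prec' r_m$ be $R$ in the order $\prec'$. Declare a cut between $r_a$ and $r_{a+1}$ if every copy of $r_a$ precedes every copy of $r_{a+1}$ in $\prec_{G'}$. By uniformity, this means $r_{a,Z_j}\prec_{G'} r_{a+1,Z_i}$ for $i>j$, for $i<j$, and (by $\prec'$) for $i=j$. The blocks of $Q$ are the maximal runs without cuts. If a vertex $s\in S\cup S'$ lies between $r_a$ and $r_{a+1}$ in $\prec'$, then all copies of $r_a$ precede $s$, which precedes all copies of $r_{a+1}$. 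Hence every vertex of $S\cup S'$ sits at a cut, and no such vertex can fall inside a block. It remains to analyse a single block $B=(r_a,\dots,r_b)$ with no internal cut. For consecutive $u=r_c$ and $v=r_{c+1}$ in $B$, non-cut means that some copy of $v$ precedes some copy of $u$. Since $u_{Z_i}\prec_{G'} v_{Z_i}$ always holds, this must happen across two distinct groups, and by uniformity it then happens for all index pairs in one orientation: either $v_{Z_i}\prec_{G'} u_{Z_j}$ for all $i<j$, or for all $i>j$.

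The core step is a three-index case analysis, which is where $n_P\ge 3$ is needed. I will show that in the first case both $u$ and $v$ must be $\asc$, and in the second case both must be $\desc$. For instance, suppose $v_{Z_i}\prec_{G'} u_{Z_j}$ for all $i<j$ but $u$ is $\desc$. Then $v_{Z_1}\prec_{G'} u_{Z_3}\prec_{G'} u_{Z_2}\prec_{G'} u_{Z_1}\prec_{G'} v_{Z_1}$, a contradiction; the remaining combinations are handled in the same way. Consequently all elements of a block share one direction, so $\zeta(B)$ is well defined.

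Finally, I will show that in an $\asc$ block the copies appear as the concatenation $\alpha^{-1}_{Z_1}(B)\,\alpha^{-1}_{Z_2}(B)\cdots\alpha^{-1}_{Z_{n_P}}(B)$. From the consecutive relations, $u_{Z_i}\prec_{G'} v_{Z_i}$ and $v_{Z_i}\prec_{G'} u_{Z_{i+1}}$, and chaining these through the block gives $r_{b,Z_i}\prec_{G'} r_{a,Z_{i+1}}$. So within the block the copy of $Z_i$ finishes before the copy of $Z_{i+1}$ starts. Combined with the cuts at the block boundaries, no other vertex of $\Upsilon$ intervenes, and the block is a solution-block. The $\desc$ case is symmetric.

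I expect the main obstacle to be the transitivity and case analysis in the core step: propagating the pairwise uniform relations along a block and excluding mixed directions, in particular making sure that the ``some copy of $v$ precedes some copy of $u$'' relation really does yield the tight chaining needed for the final step.
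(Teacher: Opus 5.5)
Your global-cut construction and the direction-consistency argument are sound. The route also differs from the paper's greedy left-to-right traversal. However, the last step has a genuine gap.

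The sentence ``chaining these through the block gives $r_{b,Z_i}\prec_{G'} r_{a,Z_{i+1}}$'' does not follow from the relations you list. Composing $v_{Z_i}\prec_{G'} u_{Z_{i+1}}$ along the block advances one group per step: $r_{b,Z_i}\prec_{G'} r_{b-1,Z_{i+1}}\prec_{G'}\dots\prec_{G'} r_{a,Z_{i+b-a}}$. This needs $b-a$ further groups, and a block can be much longer than $n_P$. The within-group relation $r_{a,Z_{i+1}}\prec_{G'} r_{b-1,Z_{i+1}}$ points the wrong way and cannot close the chain.

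Here is a concrete counterexample. Take a block $x\prec' y\prec' z$ and the order $x_1,y_1,x_2,z_1,y_2,x_3,z_2,y_3,z_3$, where subscripts indicate groups.
\begin{itemize}
\item All three elements are ascending.
\item Both consecutive pairs satisfy $v_{Z_i}\prec_{G'} u_{Z_j}$ for all $i<j$, so there is no cut.
\item Yet $x_2\prec_{G'} z_1$, so the concatenation is not a solution-block.
\end{itemize}
What excludes this order is uniformity on a \emph{non-adjacent} pair. The pair $(Z_1,Z_3)$ sees $x,y,z,x,y,z$, while $(Z_1,Z_2)$ sees $x,y,x,z,y,z$. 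Your argument never invokes this.

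The missing ingredient is an induction that re-applies uniformity at every step. Show for $c=a+1,\dots,b$ that $r_{c,Z_i}\prec_{G'} r_{a,Z_j}$ for all $i<j$.
\begin{itemize}
\item The base case $c=a+1$ is exactly your case-1 relation for the pair $(r_a,r_{a+1})$.
\item For the step, assume the claim for $c$, and that $r_{c+1,Z_i}\prec_{G'} r_{c,Z_j}$ for all $i<j$. Then $r_{c+1,Z_1}\prec_{G'} r_{c,Z_2}\prec_{G'} r_{a,Z_3}$.
\item Since $\textsf{info}(Z_1,Z_3)=\textsf{info}(Z_i,Z_j)$, the relation $r_{c+1,Z_1}\prec_{G'} r_{a,Z_3}$ transfers to all pairs $i<j$.
\end{itemize}
Taking $c=b$ and $j=i+1$ gives the inequality you need; the descending case is symmetric.

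This is precisely where $n_P\ge 3$ enters. Your direction argument needs only two groups, e.g.\ $v_{Z_1}\prec_{G'} u_{Z_2}\prec_{G'} u_{Z_1}\prec_{G'} v_{Z_1}$.

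With this induction added, the rest of your proof goes through. Your cuts are global: by the within-group order, every copy of anything before a cut precedes every copy of anything after it. Vertices of $S\cup S'$ also sit only at cuts. Hence no vertex of $\Upsilon$ can intervene inside a block's concatenation.
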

\begin{proof}
By definition of $Z_1,\dots,Z_{n_P}$, we infer the following direct properties for each $1\leq i< j< q \leq n_P$, where we recall that for a large group $A$ and $u \in R$ the notation $u_A$ is a shorthand for $\alpha_A^{-1}(u)$:

\begin{itemize}
	\item For each $u\in R$, either $u_{Z_i}\prec u_{Z_j}\prec u_{Z_q}$ or $u_{Z_q}\prec u_{Z_j}\prec u_{Z_i}$.
	\item For any block $B$ of vertices in R, if all vertices in $\alpha^{-1}_{Z_i}(B)$ occur before all vertices in $\alpha^{-1}_{{Z_q}}(B)$ then all vertices in $\alpha^{-1}_{{Z_j}}(B)$ also occur before all vertices in $\alpha^{-1}_{{Z_q}}(B)$.
	\item For any block $B$ of vertices in R, if all vertices in $\alpha^{-1}_{{Z_q}}(B)$ occur before all vertices in $\alpha^{-1}_{{Z_i}}(B)$ then all vertices in $\alpha^{-1}_{{Z_j}}(B)$ also occur before all vertices in $\alpha^{-1}_{{Z_i}}(B)$.
\end{itemize}

On a high level we traverse the linear order $\prec_{G'}$ of $G'$ from left to right. We find the first vertex $u \in Z_1 \cup \dots \cup Z_{n_P}$. If $u$ belongs to $Z_1$, we then traverse $\prec$ and build a solution block $\alpha^{-1}_{Z_1}(B)$ containing only $Z_1$ vertices. This corresponds to block $B$ of $R$. We show below that $\alpha^{-1}_{Z_1}(B)\alpha^{-1}_{Z_2}(B) \dots \alpha^{-1}_{Z_{n_P}}(B)$ is a solution block. We then add $B$ to $Q$ and set $\zeta(B)=\asc$ and continue the traversal. If instead $u$ was a vertex in $Z_{n_P}$ then we would have found a descending block. We now formalize this argument.

Let $R':=R$ and let $u$ be the first vertex in $R'$ in $\prec'$. Further let $\alpha^{-1}_{Z_1}(u)\prec_{G'} \alpha^{-1}_{Z_2}(u) \prec_{G'} \dots \prec_{G'} \alpha^{-1}_{Z_{n_P}}(u)$. 
Choose $B=u\prec'\cdots \prec'w$ to be the largest block of $R$ starting with $u$ such that $\alpha^{-1}_{Z_1}(B)$ is a solution block, and let $d$ be the direct successor of $w$ according to $\prec'$ in $R$. That is we traverse $\prec_{G'}$ starting at $\alpha^{-1}_{Z_1}(u)$ until we find a vertex not in $X$. Note $u$ could be the same as $w$. We now show that $\alpha^{-1}_{Z_1}(B)\alpha^{-1}_{Z_2}(B)\dots \alpha^{-1}_{Z_{n_P}}(B)$ forms a solution block.

First we can infer that all vertices in $\alpha^{-1}_{{Z_1}}(B)$ occur before all vertices in $\alpha^{-1}_{{Z_{n_P}}}(B)$. This implies all vertices in $\alpha^{-1}_{{Z_j}}(B)$ for each $1<j<n_P$ also occur before all vertices in $\alpha^{-1}_{{Z_{n_P}}}(B)$. Therefore $\alpha^{-1}_{Z_1}(B)\alpha^{-1}_{Z_2}(u)$ is also a solution bloc; repeating the same argument yields that $\alpha^{-1}_{Z_1}(B)\alpha^{-1}_{Z_2}(B)\dots \alpha^{-1}_{Z_{n_P}}(B)$ forms a solution block which ends with $w_{Z_{n_P}}$. Next we show that $w_{Z_{n_P}}\prec_{G'} d_{Z_1}$.  

We know by the properties we observed earlier that either $d_{Z_{n_P}}\prec_{G'} d_{Z_2}\prec_{G'} d_{Z_1}$ or $d_{Z_1}\prec_{G'} d_{Z_2}\prec_{G'} d_{Z_{n_P}}$. If it is the former then $w_{Z_{n_P}}\prec_{G'} d_{Z_{n_P}}\prec_{G'} d_{Z_1}$ as desired. If it is the latter, then $d_{Z_1}$ is somewhere between $u_{Z_2}$ and $w_{Z_{n_P}}$ with $u_{Z_2}\prec_{G'} w_{Z_2}\prec_{G'} u_{Z_{n_P}} \prec_{G'} w_{Z_{n_P}}$.
We can divide this into two cases -- $u_{Z_2}\prec_{G'} d_{Z_1}\prec_{G'} u_{Z_{n_P}}$ or $w_{Z_2}\prec_{G'} d_{Z_1}\prec_{G'} w_{Z_{n_P}}$ -- but this would in either case imply that $\textsf{info}(Z_1,Z_2)=\textsf{info}(Z_2,Z_{n_P})=\textsf{info}(Z_1,Z_{n_P})$ is not true. This shows that $\alpha^{-1}_{Z_1}(B)\alpha^{-1}_{Z_2}(B)\alpha^{-1}_{Z_{n_P}}(B)$ is a solution block. We add $B$ to $P$ and set $\zeta(B)=\asc$. 

If we were in the case where $\alpha^{-1}_{Z_{n_P}}(u)\prec_{G'} \alpha^{-1}_{Z_2}(u) \prec_{G'} \dots \prec_{G'} \alpha^{-1}_{Z_1}(u)$, similar arguments will hold with $\alpha^{-1}_{Z_{n_P}}(B)\alpha^{-1}_{Z_2}(B)\dots\alpha^{-1}_{Z_1}(B)$ being a solution block and we set $\zeta(B)=\desc$. 
Finally, recursing with $R'=R'\setminus B$ completes the proof.
\end{proof}

Crucially, we can now use Lemma~\ref{lem:structure} to safely insert an arbitrary number of large groups into $G[\Upsilon]$ while avoiding $P$; this is handled by the following lemma.

\begin{lemma}
\label{lem:layout-supergraph}
Let $G^+$ be a supergraph of both $G$ and $G'$ such that each large equivalence class in $G^+$ has equal size. Then if $(G',P)$ admits a solution, so does $(G^+,P)$.
\end{lemma}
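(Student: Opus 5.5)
Starting from the solution $\prec_{G'}$ of $(G',P)$, fix the groups $Z_1,\dots,Z_{n_P}$ of Lemma~\ref{lem:fewlarge} together with the pair $(Q,\zeta)$ of Lemma~\ref{lem:structure}. We discard everything else of $\prec_{G'}$ and keep only the restriction $\prec_{G'}[G'[\Upsilon]]$. Since $G^+$ agrees with $G'$ on $S\cup S'$, we may write $V(G^+)=S\cup S'\cup W_1\cup\dots\cup W_N$, where each $W_t$ is a large group; this uses that all large classes of $G^+$ have the same size $N$. Identify $W_1,\dots,W_{n_P}$ with $Z_1,\dots,Z_{n_P}$ via the maps $\alpha$. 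The order $\prec^+$ on $V(G^+)$ is built by \emph{stretching} each solution-block. An ascending block $B$ is replaced by $\alpha^{-1}_{W_1}(B)\alpha^{-1}_{W_2}(B)\cdots\alpha^{-1}_{W_N}(B)$, and a descending block by the same sequence in reversed order of groups. Vertices of $S\cup S'$ stay in their relative positions between blocks, as prescribed by $\prec'$ in Observation~\ref{obs:internalorder}.

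Then we show that $\prec^+$ avoids $P$. Suppose it contains a realization $(\prec^+,X)$ of $P$. Because $|X|=n_P$, at most $n_P$ distinct groups $W_{t_1},\dots,W_{t_m}$ with $t_1<\dots<t_m$ and $m\le n_P$ meet $X$. Define $\psi$ as the identity on $S\cup S'$ and as $\alpha^{-1}_{Z_j}\circ\alpha_{W_{t_j}}$ on $W_{t_j}$. The construction is uniform in the group index, so $\psi$ preserves order: inside one ascending (descending) block, group $t_j$ precedes $t_{j'}$ exactly when $j<j'$ ($j>j'$), matching the pattern among the $Z$'s given by Lemma~\ref{lem:structure}. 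Comparisons across blocks and with $S\cup S'$ depend only on block positions, which are identical in both orders. Moreover, $\psi$ is a graph isomorphism onto its image: distinct groups are non-adjacent, each group is mapped isomorphically, and adjacency to $S$ is preserved by the definition of $\alpha$. Adjacency to $S'$ holds as well, since $S'$ consists of components of $G-S$ and has no edges to groups. Hence $\psi(X)$ realizes $P$ in $\prec_{G'}[G'[\Upsilon]]$, and thus in $\prec_{G'}$, which is a contradiction.

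The main obstacle is the order-preservation claim, and the argument has to be careful there. First, one must check that relative positions of two vertices from \emph{different} blocks, or from a block and $S\cup S'$, in $\prec^+$ match those in $\prec_{G'}$ regardless of the group index. This needs that solution-blocks are contiguous in $\prec_{G'}[G'[\Upsilon]]$, so that blocks themselves are linearly ordered consistently across all $Z_i$. Second, inside a block we need the $m\le n_P$ chosen groups to map monotonically onto $Z_1,\dots,Z_m$. This is exactly why $n_P$ copies were extracted: any $n_P$ stretched groups have enough representatives. Finally, a solution for $G$ follows by restricting to $G\subseteq G^+$, since pattern avoidance is inherited by induced subgraphs.
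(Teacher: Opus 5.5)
Your proposal is correct and follows the paper's own argument. You stretch each block of $(Q,\zeta)$ across all groups, keeping $S\cup S'$ where $\prec'$ places them. You then use that a realization meets at most $n_P$ groups, so restricting to those groups plus $S\cup S'$ reproduces, up to isomorphism, the avoiding order $\prec_{G'}[G'[\Upsilon]]$. Your explicit map $\psi$ onto $Z_1,\dots,Z_m$, with its case check for order preservation and adjacency, just spells out the paper's one-line step that deleting all but $n_P$ large groups gives $G^-$ with order $\prec_{G^-}$.
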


\begin{proof}
Let $t\geq 3$ be the maximum size of a large equivalence class of $\sim$. %
Hence, $G^+$ can be expressed as $G^+=G^+[S\cup S'\cup X_1\cup  \cdots \cup X_t]$ where for each $i\in[t]$ $X_i$, is a large group.  Recall that $\prec'$ is the ordering of $R\cup S\cup S'$ from Observation~\ref{obs:internalorder}. Moreover, let $(Q,\zeta)$ be a pair witnessing Lemma~\ref{lem:structure} where $Q$ is a partitioning of $R$ into blocks and $\zeta\colon P\rightarrow\{\asc, \desc\}$. %

We construct a total ordering  $\prec_{G^+}$ as follows: in  $\prec'$, we replace each block $B\in P$ with $\alpha_{X_1}^{-1}(B)\alpha_{X_2}^{-1}(B)\cdots\alpha_{X_t}^{-1}(B)$ if $\zeta(B)= \asc $ and with $\alpha_{X_t}^{-1}(B)\alpha_{X_{t-1}}^{-1}(B)\cdots\alpha_{X_1}^{-1}(B)$ if $\zeta(B)= \desc $.

To complete the proof, we show that $\prec_{G^+}$ avoids the pattern $P$. Towards a contradiction, assume that $\prec_{G^+}$ contains vertices $r_1 \prec_{G^+} \dots \prec_{G^+} r_{n_P}$ which match the pattern $P$. Observe that all but $n_P$ large groups from $G^+$ results in a subgraph isomorphic to $G^-$, and restricting $\prec_{G^+}$ to this subgraph yields $\prec_{G^-}$ (up to isomorphism). But since there are at most $n_P$ large groups intersecting $r_1 \prec_{G^+} \dots \prec_{G^+} r_{n_P}$, this yields a contradiction with $\prec_{G^-}$ being a solution for $(G^-,P)$.
\end{proof}

Lemma~\ref{lem:kernel} now follows directly from Lemma~\ref{lem:reducedcomputation}, the known algorithm for computing a witness $S$ for vertex integrity~\cite{DDvH.CCV.2016}, the trivial observation that YES-instances are closed under vertex deletion (as discussed at the beginning of Subsection~\ref{subsec:VI-Kernel-proof}) and Lemma~\ref{lem:layout-supergraph}.

\section{Parameterization by the Neighborhood Diversity + $\boldsymbol{m_P}$}
\label{sec:nd-plus-mp}
In this section, we establish \Cref{thm:fpt-nd-mp}, i.e., that \PA is \FPT\ parameterized by the neighborhood diversity $\nd(G)$ of $G$ and the number $m_P$ of edges in the pattern $P$.
For the remainder of this section, let us fix an instance $\Instance = \InstanceLong$ of \PA.
Recall from \cref{sec:preliminaries} that two vertices $u$ and $v$ have the same neighborhood type, denoted as $u \sim_N v$, if and only if they have the same neighborhood after ignoring $u$ and $v$, i.e., $N_G(u) \setminus \{v\} = N_G(v) \setminus \{u\}$.

Our proof strategy consists of two major steps.
First, we show in \Cref{sec:nd-plus-mp-alternations} that we can assume without loss of generality that a solution $\prec_G$ to \Instance, if it exists, is well-structured.
In particular, we derive a bound on the number of neighborhood type alternations along $\prec_G$.
This allows us to branch over all possible alternations that we might encounter in a solution.
Moreover, each branch represents a bounded-size sequence $S$ of neighborhood types, encoding intervals in the vertex order consisting of vertices of that particular type.
In our second step, we devise an integer linear programming (ILP) formulation for each branch with a bounded number of variables to check if we can distribute the vertices of $G$ over $S$ while avoiding the pattern $P$; see \Cref{sec:nd-plus-mp-ilp}.
For the remainder of this section, we fix a hypothetical solution $\prec_G$ to \Instance.

\subsection{Bounding Neighborhood-Type Alternations Within Solutions}
\label{sec:nd-plus-mp-alternations}
We call a vertex $v \in V(P)$ \emph{critical} in $P$ if $v$ is the first or last vertex of $\prec_P$ or if there exists an edge $uv \in E(P)$.
We let $C$ denote the set of critical vertices and make the following observation:
\begin{observation}
    \label{obs:nd-plus-mp-number-critical}
    There are at most $2(m_P + 1)$ critical vertices.
\end{observation}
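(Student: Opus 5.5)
The plan is a direct counting argument over the definition of critical vertices. By definition, a vertex $v \in V(P)$ is critical if it is the first vertex of $\prec_P$, the last vertex of $\prec_P$, or an endpoint of some edge in $E(P) = E^+(P) \cup E^-(P)$. So I will write $C$ as the union of two sets and bound each one.

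First, consider the set of edge endpoints $\bigcup_{uv \in E(P)} \{u,v\}$. Each edge of $E(P)$ contributes at most two vertices, so this set has size at most $2m_P$. This is immediate from $m_P = \Size{E(P)}$, where forced and forbidden edges are counted together; they are disjoint since $E^+(P) \cap E^-(P) = \emptyset$, but we only need an upper bound.

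Second, the first and last vertex of $\prec_P$ add at most two further vertices. They may coincide with each other if $n_P = 1$, or with edge endpoints, which only helps. Taking the union gives
\[
  \Size{C} \le 2m_P + 2 = 2(m_P+1).
\]

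No step here is a real obstacle. The only care needed is to treat $E(P)$ as the union of forced and forbidden edges, so that endpoints of forbidden edges are included in the count. The bound is tight up to overlaps, for example for a matching of $m_P$ edges whose endpoints avoid the extreme vertices.
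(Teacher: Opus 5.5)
Your proof is correct and is the same direct count the paper has in mind: the paper states this as an observation without a written proof. Each of the $m_P$ edges in $E(P)=E^+(P)\cup E^-(P)$ contributes at most two endpoints, and the first and last vertices of $\prec_P$ add at most two more, giving at most $2(m_P+1)$.
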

Next, we define the concept of the \emph{compressed} pattern $P^C$ with respect to the critical vertices $C$.
To construct $P^C$ from $P$, we replace every maximal and consecutive sequence of non-critical vertices in $P$ with its length; see \Cref{fig:compressed-pattern} for an example.

Intuitively, $P^C$ corresponds to a (restricted) run-length encoding of $P$.
Observe that the size of $P^C$, i.e., the number of edges, (critical) vertices, and counts in $P^C$, is linear in~$m_P$.
Let $\alpha\colon C \to V(G)/{\sim_N}$ %
be an assignment of critical vertices to neighborhood types.
We call $\alpha$ an \emph{assignment function}.
Thanks to \Cref{obs:nd-plus-mp-number-critical}, the number of potential assignment functions is bounded in our parameter:
\begin{observation}
	\label{obs:nd-plus-mp-number-critical-assignment}
	There are at most ${\nd(G)}^{2(m_P + 1)} \in \nd(G)^{\mathcal{O}(m_P)}$ different assignment functions $\alpha$.
\end{observation}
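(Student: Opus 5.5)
This is a direct counting argument, combining Observation~\ref{obs:nd-plus-mp-number-critical} with the definition of $\nd(G)$.

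First, I fix the partition of $V(G)$ into neighborhood types. By the preliminaries, this partition can be computed in polynomial time and consists of exactly $\nd(G)$ classes. The codomain $V(G)/{\sim_N}$ of an assignment function therefore has size $\nd(G)$.

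If one wants to be careful about whether the minimum partition in the definition of $\nd(G)$ coincides with the quotient by $\sim_N$, note the following. Every class of a witnessing partition lies inside a single $\sim_N$-class, since all its vertices are pairwise $\sim_N$-equivalent. Hence the quotient has at most $\nd(G)$ classes. Minimality of $\nd(G)$ then gives equality. Either way, the codomain has at most $\nd(G)$ elements.

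An assignment function is an arbitrary map $\alpha\colon C \to V(G)/{\sim_N}$. The number of such maps is $\Size{V(G)/{\sim_N}}^{\Size{C}} \le \nd(G)^{\Size{C}}$. By Observation~\ref{obs:nd-plus-mp-number-critical} we have $\Size{C} \le 2(m_P+1)$, so the count is at most $\nd(G)^{2(m_P+1)}$. This uses monotonicity of $x \mapsto \nd(G)^x$, which holds because $\nd(G) \ge 1$ for nonempty $G$; the empty graph is trivial. Since $2(m_P+1) \in \BigO{m_P}$ when $m_P \ge 1$, the bound lies in $\nd(G)^{\BigO{m_P}}$.

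The only point that needs a remark is the degenerate case $m_P = 0$. There the bound is $\nd(G)^2$, which is still $\nd(G)^{\BigO{m_P+1}}$, and that is the intended reading of the asymptotic notation. I do not expect any real obstacle in this argument.
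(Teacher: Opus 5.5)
Your proposal is correct and matches the paper's argument: the paper derives the observation directly from \Cref{obs:nd-plus-mp-number-critical}, since an assignment function is an arbitrary map from the at most $2(m_P+1)$ critical vertices to the $\nd(G)$ neighborhood types. Your additional remarks, that the quotient by $\sim_N$ has exactly $\nd(G)$ classes and that the case $m_P=0$ should be read as $\nd(G)^{\mathcal{O}(m_P+1)}$, are sound; the paper takes both points as implicit.
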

We call an assignment function $\alpha$ \emph{provoking} if (i) for every forced edge $uv \in E^+(P)$ we have that vertices $a \in \alpha(u)$ and vertices $b \in \alpha(v)$ are adjacent in $G$ and for every forbidden edge $uv \in E^-(P)$ we have that vertices in $a \in \alpha(u)$ and $b \in \alpha(v)$ are not adjacent in $G$.
Intuitively, a provoking assignment function witnesses a potential way of introducing the pattern $P$ in a linear order $\prec_G'$ and conversely, every possible way of introducing the pattern corresponds to a provoking assignment.

We use provoking assignment functions to decompose the solution $\prec_G$ into a bounded number of \emph{safe intervals} within which the vertices can be reordered without introducing the pattern, i.e., the resulting order~$\prec_G'$ will still be a solution.
In particular, we will show that vertices within the same interval can be ordered according to their neighborhood type, which allows us to bound the number of neighborhood-type alternations within each interval.
To this end, consider an arbitrary provoking assignment function $\alpha$. 
We use $\alpha$ to introduce \emph{cuts} in $\prec_G$ along the left-to-right traversal $\tau\colon [n] \to V(G)$ that enumerates the vertices according to~$\prec_G$. %
These cuts define segments in $\tau$ that we later use to argue that our reordering operation is safe \emph{with respect to $\alpha$}.
Since we need to ensure safeness with respect to every provoking assignment function, we repeat the steps that we describe next for every possible provoking assignment function $\alpha$.
Eventually, we partition $\prec_G$ along the union of these cuts into the aforementioned safe intervals.

We place the cuts as follows.
We begin with the leftmost entry in the compressed pattern~$P^C$.
By definition, this entry corresponds to a critical vertex $v \in C$.
Let $u$ be the first vertex in $\tau$ such that $u \in \alpha(v)$.
\begin{figure}
    \centering
    \includegraphics{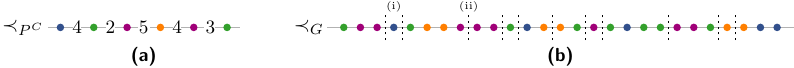}
    \caption{\textbf{\textsf{(a)}} A compressed pattern $P^C$ together with an assignment function $\alpha$ encoded via the colored vertices. \textbf{\textsf{(b)}} The cuts that we introduce in $\prec_G$ based on $\alpha$, indicated with the dashed lines. Colors encode the neighborhood types and we omit edges for clarity.}
    \label{fig:neighborhood-diversity-cuts}
\end{figure}
We place a cut immediately to the left and right of $u$, i.e., between $\Pred{u}$ and $u$, and between $u$ and $\Succ{u}$, if these vertices exist; see the cuts marked by (i) in \Cref{fig:neighborhood-diversity-cuts}b for an example.
We call $u$ the \emph{fixed point} for the critical vertex $v$, indicated as $\alpha^v$.
After placing the cuts, we move one position to the right in the compressed pattern $P^C$.
If the next entry of $P^C$ is a count $c$ (rather than a critical vertex), we advance $c$ steps in the traversal $\tau$ and place the next cut immediately after the vertex $\tau(\tau^{-1}(u) + c)$, if it exists; see also the cut marked by (ii) in \Cref{fig:neighborhood-diversity-cuts}b.
Observe that we traversed a distance of $c$ vertices in the linear order $\prec_G$.
We refer to these vertices as the \emph{fixed segment} of the count $c$, indicated as $\alpha^{[c]}$.
We then advance one position in both $P^C$ (to the next critical vertex $v'$) and $\tau$.
Otherwise, if the next entry of $P^C$ is another critical vertex $v' \in C$, we advance only one position in $\tau$, i.e. after $u\in\tau$.
Afterwards, i.e., once the above is handled, we repeat the procedure, i.e., find the next vertex $u'$ in $\tau$ such that $u' \in \alpha(v')$ and place the cuts.
We repeat this process until all entries of $P^C$ have been processed or the traversal $\tau$ reaches the end of $\prec_G$.

We perform the above steps for every provoking assignment function $\alpha$ %
and consider the union of all cuts introduced in that way.
The resulting maximal contiguous subsequences of $\prec_G$ between two consecutive cuts define the \emph{safe intervals} of $\prec_G$.
Below we show that their number is bounded.

\begin{lemma}
	\label{lem:nd-plus-mp-number-intervals-number}
	There are at most $\BigO{m_P} \cdot {\nd(G)}^{\BigO{{m_P}}}$ safe intervals.
\end{lemma}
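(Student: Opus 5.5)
The plan is a plain counting argument: first bound the number of cuts contributed by a single provoking assignment function $\alpha$, then multiply by the number of such functions, and finally convert cuts into intervals.

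\textbf{Cuts per assignment.} Fix a provoking assignment function $\alpha$ and follow the cut-placing procedure along the compressed pattern $P^C$. Each entry of $P^C$ is either a critical vertex or a count. A critical vertex contributes at most two cuts: one immediately before and one immediately after its fixed point $\alpha^v$. A count $c$ contributes at most one cut, placed right after its fixed segment $\alpha^{[c]}$. By \Cref{obs:nd-plus-mp-number-critical}, $P^C$ has at most $2(m_P+1)$ critical vertices. Counts in $P^C$ only occur between consecutive critical vertices, since the first and last vertices of $P$ are critical by definition. Hence there are at most $2(m_P+1)-1$ counts. If the traversal $\tau$ reaches the end of $\prec_G$ early, the procedure only stops sooner and places fewer cuts. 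So $\alpha$ contributes at most $4(m_P+1)+2(m_P+1)\in\BigO{m_P}$ cuts.

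\textbf{Union over assignments.} Every provoking assignment is an assignment function, so \Cref{obs:nd-plus-mp-number-critical-assignment} gives at most $\nd(G)^{2(m_P+1)}$ of them. The union of all cuts therefore has size at most $\BigO{m_P}\cdot\nd(G)^{\BigO{m_P}}$. Cuts coming from different assignments may coincide; this only lowers the count.

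\textbf{From cuts to intervals.} A set of $k$ distinct cut positions splits the linear sequence $\tau$ into at most $k+1$ maximal contiguous subsequences. These subsequences are exactly the safe intervals, so their number is at most $\BigO{m_P}\cdot\nd(G)^{\BigO{m_P}}+1$, which matches the claimed bound.

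There is no genuine obstacle here. The only care needed is to check that each entry of $P^C$ triggers a constant number of cuts, that the procedure terminates after processing $P^C$ once, and that the number of entries of $P^C$ is linear in $m_P$. All of this follows directly from the construction.
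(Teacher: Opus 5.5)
Your proof is correct and follows the paper's argument essentially verbatim. Each critical vertex gives at most two cuts and each count at most one, so a single provoking assignment yields at most $6(m_P+1)$ cuts. You then multiply by the at most $\nd(G)^{2(m_P+1)}$ assignment functions from \Cref{obs:nd-plus-mp-number-critical-assignment} and add one to pass from cuts to intervals.
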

\begin{proof}
	We first establish a bound on the number of cuts made for a single provoking assignment function~$\alpha$.
	With this, we can bound the overall number of cuts made and, consequently, the number of safe intervals.
	
	Let $P^C$ be the compressed pattern $P$ and let $\alpha\colon C \to V(G)/{\sim_N}$ be a provoking assignment function.
	Consider an entry $a$ of the compressed pattern $P^C$.
	If $a$ is a critical vertex $a \in C$, then we make at most one cut before and at most  one cut after the next vertex $u$ in $\prec_G$ with $u \in \alpha(a)$, i.e., in total at most two cuts.
	Since $P^C$ contains $\Size{C}$ critical vertices, this yields at most $2\Size{C}$ cuts so far.
	If $a$ is a count, we make at most one cut, namely after the next $a$ vertices in~$\prec_G$.
	Since the first and last vertex of $\prec_P$ are critical and since $P^C$ contains at most one count between two critical vertices, we make at most $\Size{C} - 1$ additional cuts.
	Overall, this yields at most $3\Size{C} - 1\leq 3\Size{C} \leq 6(m_P + 1)$ cuts that we make for a single provoking assignment function $\alpha$, where the last inequality follows from the bound on the number of critical vertices established in \Cref{obs:nd-plus-mp-number-critical}.
	
	The number of safe intervals corresponds to the overall number of cuts we make (plus one).
	We can bound the latter by combining the above bound for a single assignment function with the number of possible provoking assignment functions.
    Recall \Cref{obs:nd-plus-mp-number-critical-assignment}, which gives a bound on the overall number of assignment functions, and thus also provoking assignment functions.
	More precisely, we obtain ${6(m_P + 1)}\cdot{\nd(G)}^{2(m_P + 1)} \in \BigO{m_P} \cdot {\nd(G)}^{\BigO{m_P}}$ safe intervals.
\end{proof}

Our next task is to show that we can reorder the vertices within safe intervals based on their neighborhood type without invalidating a solution.
To formalize this goal, let $\iota\colon V(G)/{\sim_N}\to [\nd(G)]$ be an \emph{indexation} of the neighborhood types of $G$.
Moreover, let $s$ denote the number of safe intervals in $\prec_G$ and let $\xi \colon V \to [s]$ be a function that assigns each vertex $v \in V(G)$ to its safe interval in $\prec_G$, addressed via their implicit numbering from left to right.
We now transform the linear order $\prec_G$ into a \emph{normalized} linear order $\prec_G^N$ using the following rules, applied in this order; see also \Cref{fig:neighborhood-diversity-normalized-order} for an example.
\begin{figure}
    \centering
    \includegraphics{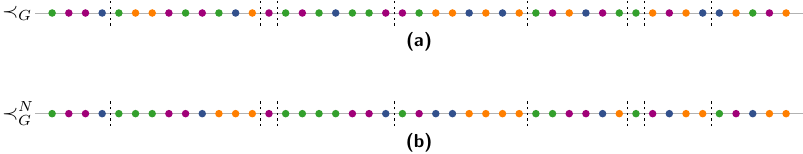}
    \caption{\textbf{\textsf{(a)}} A solution $\prec_G$ partitioned into its safe intervals (indicated via the dashed lines) and \textbf{\textsf{(b)}} its normalized variant $\prec_G^N$. Colors indicate neighborhood classes and we omit edges for clarity.}
    \label{fig:neighborhood-diversity-normalized-order}
\end{figure}
For every pair of vertices $u,v\in V(G)$, $u \neq v$, 
\begin{enumerate}
	\item if $\xi(u) \neq \xi(v)$, we set $u \prec_G^N v$ if $\xi(u) < \xi(v)$ and $v \prec_G^N u$ if $\xi(v) < \xi(u)$;
	\item else, if $u \not\sim_N v$, we set $u \prec_G^N v$ if $\iota(u) < \iota(v)$ and $v \prec_G^N u$ if $\iota(v) < \iota(u)$;
	\item else we set $u \prec_G^N v$ if $u \prec_G v$ and $v \prec_G^N u$ if $v \prec_G u$.
\end{enumerate}
On an intuitive level, we (1.) use the order of the safe intervals, (2.) break ties within the same safe interval based on the neighborhood type, and, if the tie persists, (3.) on the original solution $\prec_G$.
In the next lemma, we show that the above-introduced normalization process preserves solutions.
\begin{lemma}
	\label{lem:nd-plus-mp-reorder-safe-intervals}
	Let $\prec_G^N$ be the normalized linear order obtained from the solution $\prec_G$.
	The linear order~$\prec_G^N$ avoids the pattern $P$, i.e., it is a solution to \Instance.
\end{lemma}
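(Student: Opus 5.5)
We argue by contradiction and reduce everything to the greedy left-to-right procedure that defined the cuts. Suppose $\prec_G^N$ contains $P$ via some realization $(\prec_G^N, X)$. Let $\alpha$ map each critical vertex $v \in C$ to the neighborhood type of $\delta^{-1}(v)$. Since the realization respects all forced and forbidden edges, and adjacency between vertices depends only on their neighborhood types, $\alpha$ is provoking. We then show that the greedy embedding of the compressed pattern $P^C$ with respect to $\alpha$ succeeds in $\prec_G$, which contradicts that $\prec_G$ is a solution.

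\emph{Step 1 (greedy is optimal).} Fix any linear order $\prec$ and a provoking $\alpha$, and run the cut-placing traversal on $\prec$. That is, repeatedly take the first vertex of type $\alpha(v)$ after the current position, and for a count $c$ skip exactly $c$ vertices. We claim this traversal processes all of $P^C$ if and only if $(G,\prec)$ contains $P$ with critical vertices mapped to the types prescribed by $\alpha$.

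For ``if'', we use an exchange argument. By induction along $P^C$, the $i$-th greedy fixed point is never to the right of the image of the $i$-th critical vertex in any such realization. The reason is that a realization also needs at least $c$ vertices between consecutive critical images when a count $c$ separates them, and at least $1$ position step otherwise.

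For ``only if'', the greedy fixed points, together with any $c$ vertices of each fixed segment, form a realization. Non-critical vertices of $P$ carry no edges, so any vertices can play their role. The critical vertices carry exactly the prescribed edges and non-edges because $\alpha$ is provoking and adjacency is determined by the types. One subtlety to check is that two critical vertices assigned the same type yield distinct graph vertices with the right adjacency; the provoking definition is meant to cover this.

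\emph{Step 2 (the greedy run is identical in $\prec_G$ and $\prec_G^N$).} This is the step we expect to be the main obstacle, since it must use exactly how the cuts were placed. Consider the greedy run for $\alpha$ in $\prec_G$. Each fixed point $\alpha^v$ has a cut on both sides, so it forms a singleton safe interval. Each fixed segment ends at a cut. Consequently, every stretch the run passes over is a union of whole safe intervals:
\begin{itemize}
\item a search stretch from the current position up to the next fixed point;
\item a fixed segment $\alpha^{[c]}$.
\end{itemize}
The normalization permutes vertices only inside safe intervals, keeping the multiset of each interval and the interval order. Singleton intervals therefore keep their position, and every union of consecutive whole intervals keeps its vertex set. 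In particular, a search stretch contains no vertex of type $\alpha(v)$ in $\prec_G$, hence none in $\prec_G^N$, and its end is the same unmoved fixed point. Inductively, the greedy run in $\prec_G^N$ chooses exactly the same fixed points and fixed segments, with the same positions, as the run in $\prec_G$. If the run in $\prec_G$ stops because it reaches the end of the order, the run in $\prec_G^N$ stops at the same point for the same reason.

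\emph{Conclusion.} By Step 1 applied to $\prec_G^N$, the greedy run for $\alpha$ processes all of $P^C$ there. By Step 2, the same holds in $\prec_G$. By Step 1 again, $(G,\prec_G)$ then contains $P$, contradicting that $\prec_G$ is a solution. Hence $\prec_G^N$ avoids $P$.
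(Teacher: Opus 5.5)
Your proof is correct and uses the same ingredients as the paper. You read off the provoking type assignment $\alpha$ from a realization in $\prec_G^N$, and you use the cuts placed for that $\alpha$ to show that the greedy reconstruction succeeds in $\prec_G$.

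\textbf{How the paper organizes it.} It builds the greedy set $X'$ directly in $\prec_G$. It then shows $X'$ has $n_P$ elements by comparing each greedy pick in $\prec_G$ with the corresponding realization vertex in $\prec_G^N$ through safe-interval indices ($\xi(v') \le \xi(v)$). That single induction interleaves your exchange argument with the cross-order comparison. It is written out only for the first critical vertex and count, with ``analogous arguments'' for the rest. The paper's second claim is your ``only if'' direction.

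\textbf{How your organization differs.} You separate two statements:
\begin{itemize}
\item a single-order greedy-optimality lemma, which is a pure exchange argument;
\item an exact invariance statement: every stretch the greedy run for $\alpha$ traverses is a union of whole safe intervals, so the run (fixed points and positions of fixed segments) is literally identical in $\prec_G$ and $\prec_G^N$.
\end{itemize}
This makes each half checkable on its own. It also gives a stronger conclusion: for every provoking $\alpha$, $\prec_G$ contains a realization of $P$ with type assignment $\alpha$ if and only if $\prec_G^N$ does.

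\textbf{The subtlety you flag.} You do not need to rely on the wording of ``provoking''. The greedy picks vertices at strictly increasing positions, so they are distinct. Adjacency between two distinct vertices depends only on their types, since a class containing two distinct vertices is a clique or an independent set. Hence the required edges and non-edges carry over from the realization in $\prec_G^N$. The paper relies on the same fact.
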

\begin{proof}
	Towards a contradiction, assume that $\prec_G$ is a solution but $\prec_G^N$ is not.
	In particular, this means that $\prec_G^N$ contains $P$, i.e., there exists a set $X \subseteq V(G)$ such that $(\prec_G^N, X)$ realizes the pattern $P$.
    In the following, we use $\delta$ as a shorthand for $\delta(\prec_G^N, X)$.
	Let $\alpha_{X}$ be the assignment function that matches every critical vertex $v \in C$ of $P$ to the neighborhood type as witnessed in $\delta$, i.e., for every $v \in C$ we have $\delta^{-1}(v) \in \alpha_X(v)$.
    Observe that $\alpha_X$ is provoking, as witnessed by $(\prec_G^N, X)$.
    Moreover, when defining the safe intervals, we considered every provoking assignment function $\alpha$, and in particular $\alpha_{X}$.
	We now use the cuts introduced for $\alpha_{X}$ to show that there exists a set $X' \subseteq V(G)$ %
	such that $(\prec_G, X')$ also realizes $P$, i.e.,~$\prec_G$ is not a solution.
	Moreover, for every critical vertex $v \in C$ we ensure $\delta^{-1}(v) \sim_N \delta(\prec_G, X')^{-1}(v)$, i.e., they are from the same neighborhood type.
	This way, we preserve the ``existence'' of forced and forbidden edges, respectively, and ensure that $(\prec_G, X')$ is a realization of $P$.
	Note that non-critical vertices are isolated vertices in the pattern and their precise neighborhood type is, therefore, irrelevant.
	
	Let the critical vertices $C = \{v_1, v_2, \ldots, v_{\Size{C}}\}$ be ordered according to $\prec_P$, i.e., $v_1 \prec_P v_2 \prec_P \ldots \prec_P v_{\Size{C}}$.
	We now iteratively construct the set $X'$ %
	as follows.
	\begin{enumerate}
		\item Initialize $X' = \emptyset$. %
        \item Start before the leftmost vertex in $\prec_G$ and set $i=1$.
		\item Consider the critical vertex $v_i$:\label{lem:nd-plus-mp-reorder-safe-intervals-step-critical-vertex}
		\begin{enumerate}
            \item Let $u_i$ be the next vertex in $\prec_G$ of neighborhood type $\alpha_X(v_i)$.
            If $u_i$ exists, then
            \item Set $X' \gets X' \cup \{u_i\}$
		\end{enumerate}
		\item Consider the count $c$ between the critical vertices $v_i$ and $v_{i+1}$ (if it exists):\label{lem:nd-plus-mp-reorder-safe-intervals-step-count}
		\begin{enumerate}
            \item Let $U_{i,i+1}$ be the next $c$ vertices in $\prec_G$ after $u_i$.
            If $U_{i,i+1}$ exists, then
			\item Set $X' \gets X' \cup U_{i,i+1}$
		\end{enumerate}
		\item Repeat Steps~\ref{lem:nd-plus-mp-reorder-safe-intervals-step-critical-vertex} and~\ref{lem:nd-plus-mp-reorder-safe-intervals-step-count} for every $2 \leq i \leq \Size{C}$; recall that the critical vertex $v_{\Size{C}}$ is the last vertex in the pattern and hence we skip Step~\ref{lem:nd-plus-mp-reorder-safe-intervals-step-count} for $i = |C|$.
	\end{enumerate}
	To complete the proof, we still have to argue (i) that the above procedure terminates with a set $X'$ with $\Size{X'} =n_P$ %
	and (ii) that $(\prec_G, X')$ realizes $P$.
	We start with establishing Part~(i).
	\begin{claim}
		\label{claim:nd-plus-mp-reorder-safe-intervals-terminate}
		The set $X'$ has size $\Size{X'} = n_P$. %
	\end{claim}
	\begin{claimproof}    
        Observe that our strategy for identifying these vertices resembles a greedy strategy, i.e., for a fixed position in $\prec_G$, we picked the next vertex from the corresponding neighborhood type, and for a fixed segment, we picked the next $c$ vertices to fill up the count.
		In particular, we show that for every vertex $v \in X$ the corresponding vertex $v' \in X'$ exists and we have %
        $\xi(v') \leq \xi(v)$, i.e., $v'$ lies in the same interval as $v$ or to the left of $v$.

        To this end, consider the first critical vertex $v_1 \in C$.
        Since $(\prec_G^N, X)$ realizes $P$, the set $X$ (together with $\delta(\prec_G^N, X)$) witnesses that there exists a vertex $u \in V(G)$ of neighborhood type $\alpha_X(v_1)$.
        Let $u_1$ be the leftmost vertex in $\prec_G$ of neighborhood type $\alpha_X(v_1)$.
        By definition of $\alpha_X$, this implies that we have two cuts around $u_1$ in $\alpha_X$ and, therefore, $\xi(u_1) \leq \xi(\delta(\prec_G^N, X)^{-1}(v_1))$.
        Since we take in Step~\ref{lem:nd-plus-mp-reorder-safe-intervals-step-critical-vertex} the first vertex of neighborhood type $\alpha_X(v_1)$, we can assume $u_1 \in X'$.
        Next, consider the count $c$ in $P^C$ between the critical vertices $v_1$ and $v_2$.
        Let $v \in V(P)$ be the rightmost vertex of $P$ represented by $c$.
        The realization $(\prec_G^N, X)$ witnesses that there exist $\Size{\{w \in V(P) \mid v_1 \prec_P w \prec_P v_2\}}$ vertices after $u_1$ in $\prec_G$ since there is a cut after $u_1$ in $\prec_G$.
        Hence, the cut for $c$ and $\alpha_X^{[c]}$ exists.
        Let $u$ be the rightmost vertex in $\alpha_X^{[c]}$.
        Since we take in Step~\ref{lem:nd-plus-mp-reorder-safe-intervals-step-count} the next $c$ vertices after $u_1$, we have $\xi(u) \leq \xi(\delta(\prec_G^N, X)^{-1}(v))$ and we can assume that $X'$ contains $\alpha_X^{[c]}$.
        Moreover, as there is a cut in $\prec_G$ after $u$, we can only move $u$ to the left when we re-order the vertices within the safe intervals, i.e., $u \preceq_G \delta(\prec_G^N, X)^{-1}(v)$ holds.
        Furthermore, for the next critical vertex $v_2 \in C$ it holds $u \prec_G \delta(\prec_G^N, X)^{-1}(v_2)$, i.e., it is left from the cut after $u$.
        At this point, 
        analogous arguments can be made for the remaining critical vertices and counts in $P^C$.
        Note that while $(\prec_G^N, X)$ guarantees for every critical vertex $v' \in C$ the existence of a vertex $u'$ in $\prec_G$ of the same neighborhood type as $v'$, the fact that we only reorder within safe intervals, and that the next vertex of a corresponding neighborhood type defines up to three cuts, ensures that $u'$ never lies to the left of the next vertex in $\prec_G$ that we would pick in Steps~\ref{lem:nd-plus-mp-reorder-safe-intervals-step-critical-vertex} and~\ref{lem:nd-plus-mp-reorder-safe-intervals-step-count}.
        Combining all, we can conclude that $X'$ indeed contains one vertex for every vertex of the pattern, i.e., $\Size{X'} = n_P$ holds.
	\end{claimproof}
	By \Cref{claim:nd-plus-mp-reorder-safe-intervals-terminate}, the set $X'$ is well-defined in the sense that it contains one vertex from $G$ for every vertex of $P$.
	We next show that it also witnesses $P$ in $\prec_G$, i.e., establish Part~(ii).
	\begin{claim}
		\label{claim:nd-plus-mp-reorder-safe-intervals-witness}
		The tuple $(\prec_G, X')$ witnesses the pattern $P$. %
	\end{claim}
	\begin{claimproof}
		Towards a contradiction, assume that the tuple $(\prec_G, X')$ does not witness the pattern $P$.
		Recall that we use $\delta$ as a shorthand for $\delta(\prec_G^N, X)$.
		Moreover, we now use $\delta'$ as a shorthand for $\delta(\prec_G, X')$ to improve readability.
		
		There are three cases how $(\prec_G, X')$ could not witness $P$: (i) There is a forced edge $uv \in E^+(P)$ such that $\delta'^{-1}(u)\delta'^{-1}(v)\notin E(G)$, (ii) there is a forbidden edge $uv \in E^-(P)$ such that $\delta'^{-1}(u)\delta'^{-1}(v)\in E(G)$, or (iii) there are two critical vertices $u,v\in C$ such that 
        \begin{align*}
            \Size{\{w \in V(P) \mid u \prec_P w \prec_P v\}} \neq \Size{\{w \in V(G) \mid \delta'^{-1}(u) \prec_G\mid_{X'} w \prec_G\mid_{X'} \delta'^{-1}(v)\}}.    
        \end{align*} %
		In the following we argue that none of these cases is possible.
		
		For Cases~(i) and~(ii), we observe that the vertices $u,v \in V(P)$ are critical, i.e., $u,v\in C$.
		When constructing the set $X'$, we employed a neighborhood-type preserving replacement for critical vertices, i.e., we have $\delta^{-1}(u) \sim_N \delta'^{-1}(u)$ and $\delta^{-1}(v) \sim_N \delta'^{-1}(v)$.
		By the definition of neighborhood types, we, therefore, have $\delta'^{-1}(u)\delta'^{-1}(v) \in E(G)$ if and only if $\delta^{-1}(u)\delta^{-1}(v) \in E(G)$.
		As $(\prec_G^N, X)$ witnesses $P$, the forced edge $uv \in E^+(P)$ also exists in $G$, i.e., $\delta'^{-1}(u)\delta'^{-1}(v)\in E(G)$ and the forbidden edge $uv \in E^-(P)$ also does not exists in $G$, i.e., $\delta'^{-1}(u)\delta'^{-1}(v)\notin E(G)$.
		Thus, both cases are not possible and it remains to argue Case~(iii).
		For this case, we can assume that $u$ and $v$ are two neighboring critical vertices in $P$, i.e., there does not exist a critical vertex $w \in C$ such that $u \prec_P w \prec_P v$.
		This is without loss of generality since if $u$ and $v$ are not neighboring, we can always find another pair $u'$ and $v'$ of neighboring critical vertices for which Case~(iii) also applies.
		Observe that Case~(iii) now boils down to not satisfying the count $c$ between the two critical vertices $u$ and $v$ in $P^C$.
		Furthermore, recall that we added for every count, and in particular for the count $c$ between $u$ and $v$, the safe segment to $X'$.
		This safe segment consists of exactly $c$ vertices.
		Thus, Case~(iii) is not possible by construction.
		Since none of the cases is possible, we can conclude that $(\prec_G, X')$ witnesses the pattern~$P$.%
\end{claimproof}
	
	\Cref{claim:nd-plus-mp-reorder-safe-intervals-witness} is the last missing ingredient to complete the proof of the lemma.
	In particular, recall that we assumed, towards a contradiction, that $\prec_G$ is a solution but $\prec_G^N$ is not.
	\Cref{claim:nd-plus-mp-reorder-safe-intervals-terminate,claim:nd-plus-mp-reorder-safe-intervals-witness} together show how to construct a set $X' \subseteq V(G)$ such that $(\prec_G, X')$ witnesses the pattern $P$.
	This contradicts our assumption that $\prec_G$ was a solution.
	Thus, we can conclude that if $\prec_G$ is a solution, so is $\prec_G^N$, which completes the proof.
\end{proof}

\Cref{lem:nd-plus-mp-reorder-safe-intervals} allows us to bound the number of neighborhood-type alternations along a solution $\prec$ to \Instance, where an alternation occurs whenever we have two neighboring vertices $u \prec v$ with $u \not\sim_{N} v$.
Consider the linear order $\prec_G^N$, which is a solution by \Cref{lem:nd-plus-mp-reorder-safe-intervals}.
Within each safe interval, there are at most $\nd(G) - 1$ alternations.
Moreover, we have (at most) one additional alternation at the boundaries of a safe interval.
Since there are $\BigO{m_P} \cdot {{\nd(G)}}^{\BigO{m_P}}$ 
safe intervals by \Cref{lem:nd-plus-mp-number-intervals-number}, we obtain the following corollary.
\begin{corollary}
	\label{cor:nd-plus-mp-alternations-bound}
	If \Instance is a positive instance, then there exists a solution with $\BigO{m_P} \cdot {\nd(G)}^{\BigO{m_P}}\cdot \BigO{\nd(G)}$ neighborhood-type alternations.
\end{corollary}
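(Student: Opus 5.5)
The plan is to derive the corollary directly from \Cref{lem:nd-plus-mp-reorder-safe-intervals} together with the counting bound of \Cref{lem:nd-plus-mp-number-intervals-number}. If \Instance is a positive instance, we fix any solution $\prec_G$. We then build its safe intervals by superimposing the cuts of every provoking assignment function, and apply the normalization rules to obtain $\prec_G^N$. \Cref{lem:nd-plus-mp-reorder-safe-intervals} guarantees that $\prec_G^N$ is again a solution. It therefore suffices to count the neighborhood-type alternations of $\prec_G^N$, i.e., the pairs of consecutive vertices $u \prec_G^N \Succ[\prec_G^N]{u}$ with $u \not\sim_N \Succ[\prec_G^N]{u}$.

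Next we classify each consecutive pair according to whether it lies inside one safe interval or straddles the boundary between two consecutive safe intervals. By rule~1 of the normalization, every safe interval remains contiguous in $\prec_G^N$, and the intervals appear in the order given by $\xi$. Hence there are at most $s-1$ straddling pairs, where $s$ is the number of safe intervals, and each of them contributes at most one alternation. Inside a single safe interval, rule~2 sorts the vertices by $\iota$, so every neighborhood type occupies one contiguous run. A sequence consisting of at most $\nd(G)$ contiguous runs has at most $\nd(G)-1$ alternations.

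Summing these contributions gives at most $s(\nd(G)-1) + (s-1) \leq s\cdot \nd(G)$ alternations. Substituting $s \leq 6(m_P+1)\cdot \nd(G)^{2(m_P+1)}+1$ from \Cref{lem:nd-plus-mp-number-intervals-number} yields the claimed bound $\BigO{m_P}\cdot \nd(G)^{\BigO{m_P}}\cdot \BigO{\nd(G)}$.

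The argument has no genuine obstacle, since all the substantive work is already contained in \Cref{lem:nd-plus-mp-reorder-safe-intervals}. The one point to check is that the normalization really keeps each safe interval contiguous and sorts it by type. This follows because rule~1 takes precedence over rule~2, which in turn takes precedence over rule~3, so the three rules together define a consistent lexicographic order on the key $(\xi(v), \iota(v), \text{position in } \prec_G)$.
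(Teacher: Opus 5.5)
Your proposal is correct and follows the paper's argument: normalize a solution via \Cref{lem:nd-plus-mp-reorder-safe-intervals}, count at most $\nd(G)-1$ alternations inside each safe interval plus at most one per interval boundary, and multiply by the bound on the number of safe intervals from \Cref{lem:nd-plus-mp-number-intervals-number}. Your explicit $+1$ (number of safe intervals equals number of cuts plus one) is slightly more careful than the paper's stated count, though it does not affect the asymptotic bound.
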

\Cref{cor:nd-plus-mp-alternations-bound} is valuable since it allows us to decompose the linear order into a bounded number of maximal consecutive subsequences of vertices from the same neighborhood type, in the following called \emph{segments}.
Moreover, the exact order among vertices in the same segment is irrelevant, since they have the same neighborhood type.
All that matters is the number of vertices (of the correct type) that are within a particular segment.
In the next section, we devise an integer linear program for this step.

\subsection{An Integer Linear Program for Determining the Size of Segments}
\label{sec:nd-plus-mp-ilp}
For the remainder of this section, let us fix a tuple $(S, \gamma)$, where $S = (s_1, \ldots, s_{\ell})$ is a sequence of segments in a hypothetical solution $\prec_G'$ and $\gamma\colon S \to V(G)/{\sim_{N}}$ is a function that associates each segment $s_i \in S$ with a neighborhood type $\gamma(s_i) \in V(G)/{\sim_{N}}$.
By \Cref{cor:nd-plus-mp-alternations-bound}, we can bound $\Size{S} = \ell$ by a function in $\nd(G) + m_P$.
Moreover, we can assume without loss of generality that every segment $s_i \in S$ contains at least one vertex $v \in V(G) \cap \gamma(s_i)$ and no two neighboring segments are associated with the same neighborhood type, i.e., we have $\gamma(s_i) \neq \gamma(s_{i + 1})$ for every $i \in [\ell - 1]$.
However, note that the exact number of vertices inside a segment still need to be determined.
In the following, we construct an integer linear program (ILP) that determines how many vertices are placed in each segment $s_i \in S$.
The ILP consists of two components, each composed of a series of variables and constraints.
The first component determines how many vertices are assigned to each segment and ensures that every vertex is (implicitly) assigned to one segment.
The second component is devoted to ensuring the avoidance of the pattern $P$ in the eventually obtained linear order $\prec_G'$.
In the following, we say that a linear order $\prec_G'$ of $V(G)$ \emph{realizes} $(S, \gamma)$ if $\prec_G'$ decomposes into the sequence of segments together with their neighborhood type as specified in $(S, \gamma)$.

\subparagraph*{Filling the Segments.}
We introduce one integer variable $x_i \geq 1$ to represent the cardinality of every segment $s_i \in S$.
Let $t \in V(G)/{\sim_{N}}$ be a neighborhood type of $G$ and let $S(t)$ denote the set of segments $s_i$ with $\gamma(s_i) = t$.
Note that $t \subseteq V(G)$ is a subset of vertices.
To ensure that every vertex $v \in t$ is represented in exactly one segment, we introduce the constraint $\sum_{s_i \in S(t)} x_i = \Size{t}$ for every neighborhood type $t \in V(G)/{\sim_{N}}$.
Let $\mathcal{X}(G)$ denote the set of so-far introduced variables.
In the following, we say that a linear order $\prec_G'$ \emph{realizes} a variable assignment $\Psi \colon \mathcal{X}(G) \to \mathbb{Z}$ if it realizes $(S, \gamma)$ and every segment $s_i \in S$ in $\prec_G'$ consists of $\Psi(x_i)$ vertices.

\subparagraph*{Avoiding the Pattern.}
Recall that $C$ denotes the set of critical vertices of the pattern $P$.
Let $\beta\colon C\to S$ denote \emph{segment assignment} of the critical vertices of $P$ to segments in $S$.
Similar to \Cref{sec:nd-plus-mp-alternations}, we call $\beta$ a \emph{provoking} segment assignment if (i) for every pair $u,v \in C$ of critical vertices with $\beta(u) = s_i$ and $\beta(v) = s_j$ we have that $u \prec_P v$ implies $i \leq j$ and (ii) the function $\gamma \circ \beta$ is a provoking assignment function.
Intuitively, a provoking segment assignment tells us that a linear order $\prec_G'$ realizing $(S, \gamma)$ contains the pattern $P$ (where each critical vertex $v \in C$ is from the segment $\beta(v)$) as soon as we satisfy the count (of isolated vertices) between every pair of critical vertices.
Therefore, our task is now to ensure that at least one of these counts is not fulfilled.

To this end, let $\beta$ be a provoking segment assignment.
Moreover, let $u,v\in C$ be two critical vertices with $u\prec_P v$ and let $s_i = \beta(u),s_j=\beta(v) \in S$ be the two assigned segments.
Note that we have $i \leq j$ since $\beta$ is provoking.
We let $d_P(u,v)$ denote the distance of the two critical vertices $u$ and $v$ in the pattern $P$, i.e., the number of (not necessarily critical) vertices $w\in V(P)$ with $u \preceq_P w \preceq_P v$.
Similarly, we let $d_S(s_i,s_j) = d_S(\beta(u), \beta(v))$ denote the distance between (and including) the two segments $s_i$ and $s_j$, i.e., $\sum_{i \leq q \leq j} x_q$.

Towards ensuring that we avoid the pattern, we introduce one binary variable $x_{u,v}^{\beta} \in \{0,1\}$ for every pair of critical vertices $u,v \in C$, $u \prec_P v$.
We will use the variable $x_{u,v}^{\beta}$ to indicate whether the segments between $s_i$ and $s_j$ contain enough vertices to represent all vertices between (and including) $u$ and $v$ in $P$, i.e., whether $d_S(\beta(u), \beta(v)) \geq d_P(u,v)$.
More formally, we should have $x_{u,v}^{\beta} = 1$ whenever $d_S(\beta(u), \beta(v)) \geq d_P(u,v)$.
To this end, we define $D_{u,v}^{\beta} \coloneqq d_S(\beta(u),\beta(v)) - d_P(u,v)$.
Note that $d_S(\beta(u), \beta(v)) \geq d_P(u,v)$ whenever $D_{u,v}^{\beta} \geq 0$.
We now need to enforce $x_{u,v}^{\beta} = 1$ in this case and introduce the constraint
\begin{equation}
    (n + 1) x^{\beta}_{u,v} \geq D_{u,v}^{\beta} + 1,
    \label{eq:neighborhood-diversity-pattern-constraint}
\end{equation}
where we recall that $n = \Size{V(G)}$ denotes the number of vertices in $G$.
Observe that whenever $D_{u,v}^{\beta} \geq 0$, $x^{\beta}_{u,v}=1$ has to hold to fulfill this constraint.
On the other hand, if $D_{u,v}^{\beta} < 0$, then the value of $x^{\beta}_{u,v}$ is not constrained.
However, we will ensure in the end that it is ``favorable'' to set $x^{\beta}_{u,v} = 0$ in this case.

With the following two lemmas, we show that the value of $D_{u,v}^{\beta}$, and thus also the variables $x_{u,v}^{\beta}$, actually encode the (non-)existence of the pattern $P$, at least with respect to the provoking segment assignment $\beta$.
To make the latter more formal, let $\prec_G'$ be a linear order that realizes $(S, \gamma)$ and, for a segment $s \in S$, let $V(G)[\prec_G',s]$ denote the vertices contained in $s$ in $\prec_G'$.
If $\prec_G'$ contains $P$, then we call a witnessing tuple $(\prec_G', X)$ \emph{consistent} with $\beta$ if, for every critical vertex $u \in C$, we have that $\delta(\prec_G', X)^{-1}(u) \in V(G)[\prec_G',\beta(u)]$.

In the next lemma, we show that if for a variable assignment $\Psi$ we have $D_{u,v}^{\beta} < 0$ for some $u,v\in C$, then every linear order $\prec_G'$ that realizes $\Psi$ either avoids $P$ or has no tuple $(\prec_G', X)$ consistent with $\beta$.
\begin{lemma}
	\label{lem:nd-plus-mp-ilp-a-small}
    Let $\Psi \colon \mathcal{X}(G) \to \mathbb{Z}$ be a variable assignment such that $D_{u,v}^{\beta} < 0$ holds for some $u,v\in C$.
    Furthermore, let $\prec_G'$ be a linear order of $V(G)$ that realizes $\Psi$.
    If there exists a set $X \subseteq V(G)$ such that $(\prec_G', X)$ realizes $P$, then $(\prec_G', X)$ is inconsistent with $\beta$.
\end{lemma}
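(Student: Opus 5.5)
We argue by contraposition: assume $(\prec_G', X)$ realizes $P$ and is consistent with $\beta$, and show $D_{u,v}^{\beta} \geq 0$ for every pair $u \prec_P v$ of critical vertices, contradicting the hypothesis. Write $\delta \coloneqq \delta(\prec_G', X)$. Consistency gives $\delta^{-1}(u) \in V(G)[\prec_G',\beta(u)] = V(G)[\prec_G', s_i]$ and $\delta^{-1}(v) \in V(G)[\prec_G', s_j]$, where $s_i = \beta(u)$ and $s_j = \beta(v)$. Because $\prec_G'$ realizes $\Psi$, the segments appear in $\prec_G'$ in the order $s_1, \ldots, s_\ell$ as consecutive blocks of sizes $\Psi(x_1), \ldots, \Psi(x_\ell)$. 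The key fact is therefore purely positional: every vertex $w$ with $\delta^{-1}(u) \preceq_G' w \preceq_G' \delta^{-1}(v)$ lies in one of the segments $s_i, s_{i+1}, \ldots, s_j$.

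Next we count. The realization $X$ contains, for every pattern vertex $w$ with $u \preceq_P w \preceq_P v$, the vertex $\delta^{-1}(w)$. Since $\delta$ is order-preserving, each such vertex satisfies $\delta^{-1}(u) \preceq_G' \delta^{-1}(w) \preceq_G' \delta^{-1}(v)$. These are $d_P(u,v)$ pairwise distinct vertices of $G$, and by the previous paragraph all of them lie in $s_i \cup \dots \cup s_j$. That union contains exactly $\sum_{i \le q \le j} \Psi(x_q) = d_S(\beta(u),\beta(v))$ vertices. Hence $d_S(\beta(u),\beta(v)) \geq d_P(u,v)$, that is, $D^\beta_{u,v} \geq 0$.

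Since $u,v$ were an arbitrary pair of critical vertices with $u \prec_P v$, this holds for all of them. In particular it holds for the pair $u,v$ in the lemma's hypothesis with $D_{u,v}^{\beta}<0$, which is a contradiction. So any realization $(\prec_G', X)$ of $P$ is inconsistent with $\beta$. We also need $i \le j$ for the sum to be meaningful. This is immediate: $\delta^{-1}(u) \prec_G' \delta^{-1}(v)$ and the segments are ordered blocks, so the segment containing $\delta^{-1}(u)$ cannot come after the segment containing $\delta^{-1}(v)$.

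The only step needing care is matching the definitions precisely. We must check that $d_P$ and $d_S$ both count inclusively, which they do, and that "realizes $\Psi$" indeed means the segments form contiguous blocks in the listed order with exactly $\Psi(x_q)$ vertices each. Once that is confirmed, the proof is a one-line pigeonhole/injection count. I expect no real obstacle.
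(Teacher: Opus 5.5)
Your proof is correct and is the same pigeonhole argument as the paper's. Consistency with $\beta$ forces the $d_P(u,v)$ distinct images of the pattern vertices between $u$ and $v$ into the contiguous segments from $\beta(u)$ to $\beta(v)$, which together contain only $d_S(\beta(u),\beta(v)) < d_P(u,v)$ vertices. You spell out the positional step (segments are ordered contiguous blocks, so everything between $\delta^{-1}(u)$ and $\delta^{-1}(v)$ lies in them) that the paper leaves implicit, but the route is identical.
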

\begin{proof}
	Let $\Psi$ be a variable assignment as in the statement and let $\prec_G'$ be a linear order that realizes $\Psi$.
	For the sake of the proof, assume that there exists a set $X \subseteq V(G)$ such that $(\prec_G', X)$ realizes $P$, otherwise, there is nothing to show.
	Towards a contradiction, assume that $(\prec_G', X)$ is consistent with $\beta$.
	Let $u,v \in C$ be the two critical vertices with $u \prec_P v$ such that $D_{u,v}^{\beta} < 0$.
	Their existence is guaranteed by $\Psi$.
	Recall $D_{u,v}^{\beta} = d_S(\beta(u),\beta(v)) - d_P(u,v)$, from which we derive that $D_{u,v}^{\beta} < 0$ implies $d_P(u,v) > d_S(\beta(u),\beta(v))$.
	Hence, if $\beta(u) = s_i$ and $\beta(v) = s_j$, the segments $s_q$ with $i \leq q \leq j$ contain in sum strictly fewer vertices than required by $P$, i.e., less than the number of vertices in $V' = \{w \in V(P) \mid u \preceq_P w \preceq_P v\}$.
	Thus, the segments $s_q$ contain not enough vertices to represent all vertices $w \in V'$ with $u \preceq_P w \preceq_P v$, implying that $(\prec_G', X)$ cannot realize $P$.
	Thus, if $(\prec_G', X)$ exists, it must be inconsistent with $\beta$.
\end{proof}
\Cref{lem:nd-plus-mp-ilp-a-small} shows that having $D_{u,v}^{\beta} <0$ for some $u,v \in C$ is sufficient for avoiding the pattern, at least with respect to the provoking segment assignment $\beta$.
With the next lemma, we show that this property is also necessary.
\begin{lemma}
	\label{lem:nd-plus-mp-ilp-a-large}
	Let $\Psi \colon \mathcal{X}(G) \to \mathbb{Z}$ be a variable assignment such that $D_{u,v}^{\beta} \geq 0$ holds for all $u,v\in C$, $u \prec_P v$.
	Then, for every linear order $\prec_G'$ that realizes $\Psi$ there exists a set $X \subseteq V(G)$ such that $(\prec_G', X)$ realizes $P$.
    Moreover, $(\prec_G', X)$ is consistent with $\beta$.
\end{lemma}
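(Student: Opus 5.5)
The plan is to construct the realization $X$ explicitly and greedily, by fixing positions for the critical vertices along $\prec_G'$. Order the critical vertices as $v_1 \prec_P v_2 \prec_P \dots \prec_P v_{\Size{C}}$; recall that $v_1$ and $v_{\Size{C}}$ are the first and last vertex of $P$. For a segment $s_i$, let $a_i \coloneqq 1+\sum_{q<i} x_q$ and $b_i \coloneqq \sum_{q\le i} x_q$ be the positions (under $\Psi$) of its first and last vertex in $\prec_G'$. So $d_S(s_i,s_j) = b_j - a_i + 1$. We define positions $p_1,\dots,p_{\Size{C}}$ by
\[
p_1 \coloneqq a_{\beta(v_1)}, \qquad p_k \coloneqq \max\bigl(a_{\beta(v_k)},\, p_{k-1} + d_P(v_{k-1},v_k) - 1\bigr),
\]
where we identify $\beta(v)$ with its index. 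Each $v_k$ is then mapped to the vertex at position $p_k$.

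The key step, and the only real obstacle, is showing that $p_k \le b_{\beta(v_k)}$ for all $k$, i.e., that the greedy choice stays inside the prescribed segment. Unrolling the maximum gives $p_k = a_{\beta(v_j)} + \sum_{j\le \ell<k}\bigl(d_P(v_\ell,v_{\ell+1})-1\bigr)$ for some $j \le k$. Since $d_P$ counts vertices inclusively, it telescopes: $d_P(v_j,v_k) = \sum_{j\le \ell<k}\bigl(d_P(v_\ell,v_{\ell+1})-1\bigr) + 1$. Hence $p_k = a_{\beta(v_j)} + d_P(v_j,v_k) - 1$. For $j<k$, the hypothesis $D^\beta_{v_j,v_k}\ge 0$ gives $d_P(v_j,v_k) \le d_S(\beta(v_j),\beta(v_k)) = b_{\beta(v_k)} - a_{\beta(v_j)} + 1$, which is exactly $p_k \le b_{\beta(v_k)}$. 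For $j=k$ we have $p_k = a_{\beta(v_k)}$, which is trivially inside the segment. Here we also use property (i) of a provoking $\beta$, namely $\beta(v_j) \le \beta(v_k)$, so that $d_S$ is well defined. In addition, the recursion gives $p_k - p_{k-1} \ge d_P(v_{k-1},v_k)-1 \ge 1$, so the positions are strictly increasing.

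To complete $X$, for each consecutive pair $v_{k-1},v_k$ with count $c = d_P(v_{k-1},v_k)-2$ in $P^C$, we pick any $c$ of the $p_k - p_{k-1} - 1 \ge c$ vertices strictly between positions $p_{k-1}$ and $p_k$. These represent the non-critical pattern vertices. Because $v_1$ and $v_{\Size{C}}$ are the extreme vertices of $P$, there are no leftover pattern vertices outside this range, so $\Size{X}=n_P$ and the order of $X$ in $\prec_G'$ matches $\prec_P$.

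Finally, we verify the edge conditions. Non-critical pattern vertices carry no forced or forbidden edges, so their choice is irrelevant. Each critical $v_k$ lands in segment $\beta(v_k)$, so its vertex has neighborhood type $\gamma(\beta(v_k))$. Since $\gamma\circ\beta$ is a provoking assignment function, every forced edge of $P$ is present in $G$ and every forbidden edge is absent. This also covers two critical vertices of the same type, possibly in the same segment: all pairs of distinct vertices in one type are uniformly adjacent or uniformly non-adjacent. Thus $(\prec_G',X)$ realizes $P$, and by construction each critical vertex lies in its $\beta$-segment, so the realization is consistent with $\beta$.
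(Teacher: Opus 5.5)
Your proof is correct and follows the same greedy strategy as the paper: place each critical vertex as far left as possible within its segment $\beta(v_k)$, advance by the counts of $P^C$, and use $D_{v_j,v_k}^{\beta} \geq 0$ to show that the greedy choice never overshoots the target segment. Your position recursion and the telescoping identity $p_k = a_{\beta(v_j)} + d_P(v_j,v_k) - 1$ are a more explicit formalization of the paper's Cases~1--3 and its property $L(\cdot)$. Here $j$ is the last index at which the maximum is attained by the segment start, which corresponds to the paper's most recent critical vertex placed leftmost in its segment.
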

\begin{proof}
	Let $\Psi \colon \mathcal{X}(G) \to \mathbb{Z}$ be a variable assignment such that $D_{u,v}^{\beta} \geq 0$ holds for all $u,v\in C$, $u \neq v$ and let $\prec_G'$ be an arbitrary linear order that realizes $\Psi$.
	In the following, we show how to construct a set $X \subseteq V(G)$ such that $(\prec_G', X)$ realizes $P$.
    Moreover, the way we construct $X$ will guarantee that $(\prec_G', X)$ is consistent with $\beta$.
	To this end, we use an ``inductive construction strategy'' and iteratively identify for the next vertex $v \in V(P)$ the corresponding next vertex $u \in V(G)$ (according to $\prec_G'$) and reduce thus the pattern length by at least one until all vertices of $P$ are processed.
	In particular, we will focus on the critical vertices $v \in C$ of $P$ and select the first vertex from the corresponding segment $\beta(v)$ whenever possible.
    More concretely, for a critical vertex $v \in C$, we say that $X$ fulfills property $L(v)$ if the identified vertex $u \in V(G)$ is leftmost in $\beta(v)$.
    Critical vertices $v$ for which $L(v)$ holds can be seen as the ``inductive steps'' in our inductive construction strategy.
	Observe that this resembles a greedy strategy that, intuitively, should ensure that we can find one representative for every vertex of the pattern before we arrive at the end of $\prec_G'$.
	
	Let $v_1, v_2, \ldots, v_{\Size{C}}$ be the critical vertices ordered as in $\prec_P$, i.e., $v_1 \prec_P v_2 \prec_P \ldots \prec_P v_{\Size{C}}$.
	We start with $v_1$ and consider the segment $\beta(v_1) = s_i \in S$ for some $i \in [\ell]$; recall $\ell = \Size{S}$.
	Recall that $V(G)[\prec_G',s_i]$ denotes the vertices in segment $s_i$ of the linear order $\prec_G'$.
	We have $\Size{V(G)[\prec_G',s_i]} = \Psi(x_i) \geq 1$ as $\prec_G'$ realizes $\Psi$.
	There exists a leftmost vertex $u_1 \in V(G)[\prec_G',s_i]$ with respect to $\prec_G'$, i.e., for every vertex $u'\neq u_1 \in V(G)[\prec_G',s_i]$ it holds $u_1 \prec_G' u'$.
	We add the vertex $u_1$ to the set $X$ and it will represent the first critical vertex $v_1 \in C$ of $P$.
    Note that this implies that we fulfill $L(v_1)$ (in fact, this can be seen as the ``base case'' of our inductive construction strategy).
	Before we turn our attention to the second critical vertex $v_2 \in C$, we first have to find representatives for the non-critical vertices $w \in V(P)$ with $v_1 \prec_P w \prec_P v_2$.
	Recall that these vertices are represented by a count $c$ in the compressed pattern $P^C$.
	We add the first $c$ vertices $u' \in V(G)$ right of $u_1$, i.e., with $u_1 \prec_G' u'$, to $X$.
	With this, we have found a representative for every vertex $w \in V(P)$ with $w \prec_P v_2$.
	
	Towards finding a representative $u_2$ for the second critical vertex $v_2$, let $\beta(v_2) = s_j \in S$ be the segment to which $v_2$ should belong to.
	Moreover, let $s_k$ be the segment to which the vertex $\Succ[\prec_G']{u'}$ belongs, where $u'$ is the so-far last vertex added to $X$ (which could also be $u_1$ if $v_2 = \Succ{\prec_P, v_1}$).
	We distinguish the following cases depending on the relation of $k$ to $j$.
	
	\proofsubparagraph*{Case 1: $\boldsymbol{k < j}$.}
	If $k < j$, then all vertices in $X$ belong to segments that lie left of the next segment $\beta(v_2)$ according to $\beta$.
	In this case, we can proceed as for $v_1$, i.e., advance to the beginning of segment $\beta(v_2)$ and pick its leftmost vertex $u_2$.
	
	\proofsubparagraph*{Case 2: $\boldsymbol{k > j}$.}
	If $k > j$, then the next vertex we could pick, i.e., the vertex $u_2$ that should represent $v_2$, belongs to a segment $s_k$ that is right of $s_j = \beta(v_2)$.
	This can only be the case if $d_S(\beta(v_1), \beta(v_2)) < d_P(v_1, v_2)$, i.e., the segments between and including $\beta(v_1)$ and $\beta(v_2)$ do not contain sufficiently many vertices to represent all vertices $w \in V(P)$ with $v_1 \preceq_P w \preceq_P v_2$; recall that we fulfill $L(v_1)$.
	However, since $D_{u,v}^{\beta} \geq 0$ holds for all pairs $u,v\in C$, this is not possible.
	In particular, observe that also $D_{v_1,v_2}^{\beta} \geq 0$ holds, which implies $d_S(\beta(v_1), \beta(v_2)) \geq d_P(v_1, v_2)$.
	
	\proofsubparagraph*{Case 3: $\boldsymbol{k = j}$.}
	If $k = j$, then the next vertex we are able to pick is already from the correct segment $\beta(v_2)$.
	While this last case seems easy at first glance, we still have to argue that over the course of creating the set $X$, there remain enough vertices to find a representative for every vertex in $P$.
	In particular, observe that until now, for every critical vertex of $P$, we always selected the first vertex from the corresponding segment; recall also Case~1 and our argument in Case~2.
	In Case~3, we can no longer ensure this. 
	Instead, we proceed as follows.
	Let $u_2 \in V(G)$ be the next vertex we are able to pick, which belongs to segment $s_k = s_j = \beta(v_2)$ by the assumptions that lead to this case.
	We add $u_2$ to $X$ and proceed as before, i.e., consider the count $c'$ between $v_2$ and $v_3$ in $P^C$ and also add the next $c'$-many vertices to $X$.
	Afterwards, we repeat this process for $v_3,v_4, \ldots$ until we are either at $v_{\Size{C}}$ or no longer in Case~3.
	In the former case, we stop, since we have found a representative for every vertex.
	We argue in the end why this means that $(\prec_G', X)$ realizes $P$.
	For the latter case, let $v_q \in C$ be the next critical vertex for which we should find a representative $u_q$.
    Moreover, let $\beta(v_q) = s_{j'} \in S$ be the segment from which we should pick the next vertex according to $\beta$, and let $s_{k'} \in S$ be the segment from which we could pick the next vertex.
	Since we are no longer in Case~3 (with respect to $s_{j'}$ and $s_{k'}$), we have either $k' < j'$ or $k' > j'$.
	Both cases can be handled similarly to Case~1 and~2, respectively.
	For the case $k' < j'$, we advance to the segment $s_{j'}$, which is right of the segment $s_{k'}$ and proceed as in Case~1, i.e., pick the leftmost vertex from the segment as our $u_q$.
    Observe that this implies that we fulfill $L(v_q)$.
	To see that the case $k' > j'$ is not possible, we compare $d_S(\beta(v_1), \beta(v_q))$ and $d_P(v_1, v_q)$.
	Observe that up until now, the vertices from $X$ correspond to a contiguous interval in $\prec_G'$ of length $d_P(v_1, v_q) - 1$.
	This is because we always remained in Case~3 by assumption, i.e., picked the next vertices from $\prec_G'$ as the pattern dictated us without omitting one.
	Moreover, since $D_{v_1, v_q}^\beta \geq 0$ by assumption, we know that $d_P(v_1, v_q) \leq d_S(\beta(v_1), \beta(v_q))$, i.e., the segments between and including $\beta(v_1)$ and $\beta(v_q)$ contain enough vertices to find a representative for every vertex $w \in V(P)$ with $v_1 \preceq_P w \preceq_P v_q$, in particular for $v_q$; recall that we fulfill $L(v_1)$.
	Thus, we conclude that $k'> j'$ is not possible; compare also to Case~2 above.
	
	Combining all, we derive that each case is either not possible or we can find representatives for critical vertices and counts in the compressed pattern $P^C$ while maintaining property $L(v_i)$ for a suitable $v_i \in C$.
	Thus, we can repeat the above process until we have found a representative $u_{\Size{C}}$ for the last critical vertex $v_{\Size{C}}$.
	Since, by definition, $v_{\Size{C}}$ is also the rightmost vertex in $\prec_P$, we eventually obtain a set $X$ of size $\Size{X} = n_P$, i.e., one that contains one representative for every vertex of the pattern.
	To see that $(\prec_G', X)$ realizes the pattern~$P$, it is sufficient to observe that $(\prec_G', X)$ is consistent with $\beta$.
    More formally, by analyzing Cases~1 and~3, we note that we selected for every critical vertex $v \in C$ a representative $u$ from the segment $\beta(v)$.
    This holds in particular for every vertex $v \in V(P)$ that is incident to a forced or forbidden edge of $P$.    
	Since $\beta$ is a provoking segment assignment, the respective neighborhood types ensure that $u$ is (not) adjacent to the respective other critical vertices as required in $P$.
	Combining all, the statement follows.
\end{proof}
\Cref{lem:nd-plus-mp-ilp-a-small,lem:nd-plus-mp-ilp-a-large} together show that we prevent the provoking segment assignment $\beta$ if and only if $D_{u,v}^{\beta} < 0$ holds for some pair of critical vertices $u,v\in C$, $u \neq v$.
Hence, we must ensure that at least one variable $x_{u,v}^{\beta}$ can be set to zero, i.e., is not forced to be one by the constraint from \Cref{eq:neighborhood-diversity-pattern-constraint}.
To this end, we add the following constraint, which ensures this:
$\sum_{u\neq v\in C} x_{u,v}^{\beta} \le \binom{\Size{C}}{2} - 1$.

Since we must prevent all possible ways how the pattern $P$ can occur, we repeat above steps for all possible provoking segment assignments $\beta$.
What remains for establishing \Cref{thm:fpt-nd-mp} is observing that the number of variables in our linear program can be bounded by a function in $m_P + \nd(G)$, i.e., our parameter, and combining this with the known result that checking feasibility for an ILP is \FPT\ in the number of variables~\cite{Len.IPF.1983,Kan.MCB.1987}.

\thmfptndmp*
\begin{proof}
    Let $\Instance = \InstanceLong$ be an instance of \PA.
    We first branch to fix the number $\ell$ of segments that we observe in a hypothetical solution and the corresponding tuple $(S, \gamma)$ with $\Size{S} = \ell$.
    By \Cref{cor:nd-plus-mp-alternations-bound}, we have $\ell\leq f(m_P, \nd(G))$.
    Overall, this yields at most $f(m_P, \nd(G))\cdot {f(m_P, \nd(G))}^{{\nd(G)}}$ branches.
    By \Cref{lem:nd-plus-mp-reorder-safe-intervals,cor:nd-plus-mp-alternations-bound}, this branching step is safe.
    
    For each branch, we construct the above-described ILP.
    It contains, on the one hand, one variable $x_i$ for each segment $s_i \in S$ and, on the other hand, $\binom{\Size{C}}{2}$ variables $x_{u,v}^{\beta}$, $u,v\in C$, for every provoking segment assignment $\beta$.
    Recall that $C$ denotes the critical vertices of the pattern $P$.
    There are at most $\ell^{\Size{C}}$ different segment assignments, which also bounds the number of provoking segment assignments.
    Thus, the ILP contains at most $f(m_P, \nd(G)) + \binom{\Size{C}}{2}\cdot f(m_P, \nd(G))^{\Size{C}}$ variables since $\ell \leq f(m_P, \nd(G))$.
    By \Cref{obs:nd-plus-mp-number-critical}, we have $\Size{C} \in \BigO{m_P}$.
    Consequently, the number of variables in the ILP is in $\BigO{\binom{m_P}{2}\cdot f(m_P, \nd(G))^{m_P}}$.
    
    Using \Cref{lem:nd-plus-mp-ilp-a-small,lem:nd-plus-mp-ilp-a-large}, we conclude that $\Instance$ is a positive instance of \PA if and only if there exists one branch where the ILP is feasible.
    Testing the feasibility of an ILP is fixed-parameter tractable in the number of its variables~\cite{Len.IPF.1983,Kan.MCB.1987}, i.e., in our case it is \FPT\ in $m_P + \nd(G)$.
    Together with the above-derived bound on the number of branches, the statement follows.
\end{proof}
It is known that the neighborhood diversity of a graph $G$ can be upper bounded by the vertex cover number $\vc(G)$ of $G$, i.e., we have $\nd(G) \leq 2^{\vc(G)} + \vc(G)$~\cite{Lam.AMt.2011}.
We remark:%
\begin{corollary}
\label{cor:fpt-vc-mp}
    \PA is fixed-parameter tractable with respect to the vertex cover number $\vc(G)$ of the input graph $G$ and the number of edges $m_P$ in the pattern $P$.
\end{corollary}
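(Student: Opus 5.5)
The plan is to reduce \Cref{cor:fpt-vc-mp} directly to \Cref{thm:fpt-nd-mp} by bounding the neighborhood diversity in terms of the vertex cover number. Given an instance $\Instance = \InstanceLong$ of \PA, I will not compute a minimum vertex cover at all. The algorithm of \Cref{thm:fpt-nd-mp} only requires the partition of $V(G)$ into neighborhood types. That partition can be computed in polynomial time, as recalled in \Cref{sec:preliminaries}, and its number of classes is exactly $\nd(G)$.

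The key step is the inequality $\nd(G) \leq 2^{\vc(G)} + \vc(G)$, cited from Lampis. To keep the corollary self-contained, I would sketch it as follows.

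Fix a vertex cover $K$ of size $\vc(G)$. Every vertex $v \notin K$ has $N_G(v) \subseteq K$. Hence two vertices $u,v \notin K$ with $N_G(u) = N_G(v)$ satisfy $u \sim_N v$; they are non-adjacent, since $V(G)\setminus K$ is independent. So the vertices outside $K$ fall into at most $2^{\vc(G)}$ classes, one per subset of $K$. Putting each vertex of $K$ into its own singleton class gives a partition into at most $2^{\vc(G)} + \vc(G)$ sets, each consisting of pairwise $\sim_N$-equivalent vertices. Since $\nd(G)$ is the minimum size of such a partition, the bound follows.

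Substituting this into the running time $h(\nd(G) + m_P)\cdot \mathrm{poly}(\Size{\Instance})$ of \Cref{thm:fpt-nd-mp}, for some computable $h$, yields a running time of $h'(\vc(G) + m_P)\cdot \mathrm{poly}(\Size{\Instance})$. We may assume $h$ is monotone, and take $h'(k) \coloneqq h(2^k + 2k)$, which is again computable. The argument uses only the existence of a small vertex cover, never knowledge of one.

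The only point requiring any care is checking that $\sim_N$ is the right relation for this bound: two independent vertices with equal neighborhoods satisfy $N_G(u)\setminus\{v\} = N_G(v)\setminus\{u\}$. This is immediate, so I do not expect a real obstacle; the corollary is a routine parameter transfer.
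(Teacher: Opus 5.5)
Your proposal is correct and is the paper's argument: the paper derives this corollary by combining \Cref{thm:fpt-nd-mp} with Lampis's bound $\nd(G) \leq 2^{\vc(G)} + \vc(G)$. The only difference is that you prove that bound inline rather than citing it, and your short proof of it is sound.
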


\section{An Algorithm for Almost All Constant-Sized Patterns on Forests}
\label{sec:poly-trees}
In this section, we establish \Cref{thm:non-mixed}, i.e., provide a polynomial time algorithm for \PA on forests for almost all constant-sized patterns.
At the core of our algorithm lies a dynamic program, targeted at solving structured instances of \PA.
On our quest to obtain these instances, we first consider in \Cref{sec:poly-trees-no-two-forced-edge} patterns with none or at least two forced edges.
So what remains to handle are patterns with precisely one forced edge $E^+(P) = \{uv\}$ with $u \prec_P v$.
We first focus on patterns where $u$ is the leftmost vertex in $\prec_P$.
In \Cref{sec:poly-trees-preprocessing}, we present a number of preprocessing steps that iteratively simplify the instance.
For the simplified instances we devise in \Cref{sec:poly-trees-dp} the aforementioned DP-algorithm.
In \Cref{sec:poly-trees-extension}, we show that the DP-algorithm can be extended to handle all non-mixed patterns.
Finally, we present in \Cref{sec:poly-trees-hardness} an \NP-hardness result that rules out an extension of this dynamic programming approach to mixed patterns.

We recall that for an instance \InstanceLong of \PA, $n_G$ and $n_P$ denote the number of vertices in $G$ and $P$, respectively.
Moreover, we recall $m_P^+ = \Size{E^+(P)}$ and $m_P^- = \Size{E^-(P)}$. 

\subsection{Patterns with Zero or At Least Two Forced Edges}
\label{sec:poly-trees-no-two-forced-edge}
The dynamic program that we devise in a later section implicitly assumes that our pattern contains exactly one forced edge.
In this section, we justify this assumption by exploiting known properties of forests to efficiently handle the remaining cases. 

\subparagraph*{Patterns $\boldsymbol{P}$ with $\boldsymbol{m_P^+} \geq 2$.}
We first consider the case where $P$ contains at least two forced edges, i.e., $m_P^+ = \Size{E^+(P)} \geq 2$.
Note that we do not make any assumptions on the size (or structure) of $E^-(P)$.

\begin{lemma}\label{lem:pattern_deletion_to_bipartite}
    Let the pattern $P$ be given by $( \{a,b,c,d\}, a \prec b \prec c \prec d, \{ab, cd\}, \{\})$, i.e., case (e) of \Cref{fig:patterns-two-forced-edges}.
    Then $\mathcal{C}_P$ is the class of graphs which can be made bipartite by deleting at most one vertex.
\end{lemma}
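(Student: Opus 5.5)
The plan is to reduce to the case of no deletions via \Cref{lem:deletion_to_pattern}. Let $P_0 = (\{b,c,d\}, b\prec c\prec d, \{cd\}, \{\})$ be the three-vertex pattern with a single forced edge between its last two vertices. The pattern $P$ is obtained from $P_0$ by adding the vertex $a$ at the front together with the forced edge $ab$. This is not an isolated vertex, so \Cref{lem:deletion_to_pattern} does not apply verbatim. I would therefore prove both directions directly, reusing the idea of that lemma: the forced edge $ab$ will be played by the edge leaving the first vertex of the order.

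\emph{($\Leftarrow$)} Suppose $G - x$ is bipartite with parts $A, B$. Order $G$ as $x$, then all of $A$, then all of $B$, with arbitrary orders inside each part. Consider any candidate realization $a\prec b\prec c\prec d$ with $ab, cd\in E(G)$. Since $A$ and $B$ are independent, the edge $cd$ cannot lie inside $A$ or inside $B$, and it cannot leave $x$ because $x$ is first and $c$ is not. Hence $c\in A$ and $d\in B$. Then $b\prec c$ forces $b\in\{x\}\cup A$, and since $a\prec b$ we get $b\in A$ and $a\in\{x\}\cup A$. Because $A$ is independent, $ab$ must be an edge with $a=x$. So a realization needs $x$ adjacent to some $b\in A$ and some later edge $c d$ with $b\prec c$, $c\in A$, $d \in B$. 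This can genuinely happen, so the naive order fails.

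I would fix this by choosing the order more carefully. Place $x$ first. Next place the $A$-vertices that are not adjacent to $x$, then the $A$-neighbours of $x$, then $B$. Now $b$ is a neighbour of $x$ and $c\in A$ comes after $b$, so $c$ is also a neighbour of $x$ and has some neighbour $d\in B$. I expect this still to fail, since $c$ may have any $B$-neighbour. The right fix is probably to make $x$ adjacent only to vertices at the very end. Put $x$ first, then all of $A$, then all of $B$, and require that every neighbour of $x$ lies in a position after which no edge starts. The last-starting edges are those leaving the final $A$-vertices, so the rule is: let $x$'s neighbours be exactly the vertices after which no edge has its left endpoint. I would try to prove that some such order exists, possibly by swapping the roles of $A$ and $B$ or by putting $x$ last instead of first. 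Settling this direction is where I expect the real difficulty.

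\emph{($\Rightarrow$)} Given an avoiding order, let $x$ be the first vertex of the order that has a later neighbour. Consider any edge $cd$ with both endpoints after $x$, and suppose some neighbour $b$ of an earlier vertex $a$ satisfies $b \prec c$. That would complete the pattern. So after the first such $b$ (the earliest right endpoint of an edge), the graph must contain no further edge starting later. I would then show that the vertices split into: those before that point, which are independent apart from edges leaving them; and those after it, which form an ``all edges go backwards'' structure that yields a bipartition once one vertex is removed. The main obstacle is to identify precisely which single vertex to delete. I would try deleting the left endpoint $a$ of the edge with the earliest right endpoint, and then use the avoidance condition to show that the remaining graph is bipartite, with the sides ``before $b$'' and ``from $b$ on''.
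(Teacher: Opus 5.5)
There is a genuine gap in each direction.

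\textbf{Direction ($\Leftarrow$).} You never produce a working order. As you noticed, putting the deleted vertex $x$ first can fail, and putting it last is the mirror image. The missing idea is to put $x$ \emph{between} the colour classes: order $A$, then $x$, then $B$. Then every edge $uw$ with $u\prec w$ satisfies $u\preceq x\preceq w$. Edges of $G-x$ go from $A$ to $B$ and so jump over $x$, and edges at $x$ contain $x$ trivially. A realization of $P$ needs two edges $ab$ and $cd$ with $b\prec c$, i.e., with disjoint spans $\{v : a\preceq v\preceq b\}$ and $\{v : c\preceq v\preceq d\}$. This is impossible when every span contains $x$.

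\textbf{Direction ($\Rightarrow$).} Your key observation is correct, but you delete the wrong vertex. Let $b$ be the earliest right endpoint of any edge. Avoidance forbids an edge starting strictly after $b$, and by the choice of $b$ no edge ends strictly before $b$. Hence every edge $uw$ with $u\prec w$ has $u\preceq b\preceq w$. So $G-b$ is bipartite, with sides ``before $b$'' and ``after $b$''.

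Deleting the left endpoint $a$ instead does not work. Edges from $b$ to later vertices stay inside your side ``from $b$ on'', and $G-a$ need not be bipartite at all. For example, take two triangles $\{b,u_1,v_1\}$ and $\{b,u_2,v_2\}$ sharing $b$, ordered $u_1,u_2,b,v_1,v_2$. Every span contains $b$, so this order avoids $P$. The earliest right endpoint is $b$, with left endpoint $u_1$ or $u_2$. But $G-u_1$ (resp.\ $G-u_2$) still contains a triangle.

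The paper uses the symmetric choice: delete the rightmost left endpoint $s$ of an edge $st$. An edge $uw$ of $G-s$ with $u\prec w\prec s$ would form $P$ together with $st$. An edge with $s\prec u$ would contradict the choice of $s$. Hence every edge of $G-s$ crosses from the left of $s$ to its right.
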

\begin{proof}
    $(\Rightarrow):$ Let $G \in \mathcal{C}_P$, and let $\prec$ be a linear order of $V(G)$ that certifies this. W.l.o.g., we assume $E(G)$ is non-empty. Let $st \in E(G)$ with $s \prec t$ such that $s$ is maximal with respect to~$\prec$, i.e., $st$ is the edge of $G$ whose left endpoint in $\prec$ is rightmost.
    We claim that $s$ certifies the desired property, i.e., $G - s$ is bipartite (with bipartition $(L, R)$, where $L$ are the vertices to the left of $s$ in $\prec$ and $R$ those to the right of $s$).
    Let $ab \in E(G - s)$ with $a \prec b$. If $s \prec a \prec b$, 
    this contradicts the choice of $s$. If $a \prec b \prec s$, the edges $ab,st$ induce the forbidden pattern~$P$. Thus we have $a \prec s \prec b$, which by construction means $a \in L$ and $b \in R$.

    $(\Leftarrow):$
    Let $G$ be a (w.l.o.g., non-empty) graph and $s \in V(G)$ such that $G - s$ is bipartite with bipartition $(A, B)$ (if $G$ is already bipartite, such $s$ can be chosen arbitrarily).
    Construct the linear order $\prec$ of $V(G)$ by first placing vertices $A$ in any order, then vertex $s$, then vertices $B$ in any order.
    Let the \emph{span} of an edge $ab \in E(G)$ with $a \prec b$ in $\prec$ be the set $\{ v \in V(G) \mid a \preceq v \preceq b \}$.
    Suppose $(G, \prec)$ contains pattern $P$, i.e., there are two edges $ab, cd \in E(G)$ with disjoint spans in $\prec$. 
    But this is a contradiction, as the span of $ab$ (resp.\ $cd$) always contains $s$: if $a,b \neq s$ (resp.\ $c,d \neq s$), each endpoint is in a different set of the bipartition, thus by construction the edge spans $s$, and if otherwise one of $a, b$ (resp.\ $c, d$) equals $s$, the edge spans $s$ trivially. 
\end{proof}

\begin{lemma}
    \label{lem:forests-two-forced-edge-positive-instance}
    Let $G$ be a forest and assume that
    $P$ contains at least two forced edges.
    Then the instance $(G, P)$ of \PA is positive.
\end{lemma}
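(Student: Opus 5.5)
Since a forest avoiding a subpattern avoids the whole pattern, the plan is to reduce to two forced edges. Pick two forced edges $e_1,e_2\in E^+(P)$ and let $P'$ be the subpattern induced on their (three or four) endpoints, keeping only these two forced edges and discarding all forbidden edges. Any realization of $P$ in $(G,\prec_G)$ restricts to a realization of $P'$, because the forced edges are still present and no constraints are added. Hence every order avoiding $P'$ also avoids $P$, and it suffices to show that every forest admits a linear order avoiding each pattern consisting of exactly two forced edges, no forbidden edges and no isolated vertices. By symmetry (reversing an order reverses the pattern), these are exactly the cases listed in \Cref{fig:patterns-two-forced-edges}, and I would go through them one at a time.

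For the three-vertex cases (the two edges share an endpoint), I would take a BFS/DFS order of each tree that puts every vertex directly after its parent's position in a suitable traversal, or the reverse of such an order. Each vertex then has exactly one neighbour on one side, namely its parent. This excludes every configuration in which some vertex has two neighbours on that side: a shared left endpoint in ``$a$ adjacent to both $b,c$ with $a$ leftmost'' is excluded by the reversed order, and the pattern with a shared rightmost vertex is excluded by the order where parents precede children. For the middle-shared case ($ab$, $bc$, with $b$ in the middle), I would order by BFS levels: even levels first, then odd levels. Every edge then goes from left to right across the bipartition, so no vertex has neighbours on both sides. More generally, any order placing one colour class before the other avoids every pattern in which some vertex has a neighbour both before and after it.

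For the four-vertex cases I would split by crossing type. For nested edges ($ad$, $bc$) and crossing edges ($ac$, $bd$), I would use a DFS preorder (parent before subtree, subtrees consecutive). Edges then never cross, which follows from the stack-number-one property of trees via DFS order, and the preorder should also avoid rainbows, since trees have queue number one via BFS order. So I would pick the BFS order for the rainbow and the DFS order for the twist, treating each pattern separately. For disjoint edges ($ab$, $cd$ with $a\prec b\prec c\prec d$), I would invoke \Cref{lem:pattern_deletion_to_bipartite}: forests are bipartite, hence in $\mathcal{C}_P$.

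The main obstacle is making the list of two-edge patterns exhaustive and checking each order rigorously. The delicate cases are the sharing configurations where the shared vertex is leftmost versus rightmost, and the nesting/crossing pair, where one has to be careful that BFS order really avoids rainbows across different trees of the forest. I would concatenate the trees one after another so that no edge spans a vertex of another tree, and then verify each case directly.
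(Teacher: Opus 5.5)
Your proposal is correct and follows the paper's proof. You use the same reduction to a subpattern $P'$ with two forced edges, the same five cases up to reversal, and the same appeal to \Cref{lem:pattern_deletion_to_bipartite} for two consecutive disjoint edges; the only difference is that you give explicit orders (parent-first, bipartition, BFS, DFS preorder) where the paper cites known characterizations (queue number one, outerplanarity, bipartite graphs, forests).

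Drop one aside, because it is false: a DFS preorder does not avoid rainbows in general. For example, a root $r$ with children $x,y$ and a child $z$ of $x$ has preorder $r,x,z,y$, where $ry$ nests over $xz$. Since you ultimately use the BFS order for the nested case, nothing in the argument breaks.
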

\begin{proof}
Let $e_1, e_2$ be distinct forced edges of $P$.
Consider the subpattern $P'$ formed from $P$ by removing all edges except $e_1$ and $e_2$, and subsequently deleting all isolated vertices.
Clearly, if $G$ avoids $P'$, it avoids $P$ too.
That is, $\mathcal{C}_{P'} \subseteq \mathcal{C}_{P}$.%
Next, we consider all possible $P'$ (up to reversal of $\prec_{P'}$) and observe that in each case, the input forest $G$ already belongs to the more restrictive class $\mathcal{C}_{P'}$, and hence also to $\mathcal{C}_{P}$.
\Cref{fig:patterns-two-forced-edges} enumerates all possible $P'$. We treat each case separately.  

\begin{description}
    \item[(a)] $\mathcal{C}_{P'}$ is the class of graphs with queue number at most one which contains all forests \cite{DBLP:journals/siamcomp/HeathR92}.
    \item[(b)] $\mathcal{C}_{P'}$ is the class of graphs with book thickness at most one, which are precisely the outerplanar graphs \cite{DBLP:journals/jct/BernhartK79}; forests are outerplanar.
    \item[(c)] $\mathcal{C}_{P'}$ is the class of bipartite graphs \cite{FH.GCF.2021}; forests are bipartite.
    \item[(d)] $\mathcal{C}_{P'}$ is the class of forests \cite{FH.GCF.2021}; $G$ is a forest.
    \item[(e)] $\mathcal{C}_{P'}$ is the class of graphs with vertex deletion distance at most one to bipartite (\Cref{lem:pattern_deletion_to_bipartite}); forests are already bipartite. \qedhere
\end{description}
\end{proof}

\subparagraph*{Patterns $\boldsymbol{P}$ with $\boldsymbol{m_P^+} = 0$.}
Next, we consider the case where $P$ contains no forced edges but arbitrarily many forbidden edges, i.e., $m_P = m_P^-$.
\begin{observation}
    \label{obs:forests-independent-set}
    Let $G$ be a forest on $n$ vertices.
    There exists an independent set $I \subseteq V(G)$ of size at least $\Size{I} \geq \Size{V(G)}/2$.
\end{observation}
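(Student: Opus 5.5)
The plan is a standard 2-colouring argument, done separately on each connected component of $G$. Since $G$ is a forest, every component $T$ is a tree and hence bipartite. Fix a proper 2-colouring of $T$ with colour classes $A_T$ and $B_T$. These two classes partition $V(T)$ and each is an independent set. Therefore $\max(|A_T|,|B_T|)\geq |V(T)|/2$.

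For each component, let $I_T$ be the larger of $A_T$ and $B_T$, and set $I:=\bigcup_T I_T$. The set $I$ is independent for two reasons: each $I_T$ is independent within its own tree, and there are no edges between different components. Summing over components gives
\[
|I| = \sum_T |I_T| \geq \sum_T |V(T)|/2 = |V(G)|/2,
\]
which is the claimed bound.

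I do not expect any real obstacle; the statement is essentially immediate. The one point to get right is to choose the larger class in each component separately, rather than fixing one global colour class, which might be small.

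A second route works equally well. One could apply the bipartiteness of forests globally, as noted in the proof of \Cref{lem:forests-two-forced-edge-positive-instance}, case (c). Take a bipartition $(A,B)$ of all of $V(G)$. Then $|A|+|B|=n$, so one of the two sides has at least $n/2$ vertices, and that side is an independent set. I would present this shorter version and keep the component-wise remark only as justification that a bipartition exists.
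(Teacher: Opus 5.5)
Your argument is correct and matches the reasoning the paper leaves implicit: it states this observation without proof, relying on forests being bipartite, as in case~(c) of \Cref{lem:forests-two-forced-edge-positive-instance}. The per-component choice you flag as the delicate point is not actually needed, because in any proper 2-colouring of the whole forest the two classes have sizes summing to $n$, so the larger class already has at least $n/2$ vertices---which is exactly what your second route shows.
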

\begin{observation}
    \label{obs:forests-independent-set-contains-pattern}
    Let $\Instance = \InstanceLong$ be an instance of \PA where $m_P^+ = 0$ and~$G$ is an independent set on $n_G \geq n_P$ vertices.
    Then, every permutation $\prec_G$ of $V(G)$ contains $P$.
\end{observation}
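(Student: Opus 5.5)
The statement says: if $m_P^+=0$ and $G$ is edgeless on $n_G\ge n_P$ vertices, then every linear order of $V(G)$ contains $P$. I would exhibit the realization directly rather than argue by contradiction.

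Fix an arbitrary permutation $\prec_G$ of $V(G)$. Since $n_G \geq n_P$, let $X$ be the first $n_P$ vertices of $\prec_G$; any $n_P$ vertices would do. Write $x_1 \prec_G x_2 \prec_G \dots \prec_G x_{n_P}$ for these vertices, and write $p_1 \prec_P p_2 \prec_P \dots \prec_P p_{n_P}$ for the vertices of $P$. Define the bijection $\delta\colon X \to V(P)$ by $\delta(x_i) = p_i$.

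Next I would check the three conditions in the definition of matching for $(G[X], \prec_G\mid_X)$.

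\begin{itemize}
\item \emph{Order preservation.} This holds by construction, since $\delta$ maps the $i$-th vertex of $X$ to the $i$-th vertex of $P$, so $u \prec_G v$ if and only if $\delta(u) \prec_P \delta(v)$.
\item \emph{Forced edges.} The condition is vacuous because $E^+(P)=\emptyset$.
\item \emph{Forbidden edges.} For every $uv \in E^-(P)$ we need $\delta^{-1}(u)\delta^{-1}(v)\notin E(G)$. This holds because $G$ is an independent set, so $E(G[X])=\emptyset$.
\end{itemize}

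Hence $(\prec_G, X)$ is a realization of $P$. Since $\prec_G$ was arbitrary, no order avoids $P$.

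I expect no real obstacle here. The only point needing care is that the argument really uses both hypotheses: $n_G\ge n_P$ ensures enough vertices exist, and $m_P^+=0$ ensures no edge is required to be present.
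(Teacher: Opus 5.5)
Your proof is correct and is the direct argument the paper intends; the paper states this as an observation without a written proof. Mapping any $n_P$ vertices of the edgeless $G$ order-preservingly onto $V(P)$ gives a realization, because $E^+(P)=\emptyset$ makes the forced-edge condition vacuous and every forbidden edge is automatically absent.
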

The following lemma follows from combining \Cref{obs:forests-independent-set,obs:forests-independent-set-contains-pattern}.
\begin{lemma}
    \label{lem:forests-no-forced-edge-no-instance}
    Let $\Instance = \InstanceLong$ be an instance of \PA where $m_P^+ = 0$ and~$G$ is a forest on $n_G \geq 2n_P$ vertices.
    Then, every permutation $\prec_G$ of $V(G)$ contains $P$.
    In particular, \Instance is a negative instance.
\end{lemma}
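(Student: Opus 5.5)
The plan is to combine \Cref{obs:forests-independent-set} and \Cref{obs:forests-independent-set-contains-pattern} directly. The only point that needs care is that the independent set must be large enough to host a copy of $P$.

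First, let $\prec_G$ be an arbitrary permutation of $V(G)$. Since $G$ is a forest, \Cref{obs:forests-independent-set} yields an independent set $I \subseteq V(G)$ with $\Size{I} \geq n_G/2 \geq n_P$; concretely, take the larger colour class of a proper $2$-colouring of $G$. Second, consider the induced subgraph $G[I]$ with the restricted order $\prec_G\mid_I$. This is an edgeless graph on at least $n_P$ vertices, so \Cref{obs:forests-independent-set-contains-pattern} applies: pick any $n_P$ vertices of $I$ and map them to $V(P)$ in order. The ordering condition of a match holds by construction. There are no forced edges, since $m_P^+ = 0$, so nothing has to be present. Every forbidden edge is absent because $I$ is independent.

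Third, we lift the realization from $G[I]$ to $G$. A realization $(\prec_G\mid_I, X)$ with $X \subseteq I$ is also a realization $(\prec_G, X)$ in $G$, because $G[X]$ is the same induced subgraph whether it is taken inside $G[I]$ or inside $G$, and $(\prec_G\mid_I)\mid_X = \prec_G\mid_X$. Hence every permutation contains $P$, and the instance is negative.

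There is no real obstacle here. The only thing to check is the size bound $n_G/2 \geq n_P$, which is exactly the hypothesis $n_G \geq 2n_P$. For completeness, \Cref{obs:forests-independent-set} itself follows from bipartiteness of forests.
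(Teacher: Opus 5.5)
Your proof is correct and follows the paper's route: the paper obtains this lemma by combining the observation that every forest has an independent set of size at least half its vertices with the observation that, under any order, an edgeless graph on at least $n_P$ vertices contains every pattern without forced edges. Your explicit check that a realization found inside $G[I]$ lifts to $G$ (because $X \subseteq I$ induces the same subgraph in both) is a detail the paper leaves implicit.
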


We now use \Cref{lem:forests-no-forced-edge-no-instance} to devise an algorithm for the case where $m_P^+ = 0$.
\begin{lemma}
    \label{lem:forests-no-forced-edge-algorithm}
    Let $\Instance = \InstanceLong$ be an instance of \PA where $m_P^+ = 0$ and~$G$ is a forest.
    Then, we can decide if $G$ can avoid $P$ in $2^{\BigO{n_P \cdot \log(n_P)}}$ time.
\end{lemma}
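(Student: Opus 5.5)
The algorithm uses \Cref{lem:forests-no-forced-edge-no-instance} to split on the size of $G$. First we compare $n_G$ with $2n_P$, which takes linear time. If $n_G \geq 2n_P$, \Cref{lem:forests-no-forced-edge-no-instance} says that every total order of $V(G)$ contains $P$. We therefore immediately answer that \Instance is negative. This costs only the time to count the vertices, well within the claimed bound. (If we are only given an $n_P$-bounded running time, we can stop counting as soon as $2n_P$ vertices have been seen.)

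Otherwise $n_G < 2n_P$, and we solve the instance by brute force. We enumerate all $n_G! \leq (2n_P)! \in 2^{\BigO{n_P \log n_P}}$ total orders $\prec_G$ of $V(G)$. For each order we must decide whether $(G,\prec_G)$ contains $P$. A realization is a subset $X \subseteq V(G)$ with $\Size{X} = n_P$, and the bijection $\delta$ is determined by $\prec_G\mid_X$. So there are at most $\binom{n_G}{n_P} \leq 2^{2n_P}$ candidate subsets. For each candidate we check the $m_P^- \leq n_P^2$ forbidden edges, which takes $\BigO{n_P^2}$ time. Multiplying gives $(2n_P)! \cdot 2^{2n_P} \cdot \BigO{n_P^2} \in 2^{\BigO{n_P\log n_P}}$. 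We answer positively exactly when some enumerated order avoids every candidate realization. This is correct because it is an exhaustive search over all possible solutions.

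The only step needing any care is confirming that the running-time estimate absorbs all factors. This holds since $(2n_P)! = 2^{\BigO{n_P \log n_P}}$ dominates the exponential and polynomial factors in $n_P$. There is no real obstacle: the combinatorial content is already supplied by \Cref{obs:forests-independent-set,obs:forests-independent-set-contains-pattern}.
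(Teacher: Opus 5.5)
Your proposal is correct and follows the paper's proof essentially verbatim. You reject outright when $n_G \geq 2n_P$ via \Cref{lem:forests-no-forced-edge-no-instance}; otherwise you brute-force all $n_G!$ orders and all $\binom{n_G}{n_P}$ candidate subsets with an $\BigO{n_P^2}$ check each, which fits in $2^{\BigO{n_P \log n_P}}$ because $n_G < 2n_P$ (your remark about stopping the vertex count after $2n_P$ vertices is a small nicety the paper leaves implicit).
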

\begin{proof}
    If $n_G \geq 2\cdot n_P$ we can return no due to \Cref{lem:forests-no-forced-edge-no-instance}.
    Otherwise, we exhaustively branch into all $\BigO{n_G!}$-many linear orders $\prec_G$ of $V(G)$.
    For every linear order $\prec_G$ we can check in $\BigO{\binom{n_G}{n_P}\cdot {n_P}^2}$ time if it avoids $P$ by examining all possible subsets of $V(G)$ of size~$n_P$.
    Since $n_G < 2 \cdot n_P$, the statement follows.
\end{proof}

\subsection{Patterns with One Forced Edge}
\label{sec:poly-trees-preprocessing}
\Cref{lem:forests-no-forced-edge-no-instance,lem:forests-two-forced-edge-positive-instance} together allow us to focus on the case where $P$ contains a single forced edge, i.e., $m_P^+ = 1$.
Towards this, we first define a special case of \PA\ below; we will later show that solving this special case is sufficient to deal with all non-mixed patterns (cf. Subsection~\ref{sec:poly-trees-extension}).

\probdef{\PAOneFLongUnderline~(\PAOneF)}{A forest $G$ and a pattern $P$ with $E^+(P) = \{uv\}$ such that $u \preceq_P w$ for every $w \in V(P)$.}{Does there exist a total order $\prec_G$ of $V(G)$ that avoids the pattern $P$?}

As our first order of business, we provide a useful observation about detecting patterns and define two notions that will be useful throughout this section.

\begin{observation}
    \label{lem:poly-trees-preprocessing-pattern-on-left}
    Given an instance $\Instance=\InstanceLong$ of \PA and a total order $\prec_G$ of $V(G)$.
    We can find in $\BigO{{n_G}^{n_P} \cdot {n_P}^2}$ time the minimum value $\ell \in [n_G]$ such that $\prec_G\mid_{\ell}$ contains $P$ (if such an $\ell$ exists).
\end{observation}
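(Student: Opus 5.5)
The plan is brute force over prefixes: for each candidate prefix length $\ell = 1, 2, \ldots, n_G$, test whether $(G[V(\prec_G\mid_{\ell})], \prec_G\mid_{\ell})$ contains $P$, and return the first $\ell$ for which it does. Containment is monotone in $\ell$: a realization $(\prec_G\mid_\ell, X)$ is also a realization in every longer prefix, because restricting to $X$ gives the same ordered induced subgraph. Hence the first successful $\ell$ is the minimum.

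A more economical route avoids $n_G$ separate tests. I will enumerate all subsets $X\subseteq V(G)$ with $\Size{X}=n_P$ exactly once; there are at most $\binom{n_G}{n_P}\le {n_G}^{n_P}$ of them. For each $X$, I sort it according to $\prec_G$, which defines the unique order-preserving bijection $\delta$ onto $V(P)$. I then check every pair in $E^+(P)$ and $E^-(P)$ against adjacency in $G$. With an adjacency matrix, or an $\BigO{1}$ adjacency test after linear preprocessing, this costs $\BigO{{n_P}^2}$ per subset, since $m_P \le \binom{n_P}{2}$ and sorting $n_P$ elements is within that bound.

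Whenever $X$ matches, I record the position in $\prec_G$ of its $\prec_G$-maximal vertex. The answer is the minimum recorded position, and if no $X$ matches, no such $\ell$ exists. This is correct because $\prec_G\mid_\ell$ contains $P$ if and only if some matching $X$ lies entirely within the first $\ell$ vertices, that is, if and only if its last vertex has position at most $\ell$. The total running time is $\BigO{{n_G}^{n_P}\cdot {n_P}^2}$.

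There is no real obstacle here. The only point needing care is the equivalence between "the prefix contains $P$" and "some matching $n_P$-set has its maximum inside the prefix". This equivalence follows directly from the definition of a realization, because matching depends only on $G[X]$ and on $\prec_G\mid_X$.
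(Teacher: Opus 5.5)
Your proposal is correct and is essentially the paper's argument. You enumerate all $\binom{n_G}{n_P}$ subsets of size $n_P$ and check each for a match in $\mathcal{O}({n_P}^2)$ time. Recording the $\prec_G$-position of the last vertex of each matching subset and taking the minimum then gives $\ell$; the initial prefix-by-prefix variant and the monotonicity remark are valid but not needed.
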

\begin{proof}
    Let $Y \subseteq V(G)$ be a subset of $G$'s vertices of size $\Size{Y} = n_P$.
    There are $\binom{n_G}{n_P} \in \BigO{{n_G}^{n_P}}$ possible sets $Y$.
    If $(G[Y], \prec_G\mid_Y)$ matches $P$, which we can check in $\BigO{{n_P}^2}$ time, we set $\ell_Y$ to the index of $v$ in $\prec_G$, where $v$ is the rightmost vertex in $\prec_G\mid_Y$.
    Otherwise, we set $\ell_Y$ to infinity.
    The sought-after value $\ell$ is $\ell = \min \{\ell_Y \mid Y \subseteq V(G),\Size{Y} = n_P\}$ (if it is finite). %
\end{proof}

\begin{definition}
    \label{def:poly-trees-preprocessing-induced-pattern}
    Let $P$ be a pattern and let $X \subseteq V(P)$ be a subset of its vertices.
    We define the \emph{induced pattern on $X$} as $P[X] = (X, \prec_P\mid_X, E^+(P) \cap X^2, E^-(P) \cap X^2)$.
\end{definition}
\begin{definition}
    \label{def:poly-trees-preprocessing-left-right-middle}
    Let $P$ be a pattern with $E^+(P) = \{uv\}$, $u \prec_P v$.
    We define $V_L(P) = \{w \in V(P) \mid w \prec_P u\}$, $V_M(P) = \{w \in V(P) \mid u \preceq_P w \preceq_P v\}$, and $V_R(P) = \{w \in V(P) \mid v \prec_P w\}$ to be the vertices left, between (and including), and right of the forced edge, respectively.
    $L(P)\coloneqq P[V_L(P)]$, $M(P) \coloneqq P[V_M(P)]$, and $R(P) \coloneqq P[V_R(P)]$ denote the \emph{left}, \emph{middle}, \emph{right} part of $P$, respectively.
\end{definition}

We also note that high-degree vertices imply the existence of a ``long edge'' in the total order:

\begin{observation}
	\label{obs:poly-trees-preprocessing-high-degree-long-edge}
	Let $v \in V(G)$ be a vertex of degree $x = deg_G(v)$.
	In every linear order $\prec_G$ there exists an edge $uv \in E(G)$ such that the number of vertices $w \in V(G)$ with $u \preceq_G w \preceq_G v$ (or $v \preceq_G w \preceq_G u$) is at least $\lceil x/2 \rceil + 1$.
\end{observation}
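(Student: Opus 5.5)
Fix an arbitrary linear order $\prec_G$ and look at the $x$ neighbors of $v$, split by which side of $v$ they lie on. Let $A = \{u \in N_G(v) \mid u \prec_G v\}$ and $B = \{u \in N_G(v) \mid v \prec_G u\}$. Since $|A| + |B| = x$, pigeonhole gives $\max(|A|,|B|) \geq \lceil x/2 \rceil$. By symmetry (reverse $\prec_G$ if needed), we may assume $|A| \geq \lceil x/2 \rceil$.

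Now take $u$ to be the leftmost vertex of $A$ with respect to $\prec_G$. Every vertex of $A$ satisfies $u \preceq_G w \prec_G v$, because $u$ is the minimum of $A$ and all of $A$ precedes $v$. In addition, $v$ itself lies in the closed interval from $u$ to $v$. The set $\{w \in V(G) \mid u \preceq_G w \preceq_G v\}$ therefore contains $A \cup \{v\}$, which has size at least $\lceil x/2 \rceil + 1$, and $uv \in E(G)$ because $u \in N_G(v)$. In the symmetric case we take the rightmost vertex of $B$, which yields the variant $v \preceq_G w \preceq_G u$.

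The only care needed is the degenerate case $x = 0$, where the observation is vacuous or is simply not invoked. For $x \geq 1$ the larger side is nonempty, so $u$ exists.
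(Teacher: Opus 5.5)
Your proof is correct: pigeonhole puts at least $\lceil x/2\rceil$ neighbors on one side of $v$, and the closed interval from the outermost of them to $v$ contains all of them plus $v$. The paper states this as an observation without proof, your argument is the natural one it intends, and your caveat that the statement implicitly needs $x \geq 1$ (for $x=0$ there is no edge $uv$ at all) is fair.
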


\subparagraph*{Hydra Vertices.}
Next, we define the notion of a hydra-vertex, which is a vertex with many non-leaf neighbors in $G$.
As we will see later, these vertices require special treatment in our algorithm and we show that only a bounded number of them can exist in positive instances.
\begin{definition}
    \label{def:poly-trees-preprocessing-hydra-vertex}
    Let $G$ be a forest and let $v \in V(G)$ be a vertex of $G$.
    The \emph{interesting degree} $d_G^{>1}(v)\coloneqq \Size{\{u \in N_G(v) \mid d_G(u) > 1\}}$ is the number of non-leaf neighbors of $v$.
    We call $v$ a \emph{hydra vertex} if $d_G^{>1}(v) \geq 38n_P + 1$.
\end{definition}

As our next step, we show all hydra vertices must be towards the end of every hypothetical solution $\prec_G$.
Eventually, this allows us to branch and fix the last part of $\prec_G$, including the hydra vertices.
Towards establishing this intermediate result, we first show the following lemma.

\begin{lemma}
    \label{lem:poly-trees-preprocessing-matching}
    Let $\Instance= \InstanceLong$ be an instance of \PAOneF.
    Assume there is a solution $\prec_G$ to \Instance and let $M \subseteq E(G)$ be a matching.
    For every vertex $v \in V(G)$ the number of edges $uw \in M$ with $u \prec_G v \prec_G w$ is at most $6n_P$.
\end{lemma}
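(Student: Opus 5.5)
We argue by contradiction: suppose some vertex $v$ is spanned by more than $6n_P$ matching edges, and build an occurrence of $P$ in $(G,\prec_G)$. Recall that in \PAOneF\ the pattern has a single forced edge $uv$, with $u$ the leftmost vertex of $\prec_P$. So a realization needs an edge $ab$ of $G$ with $a \prec_G b$, enough vertices between and to the right of it to host $M(P)$ and $R(P)$ with the correct counts and relative order, and all forbidden pairs mapped to non-adjacent vertices. Since $G$ is a forest, adjacency among the chosen vertices is sparse. The only obstruction is therefore finding, for each forbidden edge of $P$, images that are not adjacent.

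Let $M_v$ be the set of matching edges spanning $v$, with $|M_v| \ge 6n_P+1$. Write each as $a_i b_i$ with $a_i \prec_G v \prec_G b_i$, and order them so that $a_1$ is leftmost among the left endpoints. The edge $a_1b_1$ will play the forced edge. The remaining endpoints supply vertices: left endpoints $a_i$ with $a_1 \prec_G a_i \prec_G v$ and right endpoints $b_i \succ_G v$. All of these are pairwise distinct because $M$ is a matching.

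The key counting step goes as follows. Among the other $6n_P$ edges, at least $6n_P$ left endpoints lie strictly between $a_1$ and $v$, and hence strictly between $a_1$ and $b_1$. Likewise, at least $3n_P$ right endpoints lie either before $b_1$ or after $b_1$; choose whichever side has more. This gives a large supply $W$ of vertices on each side. To place the pattern vertices of $M(P)$ strictly between $u$ and $v$, and those of $R(P)$ after $v$, we need candidate sets of size at least $n_P$ in the appropriate regions. These sets must also induce no edges that conflict with forbidden pairs and be non-adjacent to $a_1, b_1$ where required.

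Since $G$ is a forest, $G[W]$ has fewer than $|W|$ edges, so by Observation~\ref{obs:forests-independent-set} it contains an independent set of size $\ge |W|/2$. Moreover, $a_1$ and $b_1$ each have at most one neighbor inside the set of endpoints of $M$ that is a matching partner... but they can have many tree neighbors. This is handled by discarding neighbors: in a forest, if $a_1$ were adjacent to two endpoints $x, y$ that are themselves matched (partners $x', y'$), there is no cycle issue. So instead pick the independent set in $G[W \cup \{a_1,b_1\}]$ greedily, losing at most a constant factor. The constant $6$ should be precisely what survives halving for independence plus the split into the regions before and after $b_1$. Once we have $n_P$ pairwise non-adjacent vertices in the required regions, all non-adjacent to $a_1, b_1$ except through the forced edge, we realize $P$: every forbidden pair is non-adjacent and the forced edge is present. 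Note that $V_L(P)=\emptyset$, so nothing is needed to the left of $a_1$.

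The main obstacle is the region bookkeeping. $M(P)$ requires exactly its vertices between the images of $u$ and $v$, but extra vertices of $G$ in between are harmless because realizations are induced subsets, not contiguous. Hence only the counts per region matter, namely $|V_M(P)|-2$ between and $|V_R(P)|$ after. I would first try taking the forced edge to be the one with leftmost left endpoint. If the right endpoints mostly precede $b_1$, so that too few vertices lie after $b_1$, I would instead choose as forced edge the spanning edge whose right endpoint is leftmost among the spanning edges with left endpoints before the median. This makes both regions contain $\ge n_P$ independent candidates, and the constant $6n_P$ absorbs these halvings.
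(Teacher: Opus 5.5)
Your positional bookkeeping is essentially fine: the median rule does give a spanning edge $ab$ with many matching endpoints strictly between $a$ and $b$ and many after $b$. The independence step, however, has a genuine gap. You fix the forced edge first and then assert that a greedy independent set in $G[W\cup\{a_1,b_1\}]$ that keeps $a_1,b_1$ loses only a constant factor. This is false, because in a forest one vertex can be adjacent to arbitrarily many candidates.

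For instance, let $a_1$ be adjacent to $b_1$ and to every $a_i$ with $i\ge 2$, and let $a_ib_i\in M$ for all $i$; this is a tree. Order the vertices as $a_1,a_2,\dots,a_k,v,b_1,b_2,\dots,b_k$. Both of your rules (leftmost left endpoint, and leftmost right endpoint among the edges whose left endpoints precede the median) pick $a_1b_1$. Yet every matching endpoint strictly between $a_1$ and $b_1$ is a neighbor of $a_1$. If $P$ forbids the edges from the left endpoint of its forced edge to all inner vertices, no realization can be built around $a_1b_1$ from your candidates, although one exists around $a_2b_2$.

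There is also a second gap. Independent sets found in the two regions separately, or one found in their union via Observation~\ref{obs:forests-independent-set}, need not combine into a single independent set with enough vertices on both sides. Endpoints of different matching edges may be adjacent across regions.

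The missing idea, which is how the paper proceeds, is to first pass to an \emph{induced} sub-matching. Contract every edge of $M_v$ inside $G[V(M_v)]$. The result is again a forest, now on more than $6n_P$ vertices, so it has an independent set of size at least $3n_P$. The corresponding edges $N'\subseteq M_v$ then have the property that their endpoints induce exactly $N'$.

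After this, any edge of $N'$ can play the forced edge. Any set containing at most one endpoint from each other edge of $N'$ is automatically independent and non-adjacent to both endpoints of the forced edge. Only your positional counting remains: take the middle candidates and the right candidates from disjoint sets of edges, and the bound $6n_P$ comfortably absorbs the halving in the contracted forest and the split into regions. Without such a step, or some other argument that selects a forced edge whose endpoints have few neighbors among the candidates, the proof does not go through.
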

\begin{proof}
    Assume that there exists a vertex $v \in V(G)$ such that the set $N = \{uw \in M \mid u \prec_G v \prec_G w\}$ is of size $\Size{N} > 6 n_P$.    
    We choose a subset $N' \subseteq N$ such that the graph induced on the endpoints of edges in $N'$ is $N'$ itself (in other words, no two endpoints of $N'$ are connected except through the matching edges $N'$).
    We construct $N'$ from $N$ by contracting the edges from $N$ in the subgraph induced on the endpoints of edges in $N$.
    The resulting graph must be a forest and thus contains an independent set of size $3n_P$.
    We take this independent set and let $N'$ be the edges contracted into them.
    Observe that, therefore, $\Size{N'} \geq 3n_P$.

    Let $V' = \{v\} \cup \bigcup_{uw \in N'} \{u,w\}$ and consider $\prec_G\mid_{V'}$.
    Observe that in $\prec_G\mid_{V'}$ there are at least $3 n_P$ vertices to the left and right of $v$, respectively.
    Let the set $S \subseteq V'$ be composed of the first and last $n_P - 2$ vertices in $\prec_G\mid_{V'}$, respectively, together with the first $\lceil n_P/2\rceil$ vertices that come before and after $v$ in $\prec_G\mid_{V'}$, respectively.
    Observe that $\Size{S} \leq 3n_P - 3$.
    Furthermore, let $M' \subseteq M$ be the edges not incident to a vertex in $S$.
    Observe that $S$ consists of at most $3/2 \cdot n_P - 1$ vertices on each side of $v$ in $\prec_G$.
    Therefore, there exists at least one edge in %
    $N' \cap M'$, i.e., an edge $uw \in E(G)$ such that $u \prec_G v \prec_G w$ and $u,w \notin S$.
    Observe that $S$ contains an independent set $I_L \subseteq S$ to the left of $u$, an independent set $I_R \subseteq S$ right of $w$, and an independent set $I_M \subseteq S$ between $u$ and $w$ such that $\Size{I_L}, \Size{I_R}, \Size{I_M} \geq n_P$; recall the construction of $N'$.
    Moreover, note that $I_L \cup I_R \cup I_M \cup \{u,w\}$ remains an independent set by the construction of $N'$.
    Therefore, $\prec_G$ contains the pattern; recall also \Cref{obs:forests-independent-set-contains-pattern}.
    This is a contradiction to the existence of $\prec_G$.
\end{proof}

We can now establish, via the following two lemmas, that the placement of hydra vertices is nearly fixed and must be towards the right side of $\prec_G$.
To this end, we define for a non-leaf vertex $v \in V(G)$ the set of \emph{critical vertices} $C(v)$ as $C(v) \coloneqq \{u \in V(G) \setminus \{v\} \mid d_G(u) > 1\ \text{or}\ u \notin N_G(v)\}$.
\begin{lemma}
    \label{lem:poly-trees-preprocessing-hydra-not-middle}
    Let $\Instance= \InstanceLong$ be an instance of \PAOneF.
    Assume there is a solution $\prec_G$ to \Instance.
    Let $v \in V(G)$ be a non-leaf vertex of $G$ with $d_G^{>1}(v) \geq 38n_P + 1$.
    Furthermore, let $L_v = \{u \in C(v) \mid u \prec_G v\}$ and $R_v = \{u \in C(v) \mid v \prec_G u\}$.
    Then, $\Size{L_v} \leq 10 n_P$ or $\Size{R_v} \leq 10 n_P$.
\end{lemma}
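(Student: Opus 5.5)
We argue by contradiction: assume $\Size{L_v} > 10n_P$ and $\Size{R_v} > 10n_P$ and exhibit a realization of $P$ in $\prec_G$. Since $P$ is an instance of \PAOneF, it has a single forced edge $uv'$ with $u$ leftmost in $\prec_P$. It therefore suffices to find an edge $ab$ with $a \prec_G b$ and many independent vertices placed suitably. Concretely, we want at least $n_P$ vertices strictly between $a$ and $b$ and at least $n_P$ vertices to the right of $b$, all pairwise non-adjacent and non-adjacent to $a$ and $b$. Indeed, any pattern whose only forced edge starts at its first vertex then embeds, as in the proof of \Cref{lem:poly-trees-preprocessing-matching} and \Cref{obs:forests-independent-set-contains-pattern}: all forbidden edges are automatically satisfied by an independent set together with the edge $ab$.

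The first step is to use the many non-leaf neighbors of $v$. Let $w_1,\dots,w_k$, $k \ge 38n_P+1$, be the non-leaf neighbors of $v$, and pick for each $w_i$ a neighbor $x_i \neq v$. Since $G$ is a forest, the $x_i$ are distinct and the edges $w_ix_i$ form a matching $M$. Each edge $w_ix_i$ lies either entirely left of $v$, entirely right of $v$, or spans $v$. By \Cref{lem:poly-trees-preprocessing-matching}, at most $6n_P$ of these edges span $v$. So at least $32n_P$ of them lie entirely on one side, say w.l.o.g.\ entirely left of $v$; the other case is analogous, or handled by choosing a different witnessing edge. Each such $w_i$ is also adjacent to $v$, so the edges $w_iv$ are long edges ending at $v$.

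The second step builds the realization. Take the leftmost $w_i$ among those to the left of $v$ and use the edge $w_iv$. The vertices strictly between $w_i$ and $v$ include more than $30n_P$ further vertices $w_j, x_j$. From these, after discarding neighbors of $w_i$ and $v$, we extract an independent set of size $n_P$, using that a forest is bipartite (\Cref{obs:forests-independent-set}) and that the $x_j$ are pairwise non-adjacent, and non-adjacent to $v$ because the graph is acyclic. To the right of $v$ we use $R_v$: it has more than $10n_P$ vertices, each either a non-leaf or a non-neighbor of $v$. Non-neighbors of $v$ are usable directly. Non-leaf neighbors of $v$ in $R_v$ come with further neighbors; if more than about $6n_P$ of them have such neighbors on the far side, \Cref{lem:poly-trees-preprocessing-matching} applied at $v$ is violated again. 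Hence we can take the non-leaf neighbors themselves, which are independent from each other, together with non-neighbors, to obtain an independent set of size at least $n_P$ right of $v$ that avoids $w_i$ and the chosen middle set, after halving for bipartiteness. Combined with the edge $w_iv$, this realizes $P$, a contradiction.

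The main obstacle is the counting on the right side. We must make sure that the vertices of $R_v$ yield $n_P$ vertices that are simultaneously non-adjacent to $v$, to $w_i$, and to the chosen middle vertices. Neighbors of $v$ in $R_v$ are adjacent to $v$, so we cannot use them together with the edge $w_iv$. In that case we would switch the forced edge to some $vy$ with $y \in R_v$ and use the $L$-side structure instead. I expect the constant $10n_P$ to come from splitting $R_v$ into neighbors and non-neighbors and taking half in each case.
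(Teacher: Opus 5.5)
Your setup (the branch matching $\{w_ix_i\}$, at most $6n_P$ of its edges spanning $v$, then the edge $w_iv$ with $x_j$'s as middle vertices) is sound. However, there is a genuine gap at the right-hand side of the realization. This is exactly where $\Size{R_v}>10n_P$ must be used, and you flag it as the main obstacle without resolving it.

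\textbf{The right-hand vertices.} Two steps fail.
\begin{enumerate}
\item Non-neighbours of $v$ in $R_v$ are \emph{not} usable directly. They may be adjacent to $w_i$ or to your middle $x_j$'s through edges spanning $v$; for example, $R_v$ could consist entirely of leaves of $w_i$. \Cref{lem:poly-trees-preprocessing-matching} does not bound such edges, since they can form a star.
\item Non-leaf neighbours of $v$ can never be right vertices for the forced edge $w_iv$. Your fallback, a forced edge $vy$ with $y\in R_v$ plus ``the $L$-side structure'', cannot work: the forced edge starts at the first pattern vertex, so a realization using $vy$ lies entirely at or to the right of $v$.
\end{enumerate}
Two pieces are missing. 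First, replace each non-leaf neighbour $a$ of $v$ right of $v$ by a further neighbour $x_a$. These edges form a matching, at most $6n_P$ of them span $v$, and the remaining $x_a$ are distinct non-neighbours of $v$ right of $v$. Together with $\Size{R_v}>10n_P$ this gives at least $2n_P+1$ non-neighbours of $v$ right of $v$, hence $n_P$ pairwise independent ones. Second, you need a mechanism making these non-adjacent to the forced edge and the middle vertices. The paper deletes the endpoints of a maximum matching of edges spanning $v$, after which no edge crosses $v$ at all. Within your setup it suffices to note that such a vertex has a neighbour in at most one left branch $\{w_j,x_j\}$, since two would close a cycle. So fix the $n_P$ right vertices first, and then take $w_i$ to be the leftmost \emph{unblocked} left branch. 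At least $15n_P$ unblocked branches then lie between $w_i$ and $v$, and at most $6n_P$ of their edges span $w_i$.

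\textbf{The symmetry assumption.} Your ``w.l.o.g.\ left'' is unjustified, because $P$ is not left--right symmetric. The count is also off: only $32n_P+1$ branch edges avoid spanning $v$ in total, so one side is guaranteed $16n_P+1$, not $32n_P$. If $16n_P+1$ branches lie entirely right of $v$, you need a different witness. Take $w$ to be the $(8n_P+1)$-th such branch vertex and use the forced edge $vw$. By \Cref{lem:poly-trees-preprocessing-matching} applied at $w$, at least $2n_P$ of the $x_j$ lie strictly between $v$ and $w$ and at least $2n_P$ lie right of $w$. All these $x_j$ are pairwise non-adjacent and non-adjacent to $v$ and $w$, because $G$ is a forest. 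This is what the paper's choice of $u$ as the $(8n_P+1)$-th non-leaf neighbour, with independent sets on both sides of $u$, provides after swapping sides. This case gives a contradiction without using $L_v$ or $R_v$ at all.
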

\begin{proof}

    Assume otherwise and let $M$ be a maximum matching in the graph $G$ induced on the edges $uw \in E(G)$ with $u \prec_G v \prec_G w$.
    By \Cref{lem:poly-trees-preprocessing-matching}, the size of $M$ must be bounded by $6 n_P$.
    Let $G' = G - V(M)$ be the graph obtained from $G$ after removing all endpoints of edges in $M$.
    Moreover, let $\prec_{G'} = \prec_{G}\mid_{V(G')}$ be the solution to \Instance induced on the vertices of $G'$.
    Observe that $v \in V(G')$ and there is no edge $uw \in E(G')$ such that $u \prec_{G'} v \prec_{G'} w$, i.e., no edge that ``spans'' $v$ or, in other words, the left and right side of $v$ with respect to $\prec_{G'}$ are disconnected, due to the maximality of $M$.
    Moreover, observe that $d_{G'}^{>1}(v) \geq 32n_P + 1$ holds, since for every deleted edge, $v$ could only be connected to one endpoint.
    Let $L_v' = L_v \cap V(G')$ and $R_v' = R_v \cap V(G')$ denote the critical vertices left and right of $v$ in $\prec_{G'}$.
    Since $\Size{M} \leq 6 n_P$ and $\Size{R_v},\Size{L_v} > 10 n_P$, we have $\Size{R_v'}, \Size{L_v'} > 4 n_P$. %
    We now show that $\prec_{G'}$ must contain the pattern $P$ and, therefore, also $\prec_G$.
    Without loss of generality, we can assume that there are at least as many non-leaf neighbors of $v$ in $L_v'$ as in $R_v'$, i.e., assume that $\Size{\{u \in L_v'  \cap N_{G'}(v) \mid d_G(u) > 1\}} \geq \Size{\{u \in R_v' \cap N_{G'}(v) \mid d_G(u) > 1\}}$ (otherwise swap the role of $L_v'$ and $R_v'$ in the following arguments).
    Observe that $L_v'$ contains, therefore, at least $16n_P + 1$ non-leaf neighbors of $v$.
    Next, recall that $R_v'$ contains only non-leaf neighbors or non-neighbors of $v$.
    Since no edge spans $v$ in $\prec_{G'}$, we must have $w \in R_v'$ for every neighbor $w \in N_{G'}(u)$ of every non-leaf neighbor $u\in N_{G'}(v)$, $\text{deg}_G(u) > 1$, of $v$.
    Therefore, $R_v'$ contains in particular at least $2n_P$ non-neighbors of $v$, i.e., $\Size{R_v' \setminus N_{G'}(v)} \geq 2n_P$.
    Hence, we can find among $R_v'$ an independent set $I_R$ of size at least $n_P$ such that $I_R \cup \{v\}$ remains an independent set; recall \Cref{obs:forests-independent-set}.

    We now turn our attention to $L_v'$.
    Let $u \in L_v' \cap N_{G'}(v)$ be the $8n_P + 1$-th non-leaf neighbor of $v$ to the left of $v$.
    Moreover, let $N_{u}(v) = \{w \in L_v' \cap N_{G'}(v) \mid w \prec_{G'} u\}$ and $N_{uv}(v) = \{w \in L_v' \cap N_{G'}(v) \mid u \prec_{G'} w\}$ be the non-leaf neighbors of $v$ that are, in $\prec_{G'}$, left of $u$ and between $u$ and $v$, respectively.
    We have $\Size{N_u(v)}, \Size{N_{uv}(v)}\geq 8n_P$ by a similar reason to why \Cref{obs:poly-trees-preprocessing-high-degree-long-edge} holds.
    
    We now aim to construct two further independent sets $I_u$ and $I_{uv}$ consisting of $n_P$ vertices from $N_{u}(v)$ and $N_{uv}(v)$, respectively.
    These independent sets will serve as our witness that $\prec_{G'}$ contains $P$.
    To this end, let $M'$ be a maximum matching in the graph $G'$ induced on the edges $ab \in E(G')$ with $a \prec_{G'} u \prec_{G'} b$.
    As before, the size of $M'$ must be bounded by $6 n_P$ due to \Cref{lem:poly-trees-preprocessing-matching}.
    Let $G'' = G' - V(M')$ be the graph obtained from $G'$ after removing all endpoints of edges in $M'$.
    The number of non-leaf neighbors $N_u'(v)$ and $N_{uv}'(v)$ of $v$ in $G''$ that are in $N_u(v)$ and $N_{uv}(v)$, respectively, remains at least $2n_P$.
    Observe that no more edge spans $u$ or $v$, $N_u'(v)$ and $N_{uv}'(v)$ are non-leaves, and all vertices in $\{u\} \cup N_u'(v) \cup N_{uv}'(v)$ are adjacent to $v$.
    Therefore, we can find two independent sets $I_u \subseteq N_u(v)$ and $I_{uv} \subseteq N_{uv}(v)$ such that $I_R \cup I_u \cup I_{uv} \cup \{u,v\}$ remains an independent set; otherwise $G$ would not be a forest.
    Thus, these vertices witness together with the edge $uv$ that $\prec_{G'}$ (and therefore also $\prec_G$) contains $P$ (recall also \Cref{obs:forests-independent-set-contains-pattern}).
    This is a contradiction to $\prec_G$ being a solution to \Instance.
    Thus, $\Size{L_v} \leq 7n_P$ or $\Size{R_v} \leq 10n_P$ must hold. %
\end{proof}
\begin{lemma}
    \label{lem:poly-trees-preprocessing-hydra-not-left}
    Let $\Instance= \InstanceLong$ be an instance of \PAOneF.
    Assume there is a solution $\prec_G$ to \Instance and let $v \in V(G)$ be a non-leaf vertex with $d_G^{>1}(v) \geq 22n_P + 1$.
    Moreover, let $L_v = \{u \in C(v) \mid u \prec_G v\}$.
    Then $\Size{L_v} > 10 n_P$.
\end{lemma}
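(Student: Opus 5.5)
The plan is to argue by contradiction: assume $\Size{L_v} \leq 10 n_P$ and exhibit a realization of $P$ in $\prec_G$. Since $P$ is an instance of \PAOneF, the forced edge $uv$ of $P$ has its left endpoint $u$ first in $\prec_P$. Every other vertex of $P$ lies to the right of $u$, split into those strictly between the endpoints of the forced edge and those after it. By \Cref{obs:forests-independent-set-contains-pattern} it therefore suffices to find an edge $ab\in E(G)$ with $a \prec_G b$ together with two independent sets $I_M$ and $I_R$ of size $n_P$ each, with $I_M$ between $a$ and $b$ and $I_R$ right of $b$, such that $I_M\cup I_R\cup\{a,b\}$ induces only the edge $ab$. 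We can then embed the middle and right parts of $P$ into $I_M$ and $I_R$, and every forbidden edge is automatically absent.

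\emph{Counting step.} For each non-leaf neighbor $w$ of $v$, fix a second neighbor $w'\neq v$. Because $G$ is a forest, the $w'$ are pairwise distinct and non-adjacent to $v$, and $\{ww'\}$ is a matching. Both $w$ and $w'$ belong to $C(v)$: $w$ because $d_G(w)>1$, and $w'$ because $w'\notin N_G(v)$. Hence at most $10n_P$ of the $w$ and at most $10n_P$ of the $w'$ lie left of $v$. This leaves at least $2n_P+1$ pairs with both $w,w'$ to the right of $v$, and more than $12n_P$ vertices $w'$ to the right of $v$.

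\emph{Choosing the forced edge.} Take $a=v$ and let $b$ be a suitable non-leaf neighbor of $v$ to its right, namely one with many $w'$-vertices on both sides of it within the right part. If no such $b$ exists, the right-side neighbors are essentially all left of the right-side $w'$-vertices. In that case I would instead take $a=w$, $b=w'$ for an appropriate pair and use the remaining pairs as the source of $I_M$ and $I_R$, in analogy to the proof of \Cref{lem:poly-trees-preprocessing-hydra-not-middle}. To make the candidate vertices for $I_M$ and $I_R$ non-adjacent to $a$, $b$ and to each other, I would apply \Cref{lem:poly-trees-preprocessing-matching} at $a$ and at $b$. This deletes the endpoints of a maximum matching of edges spanning these positions, at most $6n_P$ edges each, so that the two sides become disconnected. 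Afterwards I extract independent sets of size $n_P$ from the at least $2n_P$ surviving candidates on each side via \Cref{obs:forests-independent-set}. Non-adjacency to $v$ is automatic for the $w'$, and non-adjacency to $b$ holds because $G$ is acyclic.

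I expect the main obstacle to be the choice of $(a,b)$, that is, the case split on how the right-side neighbors $w$ interleave with the $w'$. One has to ensure that enough candidates remain both strictly between $a$ and $b$ and after $b$ once the spanning matchings are removed. The threshold $22n_P+1$ should be exactly what makes this counting go through: $10n_P$ is lost to $L_v$, then $6n_P$ per matching removal, plus $2n_P$ needed per independent set. I would therefore first attempt the median-neighbor choice of $b$ and check the constants, falling back to the pair $(w,w')$ argument only in the degenerate ordering.
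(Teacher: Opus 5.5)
Your setup is the right one and is close to the paper's. The arms $(w,w')$ induce a subdivided star centred at $v$, $V_L(P)=\emptyset$, and the forced edge is realised either by an edge $vb$ or by an arm edge $ww'$. The paper splits on whether some arm edge spans many vertices; you split on where the neighbours sit relative to the $w'$. However, the step that actually proves the lemma is left open: showing that one of the two choices always has $n_P$ admissible vertices strictly inside it and $n_P$ to its right. What you sketch around that step does not work as stated.
(i) Since both $w$ and $w'$ lie in $C(v)$, the bound $\Size{L_v}\le 10n_P$ limits their \emph{total} number left of $v$, so at least $12n_P+1$ arms lie entirely right of $v$. Your split $10n_P+10n_P$ leaves only $2n_P+1$ such arms, which is too few for the arm-edge case.
(ii) If no right neighbour $b$ of $v$ has many right-side $w'$ on both sides, the neighbours need not be ``essentially all left of'' the $w'$. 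In the order $v,w'_1,\dots,w'_m,w_1,\dots,w_m$ they are all on the far right, and mixtures also occur. Your fallback $a=w$, $b=w'$ is set up only for $w\prec_G w'$.
(iii) The matching deletion is harmful. With $a=v$, a maximum matching of edges spanning $b$ may contain an edge $vw$ with $b\prec_G w$ and thus delete $a$; likewise an edge spanning $a$ may end at $b$. Your own tally ($10n_P$, plus $6n_P$ per deletion, plus $2n_P$ per independent set) already exceeds $22n_P$, and each matching actually removes up to $12n_P$ vertices. Finally, extracting independent sets separately on each side does not exclude cross edges $w_jw'_j$ between $I_M$ and $I_R$.

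All of this is avoidable, because acyclicity already gives the independence you need. The $w'$ are pairwise distinct and non-adjacent, none is adjacent to $v$, and each $w'$ is adjacent to exactly one $w$; the $w$ are pairwise non-adjacent. Let $q_1\prec_G\dots\prec_G q_m$ be the $w'$ of the $m\ge 12n_P+1$ full right arms. If some such $w$ has at least $n_P+1$ of the $q_i$ on each side, realise $P$ with the edge $vw$ and take $I_M,I_R\subseteq\{q_i\}\setminus\{w'\}$. Otherwise every such $w$ lies before $q_{n_P+1}$ or after $q_{m-n_P}$. Then among the $m-2n_P\ge 10n_P+1$ arms whose $w'$ lies in $\{q_{n_P+1},\dots,q_{m-n_P}\}$, at least $2n_P+1$ have their $w$ on the same end.
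On the left end, order these arms by $w'$ and take the $(n_P+1)$-st: its edge spans the $n_P$ earlier $w'$ and is followed by at least $n_P$ later ones.
On the right end, order them by $w$ and take the $(n_P+1)$-st: its edge $w'w$ spans the $n_P$ earlier $w$ and is followed by at least $n_P$ later ones.
In every case the chosen vertices induce only the forced edge, since any further adjacency closes a cycle of length at most $5$ through $v$. This gives the contradiction, with room to spare in the constant.
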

\begin{proof}
For the sake of a contradiction, assume otherwise, i.e., $\Size{L_v} \leq 10n_P$.
This implies that with respect to $C(v)$, i.e., ignoring leaf-neighbors of $v$, $v$ is among the $10n_P + 1$ leftmost vertices in $\prec_G$.
Let $G'$ be the subgraph of $G$ obtained as follows.
First, for every neighbor $u \in N_G(v)$ with $d_G(u)>1$, let $w \in N_G(u) \setminus \{v\}$ be an arbitrary neighbor.
Remove $N_G(u) \setminus \{v,w\}$ from $G$, i.e., afterwards $N_{G'}(u) = \{v,w\}$ holds.
Next, remove any vertex leaf-neighbor of $v$, i.e., every vertex $u \in N_G(v)$ with $d_G(u) = 1$.
Observe that afterwards $d_{G'}^{>1}(v) = d_{G'}(v)$ holds.
Finally, remove every remaining vertex $w$ of distance greater than $2$ to $v$, i.e., every vertex $w \notin N_{G'}(v)$ that is not adjacent to a neighbor of $v$.
Note that in the end, $G'$ resembles a star of degree $d_{G}^{>1}(v)$ with center $v$, where every edge is subdivided exactly once, i.e., $G'$ consists of $d_{G}^{>1}(v)$-many ``arms'' $(v,u,w)$ such that $vu,uw \in E(G')$.
Note that this process only shrinks the size of $L_v$.
Let $\prec_{G'} = \prec_G\mid_{V(G')}$ be the resulting linear order.
We now argue that $\prec_{G'}$ (and therefore also $\prec_G$) contains $P$.
Let $(v,u,w)$ be an ``arm'' of $G'$.
We remove $u$ and $w$ from $G'$ (and $\prec_{G'}$) if $u \prec_{G'} v$ or $w \prec_{G'} v$.
Let $G''$ and $\prec_{G''}$ be the resulting graph and its linear order, respectively.
Observe that $v$ is now the leftmost vertex in $\prec_{G''}$.
Let $F = \{uw \in E(G'') \mid u,w \neq v, v \prec_{G''} u,w\}$ denote the set of edges to the right of $v$ that are not incident to $v$.
As $\Size{L_v} \leq 10n_P$, we have that $\deg_{G''}^{>1}(v) \geq \deg_{G}^{>1}(v) - 10n_P \geq 12n_P+1$.
In particular, $\Size{F} \geq 12n_P+1$ edges of the form $uw$, $u,w \neq v$, remain.
In the following, we \emph{mark} vertices and edges in $G''$ based on their position in $\prec_{G''}$.
More concretely, we mark the first $2n_P$ vertices $v \prec_{G''} u$ right of $v$, their neighbor $w$ (if $u$ has two neighbors, then we mark the neighbor $w \neq v$), and its connecting edge $uw$.
Let $M_L$ denote the leftmost $2n_P$ marked vertices.
Likewise, we mark the $2n_P$ rightmost vertices $u$ in $\prec_{G''}$, their neighbor $w \neq v$, and its connecting edge $uw$.
Let $M_R$ denote the $2n_P$ rightmost vertices marked in the above process.
We observe that at least $\deg_{G}^{>1}(v) - 10n_P - 4n_P \geq 8n_P + 1$ unmarked edges in $F$ remain.
Next, we distinguish between the following two cases.

In the first case, we assume that there exists an unmarked edge $uw \in F$ that spans at least $2n_P$ vertices, i.e., we have $V_{uw} = \{x \in V(G'') \mid u \prec_{G''} x \prec_{G''} w\}$ and $\Size{V_{uw}} \geq 2n_P$
As $G''$ is a star with $v$ at its center where every edge is subdivided exactly once, and we have $v \prec_{G''} u,w$, there cannot exist an edge of the form $ux$ or $wx$ for any $x \in V_{uw}$.
The same applies for the $2n_P$ rightmost vertices $x \in M_R$.
Observe that, since $uw$ is unmarked, we have that $u,w \prec_{G''} x$, where $x \in M_R$.
Thus, we can find two independent sets $I_{uw} \subseteq V_{uw}$ and $I_R \subseteq M_R$ of size at least $n_P$; recall \Cref{obs:forests-independent-set}.
Moreover, $I_{uw} \cup I_R \cup \{u,w\}$ remains an independent set.
As the leftmost vertex in $\prec_P$ is incident to the forced edge, these two independent sets, together with the edge $uw$, witness that $\prec_{G''}$ (and, therefore, also $\prec_G$) contain the pattern $P$.

In the second case, we assume that no such unmarked edge exists.
In particular, all unmarked edges $uw \in F$ are short, i.e., span less than $2n_P$ vertices.
Moreover, observe that, since $G''$ is a star with center $v \neq u,w$, they form a matching in $G''$.
We will now show that there exists a neighbor $u^* \in N_{G''}(v)$ such that we can find the pattern $P$ in $\prec_{G''}$ where the edge $u^*v$ takes over the role of the (single) forced edge in $P$.
Towards identifying this vertex $u^*$, we continue with our marking-process.
In particular, we mark the next $4n_P$ vertices $u$ right of $v$ that have not been marked so far.
Let these be the vertices $M_L'$.
As before, we also mark their neighbor $w \neq v$ and their connecting edge $uw$.
Observe that at least $\deg_{G}^{>1}(v) - 10n_P - 4n_P - 4n_P \geq 4n_P + 1$ unmarked edges in $F$ remain.
Since every edge ``spans'' at most $2n_P$ vertices, we have marked at least $2n_P$ vertices with $u \notin N_{G''}(v)$.
Likewise, we mark the $4n_P$ rightmost vertices $u \in V(G'')$ in $\prec_{G''}$ that have not been marked before.
Let these vertices be $M_R'$.
As before, we mark their neighbor $w \neq v$, and their connecting edge $uw$.
We can make a similar calculation as above to observe that $\deg_{G}^{>1}(v) - 10n_P - 4n_P - 4n_P - 4n_P \geq 1$ unmarked edge(s) remain(s) in $F$.

Let $u^*$ be the first unmarked neighbor right of $v$.
As there exists at least one unmarked edge in $F$, the vertex $u^*$ must exist.
Recall that $M_L'$ contains at least $2n_P$ non-neighbors of $v$.
Hence, the edge $vu^*$ ``spans'' over at least $2n_P$ non-neighbors of $v$ and, by similar arguments, of $u^*$ (otherwise $u^*$ would be marked).
Let these vertices be $V_{vu^*}$.
Furthermore, since $u^*$ is unmarked, the set $M_R'$ witnesses that there are at least $2n_P$ vertices $x$ such that $u^* \prec_{G''} x$ and $x \notin N_{G''}(v) \cup N_{G''}(u^*)$.
Let $V_{u^*}$ denote the set of these vertices.
Similar to the first case, we can find two independent sets $I_{vu^*} \subseteq V_{vu^*}$ and $I_{u^*} \subseteq V_{u^*}$ of size at least $n_P$ such that $I_{vu^*} \cup I_{u^*} \cup \{v, u^*\}$ remains an independent set.
Hence, the, sufficiently long, edge $vu^*$, together with these vertices witness that $\prec_{G''}$ (and, therefore, also $\prec_G$) contain the pattern $P$.
As both cases lead to a contradiction, we conclude that $|L_v| > 10n_P$ must hold.
\end{proof}

Note that \Cref{lem:poly-trees-preprocessing-hydra-not-left,lem:poly-trees-preprocessing-hydra-not-middle} apply to all hydra vertices in $G$ since their interesting degree is at least $38n_P + 1$ by definition.
The following corollary summarizes the consequences of \Cref{lem:poly-trees-preprocessing-hydra-not-left,lem:poly-trees-preprocessing-hydra-not-middle} for hydra vertices:
\begin{corollary}
    \label{cor:poly-trees-preprocessing-hydra-vertex-placement}
    Let $\Instance= \InstanceLong$ be an instance of \PAOneF.
    If there exists a hydra vertex $v \in V(G)$, then in every solution $\prec_G$ to \Instance we have $\Size{\{p \in C(v) \mid v \prec_G p\}} \leq 10 n_P$.
\end{corollary}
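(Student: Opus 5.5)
The corollary follows by combining \Cref{lem:poly-trees-preprocessing-hydra-not-middle} and \Cref{lem:poly-trees-preprocessing-hydra-not-left}, once we check that their hypotheses hold for every hydra vertex. Fix a solution $\prec_G$ to \Instance and a hydra vertex $v$. Then $d_G^{>1}(v) \geq 38n_P + 1 \geq 1$. So $v$ has a neighbor of degree greater than one and is therefore not a leaf; this matters because $C(v)$ is only defined for non-leaf vertices. The interesting degree $d_G^{>1}(v) \geq 38n_P+1$ meets the threshold of \Cref{lem:poly-trees-preprocessing-hydra-not-middle} directly. Since $38n_P + 1 \geq 22n_P + 1$, it also meets the weaker threshold of \Cref{lem:poly-trees-preprocessing-hydra-not-left}.

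Let $L_v = \{u \in C(v) \mid u \prec_G v\}$ and $R_v = \{u \in C(v) \mid v \prec_G u\}$, as in those lemmas. \Cref{lem:poly-trees-preprocessing-hydra-not-left} gives $\Size{L_v} > 10n_P$. \Cref{lem:poly-trees-preprocessing-hydra-not-middle} gives $\Size{L_v} \leq 10n_P$ or $\Size{R_v} \leq 10n_P$. The first alternative is excluded, so $\Size{R_v} \leq 10n_P$, which is exactly $\Size{\{p \in C(v) \mid v \prec_G p\}} \leq 10n_P$.

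The main obstacle is purely one of bookkeeping. We must make sure both lemmas refer to the same sets $L_v$ and $R_v$, namely the restriction of $C(v)$ to the two sides of $v$ in the same order $\prec_G$. We must also make sure the final line of the proof of \Cref{lem:poly-trees-preprocessing-hydra-not-middle}, which mentions $7n_P$, does not weaken its conclusion. It does not: that lemma's statement asserts the bound $10n_P$ for both sides. If $\Size{L_v} \leq 7n_P$ held, this would anyway contradict $\Size{L_v} > 10n_P$, so the same deduction goes through. No further argument is needed.
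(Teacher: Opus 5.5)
Your proof is correct and matches the paper's argument: both lemmas apply to every hydra vertex because $d_G^{>1}(v)\ge 38n_P+1\ge 22n_P+1$, and combining $\Size{L_v}>10n_P$ with the dichotomy of the other lemma forces $\Size{R_v}\le 10n_P$; your handling of the stray $7n_P$ in that lemma's proof is also right. One small slip: having a neighbor of degree greater than one does not by itself rule out $v$ being a leaf, so your stated reason for $v$ being a non-leaf needs replacing; the correct reason is $\deg_G(v)\ge d_G^{>1}(v)\ge 38n_P+1\ge 2$.
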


\subparagraph*{Fixing the Right Side of a Solution.}
In essence, \Cref{cor:poly-trees-preprocessing-hydra-vertex-placement} shows that no hydra vertex can be ``in the left or middle segment'' of a potential solution $\prec_G$.
In particular, every hydra vertex $v$ has to be among the rightmost $10 n_P$ vertices of $\prec_G$, when ignoring non-critical vertices $u \notin C(v)$.
Therefore, we will from now on consider a slightly different problem, called \PAOneFOneSLong in which we force the sought-after linear order $\prec_G$ to end with a particular linear order.
Formally, we define it as follows: %
\probdef{\PAOneFOneSLongUnderline\\(\PAOneFOneS)}{An instance $\InstanceLong$ of \PAOneF, a set $U \subseteq V(G)$, and a total order $\prec_{U}$ on $U$.}{Does there exist a total order $\prec_G$ of $V(G)$ ending with $\prec_{U}$ that avoids the pattern $P$?}
For an instance \InstanceLongOneS of \PAOneFOneS, we also refer to $U$ and $V(G) \setminus U$ as the \emph{fixed} and \emph{free} part (or vertices) of the instance, respectively.
As our next step, we aim at reducing \PAOneF to (boundedly-many) instances of \PAOneFOneS where, due to the above conclusion, the fixed part must contain all hydra vertices of the instance.

Towards this, let us define the type of a vertex in $G$.
\begin{definition}
    \label{lem:poly-trees-preprocessing-type}
    Let $G$ be a forest and $u,v \in V(G)$ be two vertices of $G$.
    We say that $u$ and $v$ are of the same \emph{type} in $G$, denoted as $u \sim_G v$ if and only if
    \begin{itemize}
        \item $u = v$, or
        \item $u$ and $v$ are leaves and $N_G(u) = N_G(v)$, i.e., $u$ and $v$ are siblings.
    \end{itemize}
\end{definition}
Observe that $\sim_G$ defines an equivalence relation on the vertices of $G$.
Recall that $\eqclasses{V(G)}{\sim_G}$ denotes the family of equivalence classes and $[v]_{\sim_G}$ denotes the equivalence class of a vertex $v \in V(G)$.

Next, we define the multisets in the context of this paper.
\begin{definition}
    \label{lem:poly-trees-preprocessing-multiset}
    Let $G$ be a forest.
    A \emph{multiset} $\tau\colon \eqclasses{V(G)}{\sim_G} \to \mathbb{N}_0$ is a mapping from the equivalence classes to the non-negative integers such that $\tau(t) \leq \Size{t}$ for every $t \in \eqclasses{V(G)}{\sim_G}$.
    Given two multisets $\tau_1, \tau_2$, we define
    \begin{itemize}
        \item $\tau = \tau_1 + \tau_2$ as $\tau(t) = \tau_1(t) + \tau_2(t)$ for every $t \in \eqclasses{V(G)}{\sim_G}$,
        \item $\tau = \tau_1 - \tau_2$ as $\tau(t) = \max(\tau_1(t) - \tau_2(t), 0)$ for every $t \in \eqclasses{V(G)}{\sim_G}$,
        \item $\tau_1 \subseteq \tau_2$ if and only if $\tau_1(t) \leq \tau_2(t)$ for every $t \in \eqclasses{V(G)}{\sim_G}$,
        \item $\multisetsize{\tau_1} \coloneqq \Size{\{t \in \eqclasses{V(G)}{\sim_G} \mid \tau_1(t) > 0\}}$, and
        \item $\sum\tau_1 \coloneqq \sum_{t \in \eqclasses{V(G)}{\sim_G}}\tau_1(t)$.
    \end{itemize}
    For a vertex $v \in V(G)$ and multiset $\tau$, we say $v \in \tau$ holds if and only if $\tau([v]_{\sim_G}) > 0$.
\end{definition}

Finally, we define the function $\multiset{\cdot}$, which allows us to obtain a multiset from various representations.
\begin{definition}
    \label{lem:poly-trees-preprocessing-to-multiset}
    Let $G$ be a forest.    
    We define $\multiset{\cdot}$ as follows.
    \begin{itemize}
        \item For a set $U \subseteq V(G)$, $\multiset{U}$ denotes the multiset $\tau$ with $\tau(t) = \Size{U \cap t}$, for every $t \in \eqclasses{V(G)}{\sim_G}$, i.e., $\tau(t)$ corresponds to the number of vertices in $U$ of type $t$.
        \item For a linear order $\prec_U$ on some $U \subseteq V(G)$, $\multiset{\prec_U}$ is  defined as $\multiset{V(\prec)}$.
        \item For a sequence $S$ of types $\eqclasses{V(G)}{\sim_G}$, $\multiset{S}$ denotes the multiset $\tau$  where $\tau(t)$ corresponds to the number of occurrences of $t$ in $S$ for every $t \in \eqclasses{V(G)}{\sim_G}$.
    \end{itemize}    
\end{definition}

Next, we define some operations on sequences $S$ of types from $\eqclasses{V(G)}{\sim_G}$.
\begin{definition}
    \label{lem:poly-trees-preprocessing-type-sequences}
    Let $G$ be a forest and $S$ a sequence of $\eqclasses{V(G)}{\sim_G}$.
    We let $\compress{S}$ denote the number of type-alternations in $S$ plus one.
    Moreover, let $\prec$ be a linear order on $U \subseteq V(G)$.
    We let $S(\prec)$ denote the sequence of $\eqclasses{V(G)}{\sim_G}$ obtained by replacing each vertex in $\prec$ with its type.
    A linear order $\prec$ is \emph{compatible} with a sequence $S'$ if $S(\prec) = S'$.
    Note that two sequences $\prec$ and $\prec'$ with $S(\prec) = S(\prec')$ only differ in (the order of) leaves of the same type. %
\end{definition}
We observe that $\multisetsize{\multiset{S}} \leq \compress{S} \leq \sum \multiset{S}$ holds for every sequence $S$ on $\eqclasses{V(G)}{\sim_G}$.
Moreover, $\compress{S}$ can be interpreted as the size of a run-length encoding of $S$.
In the upcoming description of our DP and its preprocessing steps, we will ensure that for every sequence $S$ that we encounter, we have that $\compress{S}$ is bounded in ${n_G}^{f(n_P)}$ for some function $f(\cdot)$.

With the notion of sequences at hand, we can now finally state the reduction to \PAOneFOneS.
\begin{lemma}
    \label{lem:poly-trees-preprocessing-one-sided-pattern-avoidance}
    Let $\Instance= \InstanceLong$ be an instance of \PAOneF.
    There exists a family $\mathcal{F}$ of instances of \PAOneFOneS such that 
    \begin{enumerate}
    	\item we can construct $\mathcal{F}$ in ${n_G}^{\BigO{n_P}}$ time,
        \item $\Size{\mathcal{F}} \in {n_G}^{\BigO{n_P}}$,
        \item for every hydra vertex $v \in V(G)$ and every $(G, P, U', \prec_{U'}) \in \mathcal{F}$ we have $v \in U'$, 
        \item for every $(G, P, U', \prec_{U'}) \in \mathcal{F}$ we have $\compress{S(\prec_{U'})} = 20n_P + 2$, and
        \item \Instance is a positive instance if and only if some instance $\Instance' \in \mathcal{F}$ is a positive instance.
    \end{enumerate}
\end{lemma}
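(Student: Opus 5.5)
The plan is to build $\mathcal{F}$ by guessing the compressed form of the right end of a hypothetical solution $\prec_G$, namely a run of exactly $20n_P+2$ type-blocks. The guessed end must contain every hydra vertex, and it must be a genuine suffix of some solution whenever one exists. The main input is \Cref{cor:poly-trees-preprocessing-hydra-vertex-placement}: in every solution, each hydra vertex $v$ has at most $10n_P$ vertices of $C(v)$ to its right. The only vertices outside $C(v)$ are the leaf-neighbours of $v$, and these all have the same type. So if we read $S(\prec_G)$ from the right, then between $v$ and the end there are at most $10n_P$ non-leaf-neighbours of $v$, plus runs of leaves of $v$ that alternate with them.

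This gives the following bound. Let $S'$ be the suffix of $S(\prec_G)$ that starts at the leftmost hydra vertex, and consider its type alternations. The leaves of a single hydra $v$ form one type, and each maximal run of them is separated from the next by vertices of $C(v)$. Hence from $v$ onward there are at most $10n_P+1$ runs of $v$'s leaves and at most $10n_P$ other vertices, so at most $20n_P+1$ blocks to the right of $v$, and $20n_P+2$ if we include $v$ itself.

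The construction is then as follows. For each candidate sequence of types $t_1,\dots,t_{20n_P+2}$ with neighbouring types distinct, and each choice of multiplicities $1\le k_j\le|t_j|$, we set $U'$ to be a set of vertices realizing these multiplicities. Since vertices of one type are interchangeable leaves, the choice of representatives does not matter. We let $\prec_{U'}$ be any order compatible with the block sequence, and we keep only those candidates in which $U'$ contains all hydra vertices. There are at most $n_G$ types per block and $n_G$ multiplicities per block, so the family has size and construction time $n_G^{\BigO{n_P}}$, giving properties 1 and 2. Property 3 holds by the filter, and property 4 holds by construction.

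For property 5, suppose $\Instance'\in\mathcal{F}$ has a solution. That solution is in particular a total order of $V(G)$ avoiding $P$, so \Instance is positive. Conversely, take a solution $\prec_G$. Let its maximal suffix with $\min(c(S(\prec_G)),20n_P+2)$ blocks be $\prec_{U'}$. If $\prec_G$ has at least $20n_P+2$ blocks, then by the count above every hydra vertex lies within this suffix, so it is one of the guessed candidates. If $\prec_G$ has fewer blocks, the requirement that $c$ equal $20n_P+2$ exactly fails. To handle this, I would either relax property 4 to ``at most'', or note that this case forces $n_G$ to be small relative to the number of distinct types and pad the guess accordingly. I expect this boundary case to be a small technicality.

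The main obstacle is making the block count rigorous when several hydra vertices interleave. The fix is to apply the count only to the leftmost hydra vertex, and to observe that the suffix starting at it already contains all the others, because each of them is a vertex to its right.
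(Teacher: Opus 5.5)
Your proposal is correct and follows the paper's proof. The paper also guesses the last $20n_P+2$ type-blocks with multiplicities ($n_G^{\mathcal{O}(n_P)}$ choices), discards guesses that miss a hydra vertex, and relies on \Cref{cor:poly-trees-preprocessing-hydra-vertex-placement} together with the fact that the leaf-neighbours of $v$ form a single type. Your count from the leftmost hydra vertex (at most $10n_P$ blocks of $C(v)$-vertices, at most $10n_P+1$ runs of $v$'s leaves, plus $v$ itself) is a cleaner version of the paper's remark that at most $c(S)/2$ blocks can be leaves of $v$.

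The paper glosses over the boundary case you flag. It can only arise when there is no hydra vertex: a hydra vertex and its at least $38n_P+1$ non-leaf neighbours are all singleton types, so every order then has more than $20n_P+2$ blocks. Padding cannot restore exact equality in general, since for example a star admits at most three blocks in any order. Your relaxation of property~4 to ``at most'' is therefore the right fix, while the padding alternative should be dropped.
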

\begin{proof}
    Let $H \subseteq V(G)$ be the set of hydra vertices in $G$.
    We can compute $H$ in $\BigO{{n_G}^2}$ time.
    From \Cref{cor:poly-trees-preprocessing-hydra-vertex-placement}, we can derive that every vertex $v \in H$ must be among the $10n_P+1$ rightmost vertices in a hypothetical solution $\prec_G$, after ignoring non-critical vertices $u \not\in C(v)$.
    By the definition of critical vertices, in particular due to their high degree, for every two hydra vertices $u,v\in H$, we have that $u \in C(v)$ and $v \in C(u)$.
    Therefore, the number of hydra vertices, i.e., the size of $H$ must be bounded by $10n_P + 1$ in every positive instance.

    Towards constructing the family $\mathcal{F}$ of instances $\Instance'$ of \PAOneFOneS, we generate all sequences $S$ of types $\eqclasses{V(G)}{\sim_G}$ of length $\compress{S} = 20n_P + 2$.
    Our goal is to let $S$ ``represent'' the fixed part of an instance $\Instance'$.
    Thus, if $v\notin \multiset{S}$ for some hydra vertex $v \in H$, the resulting instance would be a negative instance by \Cref{cor:poly-trees-preprocessing-hydra-vertex-placement}.
    To see this, recall that all hydra vertices have to be among the $10n_P+2$ rightmost vertices of a solution (after ignoring the respective non-critical vertices), and only $\compress{S}/2$ entries of $S$ can be a leaf-type of a specific hydra vertex $v \in H$.
    As an immediate consequence, we can discard such sequences and create one instance $\Instance_S = (G, P, V(\prec_S), \prec_{S})$ for the remaining sequences $S$, where $\prec_S$ is a linear order compatible with $S$.

    Finally, we argue that the obtained family $\mathcal{F}$ fulfills the properties of the statement.
    To this end, there are $\BigO{{n_G}^2}$ different tuples $(v,i)$ and $S$ contains $\BigO{n_P}$ of them.
    Thus, there are ${n_G}^{\BigO{n_P}}$ possible sequences $S$, which already gives a bound on $\Size{\mathcal{F}}$.
    For each sequence $S$, the instance $\Instance_S$ can be constructed in time polynomial in $n_G$.
    Thus, we can construct $\mathcal{F}$ in ${n_G}^{\BigO{n_P}}$ time.
    Properties~3 and~4 are true by construction and for Property~5, we make the following observation.
    If \Instance is a positive instance, then the witnessing order $\prec_G$ must have all hydra vertices among its rightmost $10n_P + 1$ vertices; recall \Cref{cor:poly-trees-preprocessing-hydra-vertex-placement}.
    Thus, there exists at least one sequence $S$ such that the ending of $\prec_G$ is compatible with $S$; observe that that leaves of the same type are interchangeable, i.e., their exact representative in $\prec_S$ is irrelevant. 
    Consequently, $\Instance_S \in \mathcal{F}$ is a positive instance.
    For the other direction, observe that every instance $\Instance_S = (G, P, V(\prec_S), \prec_{S})$ is defined on $G$ and $P$.
    Thus, if $\Instance_S$ is a positive instance, so is clearly \Instance.
\end{proof}

With \Cref{lem:poly-trees-preprocessing-one-sided-pattern-avoidance} at hand, we proceed in the following fashion.
First, we branch to determine a sufficient number of vertices that is part of the right fixed side, i.e., in $U$, such that $H \subseteq U$, i.e., it contains every hydra vertex.
Afterwards, we expand upon this initial guess and fix further parts of the instance until for every subset $V'$ of $n_P$ vertices from $V(G) \setminus U$ there exists an independent set $I \subseteq U$ such that there is no edge $uv \in E(G)$ with $u \in V'$ and $v \in I$.
We will ensure that $c(U)$ depends on the size of the pattern, i.e., $n_P$.
To avoid having to deal with a potentially unbounded number of leaves in $G$ (whose exact identity is irrelevant for the existence of a solution), we will not branch upon individual leaves but rather fix a sequence $S$ such that $\prec_U$ is compatible with $S$.
By doing so, we eventually obtain an instance of \PAOneFOneS, for which we will present in \Cref{sec:poly-trees-dp} a dynamic program.

Next, we present some lemmas that can be seen as decomposition procedures which simplify the instance.
\begin{lemma}
    \label{lem:poly-trees-preprocessing-separation-isolated-components}
    Let $\Instance = \InstanceLongOneS$ be an instance of \PAOneFOneS such that $\Size{U} \geq 2 n_P$.
    Moreover, let $\mathcal{T}=\{T_1, \ldots, T_k\}$ be the set of connected components, i.e., trees, of $G$ such that $V(T_i) \cap U = \emptyset$ for all $T_i \in \mathcal{T}$.
    Let $\Instance_{G- \mathcal{T}} = (G - \bigcup_{T \in \mathcal{T}} V(T), P, U, \prec_U)$ and $\Instance_T = (G[T], M(P))$, for $T \in \mathcal{T}$, be instances of \probname{(1S)\PAOneF}.
    
    Then, \Instance is a positive instance if and only if $\Instance_{G- \mathcal{T}}$ is a positive instance and $\Instance_T$ is a positive instance for every $T \in \mathcal{T}$.
    
    In particular, for an instance $(G, P, U, \prec_U)$ of \PAOneFOneS with $U = \emptyset$, then $G$ is a tree and the single forced edge in $P$ is between the leftmost and rightmost vertex of $\prec_P$.
\end{lemma}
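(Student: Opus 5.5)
Both directions go through a single structural fact. Since $u$ is the leftmost vertex of $\prec_P$, we have $V_L(P)=\emptyset$ and $V(P)=V_M(P)\cup V_R(P)$. The fixed part $U$ sits at the right end of any feasible order and, as $|U|\ge 2n_P$, contains an independent set of size at least $n_P$ (\Cref{obs:forests-independent-set}). Consider a component $T\in\mathcal{T}$; it is disjoint from $U$ and hence from every vertex of $G-T$. Any realization of $M(P)$ inside $T$ can therefore be extended to a realization of $P$: the $|V_R(P)|\le n_P$ right vertices are mapped into this independent set of $U$. This works because $R(P)$ has only forbidden edges, and every edge of $P$ between $V_M(P)$ and $V_R(P)$ is forbidden (the unique forced edge lies inside $M(P)$), while there are no edges between $T$ and $U$. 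The map is order-preserving as long as $T$ is placed before $U$, which holds automatically because $U$ is the suffix.

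$(\Rightarrow)$: Let $\prec_G$ be a solution of \Instance. Its restriction to $G-\bigcup\mathcal{T}$ still ends with $\prec_U$ and avoids $P$, because avoidance is closed under induced suborders. For each $T\in\mathcal{T}$ the restriction $\prec_G\mid_{V(T)}$ must avoid $M(P)$: otherwise the extension argument above yields a copy of $P$ in $\prec_G$. This step needs the chosen independent vertices of $U$ to be distinct from the realization, which holds since $T\cap U=\emptyset$.

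$(\Leftarrow)$: Given solutions $\prec_T$ for each $\Instance_T$ and $\prec'$ for $\Instance_{G-\mathcal{T}}$, I would build $\prec_G$ as $\prec_{T_1}\oplus\dots\oplus\prec_{T_k}\oplus\prec'$; it ends with $\prec_U$. Suppose $X$ realizes $P$, and let $a,b$ be the images of $u,v$. Since $ab$ is an edge, $a$ and $b$ lie in the same component. If that component is some $T_i$, then every image of a vertex $w$ with $u\preceq_P w\preceq_P v$ lies between $a$ and $b$ in $\prec_G$, hence inside the block of $T_i$, because blocks are contiguous. So $X\cap V(T_i)$ contains a realization of $M(P)$ in $\prec_{T_i}$, a contradiction. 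Otherwise $a,b\in V(G)\setminus\bigcup\mathcal{T}$. As $a$ is the leftmost vertex of the realization, every vertex of $X$ lies at or after $a$, i.e.\ in the block of $\prec'$, so $X$ realizes $P$ in $\prec'$, again a contradiction.

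For the final remark: if $U=\emptyset$ then the recursive call $\Instance_T=(G[T],M(P))$ is applied to a single tree, and in $M(P)$ the forced edge joins the first and the last vertex by definition of $V_M$. The main obstacle is the $(\Rightarrow)$ extension step, namely checking that mapping $V_R(P)$ into $U$ respects all forbidden edges. This is exactly where the left-endpoint assumption and the absence of edges between $T$ and $U$ are used, and it is the step to verify carefully.
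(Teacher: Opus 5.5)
Your proposal is correct and follows the paper's proof essentially step for step. In $(\Rightarrow)$ you extend a realization of $M(P)$ inside $T$ by $n_P$ independent vertices of $U$; this uses that $u$ is leftmost, that pairs within $V_R(P)$ and pairs between $V_M(P)$ and $V_R(P)$ carry only forbidden edges, and that $T$ has no edges to $U$. In $(\Leftarrow)$ you concatenate $\prec_{T_1}\oplus\dots\oplus\prec_{T_k}\oplus\prec'$ and locate the forced edge either inside the contiguous block of some $T_i$ or in the final block; your reading of the ``in particular'' remark as describing the produced instances with $U=\emptyset$ also matches the paper, which gives no separate argument for it.
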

\begin{proof}
    We show both directions separately.

    \proofsubparagraph{($\Rightarrow$)}
    Assume that \Instance admits a solution $\prec_G$.
    Clearly, there exists a solution $\prec'$ for $G - \bigcup_{T \in \mathcal{T}} V(T)$, i.e., $\Instance_{G-\mathcal{T}}$ is a positive instance; recall that $V(T) \cap U = \emptyset$ for every $T \in \mathcal{T}$, and $\Instance$ and $\Instance_{G-\mathcal{T}}$ have the same pattern $P$ as input.
    We now show that every instance $\Instance_T$ is also a positive instance.
    In particular, we claim that the linear order $\prec_G\mid_{V(T)}$ is a solution.
    Towards a contradiction, assume that there exists a tree $T \in \mathcal{T}$ such that $\prec_G\mid_{V(T)}$ contains the pattern $M(P)$.
    Recall that $M(P)$ is the pattern $P$ induced on the vertices under (and including the endpoints of) the single forced edge $uv \in E^+(P)$ with $u \prec_P v$.
    Let $(\prec_G, X_T)$ be the realization of $M(P)$ in $\prec_G$ using vertices $X_T \subseteq V(T)$ of $T$ exclusively.
    Note that $(\prec_G, X_T)$ must exist as $\prec_G\mid_{V(T)}$ contains $M(P)$.
    By the definition of $\mathcal{T}$, there is no edge $ab \in E(G)$ with $a \in V(T)$ and $b \in U$. 
    Moreover, $\Size{U} \geq 2n_P$ and, since $G[U]$ is a forest, contains an independent set $I_U$ of size at least $n_P$ by \Cref{obs:forests-independent-set}.
    As $u$ is the leftmost vertex in $\prec_P$ by the definition of \Instance, $P$ and $M(P)$ only differ in the vertices $V_R(P)$, i.e., right of $v \in V(P)$.
    Note that the vertices $V_R(P)$ can only be incident to forbidden edges from $E^-(P)$ since $m_P^+ = 1$.
    Consequently, there exists a set $X \subseteq X_T \cup I_U$ such that $(\prec_G, X)$ is a realization of $P$ in $\prec_G$; recall \Cref{obs:forests-independent-set-contains-pattern}.
    This contradicts the assumption that $\prec_G$ is a solution of \Instance.
    Thus, $\prec_G\mid_{V(T)}$ must be a solution for $(G[T], M(P))$, i.e., $\Instance_T$ is a positive instance.
    
    \proofsubparagraph{($\Leftarrow$)}
    Assume that every instance $\Instance_{T}$ for $T \in \mathcal{T}$ admits a solution $\prec_{T}$ as well as the instance $\Instance_{G- \mathcal{T}}$, where $\prec_{G-\mathcal{T}}$ is a witnessing linear order for the latter instance.
    In the following, we construct a linear order $\prec_G$ that serves as witness that also $\Instance$ is a positive instance.
    To this end, consider an arbitrary indexation of $\mathcal{T}$, i.e., $\mathcal{T} = \{T_1, \ldots, T_k\}$.
    We construct the linear order $\prec_G$ by concatenating the solutions of the created instances, i.e., $\prec_G = \prec_{T_1} \oplus \prec_{T_2} \oplus\ \ldots\ \oplus \prec_{T_k} \oplus \prec_{G-\mathcal{T}}$.
    Observe that $\prec_G$ ends in $\prec_U$ as $\prec_{G-\mathcal{T}}$ ends in $\prec_U$ by the definition of $\Instance_{G- \mathcal{T}}$.
    Therefore, it remains to show that $\prec_G$ avoids the pattern.
    Towards a contradiction, assume that this is not the case.
    This means that there exists a set $X \subseteq V(G)$ such that $(\prec_G, X)$ is a realization of $P$.
    Consider the (single) forced edge $uv \in E^+(P)$ and the corresponding vertices $\delta^{-1}(u), \delta^{-1}(v) \in X$
    As $uv \in E^+(P)$ and $(\prec_G, X)$ is a realization of $P$, we have $\delta^{-1}(u)\delta^{-1}(v) \in E(G)$.
    Thus, we either have $\delta^{-1}(u), \delta^{-1}(v) \in V(T)$, for some tree $T \in \mathcal{T}$, or $\delta^{-1}(u), \delta^{-1}(v) \in V(G) \setminus \bigcup_{T \in \mathcal{T}} V(T)$.
    We first show that the former is not possible and then focus on the latter case.
    Assume the former case, i.e., there exists a tree $T \in \mathcal{T}$ such that $\delta^{-1}(u), \delta^{-1}(v) \in V(T)$.
    This implies that for every $w \in V(P)$ such that $u \preceq_P w \preceq_P v$ we also have $\delta^{-1}(w) \in V(T)$.
    This is because $(\prec_G, X)$ is a realization of $P$ and $\prec_T$ forms a contiguous subsequence of $\prec_G$.
    However, this implies that $(\prec_T, X \cap \{\delta^{-1}(w) \mid w \in M(P)\})$ is a realization of $M(P)$ in $\prec_T$ using only vertices $V(T)$.
    This contradicts the assumption that $\prec_T$ is a solution to $\Instance_T$.
    Therefore, $\delta^{-1}(u), \delta^{-1}(v) \in V(G) \setminus \bigcup_{T \in \mathcal{T}} V(T)$ must hold.
    In particular, $\delta^{-1}(u), \delta^{-1}(v) \in V(T')$ where $T'$ is a connected component, i.e., tree, of $G$ with $T' \notin \mathcal{T}$.
    We claim that also this is not possible.

    To see this, we argue similar to before.
    In particular, since $\delta^{-1}(u) \in V(G) \setminus \bigcup_{T \in \mathcal{T}} V(T)$, we have $\delta^{-1}(w) \in V(G) \setminus \bigcup_{T \in \mathcal{T}} V(T)$ for every $w \in V(P)$ such that $u \prec_P w$.
    To see this, recall that the linear order $\prec_G$ ends with $\prec_{G - \mathcal{T}}$ and $u$ is the leftmost vertex in $\prec_P$.
    Consequently, $(\prec_{G-\mathcal{T}}, X)$ must also be a realization of $P$.
    However, this contradicts the assumption that $(\prec_{G-\mathcal{T}}, X)$ is a solution to $\Instance_{G - \mathcal{T}}$.
    Combining all, we conclude that $(\prec_G, X)$ cannot be a realization of $P$ and, since $X$ was selected arbitrarily, $\prec_G$ avoids the pattern, i.e., is a solution to \Instance.
\end{proof}

\Cref{lem:poly-trees-preprocessing-separation-isolated-components} can be seen as a reduction rule that, in essence, removes trees that do not interact with the fixed right side and treats them separately.
Moreover, we could observe that the number of hydra vertices must be bounded in positive instances (which is also a consequence of \Cref{cor:poly-trees-preprocessing-hydra-vertex-placement}).
However, this does not yet yield a bound on the number of (non-leaf) neighbors of hydra vertices.
As our next goal, our aim is to establish a similar reduction rule with respect to certain non-leaf neighbors of hydra vertices, which we call hydra attachments and are defined below.
\begin{definition}
    \label{def:poly-trees-preprocessing-hydra-attachments}
    Let $G$ be a tree and let $H \subseteq V(G)$ be the hydra vertices of $G$.
    Moreover, let $G' = G - H$ be the forest obtained from $G$ after deleting all hydra vertices.
    An \emph{$H$-bridge} in $G'$ is a connected component $T$ of $G'$ such that there exist vertices $u, u' \in V(T)$ and two vertices $v \neq v' \in H$ with $uv, u'v' \in E(G)$.
    Let $G''$ be the graph obtained from $G'$ after deleting all $H$-bridges.
    The \emph{hydra attachments} of $G$ are the connected components of $G''$ of size at least 2.
    We say that $T$ is a hydra attachment $T$ \emph{of} the hydra vertex $v \in H$ (or it \emph{belongs} to $v$) if there is an edge $uv \in E(G)$ with $u \in V(T)$.
\end{definition}

\begin{definition}
    \label{def:poly-trees-preprocessing-focused}
    Let $\Instance = \InstanceLongOneS$ be an instance of \PAOneFOneS.
    Furthermore, let $v \in H$ be a hydra vertex and let $B$ be a hydra attachment of $v$.
    Let $K = \{k_1, \ldots, k_{n_P}\}$ be $n_P$ new isolated vertices and $\prec_K$ an arbitrary total order of them,
    The \emph{instance focused on $B$} is $\Instance\mid_B = (G', P, U', \prec_U')$, where $G' = G[V(B) \cup U] + K$, $U' = U \cup K$, and $\prec_{U'} =  \prec_K \oplus \prec_U$.
\end{definition}
To put \Cref{def:poly-trees-preprocessing-focused} in other words, the instance $\Instance\mid_B$ consists precisely of the fixed vertices as well as those in the hydra attachment $B$, to which we add $n_P$ isolated vertices.
Thus, to find a solution we only need to order the vertices $V(B)$ and the isolated vertices $K$ ensures that the individual solutions for instances $\Instance\mid_B$ and $\Instance\mid_{B'}$ are ``compatible'', i.e., can be merged. This is captured by the following lemma.

\begin{lemma}
    \label{lem:poly-trees-preprocessing-separation-hydra-attachments}
    Let $\Instance = \InstanceLongOneS$ be an instance of \PAOneFOneS obtained from \Cref{lem:poly-trees-preprocessing-one-sided-pattern-avoidance} after applying \Cref{lem:poly-trees-preprocessing-separation-isolated-components} and let $v \in H$ be a hydra vertex.
    Then \Instance admits a solution if and only if there exists a set $H_v$ of at most $7n_P + \compress{S(\prec_U)}$ hydra attachments of $v$ such that
    \begin{enumerate}
        \item for every hydra attachment $T$ of $v$ for which $T \cap U \neq \emptyset$ we have $T \in H_v$,
        \item for every hydra attachment $B \notin H_v$ of $v$, the instance $\Instance\mid_B$ admits a solution, and
        \item $(G', P, U, \prec_{U})$ admits a solution, where $G'$ is obtained from $G$ after removing all hydra attachments $B \notin H_v$ of $v$. 
    \end{enumerate}
\end{lemma}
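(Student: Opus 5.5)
The plan is to prove the two directions separately, with the ``if'' direction coming from a concatenation argument and the ``only if'' direction from the hydra placement results. The fact that makes everything work is that $P$ has exactly one forced edge $uv$, with $u$ leftmost in $\prec_P$. So in any realization, all pattern vertices other than $u,v$ are constrained only by forbidden edges. Consequently, replacing their images by \emph{isolated} vertices at the same relative positions never destroys a realization. The $n_P$ isolated vertices $K$ in \Cref{def:poly-trees-preprocessing-focused} are exactly such substitutes.

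For the direction ($\Leftarrow$), take a solution $\prec_B$ of $\Instance\mid_B$ for every hydra attachment $B\notin H_v$, and a solution $\prec'$ of $(G',P,U,\prec_U)$. Define $\prec_G$ as the concatenation $\prec_{B_1}\mid_{V(B_1)}\oplus\dots\oplus\prec_{B_k}\mid_{V(B_k)}\oplus\prec'$, in the spirit of the proof of \Cref{lem:poly-trees-preprocessing-separation-isolated-components}. This is possible because by Property~1 these $B$ are disjoint from $U$. Now suppose $(\prec_G,X)$ realizes $P$, and let $a b$ be the image of the forced edge, with $a$ the image of $u$.
\begin{itemize}
\item If $a$ lies in the $\prec'$ part, then all of $X$ lies in that part, since $u$ is leftmost in $\prec_P$. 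This contradicts $\prec'$ being a solution.
\item Otherwise $a\in V(B)$ for some $B$, and $b\in V(B)\cup\{v\}$, since $B$ attaches to the rest of $G$ only through $v$ (this uses that $H$-bridges are excluded). We may assume $v\in U$ after the branching of \Cref{lem:poly-trees-preprocessing-one-sided-pattern-avoidance}. The images of pattern vertices strictly right of $u$ that lie outside $V(B)\cup U$ are at most $n_P$ vertices, all placed between $B$ and $U$. We remap them to distinct vertices of $K$, which sit in $\prec_B$ exactly between $V(B)$ and $U$. Since only forbidden edges constrain them, the result is a realization of $P$ in $\Instance\mid_B$, a contradiction.
\end{itemize}

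For the direction ($\Rightarrow$), fix a solution $\prec_G$. Let $H_v$ consist of two kinds of attachments.
\begin{itemize}
\item All attachments of $v$ meeting $U$. Each such attachment contributes at least one run to $S(\prec_U)$: an attachment has size at least $2$, and its vertices are not leaf-siblings of vertices outside it. Hence there are at most $\compress{S(\prec_U)}$ of these.
\item All attachments containing a vertex to the right of $v$ in $\prec_G$, together with the few attachments whose attaching edge to $v$ spans many critical vertices. To bound these by $7n_P$, each such attachment contains a non-leaf neighbor of $v$, hence a vertex of $C(v)$. So \Cref{cor:poly-trees-preprocessing-hydra-vertex-placement} bounds those right of $v$ by $10n_P$. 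I would sharpen this to the claimed constant using the matching argument of \Cref{lem:poly-trees-preprocessing-matching}: a spanning matching across a point has size at most $6n_P$.
\end{itemize}
Property~3 is immediate because $G'$ is an induced subgraph of $G$, so we take the restriction $\prec_G\mid_{V(G')}$. For Property~2, take $B\notin H_v$. Then $B$ lies entirely left of $v$, so left of $U$. Order $\Instance\mid_B$ by $\prec_G\mid_{V(B)}\oplus\prec_K\oplus\prec_U$. A realization here must use a forced edge inside $B$ or $B$--$v$. Its pattern vertices in $K$ must then be replaced by real vertices of $G$ lying between $B$ and $U$ and forming an independent set with the other images. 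For these I plan to use vertices from other attachments of $v$ not in $H_v$ (there are at least $38n_P+1-|H_v|$ non-leaf neighbors of $v$ available), choosing non-neighbors of $v$ inside them via \Cref{obs:forests-independent-set}. This yields a realization in $\prec_G$, a contradiction.

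I expect the main obstacle to be the forward direction. There I must simultaneously (i) get the exact bound $7n_P+\compress{S(\prec_U)}$ on $|H_v|$, and (ii) guarantee that enough independent substitutes for $K$ exist between $B$ and $U$. These substitutes must be nonadjacent to the realized forced edge; in particular, if that edge is $B$--$v$, they must avoid neighbors of $v$. My first attempt would be to pick, in each other attachment, a vertex at distance two from $v$ that lies right of $B$. Where that fails, I would fall back to a marking argument like the one in \Cref{lem:poly-trees-preprocessing-hydra-not-left}, and charge the failing attachments to the $7n_P$ budget.
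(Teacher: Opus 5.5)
Your ($\Leftarrow$) direction is correct and is essentially the paper's concatenation argument. The ($\Rightarrow$) direction has a real gap, and it is not only the constant: nothing in your construction supplies the vertices that must replace $K$.

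First, $v\in U$ and $U$ is a suffix of $\prec_G$, so every attachment with a vertex right of $v$ already meets $U$ and is counted in $c(S(\prec_U))$. Your $H_v$ is therefore the attachments meeting $U$ plus some attachments with \emph{long} attaching edges. Those are far from $U$, so they are the easy ones to leave out, not the ones to keep.

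Now let $B\notin H_v$ be the attachment disjoint from $U$ whose last vertex is rightmost in $\prec_G$. A realization in $\mathcal{I}\mid_B$ may use the forced edge $r_Bv$, with $r_B$ the neighbor of $v$ in $B$, and put vertices of $K$ into its middle part. To pull it back into $\prec_G$ you need up to $n_P$ pairwise non-adjacent vertices that are non-adjacent to $v$ and to $U$. They must lie after the used vertices of $B$ and before $U$. All attachments outside your $H_v$ end before $B$ does, and the stretch between $B$ and $U$ may be empty. So ``pick a vertex at distance two from $v$ in another attachment, right of $B$'' may have nothing to pick from. The count $38n_P+1-|H_v|$ does not help, because what matters is position. (If the forced edge lies inside $B$, the $K$-vertices are in the right part and can be replaced from an independent set in $U\setminus\{v\}$. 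Only the $r_Bv$ case really needs vertices between $B$ and $U$.)

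The missing idea is to choose $H_v$ by position. Scan $\prec_G$ from right to left and let $H_v$ be the first $7n_P+c(S(\prec_U))$ attachments of $v$ that are hit. Since $U$ is scanned first and at most $c(S(\prec_U))$ attachments meet $U$ (your type-counting argument for this is fine), Property~1 holds.

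Now fix $B\notin H_v$ with last vertex $v_b$; every attachment in $H_v$ has a vertex right of $v_b$. Those that also have a vertex left of $v_b$ contain an edge jumping over $v_b$. These edges form a matching because attachments are vertex-disjoint, so \Cref{lem:poly-trees-preprocessing-matching} bounds their number by $6n_P$. Hence at least $n_P+c(S(\prec_U))$ attachments of $H_v$ lie entirely right of $v_b$, and at least $n_P$ of these avoid $U$. Taking a non-neighbor of $v$ from each gives an independent set $I_A$ with no edges to $V(B)\cup U$, placed between $B$ and $U$. This is where $7n_P=6n_P+n_P$ comes from: the substitutes come from $H_v$ itself, not from attachments outside it.

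With $I_A$ in hand, restrict $\prec_G$ to $V(B)\cup I_A\cup U$ and identify $I_A$ with $K$. This gives an order $\prec_G\mid_{V(B)}\oplus\prec_K\oplus\prec_U$ for $\mathcal{I}\mid_B$. It avoids $P$ because it is the restriction of a solution to an induced subgraph. So your direct-ordering plan for Property~2 then goes through, without the paper's intermediate step of moving $B$ to the front of $\prec_G$.
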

\begin{proof}
    We show both directions separately.

    \proofsubparagraph{($\boldsymbol{\Rightarrow}$)}
    Assume that \Instance admits a solution $\prec_G$.
    Since this implies that every induced subgraph of $G'$ also admits a solution, Property~3 holds (for every set $H_v$).
    It remains to show Properties~1 and~2.
    To this end, let $H_v$ denote the rightmost $7n_P + c(S(\prec_U))$ hydra attachments of $v$ in $\prec_G$%
    , i.e., traverse $\prec_G$ from right to left and consider every vertex $w \in V(G)$: if $w \in V(T)$ belongs to a hydra attachment $T$ of $v$%
    , add $T$ to $H_v$ (if it was not already in there); repeat this process until $\Size{H_v} = 7n_P + c(S(\prec_U))$.
    Observe that this, in particular, ensures that for every vertex $u \in U$ %
    from a hydra attachment $B$ of $v$ we have $B \in H_v$.
    Let $B \notin H_v$ be an arbitrary other hydra attachment of $v$ and observe $V(B) \cap U = \emptyset$.

    We now reorder the solution to obtain a different order $\prec_G'$ in which the vertices of $B$ appear contiguously as a sub-sequence.
    More concretely, to obtain $\prec_G'$, we remove all vertices $V(B)$ and append them, in the same order as in $\prec_G$, to the beginning of $\prec_G$, i.e., $\prec_G' = \prec_G\mid_{V(B)} \oplus \prec_G\mid_{V(G) \setminus V(B)}$.

    We now show that $\prec_G'$ is also a solution.
    Since $\prec_G'$ ends in $\prec_U$ as $\prec_G$ ends in $\prec_U$ and $V(B) \cap U = \emptyset$, the only way that $\prec_G'$ cannot be a solution is because it contains $P$.
    So, assume, for the sake of a contradiction, that $\prec_G'$ contains $P$.
    Consider the potential edge $uw \in E(G)$ that could represent the single forced edge in $P$.
    First, assume $uw \in E(B)$.
    Observe that the fixed set $U$ is to the right of $B$.
    Even after ignoring $v \in U$, there remain at least $20n_P$ vertices of different types.
    Among them, we can find a sufficiently large independent set (that has no edge to any vertex of $B$, since $B$ was only incident to $v$) that can take the role of the vertices to the right of the forced edge in the realization of $P$; thus $\prec_G$ would contain $P$.
    Next, assume $u,w \notin V(B)$.
    Since the vertices in $B$ are the leftmost in $\prec_G'$, the realization must also be in $\prec_G$; a contradiction.
    Therefore, the only edge in $G$ that can form the forced edge is the single edge $uv \in E(G)$ with $u \in V(B)$.
    
    Let $(\prec_G', X)$ be a realization of $P$ in $\prec_G'$ and let $G_P = G[X]$ be the graph induced on the vertices used to form the pattern.
    We now partition $V(G_P)$ into three sets based on their position in the solution $\prec_G$ to \Instance (and not the new linear order $\prec_G'$).
    The first set $S_1 \subseteq V(G_P)$ contains all vertices $w \in V(G_P)$ such that $b \prec_G w$ for every $b \in V(B)$, i.e., $S_1$ is completely to the right of $B$.
    The second set $S_2$ contains all vertices $w \in V(G_P) \setminus V(B)$ such that $w \prec_G b$ for some $b \in V(B)$.
    The third set $S_3$ contains all vertices $b \in V(B) \cap V(G_P)$.
    Without loss of generality, assume that the realization is selected such that $\Size{S_2}$ is minimal, i.e., the second set is a smallest set among all possible $(\prec_G', X)$ and, therefore, $G_P$.
    Note that $\Size{S_2} > 0$, as otherwise the pattern would already exist in $\prec_G$.
    To see this, observe that the realization of $P$ would then only use vertices from $S_1$ and $S_3$, whose relative order is identical in $\prec_G$ and $\prec_G'$

    Let $v_b \in V(B)$ denote the rightmost vertex from $B$ in $\prec_G'$.
    Recall \Cref{lem:poly-trees-preprocessing-matching}, which states that every vertex $u \in V(G)$ can only be spanned by a matching of size at most $6n_P$. %
    Thus, the number of hydra attachments of $v$ that are completely to the right of $v_b$, let them be $A$, is at least $\Size{A}\geq n_P + c(S(\prec_U))$.
    At least $n_P$ hydra attachments from $A$ are disjoint from $U$, i.e., consist entirely of free vertices.
    Let them be $A' \subseteq A$.
    We now select from every hydra attachment $T \in A'$ one vertex $v_T \notin N_G(v)$ to obtain an independent set $I_A$; recall that different hydra attachments are disconnected.
    Note that $I_A$ is also independent from every vertex $w \in V(G)$ with $w \prec_G v_b$.
    Recall that the only edge that can form the forced edge in $\prec_G'$ is the edge $uv$ with $u \in V(B)$ and $v \in U$ since $v$ is a hydra vertex.
    However, this implies that 
    all vertices in $I_A$ are between $u$ and $v$.
    Moreover, no vertex in $I_A$ is adjacent to $u$ or a vertex $w \in U \setminus \{v\}$.
    Therefore, we replace the vertices under the forced edge in the realization of $P$ with vertices from $I_A$.
    Observe that $I_A$ belongs to the (updated) set $S_1$.
    Thus, the (updated) set $S_2$ is empty; which is a contradiction to the existence of $\prec_G$ as argued above.
    Thus, we conclude that $\prec_G'$ cannot contain $P$.

    Next, we show that we can obtain from $\prec_G'$ a solution for $\Instance\mid_B$.
    To this end, consider the graph $G_B = G[V(B) \cup I_A \cup U]$ and observe that it is isomorphic to the graph constructed for $\Instance\mid_B$ if we identify the isolated vertices in $\Instance\mid_B$ with $I_A$.
    Furthermore, the linear order $\prec_G'\mid_{V(G_B)}$ is a solution to $\Instance\mid_B$; in particular observe that it places the isolated vertices as dictated by $\prec_{U'}$ in $\Instance\mid_B$.
    Combining all, we conclude that also Property~1 is fulfilled.

    \proofsubparagraph{($\boldsymbol{\Leftarrow}$)}
    Assume there exists a set $H_v$ of at most $7n_P + c(S(\prec_U))$ hydra attachments of $v$ that fulfills the properties of the lemma statement.
    For every hydra attachment $B \notin H_v$ of $v$, let $\prec_B'$ be the order obtained by taking a solution $\prec_B$ to $\Instance\mid_B$ and restricting it to vertices in $B$, i.e., $\prec_B' = \prec_B\mid_{V(B)}$.
    Note that $\prec_B$ exists due to Property~2 of the lemma statement.
    To construct an order $\prec_G$, consider an arbitrary indexation of the hydra attachments $B \notin H_v$ of $v$.
    To construct $\prec_G$, we start with the solution $\prec_{G'}$ for $(G', P, U, \prec_{U})$, which exists due to Property~3, and then append to its left the linear order $\prec_B$ for every $B_i \notin H_v$, i.e., $\prec_G = \prec_{B_1} \oplus \ldots \oplus \prec_{G'}$.
    Observe that $\prec_G$ is a linear order of $V(G)$.
    In particular, $\prec_G$ contains every vertex of $V(G)$.

    We now show that $\prec_G$ is a solution to $\Instance$.
    Assume contrary and consider the edge $uw \in E(G)$ that represents the forced edge in a realization $(\prec_G, X)$ of $P$.
    We show that no edge $uw \in E(G)$ exists that can represent the forced edge in $P$ implying that $(\prec_G, X)$ cannot be a realization.
    First, this must be the case if $uw \in E(G')$ as $(\prec_G', X)$ would then be a realization of $P$ as well, which contradicts the fact that $\prec_G'$ is a solution.
    Second, assume that it would be an edge $bv \in E(G)$ with $b \in V(B)$ for some $B \notin H_v$.
    Since every hydra vertex is contained in the fixed set (recall \Cref{lem:poly-trees-preprocessing-one-sided-pattern-avoidance}), we know that $bv$ also exists in $\Instance\mid_B$.
    Observe that the set $K$ of $n_P$ isolated vertices is between $b$ and $v$ in $\prec_B$, since $K$ is among the fixed vertices in $\Instance\mid_B$. 
    Therefore, if $bv$ takes over the role of the forced edge in $P$, then we can find the pattern using the edge $bv$, the isolated vertices $K$, and the vertices in $X$ right to $v$; recall that the forced edge in $P$ is incident to the leftmost vertex in $\prec_P$.
    Therefore, $\prec_B$ would not be a solution to $\Instance\mid_B$; a contradiction.

    So it remains to consider the third case, in which we assume that an edge $ab \in E(B)$ for some hydra attachment $B \notin H_v$ represents the forced edge in $(\prec_G, X)$. 
    Consider the edge $ab$ in the solution $\prec_B$ for the instance $\Instance\mid_B$.
    Since $\prec_B'$ is just the order $\prec_B$ induced on $V(B)$, the same (non-)edges that exist in $\prec_B'$ also exist in $\prec_B$, and, in particular the respective vertices in $X$ are from $V(B)$.
    Moreover, recall that there exists the set $K$ of $n_P$ isolated vertices in $\Instance\mid_B$, which are to the right of $ab$ as they are part of the fixed vertices. 
    Therefore, we can replace all vertices to the right of $ab$ in $X$ with vertices from $K$ and observe that with the new set $X'$, $(\prec_B, X')$ is a realization of $P$.
    However, this contradicts the assumption that $\prec_B$ is a solution to $\Instance\mid_B$.

    Combining all, we conclude that $\prec_G$ must be a solution to \Instance, which concludes the proof.
\end{proof}

\newcommand{\boundInterestingDegree}{38n_P}
An immediate consequence of \Cref{lem:poly-trees-preprocessing-separation-hydra-attachments} is the following:
\begin{corollary}
    \label{cor:poly-trees-preprocessing-interesting-degree}
    Let $\Instance = \InstanceLongOneS$ be an instance of \PAOneFOneS.
    We can assume that $\deg_G^{>1}(v) \leq \boundInterestingDegree$ for every $v \in V(G)$.
\end{corollary}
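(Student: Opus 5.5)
Our plan is to apply \Cref{lem:poly-trees-preprocessing-separation-hydra-attachments} exhaustively, once per hydra vertex, and show that this shrinks every interesting degree to at most $\boundInterestingDegree$. Non-hydra vertices already satisfy the bound by \Cref{def:poly-trees-preprocessing-hydra-vertex}, so only a hydra vertex $v\in H$ matters. Its non-leaf neighbours are of three kinds: neighbours lying in $H$, neighbours lying in $H$-bridges, and neighbours lying in hydra attachments of $v$.

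First we bound the first two kinds. \Cref{lem:poly-trees-preprocessing-one-sided-pattern-avoidance} places all hydra vertices in $U$, and $G$ is a forest. Hence at most $|H|-1\le 10n_P$ neighbours of $v$ lie in $H$. For the $H$-bridges, contract each bridge to a single vertex; the resulting graph on $H$ plus the bridges is still a forest in which every bridge has degree at least $2$. So the number of bridges is less than $|H|\le 10n_P+1$. Each bridge contains at most one neighbour of $v$, since otherwise $G$ would contain a cycle. Together, at most about $20n_P$ non-leaf neighbours of $v$ do not lie in hydra attachments.

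Next we handle the hydra attachments. Each attachment contains exactly one neighbour of $v$, again because $G$ is a forest. We branch over the set $H_v$ of at most $7n_P+\compress{S(\prec_U)}$ attachments that \Cref{lem:poly-trees-preprocessing-separation-hydra-attachments} allows us to keep.

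We avoid branching over subsets of attachments, which could be exponential. Instead we decide greedily: an attachment $B$ meeting $U$ must be kept (Property~1), and otherwise we solve $\Instance\mid_B$ recursively. An attachment with a negative $\Instance\mid_B$ must also be kept. If more attachments are forced than the allowed bound, we reject. Otherwise we pad $H_v$ with arbitrary solvable attachments up to the bound, and delete all remaining attachments, whose focused instances are solvable and independent.

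The step we expect to be the main obstacle is the bookkeeping of the constants. The raw count gives roughly $10n_P + 10n_P + 7n_P + \compress{S(\prec_U)}$, and $\compress{S(\prec_U)} = 20n_P+2$ from \Cref{lem:poly-trees-preprocessing-one-sided-pattern-avoidance}. This total overshoots $\boundInterestingDegree$. We would therefore sharpen the count. The attachments meeting $U$ each contain a fixed vertex other than $v$, and the bridges and neighbouring hydra vertices likewise account for fixed vertices, so these are all charged against the same budget $\compress{S(\prec_U)}$ rather than separately. This should keep the overlap within $38n_P$.

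We also need well-foundedness and polynomiality. Each focused instance $\Instance\mid_B$ contains no hydra vertex besides those in $U$, and its graph is strictly smaller, so the recursion is on strictly smaller instances. Each application strictly reduces the number of vertices, giving polynomially many applications overall.
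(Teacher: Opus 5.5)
Your overall route is the paper's: apply \Cref{lem:poly-trees-preprocessing-separation-hydra-attachments} to each hydra vertex and count what is left. Your greedy choice of $H_v$ is a valid and cheaper alternative to branching. Note, though, that branching over sets of at most $27n_P+2$ attachments costs only $n_G^{\mathcal{O}(n_P)}$, so it is not exponential. The paper's proof simply asserts that the interesting degree drops to at most $7n_P+\compress{S(\prec_U)}=27n_P+2$. You are right that non-leaf neighbours of $v$ in $H$ and in $H$-bridges must also be counted.

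The gap is in how you repair the resulting overshoot. The charging step does not work. An $H$-bridge need not contain any vertex of $U$. Moreover, the budget $\compress{S(\prec_U)}$ is already spent in the lemma's bound $|H_v|\le 7n_P+\compress{S(\prec_U)}$, and your padding step even fills $H_v$ up to that size with attachments disjoint from $U$. So nothing else can be charged to that budget a second time. As written, your count is $47n_P+2$, not $38n_P$, so you have not shown that no hydra vertex survives.

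The fix is to count hydra neighbours and bridge neighbours jointly, and your own contraction argument already gives this. Let $F$ be the forest obtained by contracting each $H$-bridge, let $e_{HH}$ be the number of edges inside $H$, and let $b$ be the number of bridges. Then $F$ has at least $e_{HH}+2b$ edges and at most $|H|+b-1$ edges, so $e_{HH}+b\le |H|-1\le 10n_P$. Equivalently, the neighbours of $v$ in $F$ lie in distinct components of $F-v$, each containing a hydra vertex other than $v$.

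Since $v$ has at most one neighbour per bridge and per attachment, its interesting degree after the reduction is at most $10n_P+7n_P+(20n_P+2)=37n_P+2$. This is at most $38n_P$ because $n_P\ge 2$, as the pattern has a forced edge. It is exactly the slack of $11n_P-2$ that the threshold $38n_P$ leaves above the paper's $27n_P+2$.
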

\begin{proof}
    For every hydra vertex, we can reduce its interesting degree such that is upper-bounded by $7n_P + \compress{S(\prec_U)}$.
    Replacing the latter part with the value obtained from \Cref{lem:poly-trees-preprocessing-one-sided-pattern-avoidance}, we get an upper bound of $27n_P+2$.
    Observe that this moves the interesting degree below the threshold of $38n_P$ for being a hydra vertex  (which means there are no hydra vertices left). %
\end{proof}

At this point, we make a slight digression towards restricting the structure of the solutions we need to consider. There, the following lemma shows that it is sufficient to focus on solutions with a special structure.

\newcommand{\boundSequenceLeaves}{2{(n_P +1)}^{n_P + 2}}
\begin{lemma}
    \label{lem:bounded-sequence-leaves}
     Let $\Instance = \InstanceLongOneS$ be an instance of \PAOneFOneS,
     and let $\prec_G$ be a solution for \Instance.
     If $S(\prec_G)$ restricted to vertices not in $U$ contains a consecutive subsequence $S'$ of size $\compress{S'} = \boundSequenceLeaves$ consisting only of at most $n_P$ different types, each of which is for a leaf-vertex.
     Then there exists another solution $\prec'$ with $\compress{S(\prec')} < \compress{S(\prec_G)}$. 
\end{lemma}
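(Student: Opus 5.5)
The plan is to replace the subsequence $S'$ by a reordering of the \emph{same} multiset of vertices within the same contiguous stretch of $\prec_G$, such that the reordering has fewer type-alternations but contains exactly the same type-words of length at most $n_P$ as subsequences. I would then argue that the new order $\prec'$ still avoids $P$. The stretch consists only of free vertices, so $\prec'$ still ends with $\prec_U$. Since $\compress{S'}$ strictly drops and the runs at the two boundaries of the stretch can only merge, not split, we get $\compress{S(\prec')} < \compress{S(\prec_G)}$.

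\emph{Step 1 (what a realization sees).} Every vertex of the stretch is a leaf, and two leaves of the same type are siblings. So two stretch vertices are never adjacent, except in the degenerate case of a $K_2$-component, which I would exclude or treat separately. Moreover, the adjacency of a stretch vertex to any vertex outside the stretch depends only on its type. Fix any realization $(\prec',X)$ of $P$. The stretch is still a contiguous interval, so the positions of stretch vertices relative to vertices outside it are unchanged. Hence whether the realization is valid depends only on $X$ outside the stretch together with the \emph{word of types} formed by $X\cap\text{stretch}$, read in order. This word has length at most $n_P$. Consequently, if every type-word of length at most $n_P$ that occurs as a subsequence of the new arrangement also occurs as a subsequence of $S'$, then any realization in $\prec'$ transfers to a realization in $\prec_G$, which is impossible. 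This is the same kind of exchange argument as in the proof of \Cref{lem:nd-plus-mp-reorder-safe-intervals}.

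\emph{Step 2 (combinatorics on words).} Let $T$ be the set of $k\le n_P$ types occurring in $S'$. I would scan $S'$ greedily from left to right, cutting it into consecutive blocks, each minimal such that it contains every type of $T$. If there are at least $n_P$ complete blocks, then every word over $T$ of length at most $n_P$ is a subsequence of $S'$: take its $j$-th letter from the $j$-th block. In this case I would rearrange the stretch as $n_P$ rounds $t_1t_2\cdots t_k$, followed by all remaining vertices grouped by type. This is possible because each type occurs at least $n_P$ times, one per block. The new arrangement has at most $kn_P+k\le n_P(n_P+1)$ runs, far below $\compress{S'}$. Its set of short subsequence-words is trivially contained in that of $S'$, which already contains all such words. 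If there are fewer than $n_P$ complete blocks, then all but at most $n_P$ blocks are missing some type. So some maximal factor of $S'$ that avoids a fixed type $t$ still has $\Omega\bigl(\compress{S'}/(n_P+1)\bigr)$ runs and uses at most $k-1$ types. I would recurse on that factor with $k-1$ types. A modification confined to a factor also preserves the property of Step 1, since subsequence-words through the factor are composed with the unchanged surroundings.

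\emph{Main obstacle.} The delicate part is the bookkeeping in the recursion: guaranteeing that after at most $n_P$ levels, each losing a factor of roughly $n_P+1$ in runs, we still reach a factor with at least $n_P$ complete blocks. The threshold $\boundSequenceLeaves$ is chosen so that $(n_P+1)^{n_P+2}$ leaves enough slack for this. A second point to check is that, at the base level, the rearrangement really reduces the number of runs. This holds as long as the factor has more than $n_P(n_P+1)+2$ runs, which the bound ensures. I expect the factor-avoidance counting to be the step needing the most care.
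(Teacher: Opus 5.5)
Your proposal is correct, but it takes a different route from the paper's.

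The paper reuses the cut-and-sort normalization of \Cref{lem:nd-plus-mp-reorder-safe-intervals}. For each of the $(n_P+1)^{n_P}$ assignments of pattern vertices to the leaf types of $S'$ (or $\mathrm{nil}$), it places cuts around the greedy leftmost embedding in $\prec_G$. It then sorts every resulting interval by type and pulls any realization in the new order back to the greedy embedding, whose vertices are pinned by the cuts. So the paper normalizes the whole stretch in one step, and the threshold $2(n_P+1)^{n_P+2}$ comes from counting cuts times types.

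You instead isolate an explicit safety criterion: reordering a contiguous leaf stretch is harmless as soon as every type-word of length at most $n_P$ that embeds into the new arrangement also embeds into the old one. You then argue purely on words. Either a factor contains $n_P$ consecutive complete blocks over its own alphabet, so it is $n_P$-universal and can be replaced by $n_P$ rounds $t_1\cdots t_k$ followed by the rest grouped by type. Or it splits into at most $n_P$ pieces, each missing a type, and you recurse on a piece with a strictly smaller alphabet.

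Your counting works. You have $c_{i+1}\ge (c_i-n_P+1)/n_P$ and recursion depth at most $n_P-1$, so the surviving factor still has roughly $2(n_P+1)^3$ runs. That is far above the $k(n_P+1)+2$ you need, and it means the factor cannot be single-typed, so the recursion must end in the universal case.

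Your route avoids enumerating assignments and makes both the safety argument and the role of the exponent $n_P+2$ transparent. The price is that each application modifies only one factor, which suffices because the lemma only claims a strict decrease.

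Two small corrections:
\begin{itemize}
\item Replacing a factor $F$ by $F'$ only gives $c(S(\prec'))\le c(S(\prec_G))-c(F)+c(F')+2$, because a boundary merge can be lost as well as gained. Your ``$+2$'' slack already covers this, but the sentence saying boundary runs ``can only merge, not split'' is not literally right.
\item The $K_2$-component needs no separate treatment. For distinct vertices of leaf types, adjacency is a function of the pair of types in every case; in a $K_2$ both types are singletons.
\end{itemize}
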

\begin{proof}
    Let $\prec_G$ be a solution fulfilling the properties of the statement and let $S'$ be such a subsequence.
    Below we show that we can obtain a new solution $\prec_{G'}$ where $S'$ has bounded alternation.
    Repeated application of this procedure yields the result.
    
    We use a similar idea as in \Cref{sec:nd-plus-mp-alternations}.
    In particular, we take $\prec_G$, perform some cuts along it, and show that we can re-order the linear order between two cuts and maintain a solution this way.

    To this end, let $\alpha$ be an assignment of vertices $V(P)$ to types from $(\eqclasses{V(G)}{\sim_G} \cap S') \cup \{\text{nil}\}$, where $\text{nil}$ denotes that some vertex in a realization if $P$ is not of some type from $\eqclasses{V(G)}{\sim_G} \cap S'$.
    Let $V(P) = \{v_1, \ldots, v_{n_P}\}$ be the vertices of $V(P)$ ordered as in $\prec_P$.
    We now traverse the subsequence of $\prec_G$ for entries from $S'$ from left to right until we arrive at the leftmost vertex $u_1 \in V(G)$ with $u_1 \in \alpha(v_1)$.
    We introduce a \emph{cut} before and after $u_1$ in $\prec_G$.
    Next, we continue our traversal of $\prec_G$ until we arrive at the leftmost vertex $u_2 \in V(G)$ right of $u_1$ with $u_2 \in \alpha(v_2)$.
    We repeat this process until we arrive at the end of the subsequence of $\prec_G$; skipping all $v_i$ with $\alpha(v_i) = \text{nil}$.
    There are at most $2n_P$ cuts that we make for $\alpha$.
    Afterwards, we perform similar for all possible functions $\alpha$.

    Note that there are ${(n_P+1)}^{n_P}$ different assignments $\alpha$, since there are only at most $n_P$ different types in $S'$.
    Hence, we make at most $2{(n_P+1)}^{n_P + 1}$ cuts.
    Afterwards, we reorder between two cuts based on the types from $\eqclasses{V(G)}{\sim_G}$, which reduces the alternation in $S'$ to $2{(n_P+1)}^{n_P + 1}$.
    Let $\prec_{G}'$ be a linear order that is consistent with the reordered sequence.
    To see that $\prec_{G}'$ is a solution, first observe that it still ends in $\prec_U$.
    So assume that $\prec_{G}'$ contains $P$ and consider the assignment $\alpha$ of the realization to $(\eqclasses{V(G)}{\sim_G} \cap S') \cup \{\text{nil}\}$.
    In particular, consider the cuts made for $\alpha$. 
    For every vertex from the realization that was re-ordered in order to obtain $\prec_{G'}$, there exists at least one vertex (to the left) of the same type for which for which we introduce the cuts.
    Also these vertices (together with the ones outside $S'$) are also a realization of $P$ in $\prec_G'$.
    As they are surrounded by cuts, they are not re-ordered and exist at the same position in $\prec_G$ and $\prec_G'$, which implies that also $\prec_G$ would contain the pattern; a contradiction to the existence of the solution.
\end{proof}

The reason we made the digression to Lemma~\ref{lem:bounded-sequence-leaves} was for it to serve as a tool for reducing our instance of \PAOneFOneS\ to the case where we are guaranteed that for every subset $A$ of vertices in the non-fixed part of specified size, the fixed part contains a sufficiently large independent set $I_A$ that is also independent to $A$. We first show that this holds for $|I_A|=1$ (in \Cref{lem:poly-trees-preprocessing-independent-neighbor}), and then extend the argument to a sufficient lower bound in \Cref{lem:poly-trees-preprocessing-independent-right}.

\begin{lemma}
    \label{lem:poly-trees-preprocessing-independent-neighbor}
    Let $\Instance = \InstanceLongOneS$ be an instance of \PAOneFOneS.
    Assume that $d_G^{>1}(v) \leq 38n_P$ for all $v \in V(G)$ and set $d_{\max}^{>1} = 38n_P$.
    We can construct a family $\mathcal{F}$ of instances of \PAOneFOneS such that
    \begin{enumerate}
        \item $\Size{\mathcal{F}} \in {n_G}^{{n_P}^{\BigO{n_P}}}$ and can be enumerated in this time,
        \item for every $(G', P', U', \prec_{U'}) \in \mathcal{F}$ we have $U \subseteq U'$, $G'=G$ and $P' = P$,  
        \item for every set $V' \in V(G) \setminus U'$ of size $n_P$ and every $(G', P', U', \prec_{U'}) \in \mathcal{F}$ there exists at least one $u \in U'\setminus U$ such that there is no edge $uv \in E(G)$ with $v \in V'$,
        \item \Instance is a positive instance if and only if at least one instance $\Instance' \in \mathcal{F}$ is a positive instance, and
        \item $\compress{S(\prec_{U'})} \leq \compress{S(\prec_U)} + (2{(n_P +1)}^{n_P + 2} + 1) \cdot (d_{\max}^{>1} \cdot n_P + 1)$
    \end{enumerate}
\end{lemma}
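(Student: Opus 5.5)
The plan is to extend the fixed side $U$ leftwards by guessing a bounded-alternation block of the solution that sits immediately to the left of $\prec_U$. Fix a hypothetical solution $\prec_G$ of \Instance. By \Cref{lem:bounded-sequence-leaves} (applied repeatedly), we may assume $\prec_G$ has no consecutive free subsequence of alternation $\boundSequenceLeaves$ that uses only at most $n_P$ leaf-types. Consider the free vertices of $\prec_G$ read from right to left, starting just left of $U$. We take the shortest suffix $W$ of the free part such that $W$ contains $d_{\max}^{>1}\cdot n_P+1$ vertices that are pairwise of distinct ``groups''. Here a group is a non-leaf vertex together with its leaf children. Equivalently, $S(\prec_G|_W)$ mentions at least $d_{\max}^{>1} n_P + 1$ distinct non-leaf vertices or leaf-types of distinct parents.

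The first step bounds $\compress{S(\prec_G|_W)}$. Between two consecutive ``new-group'' events, the sequence only repeats already-seen leaf-types. Such a stretch can be cut into pieces that use at most $n_P$ leaf-types, so each piece has alternation below $\boundSequenceLeaves$. This gives the stated bound $(\boundSequenceLeaves+1)(d_{\max}^{>1}n_P+1)$ in Property~5. I expect this to be the main obstacle. Between events, up to $d_{\max}^{>1}n_P$ seen leaf-types may interleave, not just $n_P$, so the windowing must be arranged carefully. The fallback is to pick $W$ with a slightly coarser count so that at most $n_P$ types are active in each window, absorbing the difference in the constant.

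The second step constructs $\mathcal{F}$: for every sequence $S$ of types with $\compress{S}$ within this bound, we create the instance with $U'=V(\prec_S)\cup U$ and $\prec_{U'}=\prec_S\oplus\prec_U$, where $\prec_S$ is compatible with $S$. Runs of one leaf-type are specified by the type and a length, so there are $n_G^{O(\compress{S})}=n_G^{n_P^{O(n_P)}}$ choices. This gives Properties~1 and~2 immediately.

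The third step proves Property~3. Take any set $V'$ of $n_P$ free vertices outside $U'$. Each vertex of $V'$ has at most $d_{\max}^{>1}$ non-leaf neighbours plus its leaf children, which all lie in its own group, and its parent. Hence $V'$ is adjacent to at most $n_P(d_{\max}^{>1}+1)$ groups, up to the accounting of parent groups versus children, which the threshold absorbs. Since $U'\setminus U$ meets more than $d_{\max}^{>1}n_P$ distinct groups, some $u\in U'\setminus U$ lies in a group not adjacent to $V'$. If needed, we choose $u$ as the leaf representative, whose only neighbour is its parent. Branches where no such $W$ exists are handled separately: if there are fewer free groups, the free part has bounded size and can be fixed entirely.

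The final step proves Property~4. The backward direction is trivial, since any solution of an instance in $\mathcal{F}$ also ends with $\prec_U$. For the forward direction, the guessed $S=S(\prec_G|_W)$ appears in $\mathcal{F}$, and $\prec_G$, with leaves of the same type permuted to match $\prec_S$, solves that instance.
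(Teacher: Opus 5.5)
\textbf{Gap in Step 1.} Your Step~1 has a genuine gap, not just a bookkeeping issue. You use a single threshold of $d_{\max}^{>1} n_P+1$ distinct groups. Then the stretch of $W$ between two consecutive new-group events may consist only of leaves from up to $d_{\max}^{>1} n_P$ already-seen leaf-types, and \Cref{lem:bounded-sequence-leaves} says nothing about such a stretch. For example, the cyclic repetition $t_1 t_2\cdots t_{n_P+1} t_1 t_2\cdots$ of $n_P+1$ leaf-types can have arbitrarily large alternation. Yet every consecutive window of it that uses at most $n_P$ types has alternation at most $n_P$, so the lemma never fires. Consequently neither Property~5 follows, nor your claim that the free part has bounded size when there are few groups. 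Your fallback does not say how to choose $W$ so that only $n_P$ types are active. Moreover, Property~5 fixes the constant explicitly, so a difference cannot simply be ``absorbed''.

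\textbf{The missing idea.} Count leaves and non-leaves separately, as the paper does.
\begin{itemize}
\item A leaf in $U'\setminus U$ is adjacent to $V'$ only if its parent lies in $V'$. Hence $n_P+1$ leaf-types (leaves with pairwise distinct parents) already suffice for Property~3.
\item A non-leaf $u$ adjacent to some $v\in V'$ is one of the at most $d_{\max}^{>1}$ non-leaf neighbours of $v$. Hence $d_{\max}^{>1} n_P+1$ non-leaves also suffice.
\item An isolated vertex suffices trivially. Your groups do not cover isolated vertices.
\end{itemize}
Now take the rightmost block $W$ of the free part of a $\compress{\cdot}$-minimal solution with compressed size $(2(n_P+1)^{n_P+2}+1)(d_{\max}^{>1} n_P+1)$. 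Suppose $W$ met none of the three conditions. Then it contains at most $d_{\max}^{>1} n_P$ non-leaves, and every maximal leaf-only stretch between them uses at most $n_P$ leaf-types. By \Cref{lem:bounded-sequence-leaves}, each such stretch has alternation below $2(n_P+1)^{n_P+2}$. The total is then strictly below the chosen length, a contradiction.

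\textbf{Property~3 count.} The decoupling also repairs your count, which is wrong as stated. A single $v\in V'$ can block $d_{\max}^{>1}+1$ of your groups: those of its non-leaf neighbours, plus its own group via its leaf children. So up to $n_P(d_{\max}^{>1}+1)$ groups may be blocked, which exceeds your threshold.

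\textbf{Filtering the family.} Property~3 must hold for \emph{every} member of $\mathcal{F}$. You therefore have to filter the guessed sequences, keeping only those that satisfy one of the three conditions. You should also keep the case where the whole free part is fixed and Property~3 is vacuous, which you rightly flag. As written, your Step~2 keeps all sequences, while Step~3 argues only about the one guessed from the solution.
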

\begin{proof}
    Assume there is a solution $\prec$ for ($G$, $P$, $U$, $\prec_U$), and assume, without loss of generality, that $\prec$ is a solution that minimizes $\compress{S(\prec)}$. Let $\prec_{V(G)\setminus U}$ be the solution restricted to vertices not in $U$.

    Let $S$ be the sequence obtained from the $(2{(n_P +1)}^{n_P + 2} + 1) \cdot (d_{\max}^{>1} \cdot n_P + 1)$ right-most entries in $\prec_{V(G)\setminus U}$.
    Moreover, let $U_\mathrm{new}$ be the vertices used to obtain $S$, and let $U' = U_\mathrm{new} \cup U$.

    We want to show that, for every $V'$ with size $\Size{V'} = n_P$ and $V' \cap U_\mathrm{new} = \varnothing$, there exists a vertex $u$ in $U_\mathrm{new}$ such that $u$ is not adjacent to any vertex in $V'$.

    By assumption, all vertices have interesting degree bounded by $38n_P$.
    If $U_\mathrm{new}$ contains more than $x \coloneqq (38n_P +1) \cdot n_P$ vertices of degree 2 or more, then for every set of vertices $V'$ of size $n_P$, $V' \cap U_\mathrm{new} = \varnothing$ and $V' \cap U = \varnothing $, there exists a vertex in $U_\mathrm{new}$ that is not adjacent to any vertex in $V'$.

    Similarly, if $U_\mathrm{new}$ contains at least $n_P+1$ vertices of degree 1 that have pairwise distinct neighbors, then for every set of vertices $V'$ with $\Size{n_P}$, $V' \cap U_\mathrm{new} = \varnothing$ and $V' \cap U = \varnothing $, the neighbor of at least one of these degree-1 vertices is not present, and this vertex is thus not adjacent to any vertex in $V'$.

    Further, if $U_\mathrm{new}$ contains a vertex $u$ of degree 0, then for every set of vertices $V'$ with $\Size{n_P}$, $V' \cap U_\mathrm{new} = \varnothing$ and $V' \cap U = \varnothing $, $u$ is not adjacent to any vertex in $V'$.

    We now assume towards contradiction that $U_\mathrm{new}$ contains at most $x$ vertices of degree two or more, no vertex of degree zero, and at most $n_P$ different types of leaves.
    The maximum size of a sequence containing only leaves of at most $n_P$ types is bounded by $2{(n_P +1)}^{n_P + 2}$ (recall \Cref{lem:bounded-sequence-leaves}), and we can have at most $38n_P \cdot n_P + 1$ many of those.
    Adding the vertices of degree at least two, we know that the size of $S$ is also bounded by $(2{(n_P +1)}^{n_P + 2} + 1) \cdot (d_{\max}^{>1} \cdot n_P + 1) - 1$.
    This is a contradiction to the chosen length of $S$.
    Hence, the instance $(G, P, U', \prec_{U'})$, where $\prec_U' = \prec\mid_{U'}$, also has $\prec$ as a solution.

    We now construct $\mathcal{F}$ by first generating all sequences $S^*$ of length at most $\compress{S^*} \leq (2{(n_P +1)}^{n_P + 2} + 1) \cdot (d_{\max}^{>1} \cdot n_P + 1)$, containing types from vertices $v \in V(G) \setminus U$.
    Since there are at most ${n_G}$ types, and each type can occur ${n_G}$-times, this is bounded.
    From all generated $S^*$, we eliminate all sequences that do not contain an isolated vertex, or at least $n_P+1$ different types of degree leaves, or $d_{\max}^{>1} \cdot n_P + 1$ vertices of degree 2 or more.
    We denote the resulting set of sequences by $\mathcal{S}$.

    Following the argument from above, the compressed version of $\prec_{U'}$ is contained in $\mathcal{S}$.

    Therefore, the family of instances 
    \begin{align*}
        \mathcal{F} = \{(G, P, \overline{U}, \prec_{\overline{U}}) \mid \exists S \in \mathcal{S}\colon \prec_{\overline{U}}\ =\ \prec_S \oplus \prec_U, \prec_S\ \text{is compatible with}\ S\}    
    \end{align*}
    contains $(G, P, U', \prec_{U'})$.
    Thus, if \Instance has a solution, the family $\mathcal{F}$ has a solution as well. Conversely, if there exists a solution for some instance in $\mathcal{F}$, it is also a solution for \Instance.

    Further, $\Size{\mathcal{F}} \leq \Size{\mathcal{S}} \leq ({n_G}^{2\cdot (2{(n_P +1)}^{n_P + 2} + 1) \cdot (d_{\max}^{>1} \cdot n_P + 1)}) \leq {n_G}^{{n_P}^{\BigO{n_P}}}$.
    Properties 1--5 follow from the discussion above.
\end{proof}
We remark that the size of the set $U'$ in the above lemma need not be bounded by a function of $n_P$.

\newcommand{\boundFixedVertices}{{n_P}^{\BigO{n_P}}}
\begin{lemma}
    \label{lem:poly-trees-preprocessing-independent-right}
    Let $\Instance = \InstanceLongOneS$ be an instance of \PAOneFOneS with $\compress{S_{\prec_{U}}} \leq 20n_P + 2$.
    Assume that $d_G^{>1}(v) \leq 38n_P$ for all $v \in V(G)$ and set $d_{\max}^{>1} = 38n_P$.
    There exists a family $\mathcal{F}$ of instances of \PAOneFOneS such that
    \begin{enumerate}
        \item $\Size{\mathcal{F}} \in {n_G}^{{n_P}^{\BigO{n_P}}}$,
        \item for every $(G, P, U', \prec_{U'}) \in \mathcal{F}$ we have $U \subseteq U'$,
        \item for every set $A \subseteq V(G) \setminus U'$ with $\Size{A} = n_P$, there exists an independent set $I_A \subseteq U'$ of size $n_P$ such that no vertex in $I_A$ is adjacent to one in $A$,
        \item \Instance admits a solution if and only if at least one instance $\Instance' \in \mathcal{F}$ does, and
        \item $\compress{S(\prec_{U'})} \leq (20n_P + 2) + (2{(n_P +1)}^{n_P + 2} + 1) \cdot (d_{\max}^{>1} \cdot n_P + 1) \cdot (2n_P)\in \boundFixedVertices$.
    \end{enumerate}    
\end{lemma}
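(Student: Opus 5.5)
The plan is to apply \Cref{lem:poly-trees-preprocessing-independent-neighbor} iteratively, $2n_P$ times in total, each time extending the fixed side further to the left. Set $\mathcal{F}_0 = \{\Instance\}$. For $i = 1, \ldots, 2n_P$, obtain $\mathcal{F}_i$ by replacing every instance in $\mathcal{F}_{i-1}$ with the family that \Cref{lem:poly-trees-preprocessing-independent-neighbor} produces for it. The hypothesis $d_G^{>1}(v) \le 38n_P$ is preserved throughout, because $G$ and $P$ never change (Property~2 of that lemma). Set $\mathcal{F} \coloneqq \mathcal{F}_{2n_P}$.

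Properties~2, 4 and 5 follow by induction on $i$. The inclusion $U \subseteq U'$ holds at each step. Equivalence (Property~4) holds because each step keeps a positive instance if and only if some member of the new family is positive. Each step adds at most $(2(n_P+1)^{n_P+2}+1)\cdot(d_{\max}^{>1} n_P + 1)$ to $\compress{S(\prec_{U'})}$, so $2n_P$ steps give exactly the stated bound, starting from $20n_P+2$. Property~1 follows because a product of $2n_P$ factors each in ${n_G}^{{n_P}^{\BigO{n_P}}}$ is still in ${n_G}^{{n_P}^{\BigO{n_P}}}$.

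Property~3 is the real content. Write $U = W_0 \subseteq W_1 \subseteq \dots \subseteq W_{2n_P} = U'$ for the nested fixed sets along the chain that leads to a given instance of $\mathcal{F}$. Fix $A \subseteq V(G)\setminus U'$ with $|A| = n_P$. Apply Property~3 of \Cref{lem:poly-trees-preprocessing-independent-neighbor} at step $i$ with $V' = A$; this is allowed because $A$ avoids $W_{i}$. It gives a vertex $u_i \in W_i \setminus W_{i-1}$ with no neighbor in $A$. The vertices $u_1, \ldots, u_{2n_P}$ are pairwise distinct, since they lie in disjoint layers. The set $\{u_1,\dots,u_{2n_P}\}$ induces a subforest, so by \Cref{obs:forests-independent-set} it contains an independent set $I_A$ of size at least $n_P$. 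Every vertex of $I_A$ is non-adjacent to $A$, which is exactly Property~3.

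The main obstacle is a subtlety in invoking the earlier lemma. Its Property~3 is stated for sets $V'$ outside the \emph{current} $U'$, while the earlier layers were fixed before the later ones were added. I expect this to be harmless, since shrinking the free side only shrinks the family of sets $V'$ that must be handled. I would check that the lemma is applied with the instance whose fixed set is $W_{i-1}$, and that $A \cap W_i = \emptyset$ holds because $A$ avoids all of $U'$.

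A second point needs checking: each application must still start from an instance meeting that lemma's hypotheses. The lemma only requires the bound on the interesting degree, which is unaffected by enlarging $U$. If layers are too small to be guaranteed disjoint progress, I would fall back on the remark after \Cref{lem:poly-trees-preprocessing-independent-neighbor}: each new layer is taken from the rightmost free vertices, so the new vertices are genuinely new.
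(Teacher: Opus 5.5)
Your proposal is correct and follows the paper's proof: apply \Cref{lem:poly-trees-preprocessing-independent-neighbor} $2n_P$ times and inherit Properties~1, 2, 4 and~5 step by step. For Property~3, collect the $2n_P$ distinct witnesses $u_i \in W_i \setminus W_{i-1}$ and extract an independent set of size $n_P$ via \Cref{obs:forests-independent-set}. The fallback in your last paragraph is unnecessary, because distinctness already follows from $u_i \in W_i \setminus W_{i-1}$, as you note yourself.
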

\begin{proof}
    Our goal is to iteratively apply \Cref{lem:poly-trees-preprocessing-independent-neighbor} $2n_P$ times.    By \Cref{lem:poly-trees-preprocessing-independent-neighbor}, the obtained family $\mathcal{F}_1$ of instances $(G, P, U', \prec_{U'})$ of \PAOneFOneS has $U \subseteq U'$ and for every set $A \subseteq V(G) \setminus U'$ there exists a fixed vertex $u \in U' \setminus U$ that is not adjacent to any vertex in $A$.
    We now again call \Cref{lem:poly-trees-preprocessing-independent-neighbor} with every instance in $\mathcal{F}_1$ to obtain a family $\mathcal{F}_2$ and repeat this process until we eventually obtain the family $\mathcal{F} = \mathcal{F}_{2n_P}$ of instances.
    Properties~1, 2, and 4 follow directly from applying \Cref{lem:poly-trees-preprocessing-independent-neighbor} iteratively $2n_P$ times. 
    By assumption, the size of $\compress{S(\prec_U)}$ was initially bounded by $(20n_P + 2)$. %

    It remains to show Property~3.
    \Cref{lem:poly-trees-preprocessing-independent-neighbor} guarantees us for every $(G, P, U', \prec_{U'}) \in \mathcal{F}_i$ and every $i \in [2n_P]$ the existence of a new vertex $u_i \in U' \setminus U$ that is not adjacent to all sets $A \subseteq V(G) \setminus U'$ of size $n_P$; of course the concrete $u_i$ depends on the instance in question.

    Now consider an instance $(G, P, U', \prec_{U'}) \in \mathcal{F}_{2n_P} = \mathcal{F}$ and let $A \subseteq V(G) \setminus U'$ be a fixed set of free vertices.
    Moreover, let $B = \{u_1, \ldots, u_{2n_P}\}$ be these above-identified vertices, which, since every application of \Cref{lem:poly-trees-preprocessing-independent-neighbor} identifies a previously unfixed vertex, must all be distinct.
    Observe that no $u_i \in B$ is incident to some $v \in A$ due to \Cref{lem:poly-trees-preprocessing-independent-neighbor}.
    Using \Cref{obs:forests-independent-set}, we know that there exists an independent set $I_A \subseteq B$ of size $n_P$ among $B$, which serves as witness for Property~3.
\end{proof}
Note that, similar to \Cref{lem:poly-trees-preprocessing-independent-neighbor}, we did not specify any bound on the size of $U'$ in \Cref{lem:poly-trees-preprocessing-independent-right}.
In fact, there are instances where, due to a high number of leaves, the size of $U'$---but not $\compress{S(\prec_{U'})}$---is unbounded.
\Cref{fig:fixed-side} illustrates the result of \Cref{lem:poly-trees-preprocessing-independent-right}.

Next, we show that the type-alternation of leaves below edges with free endpoints can assumes to be bounded in a solution.
Later on, we use this to ensure that every such edge will appear in the ``sliding window'' of the dynamic program.

\begin{lemma}
    \label{lem:poly-trees-preprocessing-bound-alternation}
    Let $\Instance = \InstanceLongOneS$ be an instance of \PAOneFOneS obtained from \Cref{lem:poly-trees-preprocessing-one-sided-pattern-avoidance} after applying \Cref{lem:poly-trees-preprocessing-separation-hydra-attachments,lem:poly-trees-preprocessing-separation-isolated-components}.
    For an edge $uv \in E(G)$ with $u \prec_G v$, let $\prec_G\mid_{uv}$ be a shorthand for $\prec_G\mid_X$ for $X = \{w \in V(G) \mid u \preceq_G w \preceq_G v\}$.
    If $\Instance$ admits a solution $\prec_G$, then there also exists a solution $\prec_G'$ where for every edge $uv \in E(G)$ with $u,v\notin U$, we have $\compress{S(\prec_G'\mid_{uv})} \leq ((2\cdot (d_{\max}^{>1}) + 2n_P) + 1) \cdot (2n_P + 2)\in\BigO{{n_P}^2}$, where $d_{\max}^{>1} = 38n_P$. %
\end{lemma}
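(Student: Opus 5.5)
We follow the cut-and-reorder strategy of \Cref{sec:nd-plus-mp-alternations} and \Cref{lem:bounded-sequence-leaves}. Fix a solution $\prec_G$ and an edge $uv$ with $u,v \notin U$ and $u \prec_G v$. We first pin down which vertices can lie strictly between $u$ and $v$ and do not belong to some large, freely permutable block of sibling leaves.

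\textbf{Step 1: bound the vertices spanned by $uv$ that are not leaves of few parents.} Suppose $uv$ spans at least $n_P$ vertices forming an independent set that is also independent from $\{u,v\}$. The edge $uv$ then acts as the forced edge, and the middle part $M(P)$ can be realized by these vertices. For the right part $R(P)$ we use \Cref{lem:poly-trees-preprocessing-independent-right}: applied to the free set $A$ (containing $u$, $v$ and the chosen middle vertices, padded to size $n_P$), it gives an independent set $I_A \subseteq U$ of size $n_P$ with no edge to $A$. Since $U$ lies at the right end, $I_A$ lies to the right of $v$. Hence, by \Cref{obs:forests-independent-set-contains-pattern}, $P$ would be realized.

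It follows that the vertices spanned by $uv$ must be ``almost dominated'' by $N(u) \cup N(v)$. Concretely:
\begin{itemize}
\item by \Cref{cor:poly-trees-preprocessing-interesting-degree}, at most $2 d_{\max}^{>1}$ of them are non-leaf neighbours of $u$ or $v$;
\item by \Cref{obs:forests-independent-set}, fewer than $2n_P$ of the remaining non-leaf vertices are non-adjacent to both $u$ and $v$;
\item every other spanned vertex is a leaf.
\end{itemize}
The leaves split into leaves attached to $u$, to $v$, or to a non-leaf vertex in the span. They also split into leaves whose parent lies outside the span. If at least $2n_P$ such leaves have pairwise distinct outside parents, we again obtain an independent set of size $n_P$ independent from $u$ and $v$. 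So the spanned leaves belong to at most $2d_{\max}^{>1} + 2n_P + 2$ distinct leaf types, and the non-leaf vertices number $\BigO{n_P}$.

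\textbf{Step 2: normalize within the span.} The span is now a sequence of $\BigO{n_P}$ ``anchor'' vertices (non-leaves) together with leaves of $\BigO{n_P}$ types. Between consecutive anchors we want each leaf type to appear in at most one run. To achieve this we mimic \Cref{lem:bounded-sequence-leaves}:
\begin{itemize}
\item for every assignment $\alpha$ of the pattern vertices to the relevant leaf types or $\mathrm{nil}$, place cuts around the greedily-chosen leftmost representatives;
\item sort by type inside each resulting safe interval;
\item argue, exactly as there, that any realization in the new order transfers back to a realization in $\prec_G$ through the cut vertices.
\end{itemize}
Since leaves of the same type are twins, swapping them does not change any adjacency.

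The aim is that, after normalization, each of the at most $2n_P + 2$ anchor-separated pieces contributes at most about $2d_{\max}^{>1} + 2n_P + 1$ type alternations, which gives the claimed bound $((2d_{\max}^{>1} + 2n_P) + 1)(2n_P + 2)$.

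\textbf{Main obstacle.} The reordering must be done simultaneously for all edges, and it must not create a pattern elsewhere. The plan is to perform it globally on $\prec_G$: choose a solution minimizing $\compress{S(\prec_G)}$, as in \Cref{lem:poly-trees-preprocessing-independent-neighbor}, and show that if some edge violates the bound, a local regrouping of sibling leaves strictly decreases $\compress{S(\cdot)}$ while preserving avoidance. Proving that this regrouping preserves avoidance is the delicate point. We would try to show that moving a leaf next to an earlier sibling of the same type within the same safe interval never creates a new realization, because any realization using the moved leaf can be re-routed to the stationary sibling at the cut.
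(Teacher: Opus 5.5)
Your Step~2 cannot deliver the stated bound, and the avoidance-preserving reordering, which is the actual content of the lemma, is left open. The cut-and-sort normalization of \Cref{lem:bounded-sequence-leaves} places up to $2n_P$ cuts for each of the $(T+1)^{n_P}$ assignments $\alpha$, where $T$ is the number of leaf types involved. With your $T\in\BigO{n_P}$ this gives $n_P^{\BigO{n_P}}$ safe intervals, and hence as many alternations under $uv$. Even with $T=2$ it is exponential in $n_P$. Your claim that each of $2n_P+2$ anchor-separated pieces contributes about $2d_{\max}^{>1}+2n_P+1$ alternations does not follow from this mechanism; it is the target product with its factors swapped. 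By your own Step~1 there can be up to $2d_{\max}^{>1}+2n_P$ anchors, not $2n_P+2$. Finally, you only say you ``would try to show'' that the regrouping preserves avoidance.

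The missing observation is in Step~1. In the independent-set argument, every vertex spanned by $uv$ that is adjacent to neither $u$ nor $v$ counts, leaves included. This covers leaves of in-span anchors and leaves of outside parents, and distinct parents are not needed, since sibling leaves are pairwise non-adjacent. So there are fewer than $2n_P$ such vertices. Together with the at most $2d_{\max}^{>1}$ non-leaf neighbours of $u$ and $v$, all but at most $x=2d_{\max}^{>1}+2n_P$ spanned vertices are leaves of $u$ or leaves of $v$, i.e.\ they belong to only two types $t_a,t_b$.

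With this, no cuts are needed. Between consecutive ``other'' vertices there are maximal runs over $\{t_a,t_b\}$. If such a run has $\compress{\cdot}\ge 2n_P+3$, keep its first $2n_P$ alternating blocks $Q$ untouched as a reservoir, and rearrange the remainder into one block of each type. Start with the type that follows $Q$.

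A realization in the new order meets $Q$ together with the rearranged part in at most $n_P$ vertices, all of type $t_a$ or $t_b$. Their type sequence embeds greedily into $Q$, each vertex advancing by at most two blocks. Same-type leaves are twins and all other positions are unchanged, so this yields a realization in $\prec_G$, which is a contradiction. This gives the bound $(x+1)(2n_P+2)$ directly.

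Your idea of fixing a solution that minimizes $\compress{S(\prec_G)}$ to handle all edges at once does work here. This regrouping strictly decreases the global alternation count. That is arguably cleaner than the paper's edge-by-edge iteration.
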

\begin{proof}
    Let $\prec_G$ be a solution to \Instance and assume that there exists an edge $uv$ where the claim from the lemma statement does not apply.
    Without loss of generality, assume $u \prec_G v$.
    In the following, we construct a linear order $\prec_G'$ by re-ordering the vertices $w \in V(G)$ spanned by the edge $uv$ such that $\prec_G'$ remains a solution, $\compress{S(\prec_G'\mid_{uv})} < \compress{S(\prec_G\mid_{uv})}$, and $\compress{S(\prec_G')} \leq \compress{S(\prec_G)}$.
    As a consequence, the lemma statement follows by iteratively applying the procedure below.

    We start by describing how we reorder the vertices.
    Let $N_{uv}^{=1} = \{w \in N_G(u) \cup N_G(v) \mid d_G(w) = 1\}$ be the leaf-neighbors of $u$ or $v$.
    We observe that the number of vertices $w \in V(G) \setminus N_{uv}^{=1}$ with $u \prec_G w \prec_G v$ must be upper-bounded by $x = 2 d_{\max}^{>1} + 2n_P$.
    To see this, assume otherwise.
    Since the interesting degree of $u$ and $v$ is bounded by $d_{\max}^{>1}$, there are at least $2n_P$ such vertices $w$ that are neither adjacent to $u$ or $v$. 
    Among these, we can find an independent set $I_{uv}$ of size $\Size{I_{uv}} = n_P - 2$; recall \Cref{obs:forests-independent-set}.
    In particular, $I_{uv} \cup \{u,v\}$ remains an independent set and we have $\Size{I_{uv} \cup \{u,v\}} = n_P$.
    We can now use \Cref{lem:poly-trees-preprocessing-independent-right} to find an independent set $I_R$ of size $\Size{I_R} = n_P$ to the right of $v$ such that $I_R \cup I_{uv} \cup \{u,v\}$ remains an independent set.
    Together with the edge $uv$, these vertices would therefore be a witness that $\prec_G$ contains $P$ (similar to \Cref{obs:forests-independent-set-contains-pattern}).
    This would contradict the assumption that $\prec_G$ is a solution to \Instance.
    Consequently, there are at most $x$ entries in $S(\prec_G\mid_{uv})$ corresponding to a type that is not a leaf-neighbor of $u$ or $v$.

    Assume now that $S(\prec_G\mid_{ab})$ contains a contiguous maximal subsequence $S'$ such that $\compress{S'} \geq 2n_P + 3$ and for every $t \in S'$ it holds $t \cap N_{uv}^{=1} \neq \emptyset$, i.e., $t$ is the equivalence class of leaf neighbors of $u$ or the  equivalence class of leaf neighbors of $v$.
    Let $S''$ be the sequence obtained from $S'$ by removing entries from the beginning of $S'$ until $\compress{S'} - \compress{S''} = 2n_P$.
    Observe that we have $\compress{S''} \geq 3$.
    Let $t_a\in S''$ be the first type in $S''$ and let $t_b \in S''$ be the next type in $S''$ such that $t_a \neq t_b$, which must exist as $\compress{S''} \geq 3$.
    Let $a\in t_a$ and $b \in t_b$ be two vertices.
    Observe that $a$ and $b$ are two leaves with $a \not\sim_G b$, i.e., $a$ and $b$ are not siblings.
    We now construct a new sequence $R = (t_a, \ldots, t_b, \ldots)$ where we repeat $t_a$ and $t_b$ precisely $\multiset{S''}(t_a)$ and $\multiset(S'')(t_b)$ times, respectively.
    We replace $S''$ with $R$ in $S'$ to obtain $\hat{S'}$ and observe that $\compress{\hat{S'}} = n_P + 2$.
    In the following, we also refer to $\hat{S'}$ as the \emph{compressed} sequence.

    Repeat the above steps for all such contiguous maximal subsequences and let $S^*$ denote the resulting sequence.
    By iterating above arguments, we conclude $\compress{S^*} \leq (x + 1) \cdot (2n_P + 2)$, where we recall that $x$ denotes the number of vertices spanned by $uv$ that are not leaf-neighbors of $u$ or $v$.
    Let $\prec_G'$ be a linear order compatible with $S^*$ such $\prec_G'$ and $\prec_G$ differ only in the order of vertices under the edge $uv$.

    Observe that $\compress{S(\prec_G')} \leq \compress{S(\prec_G)}$ since we only ``merge'' entries of the same type.

    It remains to show that $\prec_G'$ remains a solution.
    Since both $u$ and $v$ are free, $\prec_G'$ still ends in $\prec_U$, so assume that $\prec_G'$ contains the pattern $P$.
    We now argue that we can also find a realization of $P$ in $\prec_G$.
    In particular, we show that there exists a realization that does not use vertices corresponding to the entries in $S^*$ that we re-ordered.
    Let $(\prec_G', X)$ be a realization of $P$ in $\prec_G'$.
    Consider a compressed sequence $S'$.
    We mark every $w \in X$ that is represented in $\prec_G'$ by an entry in $S'$.
    As $\Size{X} = n_P$, we mark at most $n_P$ vertices.
    Recall that $S'$ is preceded by a subsequence $Q$ with $\compress{Q} = 2n_P$ that we did not touch.
    Moreover, since $Q$ only contains types $t$ with $t \cap N_{uv}^{=1} \neq \emptyset$, we have that $Q$ must be an alternation of sequences of $t_a = [a]_{\sim_G}$ and $t_b = [b]_{\sim_G}$ for $a$ and $b$ being two leaf-neighbors of $u$ and $v$, respectively. 
    More concretely, we have $Q=(t_a,\ldots,t_b,\ldots,t_a,\ldots,t_b)$ or $Q=(t_b,\ldots,t_a,\ldots,t_b,\ldots,t_a)$.
    Without loss of generality, assume the former.
    We now re-assign the (at most) $n_P$ marked vertices in $X$ to vertices responsible for the entries $Q$ in $\prec_G'$.
    To this end, consider the leftmost marked vertex $w_1 \in X$.
    If $w_1 \in t_a$, we replace it with the leftmost vertex (in $\prec_G'$) responsible for the first $t_a$ in the first sequence of $t_a$s in $Q$, otherwise (i.e., $w_1 \in t_b$) with the leftmost vertex responsible for the first $t_b$ in the first sequence of $t_b$s in $Q$.
    We replace the next marked vertex $w_2 \in X$ with the leftmost vertex responsible for the first $t_a$ ($t_b$) in the second sequence of $t_a$s ($t_b$s) in $Q$ if $w_2 \in t_a$ ($w_2 \in t_b$).
    Afterwards, we continue in the same fashion until all marked vertices are re-assigned.
    Since $Q$ contains only $t_a$ or $t_b$ and $\compress{Q} = 2n_P$, this process must eventually finish.
    Afterwards, we do the same for all compressed sequences.
    Eventually, no vertex in $X$ is responsible for an entry of a compressed sequence but only for types in sequences that we did not alter.
    This implies that $P$ must already be contained in $\prec_G$, which is a contradiction to the assumption that it is a solution to $\Instance$.
    Therefore, $\prec_G'$ must also be a solution, which completes the proof.
\end{proof}

\NewDocumentCommand\somefunc{O{x}m}{\ensuremath{f_{#1}(#2)}}

\subparagraph*{Wrapping Up.}
Before we continue in \Cref{sec:poly-trees-dp} with describing our dynamic program, let us first summarize the result of this section.
Let $\Instance = \InstanceLong$ be an instance of \PAOneF, i.e., $G$ is a forest and $P$ is a pattern with $E^+(P) = \{uv\}$ and $u$ is the leftmost vertex in $\prec_P$.

We exhaustively apply \Cref{lem:poly-trees-preprocessing-one-sided-pattern-avoidance,lem:poly-trees-preprocessing-separation-isolated-components,lem:poly-trees-preprocessing-separation-hydra-attachments,lem:poly-trees-preprocessing-independent-right} to $\Instance$, some of which replace an instance with a set of (simpler) instances where either at least one or even all need to be positive for the input instance to be positive.
The total number of instances generated in one step is bounded by
${n_G}^{{n_P}^{\BigO{n_P}}}$.
Any resulting instance obtained from this exhaustive application was obtained by at most $\BigO{n_P}$ applications of one of the above lemmas.
Thus, the total number of instances is also bounded by ${n_G}^{{n_P}^{\BigO{n_P}}}$.
Each such instances $\Instance' = (G', P', U', \prec_{U'})$ of \PAOneFOneS has the following properties.
\begin{enumerate}[(N1)]
    \item\label[instprop]{prop:components-contain-fixed-v}
        Either $G'$ is a tree, or for every connected component $T$ of $G'$ we have $V(T) \cap U' \neq \emptyset$, i.e.,
        if $G'$ is a forest, then every connected component has at least one fixed vertex
        (\Cref{lem:poly-trees-preprocessing-separation-isolated-components}).
        Recall that for $U'=\emptyset$, $G'$ is a tree and the forced edge connects the leftmost and rightmost pattern vertex.
    \item\label[instprop]{prop:interesting-degree-bounded}
        For every vertex $v \in V(G')$ we have $d_{G'}^{>1}(v) \leq \boundInterestingDegree$, i.e.,
        the interesting degree of every vertex is bounded in $n_P$
        (\Cref{lem:poly-trees-preprocessing-separation-hydra-attachments} and, in particular, \Cref{cor:poly-trees-preprocessing-interesting-degree}).
    \item\label[instprop]{prop:fixed-independent-set}
        If $U'\neq\emptyset$, for every $A \subseteq V(G') \setminus U'$ of size at most $n_P$, there is an independent set $I_A \subseteq U'$ of size $n_P$ such that there is no edge $uv$ with $u \in A$, $v \in I_A$.
        This means that for every not too large set of free vertices, there is a same-sized independent set among the fixed vertices that are not adjacent to the former
        (\Cref{lem:poly-trees-preprocessing-independent-right}).
    \item\label[instprop]{prop:fixed-part-bounded-alternation}
        We have $\compress{S(\prec_{U'})} \leq \boundFixedVertices$, i.e.,
        there is only a bounded type-alternation in $\prec_{U'}$
        (\Cref{lem:poly-trees-preprocessing-independent-right}).
\end{enumerate}
We call instances $\Instance'$ of \PAOneFOneS that have these properties \PANormalized\ \PAOneFOneS.

The above discussion is summarized in the following lemma that establishes a reduction from \PAOneF to \PAOneFOneS.
\begin{lemma}
    \label{lem:poly-trees-preprocessing-touring-reduction}
    For some computable function $f(\cdot)$, there exists an $\BigO{{n_G}^{f(n_P)}}$-time algorithm that takes as input an instance $\Instance$ of \PAOneF and outputs a family $\mathcal{F}$ of sets $\mathcal{S} \in \mathcal{F}$ of instances of \PANormalized\ \PAOneFOneS such that $\Size{\mathcal{F}}, \Size{\mathcal{S}} \in \BigO{{n_G}^{f({n_P})}}$ and $\Instance$ is a positive instance if and only if there is one $\mathcal{S} \in \mathcal{F}$ where every $\Instance' \in \mathcal{S}$ is a positive instance.
\end{lemma}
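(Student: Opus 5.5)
The plan is to assemble the reduction as a bounded-depth search tree, in which each node is an instance of \PAOneFOneS and each branching is either disjunctive (``some child is positive'') or conjunctive (``all children are positive''). Afterwards we flatten this tree into the disjunction-of-conjunctions form $\mathcal{F}$ required by the statement.

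We start from $\Instance$ and apply \Cref{lem:poly-trees-preprocessing-one-sided-pattern-avoidance}. This is a disjunctive branch into ${n_G}^{\BigO{n_P}}$ instances $(G,P,U,\prec_U)$, each with all hydra vertices in $U$ and $\compress{S(\prec_U)} = 20n_P+2$.

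For each of these instances we apply \Cref{lem:poly-trees-preprocessing-separation-isolated-components}. This is a conjunctive split into one instance that contains $U$ and one instance per fixed-free tree $T$. For the latter, $U=\emptyset$, the pattern is $M(P)$, and the graph is a tree, so \Cref{prop:components-contain-fixed-v} holds trivially for them. The remaining properties are vacuous or else handled by rerunning the same pipeline on $(T,M(P))$ with the hydra argument.

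Next we apply \Cref{lem:poly-trees-preprocessing-separation-hydra-attachments} to every hydra vertex $v$. This step is a disjunction over the choice of $H_v$ among the polynomially many sets of at most $7n_P+\compress{S(\prec_U)}$ attachments, combined with a conjunction over the instances $\Instance\mid_B$ and the pruned instance. Since the number of hydra vertices is at most $10n_P+1$, this nests a bounded number of times. By \Cref{cor:poly-trees-preprocessing-interesting-degree} we then obtain \Cref{prop:interesting-degree-bounded}. The focused instances $\Instance\mid_B$ contain no hydra vertex, because $B$ was obtained after deleting hydra vertices and the fixed side is unchanged.

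Finally, \Cref{lem:poly-trees-preprocessing-independent-right} gives a disjunctive branch of size ${n_G}^{{n_P}^{\BigO{n_P}}}$ that yields \Cref{prop:fixed-independent-set,prop:fixed-part-bounded-alternation}. We need to check that it preserves \Cref{prop:components-contain-fixed-v,prop:interesting-degree-bounded}. The graph is unchanged and $U$ only grows, so this check is immediate.

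Correctness follows by induction over the tree from the ``if and only if'' guarantees of each lemma. To reach the stated form, we distribute disjunctions over conjunctions. A node with $a$ conjunctive children, each of which is a disjunction of at most $D$ conjunctive sets, yields at most $D^a$ sets. Here $a$ is polynomial in $n_G$ and the tree has depth $\BigO{n_P}$, but a naive product may blow up.

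The main obstacle is exactly this flattening bound. To handle it, I would exploit the fact that the conjunctive children arising from isolated trees and from focused attachments are independent subproblems. For those children we do not need to branch jointly; each can be decided as a separate set of instances. Concretely, I would choose the disjunctive guesses (the sequences $S$ and the sets $H_v$) first, at the top level, since there are ${n_G}^{f(n_P)}$ combinations of them. I would then let each set $\mathcal{S}$ collect all resulting conjunctive instances. A conjunctive child whose own internal disjunction remains (from \Cref{lem:poly-trees-preprocessing-independent-right}) is then pushed into $\mathcal{S}$ only after its guess has been made globally. Alternatively, I would argue that its branching can be folded into the child: we replace the child by its own family, and $\mathcal{S}$ then requires every one of its members to belong to some positive branch. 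That last option is acceptable if we allow solving children independently, which matches the intended use in the DP. The running time is then the sum of the branching sizes times the polynomial construction cost, giving $\BigO{{n_G}^{f(n_P)}}$.
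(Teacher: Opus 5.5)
Your decomposition is the same as the paper's. Its Wrapping-Up argument applies \Cref{lem:poly-trees-preprocessing-one-sided-pattern-avoidance,lem:poly-trees-preprocessing-separation-isolated-components,lem:poly-trees-preprocessing-separation-hydra-attachments,lem:poly-trees-preprocessing-independent-right} exhaustively. It then bounds the total number of generated instances by ${n_G}^{{n_P}^{\BigO{n_P}}}$ (polynomial branching per step, depth $\BigO{n_P}$) and leaves the arrangement into sets implicit. You correctly spot that flattening is the delicate point, but your resolution has a genuine gap.

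Option (a) claims the disjunctive guesses have only ${n_G}^{f(n_P)}$ global combinations. This fails whenever a disjunction sits below a conjunction with many children, and your plan creates exactly that in two places. First, you rerun the pipeline, with a fresh guess from \Cref{lem:poly-trees-preprocessing-one-sided-pattern-avoidance}, on every free tree $(G[T],M(P))$. Second, your last step applies \Cref{lem:poly-trees-preprocessing-independent-right} to all outputs of the hydra step, including every focused instance $\Instance\mid_B$. There can be $\Theta(n_G)$ such children, each with its own guess, so the combinations, and hence $\Size{\mathcal{F}}$, grow like $({n_G}^{f(n_P)})^{\Theta(n_G)}$. Option (b) outputs a nested AND/OR structure. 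Evaluating that tree bottom-up would suffice for \Cref{thm:poly-trees}, but it is not the form the lemma asserts.

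The missing observation is that no disjunction is needed below the wide conjunctions.

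\emph{Focused instances.} A focused instance $\Instance\mid_B$ already satisfies (N1)--(N4):
\begin{itemize}
\item every component meets $U\cup K$;
\item $K$ is an independent set of $n_P$ isolated fixed vertices, so it serves as $I_A$ for every $A$;
\item $\compress{S(\prec_K\oplus\prec_U)}\le 21n_P+2$;
\item for (N2) it is not enough that $B$ has no hydra vertex, since $U$ does contain hydra vertices. Instead, every vertex of $U$ has at most $\compress{S(\prec_U)}=20n_P+2$ non-leaf neighbors in $U$, because non-leaves are singleton types. Moreover, since $B\cap U=\emptyset$ and $B$ is not an $H$-bridge, only $v$ has a neighbor in $B$, and only one.
\end{itemize}

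\emph{Free trees.} A free tree $(G[T],M(P))$ needs no rerun either. The forced edge of $M(P)$ joins its first and last vertex, so any edge spanning $2n_P$ vertices non-adjacent to both its endpoints realizes $M(P)$ (pick an independent set among them). Suppose some $x\in V(T)$ has interesting degree $D\ge 12n_P+6$. Take arms $x y_i z_i$ with $m\ge D/2$ of the $y_i$ on one side, say right of $x$, and let $y_{\max}$ be the rightmost of these. At least $m/3$ of them have $z_i$ in the same one of three regions: between $x$ and $y_{\max}$, right of $y_{\max}$, or left of $x$. Correspondingly, $x y_{\max}$, or $y_a z_a$ for the leftmost such $y_a$, or $z_a y_a$ for the rightmost such $y_a$, spans at least $m/3-1$ of those $z_i$ (first case) or $y_i$ (other cases). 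None of these is adjacent to the edge's endpoints. Hence a free tree with a hydra vertex is a no-instance and its branch can be dropped; otherwise the tree satisfies (N1)--(N4) with $U=\emptyset$.

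\emph{Flattening.} Every disjunction therefore lies on the single chain through the instance containing $U$: the sequence from \Cref{lem:poly-trees-preprocessing-one-sided-pattern-avoidance}, the sets $H_v$ for the at most $10n_P+1$ hydra vertices, and the guess of \Cref{lem:poly-trees-preprocessing-independent-right}. The product of these choices is ${n_G}^{{n_P}^{\BigO{n_P}}}$. Each $\mathcal{S}$ then consists of that chain's final instance plus at most $(10n_P+2)\,n_G$ side instances, which gives the stated form.
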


Finally, \Cref{lem:poly-trees-preprocessing-bound-alternation,lem:bounded-sequence-leaves,lem:poly-trees-preprocessing-matching} guarantee us the existence of a solution $\prec_G$ to an instance of \PAOneFOneS with the following properties.
\begin{enumerate}[(S1)]
    \item\label[solprop]{prop:sol-bounded-matching}
    The size of a matching over every vertex $v \in V(G)$ is bounded by $6n_P$
    (\Cref{lem:poly-trees-preprocessing-matching}).
    \item\label[solprop]{prop:sol-bounded-alternation-below-edge}
    For every $uv \in E(G)$ with $u,v \notin U$ and $X = \{w \in V(G) \mid u \preceq_G w \preceq_G v\}$, we have $\compress{S(\prec_G\mid_X)}\in\BigO{{n_P}^2}$, i.e.,
    there is only a bounded type-alternation below every edge between free vertices
    (\Cref{lem:poly-trees-preprocessing-bound-alternation}).
    \item\label[solprop]{prop:sol-long-alternation-compressible}
    For every sequence $S$ in $\prec_G\mid_{V(G) \setminus U}$ consisting of at most $n_P$ types that are all leaves, 
    we have $\compress{S} \leq \boundSequenceLeaves$, i.e.,
    every sequence of length at least $\boundSequenceLeaves$ contains not only $n_P$ leaf-types
    (\Cref{lem:bounded-sequence-leaves}).
\end{enumerate}

\subsection{A Dynamic Program for Normalized Instances}
\label{sec:poly-trees-dp}
In this section, we present a dynamic programming (DP) algorithm for  \PANormalized\ \PAOneFOneS.
On a high level, we maintain an ``active window'' of size $\BigO{n_P}$ that we conceptually slide from right to left over our linear order, thus incrementally constructing a solution $\prec_G$ (if it exists) starting from the fixed part on the right.
To ensure that the size of our window can be bounded by $n_P$ and we do not need to store the orders outside the window,
we use (i) the property that the number of connected components in $G$ is bounded, and (ii) few of them can ``span'' over our window, i.e., have vertices to the left and right of it.
This allows us to only keep track of the already processed (and still to process) connected components of $G$. %
We remark that dynamic programming was already used for obtaining a solution for \probname{Bandwidth}~\cite{Saxe80}.

Our DP will rest on exploring various configurations corresponding to partial solutions represented as vertices in a \emph{configuration graph}.
A configuration consists of the actual order of the vertices within the active windows, together with information on which vertices (or rather components) still need to be processed.
Two configurations are adjacent in the graph if one configuration can be obtained from the other by, intuitively, sliding the active window one step further.
An overall solution to the instance (if it exists) thereby corresponds to a path from the unique starting configuration to any final configuration where the window moved over all vertices.

\newcommand{\windowLength}{(\boundSequenceLeaves+1)\cdot(38 {n_P}^2+1)+1}
\subparagraph*{The DP Configuration Graph.}
We start with formally describing the vertices of the \emph{configuration graph} $D$ for a given instance $\Instance = \InstanceLongOneS$ of \PANormalized\ \PAOneFOneS.
A \emph{configuration} is a tuple $(S, L)$, where the
\emph{active window} $S$ is a sequence of vertices with $\compress{S} \leq \windowLength$ and the
\emph{remaining set} $L$ is a vertex-type multiset such that $\multiset{S}+L+\multiset{U}\subseteq \multiset{V(G)}$.
Note that in $L$, the multiplicity of non-leaf vertices is at most 1 while the multiplicity of a leaf vertex-type is at most the number of leaves of said type.
For any configuration $(S, L)$, we define the \emph{already-processed set} $R:=\multiset{V(G)} - (\multiset{S}+L+\multiset{U})$, which will, intuitively, describe the vertices placed between $S$ and $U$ in the partial solution described by the current configuration.
Furthermore, we partition $S$ into sequences $S^L,S^R$ such that $S^L\oplus S^R=S$ with $|S^L|=\lceil\frac{|S|}{2}\rceil$.
The \emph{starting configuration} has $S=\emptyset$ and $L=\multiset{V(G)\setminus U}$, while any configuration with $L=\emptyset$ is considered \emph{final}.
Given a configuration $(S, L)$, we can now obtain another configuration $(S',L')$ such that $D$ contains an edge from the former to the latter by, roughly speaking, taking (a representative of) an arbitrary vertex(-type) from $L$ and prepending it to $S$ while removing the last vertex from $S$ if it thereby became too long.
As considering all configurations obtainable in this way would make $D$ far too large, we will show that we can restrict our attention to only the valid configurations defined as follows.

\begin{definition}
    \label{def:poly-trees-dp-valid-entry}
    We call a configuration $(S, L)$ \emph{valid} if
    \begin{enumerate}[(V1)]
        \item\label[valprop]{prop:val-S-nonempty} it is either the starting configuration or $S\neq\emptyset$,
        \item\label[valprop]{prop:val-no-long-edge} there is no edge $uv$ with $u \in L$ and $v \in R$,
        \item\label[valprop]{prop:val-no-half-edge} there is no edge $uv$ with $u \in S^L$ and $v \in R$ or with $u \in L$ and $v \in S^R$,
        \item\label[valprop]{prop:val-P-avoided} the sequence $S \oplus \prec_U$ avoids $P$, and
        \item\label[valprop]{prop:val-bounded-alternation}  for every subsequence $S'$ of $S$ consisting of at most $n_P$ types that are all leaves, 
        we have $\compress{S'} \leq 2{(n_P +1)}^{n_P + 2}$.
    \end{enumerate}
\end{definition}

Note that we need to only generate valid `successors' of a current configuration, which is why need to be able to check validity efficiently. That is handled by the following lemma.

\begin{lemma}
    \label{lem:poly-trees-dp-valid-entry-check}
    Given a tuple $(S, L)$, we can test in time $\BigO{{n_G}^{n_P}\cdot {n_P}^2}$ whether it corresponds to a valid configuration.
\end{lemma}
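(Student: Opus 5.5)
We check the five validity conditions of \Cref{def:poly-trees-dp-valid-entry} one at a time and bound the cost of each. In a preparatory step we compute the already-processed multiset $R=\multiset{V(G)}-(\multiset{S}+L+\multiset{U})$. We also check that this subtraction is legal, i.e., that $\multiset{S}+L+\multiset{U}\subseteq\multiset{V(G)}$, and that $\compress{S}$ respects the window bound. Both take time linear in the number of types. Then we split $S$ into $S^L$ and $S^R$ by index.

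\Cref{prop:val-S-nonempty} is a constant-time check. For \Cref{prop:val-no-long-edge,prop:val-no-half-edge} we iterate over all edges $uv\in E(G)$, of which there are fewer than $n_G$ since $G$ is a forest. For each edge we test membership of its endpoints in the multisets $L$ and $R$ and in the sequences $S^L$ and $S^R$.

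One subtlety concerns leaves. Membership in a multiset is defined per type, so an edge to a leaf of type $t$ must be interpreted correctly. The leaf's unique neighbour $w$ is a single vertex, and we consider the pair of types $([w]_{\sim_G},t)$. An edge of this type "lies between" $L$ and $R$ exactly when one side contains $w$ and the other side contains some leaf of type $t$. We check this for every adjacent pair of types, which takes $\BigO{n_G^2}$ time overall.

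\Cref{prop:val-bounded-alternation} is a direct scan over the contiguous subsequences of $S$. We only need to consider maximal runs of leaf-types. For each such run we slide over it while maintaining the set of distinct types seen so far, and we abort once more than $n_P$ distinct types appear. This is polynomial.

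The dominant step is \Cref{prop:val-P-avoided}. Here we form the sequence $S\oplus\prec_U$, choosing concrete representatives for leaves of the same type; since siblings are interchangeable, the choice is irrelevant. We then apply \Cref{lem:poly-trees-preprocessing-pattern-on-left}, i.e., we enumerate all subsets of $n_P$ vertices and check each for a match in $\BigO{n_P^2}$ time, for a total of $\BigO{n_G^{n_P}\cdot n_P^2}$. This bound dominates the polynomial costs of the other checks as soon as $n_P\ge 2$.

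The main obstacle I expect is bookkeeping rather than mathematics. The sequence $S\oplus\prec_U$ may contain many vertices, because $U$ can contain unboundedly many leaves. We nevertheless stay within the enumeration bound, since the total number of vertices never exceeds $n_G$. Similarly, the type-level membership tests for edges at leaves must be argued to be equivalent to the vertex-level conditions in \Cref{prop:val-no-long-edge,prop:val-no-half-edge}. This holds because all leaves of one type share the same neighbour.
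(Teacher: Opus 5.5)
Your proposal is correct and follows essentially the same route as the paper. You check each validity constraint separately and treat leaves at the level of types, since any representative gives an equivalent edge to the same neighbour. The check that $S\oplus\prec_U$ avoids $P$ is done by the $\BigO{{n_G}^{n_P}\cdot{n_P}^2}$ subset enumeration and dominates the running time. Your quadratic bounds for the edge checks and the leaf-alternation check, where the paper claims linear time, are harmless because $n_P\ge 2$ for any pattern with a forced edge.
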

\begin{proof}
    \Cref{prop:val-S-nonempty} can trivially be verified in bounded time.
    For the next constraints, note that the only vertices that have a choice of representative (other than themselves) are the leaves, and any choice here yields an equivalent edge to the same neighbor.
    Thus, \Cref{prop:val-no-long-edge} holding is independent of the choice of representatives and we can check in linear time whether there is an edge between the two sets.
    Similarly, \Cref{prop:val-no-half-edge} can be checked in linear time.
    We can use \Cref{lem:poly-trees-preprocessing-pattern-on-left} to check \Cref{prop:val-P-avoided}, that is that $S\ \oplus \prec_U$  avoids $P$, which dominates the running time.
    Finally, \Cref{prop:val-bounded-alternation} can be verified by a linear pass over $S$, storing the counts for the most-recently seen leaf types.
\end{proof}

Next, we will show that the graph $D$ restricted to valid configurations has bounded size.
First, we show that for any given choice of $S$, there are only boundedly many values for $L$ that yield a valid configuration.
Afterwards, we will argue that there are only a bounded number of values for $S$, yielding the desired bounds on $D$.

\begin{lemma}
    \label{lem:poly-trees-dp-valid-entry-count-L}
    Let $\Instance = \InstanceLongOneS$ be an instance of \PANormalized\ \PAOneFOneS and let $S$ be a sequence of vertices.
    The number of different remaining sets $L$ such that $(S, L)$ is a valid configuration is bounded by 
    ${n_G}^{{n_P}^{\BigO{n_P}}}$.
\end{lemma}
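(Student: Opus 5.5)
The key observation is that a valid configuration $(S,L)$ is almost entirely determined by $S$ and by which connected components of $G$ lie completely in $L$. By \Cref{prop:val-no-long-edge,prop:val-no-half-edge}, no edge joins $L$ to $R$. Hence every connected component $T$ of $G$ that avoids $S$ has its free vertices entirely in $L$ or entirely in $R$. Only components meeting $S$ can be split, and there are at most $\compress{S}$ of them (in fact at most $|S|$ distinct non-leaf vertices). So the plan is to count three things: (i) how $L$ splits inside components meeting $S$; (ii) how many components can be distinguished at all; (iii) how leaves of a single type are distributed.

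For (i), I would use that each component $T$ meeting $S$ is a tree. Removing the vertices of $S$ from $T$, together with the constraint that no edge joins $L$ and $R$, forces each resulting subtree to go wholly to $L$ or wholly to $R$. By \Cref{prop:interesting-degree-bounded}, every vertex of $S$ has at most $38n_P$ non-leaf neighbours. Leaf neighbours are single vertices, grouped by type, so their split only matters through a count per type. The number of subtrees hanging off $S$ that are not leaves is therefore at most $38n_P\cdot|\{\text{non-leaf types in } S\}|$. Each such subtree chooses $L$ or $R$, giving $2^{\BigO{n_P\cdot \compress{S}}}$ options. For the leaf types adjacent to $S$, $L$ only records a number, giving at most $n_G+1$ choices per type and hence $n_G^{\BigO{\compress{S}}}$ overall. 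Since $\compress{S}\le n_P^{\BigO{n_P}}$ by the window-length bound, this stays within ${n_G}^{{n_P}^{\BigO{n_P}}}$.

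For (ii), which I expect to be the main obstacle, the task is to control the components disjoint from $S$. By \Cref{prop:components-contain-fixed-v}, either $G$ is a tree, in which case the only component already meets $U$, or every component meets $U$. A component $T$ that meets $U$ and lies entirely outside $S$ is part of a tree that contains fixed vertices. The edges from its free vertices into $U$ would then force its free part to be in $R$ or $L$ consistently. My first attempt would be to argue directly from \Cref{prop:val-no-long-edge} that a free neighbour of a fixed vertex cannot lie in $L$ unless the connecting path passes through $S$. If that holds, components meeting $U$ whose free part avoids $S$ are forced entirely into $R$. I would need to double-check this, because $U$ is not part of $R$; if the paths do not force it, I would instead bound the number of such components. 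The bound would come from the fact that the number of components touching $U$ relates to the type-alternation of $\prec_U$, which is bounded by \Cref{prop:fixed-part-bounded-alternation}, up to leaf types that merely contribute counts. Either way, the number of independent binary choices would be at most $n_P^{\BigO{n_P}}$.

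Finally, I would multiply the contributions from (i), (ii), and the leaf counts, $n_G^{\BigO{\compress{S}+\compress{S(\prec_U)}}}$, to obtain ${n_G}^{{n_P}^{\BigO{n_P}}}$.
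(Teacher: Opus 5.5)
Your overall mechanism matches the paper's. $L$ is constrained only by (V2) and (V3), so free pieces move atomically between $L$ and $R$. Leaf types contribute only a multiplicity, and the number of pieces is bounded via (N2) and (N4).

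The gap is in what you take as the atomic pieces. Fixed vertices lie in neither $L$ nor $R$, and no validity constraint involves an edge incident to $U$, so $U$ separates just as $S$ does. Your claim that every component of $G$ avoiding $S$ has its free vertices entirely in $L$ or entirely in $R$ is therefore false. Suppose a fixed vertex $u$ has two non-leaf free neighbours $a,b$ whose subtrees avoid $S$. Then placing the piece of $a$ in $L$ and that of $b$ in $R$ violates nothing. For the same reason, in (i) a component of $T-S$ that contains fixed vertices does not move ``wholly'', so (i) undercounts. The pieces of such a component that hang off $U$ are never counted at all, since (ii) only treats components of $G$ disjoint from $S$. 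Your first attempt in (ii), forcing pieces attached only to $U$ into $R$, also fails, because nothing forces them either way.

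The fix is the paper's decomposition: work with the components of $G-(S\cup U)$.
\begin{itemize}
\item Single leaves with parent in $S\cup U$ are counted per type. This gives $n_G$ choices for each of at most $\compress{S}+\compress{S(\prec_U)}$ types.
\item Non-singleton pieces adjacent to $S$ are actually determined by (V3): adjacency to $S^L$ forces $L$, and adjacency to $S^R$ forces $R$. Your branching there is harmless but unnecessary.
\item For pieces adjacent only to $U$, (N4) alone does not suffice. It bounds the number of non-leaf fixed vertices and fixed leaf types, not the number of free pieces hanging off them. You need (N2) on the fixed side as well. Each non-leaf fixed vertex has at most $38n_P$ non-leaf neighbours, and each fixed leaf type contributes at most one parent. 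This gives at most $(38n_P+1)\cdot n_P^{\BigO{n_P}}$ such pieces and hence $2^{n_P^{\BigO{n_P}}}$ choices.
\end{itemize}
The same ``one parent per leaf type'' term is also missing from your count in (i) for leaves lying in $S$.
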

\begin{proof}
    First, observe that for $S=\emptyset$ there is only the starting configuration that is valid, which immediately yields the single value we have for $L$.
    For $S$ non-empty, consider the connected components of the graph obtained from $G$ by deleting all vertices in $S$ and in $U$.
    We partition these components into different sets, depending on whether they consists solely of a leaf and depending on their adjacency to $S$ and $U$.
    Recall that if we have a forest, all connected components contain at least one fixed vertex (\Cref{prop:components-contain-fixed-v}), so there are no components adjacent to neither $S$ nor $U$.
    Similarly, if we only have a single tree, then it needs to be adjacent to the non-empty $S$.
    In either case, we will first consider those components that consist of a single leaf.
    Here, observe that the total number of vertex-types for leaves that have their parent in $S$ or $U$ is bounded, as the compressed sizes of $S$ (by definition) and $U$ (by \Cref{prop:fixed-part-bounded-alternation}) are bounded.
    Thus, for each of the at most $\windowLength + \boundFixedVertices$ such vertex-types, we need to decide how many leaves are in $L$.
    As there are at most $n_G$ such leaves, this yields ${n_G}^{\windowLength + \boundFixedVertices}$ choices.

    Now consider the components that not only contain a leaf, i.e. have at least two vertices or two edges to $S\cup U$.
    First observe that, since in a valid configuration no edge is allowed to connect between $R$ and $L$, if one vertex of a component is in $L$ then so is the whole component.
    If such component contains a vertex $v$ that is endpoint of an edge $uv$ with $u\in S$, we can uniquely determine whether it needs to part of $L$ or $R$ as follows.
    Since there are no edges between $S^L$ and $R$ in a valid configuration, we know that the component of $v$ is in $L$ if $u\in S^L$.
    As the same holds for $S^R$ and $L$, this is an `if and only if', yielding a deterministic classification.

    Finally, consider the components that do not have an edge to $S$ but to $U$.
    Recall that \Cref{prop:interesting-degree-bounded} implies that the interesting degree is bounded via $d_{\max}^{>1} \leq \boundInterestingDegree$.
    Together with the above-mentioned bounds on $U$ (\Cref{prop:fixed-part-bounded-alternation}), this allows us to bound the number of components that have an edge to $U$ by $\boundInterestingDegree \cdot \boundFixedVertices$.
    Each such component can either lie in $L$ or $R$, yielding $2^{\boundInterestingDegree \cdot \boundFixedVertices}$ possibilities.
\end{proof}

As the length of the active window $S$ is bounded in $n_P$, and each position within this window can contain one of the $n_G$ vertices, this immediately gives us a bound on the number of values for $S$.

\begin{observation}
    \label{lem:poly-trees-dp-valid-entry-count-S}
    Let $\Instance = \InstanceLongOneS$ be an instance of \PANormalized\ \PAOneFOneS.
    The number of different active windows $S$ such that there exists some $L$ for which $(S, L)$ is a valid configuration is bounded by 
    ${n_G}^{{n_P}^{\BigO{n_P}}}$.
\end{observation}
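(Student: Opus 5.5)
The bound follows by counting the possible run-length encodings of $S$, since by definition an active window is constrained only through its compressed size $\compress{S} \leq \windowLength$, which is a function of $n_P$ alone, say $g(n_P) \in {n_P}^{\BigO{n_P}}$. I will encode each candidate window $S$ by the sequence of its maximal runs. Each run is a pair consisting of a vertex-type together with a multiplicity. A run of a non-leaf type has multiplicity one. A run of a leaf type has multiplicity at most the number of siblings of that type, hence at most $n_G$. Distinct vertex sequences of the same type sequence differ only in which sibling leaves are used, so up to the choice of representatives, and the paper consistently identifies sequences compatible with the same type sequence, $S$ is determined by this run encoding. Thus the number of windows is at most the number of sequences of at most $g(n_P)$ pairs (type, multiplicity).

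Next I bound each ingredient separately. There are at most $n_G$ vertex-types, and each multiplicity takes at most $n_G$ values, so each run has at most ${n_G}^2$ choices. The number of runs lies between $0$ and $g(n_P)$. Summing over the length gives at most $\sum_{k=0}^{g(n_P)} {n_G}^{2k} \leq (g(n_P)+1)\cdot {n_G}^{2g(n_P)}$ encodings. Since $g(n_P)\in {n_P}^{\BigO{n_P}}$, this is ${n_G}^{{n_P}^{\BigO{n_P}}}$, where the factor $g(n_P)+1$ is absorbed for $n_G\geq 2$. Restricting to those $S$ for which some valid $L$ exists only reduces the count, so the statement follows.

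The only subtle point I expect is the one the preceding text glosses over. The text says the window length is bounded, but in fact only $\compress{S}$ is bounded, while $|S|$ itself may be as large as $\Theta(n_G)$ because of long runs of sibling leaves. The argument must therefore count via run multiplicities rather than positions. I will also note that \Cref{prop:val-bounded-alternation} is not needed for this count, because the compressed bound in the definition of a configuration already suffices.
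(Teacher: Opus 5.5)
Your argument is correct, but it takes a different route from the paper's. The paper's justification is a single sentence: the window length is bounded in $n_P$ and each position holds one of the $n_G$ vertices. This bounds the number of windows by $n_G$ raised to that length, counting literal vertex sequences, so it needs $|S|$ itself to be bounded. The paper treats the bound this way again when it cuts $S$ into pieces of fixed length in the proof of \Cref{lem:poly-trees-dp-dp-no-free-fixed-pattern}. The definition of a configuration, however, only bounds $\compress{S}$.

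Your run-length count works directly from the definition as written, and, as you note, it uses no validity constraint. You allow at most $g(n_P)$ runs, each a type together with a multiplicity, so there are at most $n_G^2$ choices per run.

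The cost is that you bound windows only up to the choice of sibling-leaf representatives. Under a bound on $\compress{S}$ alone this identification is necessary, not cosmetic. For example, take a star with $m$ leaves, $U=\emptyset$, and the pattern on $a \prec b \prec c$ with forced edge $ac$ and forbidden edge $bc$. Every ordering of the leaves followed by the center avoids this pattern and gives a valid window with $\compress{S}=2$, so there are $m!$ literal windows. Your justification for the identification is the right one and matches how the DP actually uses windows: edges of $D$ are labelled by types, $L$ is a type multiset, and validity does not depend on which representatives are chosen. The paper's route needs no identification, but it is valid only if $|S|$ itself is bounded.
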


Together with the above lemma, this yields a bound on the size of $D$, which will allow us to efficiently check for a \emph{solving path} in $D$, i.e, a path from the starting configuration to a final one.
It remains to show that the instance is positive if and only if there is such path.
We start with the forward direction of this.

\begin{lemma}
    \label{lem:poly-trees-dp-solution-implies-dp}
    Let $\Instance = \InstanceLongOneS$ be an instance of \PANormalized\ \PAOneFOneS.
    If $\Instance$ is a positive instance, then there exists a solving path in $D$.
\end{lemma}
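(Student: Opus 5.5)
Fix a solution $\prec_G$ of \Instance; by \Cref{lem:poly-trees-preprocessing-bound-alternation,lem:bounded-sequence-leaves} we may assume it satisfies \Cref{prop:sol-bounded-matching,prop:sol-bounded-alternation-below-edge,prop:sol-long-alternation-compressible}. Write $\prec_F$ for its restriction to the free vertices $V(G)\setminus U$, so that $\prec_G=\prec_F\oplus\prec_U$. We read the solving path off this order by sliding the window from right to left. For $i=0,1,\dots$, the configuration $(S_i,L_i)$ is defined as follows. The window $S_i$ is the longest suffix of the last $i$ free vertices of $\prec_F$ whose compressed length $\compress{S(\cdot)}$ is at most $\windowLength$. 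The remaining set $L_i$ is the type-multiset of the free vertices not among the last $i$. Consequently the already-processed set $R_i$ is exactly the type-multiset of the free vertices that were pushed out of the right end of the window. Consecutive configurations differ by prepending one vertex and possibly dropping vertices at the right end, so $(S_i,L_i)\to(S_{i+1},L_{i+1})$ is an edge of $D$. The sequence starts at the starting configuration and ends with $L=\emptyset$. It therefore remains to check that every $(S_i,L_i)$ is valid.

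\Cref{prop:val-S-nonempty} is immediate. For \Cref{prop:val-P-avoided}, the order $S_i\oplus\prec_U$ is a subsequence of $\prec_G$, and $\prec_G$ avoids $P$. Strictly, we need that deleting the middle vertices in $R_i$ cannot create a pattern; since containment is monotone under restriction, any occurrence in the subsequence is an occurrence in $\prec_G$. \Cref{prop:val-bounded-alternation} is inherited from \Cref{prop:sol-long-alternation-compressible}, because $S_i$ is a consecutive part of $\prec_F$.

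The main work is in \Cref{prop:val-no-long-edge,prop:val-no-half-edge}. Both state that no edge $xy$ with $x\prec_G y$ spans too much of the window. Edges with $y\in U$ are irrelevant, since $U$ is never in $L$, $S$ or $R$. So consider an edge $xy$ between free vertices. By \Cref{prop:sol-bounded-alternation-below-edge}, the part of $\prec_F$ from $x$ to $y$ has compressed length $\BigO{{n_P}^2}$. The window length $\windowLength$ was chosen to be more than twice this bound. Hence the span of $xy$ cannot contain a full half of the window together with something outside it. More precisely, if $x\in L$ and $y\in S^R$, then the span of $xy$ contains all of $S^L$, whose compressed length is about half the window and thus exceeds the bound. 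If $x\in S^L$ and $y\in R$, the span contains all of $S^R$, which likewise exceeds the bound. If $x\in L$ and $y\in R$, the span contains the whole window. Each case contradicts \Cref{prop:sol-bounded-alternation-below-edge}.

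The obstacle I expect is aligning the bookkeeping. The halves $S^L,S^R$ are defined by vertex count $\lceil|S|/2\rceil$, while the bound in \Cref{prop:sol-bounded-alternation-below-edge} is on compressed length. A long run of leaves of one type can make a half very long in vertices but short in compressed length. I would handle this in one of two ways. The first is to reinterpret the split compressionwise. The second is to choose the representatives $S_i$ so that long leaf runs are sliced at run boundaries. In the latter case I would use the minimality of $\compress{S(\prec_G)}$ together with the fact that leaves are adjacent only to their parent. The goal is to show that the only edges crossing a long leaf run are from leaves to a parent lying on the correct side. The early windows also need care, before the window has been filled: all edges then have $y$ either in $U$ or inside $S$, so \Cref{prop:val-no-long-edge,prop:val-no-half-edge} are checked directly.
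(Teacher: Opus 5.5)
Your plan is the paper's own: read a path off a solution satisfying (S1)--(S3) from right to left, then check (V1)--(V5). Your treatment of (V1), (V2), (V4), (V5) is fine, except that the window must be the longest \emph{prefix}, not suffix, of the processed part. The gap is (V3), and it cannot be closed for $D$ as defined.

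\textbf{The early phase is not ``checked directly''.} While $R=\emptyset$, $S^R$ is the right half of everything processed so far. (V3) then forbids every edge from an unplaced vertex into that half, regardless of (S2).

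Take $P=(\{a,b,c\},\,a\prec b\prec c,\,\{ac\},\,\{ab\})$. An order avoids $P$ exactly when every edge $xw$ with $x\prec w$ spans only neighbours of $x$. Let $G$ consist of a vertex $p$ with $N\ge 2$ pendant leaves, the edges $py$ and $yz$, and isolated vertices $i_1,i_2,i_3$. Set $\prec_U=z,i_1,i_2,i_3$.

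\begin{itemize}
\item This instance satisfies (N1)--(N4) and is positive.
\item Every solution has $y$ as its last free vertex, immediately preceded by a leaf $\ell$ of $p$, with $p$ further left.
\item After two steps the read-off configuration is $S=(\ell,y)$, so $S^R=(y)$ and $p\in L$. This violates (V3), although $py$ has compressed span $3$.
\item The free part never has compressed length above $5$, so the window never overflows. Hence (V4) at a final configuration forces every solving path to spell out a solution.
\end{itemize}
So no solving path exists, and the lemma fails for $D$ as literally defined. The paper's proof dismisses (V2)/(V3) with the same one-line appeal to (S2), so it has the same hole.

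\textbf{The full-window obstacle is real, and your second remedy targets a false claim.} Replace $yz$ by a long path $y,z_1,\dots,z_M,z$ so that the window fills. The order is again forced, and the non-leaf edge $py$ crosses an arbitrarily long run of $p$'s own leaves. (S2) permits this, because \Cref{lem:poly-trees-preprocessing-bound-alternation} only bounds compressed span. No minimality of $\compress{S(\prec_G)}$ removes this edge.

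\textbf{A repair that works changes $D$.}
\begin{itemize}
\item Impose (V3) only once $R\neq\emptyset$. By your window rule, $\compress{S}$ then equals the window bound $W$.
\item Split $S$ at its middle run rather than its middle vertex.
\end{itemize}
Your (S2) argument then goes through verbatim: each half has compressed length at least $\lfloor W/2\rfloor$, which exceeds the explicit bound in (S2). The other uses of (V3) survive. In \Cref{lem:poly-trees-dp-valid-entry-count-L}, the case $R=\emptyset$ determines $L$ uniquely. In \Cref{lem:poly-trees-dp-dp-no-free-pattern}, the freshly prepended vertex lies in the first run, hence in $S^L$.
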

\begin{proof}
    If the instance is positive, it by \Cref{prop:sol-long-alternation-compressible} also has a solution $\prec_G$ where each contiguous subsequence of leaves satisfies \Cref{prop:val-bounded-alternation}.
    We now show that we can find a suitable path in $D$ by following edges labeled with the vertex-types of $\prec_G$ in reversed order, starting from the last vertex right before $U$.
    For this initial vertex, it is easy to see that the starting state has an appropriately-labeled edge, as the pattern is not induced and we have no $R$ or $S^R$ yet and the target configuration is thus valid.
    For any later vertex $v$, assume the corresponding configuration $C$ has no appropriately-labeled edge, that is moving $v$ from $L$ to $S$ in $C$ does not yield a valid configuration.
    We now individually check the remaining validity constraints required for a valid configuration.
    \Cref{prop:val-S-nonempty} is trivially satisfied.
    \Cref{prop:val-no-long-edge,prop:val-no-half-edge} cannot be violated, as the solution would otherwise contain an edge that is long enough to violate \Cref{prop:sol-bounded-alternation-below-edge}. %
    Finally, \Cref{prop:val-P-avoided} being violated would mean that the solution does not avoid $P$, a contradiction.
    \Cref{prop:val-bounded-alternation} is respected as we initially chose an appropriate solution.
    Thus, the next state is always valid and can be found via an appropriately-labeled edge, which means that we find a path in $D$ from the starting configuration.
    The path ends at a final configuration because $L$ must be empty after removing $|V(G)\setminus U|$ vertices, corresponding to the length of the path.
\end{proof}

For the converse direction, we obtain a vertex order from a solving path by using the vertex-types annotated to the edges of $D$, translating each vertex-type into any yet-unused vertex from its equivalence class, and prepending this order to $\prec_U$.
When doing so, we never run out of unused vertices and obtain an order containing all vertices of $G$ because the starting state had $L=\multiset{V(G)\setminus U}$ and each edge on the path corresponds to removing exactly one vertex from $L$ until $L=\emptyset$ in a final configuration.
It remains to show that the total order $\prec_G$ obtained in this way does indeed avoid the pattern $P$.
For the sake of contradiction, assume that the pattern is not avoided and first consider the case where the only forced edge $uv$ of the pattern matches two free vertices of $\prec_G$, i.e., that are not in $U$.

\begin{lemma}
    \label{lem:poly-trees-dp-dp-no-free-pattern}
    Let $\Instance = \InstanceLongOneS$ be an instance of \PAOneFOneS and let $\prec$ be an order obtained from a solving path in $D$.
    Let $E^+(P) = \{uv\}$ with $u \prec_P v$ and let $ab \in E(G)$ with $a \prec b$ and $a,b \notin U$.
    If $\prec$ matches $P$, then $\delta^{-1}(u)\delta^{-1}(v) \neq ab$.
\end{lemma}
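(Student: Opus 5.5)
Assume for contradiction that $(\prec,X)$ realizes $P$ with $\delta^{-1}(u)=a$ and $\delta^{-1}(v)=b$, where $a,b\notin U$. The plan is to show that at some moment along the solving path the whole edge $ab$, together with the vertices matched into $M(P)$, lies inside the active window $S$. Since $u$ is the leftmost pattern vertex, every other vertex of the realization sits at or after $a$. Those strictly inside $[a,b]$ are handled by the window. The remaining ones, namely those matched to $V_R(P)$ and lying after $b$, will be replaced by vertices of $U$ using \Cref{prop:fixed-independent-set}. Together these give a realization of $P$ in $S\oplus\prec_U$, contradicting \Cref{prop:val-P-avoided} for that configuration.

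\textbf{Step 1 (the edge fits into the window).} Consider the configuration reached right after $a$ is prepended, and write it $(S,L)$. At this point $b$ has already been processed, so it is in $S$ or in $R$. \Cref{prop:val-no-long-edge} cannot be the obstruction here, since $a$ is in $S$ and not in $L$. Suppose $b\in R$. Then in the configuration one step earlier $a$ still belonged to $L$ and $b$ belonged to $R$. That is an edge between $L$ and $R$, which contradicts \Cref{prop:val-no-long-edge}. So $b\in S$.

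I then want the stronger statement that every vertex between $a$ and $b$ also lies in $S$. The window was cut at $b$'s side only when its length exceeded the bound $\windowLength$. For this step I would try to argue from validity alone. Vertices spanned by $ab$ that are not leaf neighbours of $a$ or $b$ number at most $2d_{\max}^{>1}+2n_P$. The argument is as in \Cref{lem:poly-trees-preprocessing-bound-alternation}: otherwise an independent set under $ab$, together with $I_A$ from \Cref{prop:fixed-independent-set}, would already realize $P$ in a window, which \Cref{prop:val-P-avoided} forbids at an earlier valid configuration. Leaf runs between them are compressed by \Cref{prop:val-bounded-alternation}. Combining the two, the span of $ab$ has compressed length below the window size, so it fits entirely in $S$.

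\textbf{Step 2 (replacing the right part).} Let $A$ be the realization vertices strictly between $a$ and $b$, plus $a$ and $b$ themselves; then $|A|\le n_P$. By \Cref{prop:fixed-independent-set} there is an independent set $I_A\subseteq U$ of size $n_P$ with no edges to $A$. The vertices of $V_R(P)$ are incident only to forbidden edges, since $m_P^+=1$. So I map $V_R(P)$ onto the first $|V_R(P)|$ vertices of $I_A$. Each of these lies after $b$, because $U$ comes last and $b\in S$. All required non-edges hold, since $I_A$ is independent and has no neighbours in $A$. This yields a realization of $P$ inside $S\oplus\prec_U$, contradicting \Cref{prop:val-P-avoided}.

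\textbf{Expected difficulty.} The main obstacle is Step 1's claim that the entire span of $ab$ sits in the window at a single moment. This bound has to come from validity constraints alone, because the order was produced by the DP and not by an actual solution. If that proves delicate, a weaker route should suffice. Pick the configuration right after $a$ is added. Use \Cref{prop:val-no-half-edge} (no edge from $S^L$ to $R$) to show $b$ lies in $S$. Then observe that only the at most $n_P$ realization vertices between $a$ and $b$ are needed, and that these are all in $S$ because they lie between two vertices of the contiguous window. This version avoids bounding the full span altogether.
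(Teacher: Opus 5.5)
Your fallback route is the paper's proof, and it is correct. When $a$ is prepended, it becomes the first vertex of $S$ and so lies in $S^L$. \Cref{prop:val-no-half-edge} then forbids $b\in R$. Since $b$ was processed before $a$, this forces $b\in S$, and contiguity of the window puts the whole span of $ab$ inside $S$. Your Step~2 is the paper's replacement step: the images of $V_R(P)$ are swapped for vertices of $I_A\subseteq U$ given by \Cref{prop:fixed-independent-set}, which produces a violation of \Cref{prop:val-P-avoided}.

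Lead with the fallback and drop the main version of Step~1, which has two defects.
\begin{itemize}
\item The argument via \Cref{prop:val-no-long-edge} alone misses one case: $b$ may be exactly the vertex evicted from the end of $S$ when $a$ is prepended. One step earlier $b$ was then in $S$, not in $R$, so there is no $L$--$R$ edge to contradict. Only \Cref{prop:val-no-half-edge} rules this case out.
\item The span-bounding paragraph is superfluous, because once $a$ and $b$ are in $S$ simultaneously, contiguity gives everything between them. It is also circular. \Cref{prop:val-P-avoided} only constrains $S\oplus\prec_U$, so an independent set under $ab$ yields a contradiction only if it already sits in one window together with $a$ and $b$. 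That is precisely what you are trying to prove.
\end{itemize}

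You also skip the case $U=\emptyset$, where \Cref{prop:fixed-independent-set} is unavailable and $I_A$ does not exist. This case is easy: by \Cref{prop:components-contain-fixed-v}, the forced edge joins the first and last pattern vertices. Hence $V_R(P)=\emptyset$, the realization lies entirely in $S$, and \Cref{prop:val-P-avoided} is violated directly. The paper treats this case separately.
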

\begin{proof}
    We first show that the solving path contains at least one configuration that has both $a$ and $b$ in $S$.
    Consider the edge on the path that adds $a$ to $S$.
    At this point, $b$ can no longer be in $L$.
    As the configuration is valid, the edge $ab$ cannot be too long and, as $a\in S^L$, vertex $b$ cannot be in $R$.
    Thus, $b\in S$.

    Now assume that $\delta^{-1}(u)\delta^{-1}(v) = ab$ in the bijection that matches $P$ in $\prec$.
    In case we have $U=\emptyset$, we know that $v$ is the last vertex of the pattern (\Cref{prop:components-contain-fixed-v}), and thereby $S$ alone matches the pattern.
    Otherwise, $U$ contains a large independent set $I_S$ that is also not adjacent to $\Image(\delta)\cap S$ due to \Cref{prop:fixed-independent-set}.
    We obtain a new bijection $\delta'$ from $\delta$ by replacing all vertices in its image that are right of $b$ in $\prec$ with a vertex from $I_S\subset U$.
    Note that this cannot violate any forbidden edge as $I_S$ is an independent set that is also not adjacent to any vertex of $S$.
    As $\Size{I_S}=n_P$, there are also sufficiently many distinct vertices for this replacement.
    Observe that the image of $\delta'$ is now a subset of $S\cup U$ as all vertices between $a$ and $b$ are in $S$ and $u$ with $\delta^{-1}(u)=a$ is the first vertex of the pattern.
    Thus, $\delta'$ matches $P$ in $S\oplus\prec_U$, a contradiction to the configuration being valid.
\end{proof}

The following lemma is a pendant to \Cref{lem:poly-trees-dp-dp-no-free-pattern} but deals with edges that start at a free vertex and end at a fixed vertex.

\begin{lemma}
    \label{lem:poly-trees-dp-dp-no-free-fixed-pattern}
    Let $\Instance = \InstanceLongOneS$ be an instance of \PAOneFOneS and let $\prec$ be an order obtained from a solving path in $D$.
    Let $E^+(P) = \{uv\}$ with $u \prec_P v$ and let $ab \in E(G)$ with $a \prec b$ as well as $a \notin U$ and $b \in U$.
    If $\prec$ contains $P$, then $\delta^{-1}(u)\delta^{-1}(v) \neq ab$.
\end{lemma}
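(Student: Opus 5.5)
Assume, for contradiction, that there is a realization $(\prec,X)$ of $P$ with $\delta^{-1}(u)=a\notin U$ and $\delta^{-1}(v)=b\in U$. The plan is to move the whole realization into a single valid configuration on the solving path, namely into $S\oplus\prec_U$. Since every configuration on the path satisfies \Cref{prop:val-P-avoided}, this is a contradiction. The argument parallels \Cref{lem:poly-trees-dp-dp-no-free-pattern}. The difference is that the realization is anchored at a fixed vertex on the right rather than inside the window, so we first have to pick the right configuration.

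\emph{Step 1: choosing the configuration.} Take the configuration $(S,L)$ reached immediately after $a$ is prepended to the window. At that moment $a$ is the first vertex of $S$, so $a\in S^L$. Every vertex $w$ with $a\prec w$ is then either in $S$, in $R$, or in $U$. The vertices strictly between $a$ and $b$ that belong to $X$ split into two groups: free vertices in $S\cup R$, and fixed vertices of $U$ preceding $b$. Because $a\in S^L$, \Cref{prop:val-no-half-edge} forbids edges from $a$ to $R$. The realization, however, only uses non-edges between $a$ and the middle vertices (other than forced or forbidden constraints), so this is not directly a contradiction. Instead, we will replace vertices.

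\emph{Step 2: replacing the free middle vertices by fixed ones.} Let $A=\Image(\delta)\setminus U$; this set contains $a$ and has size at most $n_P$. By \Cref{prop:fixed-independent-set} (with $U\neq\emptyset$, which holds since $b\in U$) there is an independent set $I_A\subseteq U$ of size $n_P$ that has no edges to $A$.

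The delicate point is position. We would like to replace every image vertex in $R$, i.e.\ strictly between the window and $U$, by a vertex of $I_A$. This requires that these replacements can be placed consistently before $b$ and in the right relative order with the other image vertices in $U$.

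The first thing I would try is to reselect $I_A$ using the construction in \Cref{lem:poly-trees-preprocessing-independent-right}. There the independent vertices were the \emph{leftmost} newly fixed vertices added in each of the $2n_P$ rounds, that is, they lie at the very beginning of $\prec_U$. If $I_A$ occupies a prefix region of $U$ that precedes every other fixed vertex used by the realization (in particular $b$), then we can push all middle vertices lying in $R\cup S$ into $I_A$ while preserving their order. These replacements respect all pattern constraints: $I_A$ is independent, nonadjacent to $A$, and nonadjacent to $a$. The one constraint requiring care is the non-edges between $I_A$ and image vertices already in $U$. We handle this by also replacing every image vertex in $U$ other than $b$ that is not incident to a pattern edge. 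Image vertices that are incident to edges must be critical, i.e.\ adjacent via forbidden edges only (there is a single forced edge), so independence suffices. The vertices of $R$ to the right of $b$ need not be considered: they lie left of $U$ and therefore cannot be right of $b$.

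\emph{Step 3: concluding.} After the replacement, the image consists of $a$, together with the vertices of $S$ between $a$ and the window's end, and vertices of $U$. It therefore realizes $P$ in $S\oplus\prec_U$, contradicting \Cref{prop:val-P-avoided}.

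\textbf{Main obstacle.} The hard step is Step 2, and specifically the ordering issue: we must guarantee that $I_A$ sits left of $b$ in $\prec_U$ and that there are enough free slots to preserve the order. The fallback if $b$ lies within the prefix containing $I_A$ is as follows. Interesting degrees are bounded (\Cref{prop:interesting-degree-bounded}), so the edge $ab$ would be short in $U$-terms. Hence, with \Cref{prop:sol-bounded-matching}, all of $X$ between $a$ and $b$ already lies in $S\cup U$, and we conclude directly as in \Cref{lem:poly-trees-dp-dp-no-free-pattern}.
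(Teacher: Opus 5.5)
There is a genuine gap in Step~2, and it concerns where the replacement vertices come from. The image vertices lying in $R$ realize pattern vertices strictly between $u$ and $v$, so their replacements must lie strictly between $a$ and $b$ in $\prec$. \Cref{prop:fixed-independent-set} only asserts that some $I_A\subseteq U$ exists; it says nothing about where $I_A$ sits. Nothing prevents $b$ from being the first vertex of $\prec_U$, and then no fixed vertex precedes $b$ at all.

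Appealing to the construction behind \Cref{lem:poly-trees-preprocessing-independent-right} does not rescue this. The witnesses there are chosen depending on $A$, from anywhere inside the newly fixed blocks, and $b$ may itself be among the very first of those vertices.

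Moreover, \Cref{prop:fixed-independent-set} gives non-adjacency of $I_A$ only to the free set $A$, not to $b$ or to the other image vertices in $U$. Suppose $P$ has a forbidden edge from a middle vertex to $v$, or to a vertex right of $v$. Then substituting a vertex of $I_A$ that is adjacent to $b$ (or to that right vertex) destroys the realization, and the independence of $I_A$ says nothing about such edges.

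Your fallback is not sound either. \Cref{prop:sol-bounded-matching} is a property of some solution whose existence is used in the forward direction. It is not a property of an arbitrary order read off a solving path, which is exactly what you are trying to prove is a solution. Bounded interesting degree also does not bound how many vertices $ab$ spans. In fact, \Cref{prop:val-no-long-edge,prop:val-no-half-edge} put no restriction on edges from $S^L$ to $U$, so $ab$ may span arbitrarily many vertices of $R$. That is precisely the case that must be handled.

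The missing idea is to draw the replacements from the window $S$ itself. It lies after $a$ and before $R\cup U$, hence before $b$.

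\begin{itemize}
\item Work in the configuration where $a$ has just been prepended. If $R=\emptyset$, then $S\oplus\prec_U$ already contains the realization.
\item Otherwise the window is full. \Cref{prop:val-bounded-alternation} says any stretch of $S$ using at most $n_P$ leaf types is short.
\item \Cref{prop:interesting-degree-bounded} says the at most $n_P$ image vertices in $S\cup U$ have at most $38n_P^2$ non-leaf neighbors, and their leaf neighbors span at most $n_P$ types.
\item Together these guarantee that each of $2n_P$ consecutive pieces of $S$ after $a$ contains a vertex non-adjacent to $\Image(\delta)\cap(S\cup U)$.
\item By \Cref{obs:forests-independent-set}, these $2n_P$ vertices contain an independent set of size $n_P$. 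Use it to replace the image vertices in $R$. To sidestep any question of relative order, you may replace all image vertices strictly between $a$ and $b$.
\end{itemize}

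The resulting realization lies in $S\oplus\prec_U$, contradicting \Cref{prop:val-P-avoided}. This is essentially how the paper argues.
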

\begin{proof}
    Assume that $\prec$ contains $P$ with a realization with $\delta^{-1}(u) = a$, $\delta^{-1}(v) = b$, and $a \notin U$ while $b \in U$.
    Observe that $a \prec b$ since $\prec$ has to end in $U$.
    The overall argument works similar to the proof above, although we now face the more challenging task to replace the vertices in the image of $\delta$ that are in $R$ with suitable vertices from $S$.
    For this, let $S^P\coloneqq \Image(\delta)\cap S$, $R^P\coloneqq \Image(\delta)\cap R$, and $U^P\coloneqq \Image(\delta)\cap U$.
    First, observe that if $S$ has not yet reached the bound on its length, we have $R^P\subset R=\emptyset$, thus $S\ \oplus \prec_U$ induces the pattern, which contradicts the configuration being valid.

    Observe that \Cref{prop:val-bounded-alternation} ensures the solution we obtain satisfies \Cref{prop:sol-long-alternation-compressible}, %
    and we can thus assume that any subsequence of a solution of length at least $\alpha\coloneqq\boundSequenceLeaves$ contains not only $n_P$ leaf-types.
    Furthermore, \Cref{prop:interesting-degree-bounded} ensures that the interesting degree in the instance is bounded via $d_{\max}^{>1} \leq \boundInterestingDegree$, which especially applies to the vertices in $S^P$ and $U^P$.
    Thereby, we have at most $\beta\coloneqq d_{\max}^{>1} \cdot n_P \leq 38 {n_P}^2$ non-leaf neighbors of $S^P\cup U^P$, which bounds the number of vertices we cannot use to replace $R^P$ because they possibly add a forbidden edge to $S^P$ or $U^P$.
    To find sufficiently many replacement vertices for $R^P$, we proceed as follows.
    We partition the sequence $S$ (excluding its very first vertex $a$) into $2n_P$ subsequences each of length $(\alpha+1)\cdot(\beta+1)=(\boundSequenceLeaves+1)\cdot(38 {n_P}^2+1)$, each of which is thus guaranteed to contain a vertex that is not adjacent to $S^P\cup U^P$, which especially also includes $a$.
    As we are in a forest, a set of $2n_P$ vertices contains an independent set of size $n_P$ (\Cref{obs:forests-independent-set}), which is thereby sufficient to replace $R^P$ in the image of~$\delta$.
\end{proof}

By combining the above lemmas, we directly obtain:

\begin{theorem}
    \label{thm:poly-trees-dp}
    For an instance $\Instance = \InstanceLongOneS$ of \PANormalized\ \PAOneFOneS, we can find a solution $\prec_G$ if it exists in time ${n_G}^{{n_P}^{\BigO{{n_P}}}}$ using dynamic programming.
\end{theorem}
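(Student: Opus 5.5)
The algorithm builds the configuration graph $D$ restricted to valid configurations and searches it for a solving path; correctness comes from combining \Cref{lem:poly-trees-dp-solution-implies-dp,lem:poly-trees-dp-dp-no-free-pattern,lem:poly-trees-dp-dp-no-free-fixed-pattern}, and the running time from the counting bounds.

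\textbf{Running time.} By \Cref{lem:poly-trees-dp-valid-entry-count-S}, there are ${n_G}^{{n_P}^{\BigO{n_P}}}$ active windows $S$. By \Cref{lem:poly-trees-dp-valid-entry-count-L}, each admits ${n_G}^{{n_P}^{\BigO{n_P}}}$ remaining sets $L$. So $D$ has ${n_G}^{{n_P}^{\BigO{n_P}}}$ valid vertices. I will enumerate them explicitly in this time, following the construction in the proof of \Cref{lem:poly-trees-dp-valid-entry-count-L}. Each has at most $n_G$ candidate successors: one per vertex-type in $L$, prepended to $S$ and truncated if too long. Validity of each successor is tested by \Cref{lem:poly-trees-dp-valid-entry-check} in $\BigO{{n_G}^{n_P}{n_P}^2}$ time. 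A BFS from the starting configuration to any final configuration therefore takes ${n_G}^{{n_P}^{\BigO{n_P}}}$ time in total. A path found this way is turned into an order by prepending representatives of the edge labels to $\prec_U$, as described before \Cref{lem:poly-trees-dp-dp-no-free-pattern}.

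\textbf{Completeness.} If $\Instance$ is positive, \Cref{lem:poly-trees-dp-solution-implies-dp} provides a solving path, so the search succeeds.

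\textbf{Soundness.} Let $\prec$ be the order read off a solving path, and suppose $\prec$ contains $P$. Let $ab$ with $a\prec b$ be the edge matching the unique forced edge $uv$ of $P$. There are three cases.
\begin{enumerate}
\item Both $a,b\notin U$: this is excluded by \Cref{lem:poly-trees-dp-dp-no-free-pattern}.
\item $a\notin U$ and $b\in U$: this is excluded by \Cref{lem:poly-trees-dp-dp-no-free-fixed-pattern}.
\item $a,b\in U$: since $u$ is the leftmost vertex of $\prec_P$ and $\prec$ ends with $\prec_U$, the whole realization lies in $U$. Then $\prec_U$ itself contains $P$, and so does $S\oplus\prec_U$ for any configuration on the path, contradicting \Cref{prop:val-P-avoided}. (If $U=\emptyset$ this case cannot occur.)
\end{enumerate}
The case $b\prec a$ does not arise, because we always orient the edge so that $a\prec b$.

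\textbf{Main obstacle.} I expect the main difficulty to be justifying that the lemma statements, which talk about a matched edge $ab$, cover every possible realization. This requires the observation that in any realization the forced-edge endpoints are ordered and that the realization starts at $\delta^{-1}(u)$. A secondary difficulty is confirming that representatives for leaf types never run out and that arbitrary choices of leaf representatives do not affect pattern containment. Both follow because siblings of the same type are interchangeable; I would verify this before concluding.
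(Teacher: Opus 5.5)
Your proposal is correct and follows the paper's route. You search the configuration graph restricted to valid configurations and bound its size with the counting lemma for $L$ and the counting observation for $S$. Completeness comes from \Cref{lem:poly-trees-dp-solution-implies-dp}, and soundness from the two lemmas that exclude the forced edge being matched to a free--free or a free--fixed edge. Your explicit third case, with both endpoints in $U$, is handled correctly: it is ruled out by validity constraint (V4), since $u$ is leftmost in $\prec_P$. The paper leaves this case implicit.
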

Moreover, by invoking \Cref{lem:poly-trees-preprocessing-touring-reduction} we can establish:
\begin{theorem}
    \label{thm:poly-trees}
    \PAOneF parameterized by the pattern size~$n_P$ is in~\XP.
\end{theorem}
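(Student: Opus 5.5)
Given an instance $\Instance = \InstanceLong$ of \PAOneF, I would first invoke \Cref{lem:poly-trees-preprocessing-touring-reduction}. This yields, in time $\BigO{{n_G}^{f(n_P)}}$, a family $\mathcal{F}$ of sets $\mathcal{S}$ of \PANormalized\ \PAOneFOneS instances. Both $\Size{\mathcal{F}}$ and each $\Size{\mathcal{S}}$ are bounded by ${n_G}^{f(n_P)}$, and $\Instance$ is positive if and only if some $\mathcal{S}\in\mathcal{F}$ consists solely of positive instances. The algorithm then iterates over all $\mathcal{S}\in\mathcal{F}$. For each instance $\Instance'\in\mathcal{S}$ it runs the dynamic program of \Cref{thm:poly-trees-dp}, which decides $\Instance'$ in time ${n_G}^{{n_P}^{\BigO{n_P}}}$. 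It accepts if and only if some $\mathcal{S}$ has every member accepted.

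The running time is the product of $\Size{\mathcal{F}}$, $\max\Size{\mathcal{S}}$, and the per-instance DP time. This is ${n_G}^{g(n_P)}$ for a computable $g$, so the algorithm is an \XP\ algorithm in $n_P$.

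Correctness requires nothing beyond the equivalence stated in \Cref{lem:poly-trees-preprocessing-touring-reduction} and the correctness of the DP. Soundness of the DP comes from \Cref{lem:poly-trees-dp-dp-no-free-pattern,lem:poly-trees-dp-dp-no-free-fixed-pattern}. These must be complemented by the trivial case in which the forced edge is matched to two fixed vertices; that case is excluded because any such realization lies entirely to the right of a free vertex and hence within $\prec_U$, which already avoids $P$ by validity (V4) of the final configurations. Completeness comes from \Cref{lem:poly-trees-dp-solution-implies-dp}.

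A solution order, not just a decision, can be recovered as well. For each instance in the successful $\mathcal{S}$, take the solving path in the configuration graph $D$. Then reassemble these orders by the concatenation constructions given in the backward directions of \Cref{lem:poly-trees-preprocessing-separation-isolated-components,lem:poly-trees-preprocessing-separation-hydra-attachments}.

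The main obstacle is checking that the hypotheses of \Cref{thm:poly-trees-dp} hold for every instance produced by the reduction. In particular, the normalized properties (N1)--(N4) must survive the composition of the preprocessing lemmas in the order they were applied. For example, fixing additional vertices in \Cref{lem:poly-trees-preprocessing-independent-right} must not re-introduce components disjoint from $U'$, and the sub-instances $\Instance\mid_B$ with their added isolated vertices must still satisfy (N1). I would handle this by reapplying \Cref{lem:poly-trees-preprocessing-separation-isolated-components} last. That step only splits off further \PAOneF instances on $M(P)$ for smaller trees, and these can be recursed on. Termination is guaranteed because each recursion strictly shrinks the graph, and the recursion adds only a polynomial factor.
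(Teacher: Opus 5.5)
Your proposal is correct and follows the paper's own route: the paper obtains the theorem by composing the Turing reduction of \Cref{lem:poly-trees-preprocessing-touring-reduction} with the dynamic program of \Cref{thm:poly-trees-dp}. Your explicit treatment of the case where the forced edge maps into $U$ (excluded via (V4)) fills in a detail the paper leaves implicit, while your closing re-normalization/recursion step is unnecessary, since \Cref{lem:poly-trees-preprocessing-touring-reduction} already asserts that it outputs \PANormalized\ \PAOneFOneS\ instances (and neither enlarging $U$ nor deleting hydra attachments can create components disjoint from $U$).
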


\subsection{An Extension to Left Separated Patterns}
\label{sec:poly-trees-extension} 
\Cref{thm:poly-trees} shows that we can solve \PAOneFOneS on forest in polynomial time for any constant size pattern $P$ where $m_P^+ \neq 1$ or the single forced edge is incident to the leftmost vertex in $\prec_P$.
In this section, we show that we can weaken the latter condition to what we call left-separated patterns, defined as follows.
\begin{definition}
    \label{def:poly-trees-preprocessing-left-separated}
    Let $P$ be a pattern with $E^+(P) = \{uv\}$ and $u \prec_P v$.
    We call $P$ \emph{left-separated} if there does not exist a forbidden edge $xy \in E^-(P)$ with $x \prec_P u \preceq_P y$.
\end{definition}

Recalling Subsection~\ref{sub:forests}, we can now formally define the notion of mixed patterns, which are the only ones for which our algorithm does not apply. To this end, let \emph{right-separated} be the symmetric notion to \Cref{def:poly-trees-preprocessing-left-separated}.
\begin{definition}
    \label{def:mixed-pattern}
    Let $P$ be a pattern.
    We call $P$ \emph{mixed} if $m_P^+ = 1$ and $P$ is neither left-separated nor right-separated.   
\end{definition}

Now, the final step towards our main result is a reduction from instances of \PA on forests with left-separated patterns to \PAOneF; this is handled in the following lemma.
\begin{lemma}
    \label{lem:poly-trees-left-sided}
    Let $\Instance = \InstanceLong$ be an instance of \PA where $G$ is a forest, $P$ is left-separated, and $E^+(P) = \{uv\}$, $u \prec_P v$.
    There exists a family $\mathcal{F}$ of instances of \PAOneF with the following properties:
    \begin{enumerate}
        \item we can construct $\mathcal{F}$ in $\BigO{{n_G}^{5n_P + 2}}$ time,
        \item $\Size{\mathcal{F}} \in \BigO{{n_G}^{2n_P} \cdot (2n_P)!}$,
        \item for every $(G', P') \in \mathcal{F}$ we have $P' = P[V(P) \setminus V_L(P)]$, and
        \item \Instance is a positive instance if and only if some instance $\Instance' \in \mathcal{F}$ is a positive instance.
    \end{enumerate}
\end{lemma}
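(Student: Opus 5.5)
Throughout, write $P' = P[V(P)\setminus V_L(P)]$ and $\ell = |V_L(P)|$. Since $P$ is left-separated, every forbidden edge with an endpoint in $V_L(P)$ has both endpoints in $V_L(P)$. Hence $P$ decomposes as $L(P)$, placed entirely before the vertices of $P'$, with no constraints between the two parts. A realization of $P$ in $(G,\prec_G)$ is therefore a realization of $L(P)$ followed by a disjoint realization of $P'$ that starts strictly after the last vertex used for $L(P)$. The plan is to guess a short, bounded prefix of a hypothetical solution: enough to reach the first point by which $L(P)$ has been realized. Beyond that point only $P'$ matters, and since the forced edge of $P'$ starts at its leftmost vertex, $P'$ is a valid \PAOneF\ pattern.

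\textbf{Guessing the prefix.} Fix a solution $\prec_G$ and let $\ell^*$ be the minimal prefix length such that $\prec_G\mid_{\ell^*}$ contains $L(P)$ (cf.\ \Cref{lem:poly-trees-preprocessing-pattern-on-left}). If no such prefix exists, then $G$ avoids $L(P)$. Since $L(P)$ has no forced edges, \Cref{lem:forests-no-forced-edge-no-instance} then gives $n_G < 2\ell \le 2n_P$, and we can brute force (this is where the $(2n_P)!$ factor comes from).

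Otherwise, I would argue that the prefix can be normalized to at most $2n_P$ vertices. Among any $2\ell$ vertices of a forest there is an independent set of size $\ell$ (\Cref{obs:forests-independent-set}), which realizes $L(P)$ because $L(P)$ has no forced edges (\Cref{obs:forests-independent-set-contains-pattern}). So any prefix of $2\ell$ vertices already contains $L(P)$, and hence $\ell^* \le 2\ell \le 2n_P$.

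The family $\mathcal{F}$ then consists of one element per guess of an ordered set $A$ of at most $2n_P$ vertices such that $A$ contains $L(P)$ with the last vertex of $A$ as its last realizing vertex, and no shorter prefix of $A$ contains $L(P)$. For each such $A$ we create an instance $(G_A, P')$ of \PAOneF. There are $n_G^{2n_P}$ choices of the vertex set and $(2n_P)!$ orders, which matches the claimed size of $\mathcal{F}$. Checking each guess by \Cref{lem:poly-trees-preprocessing-pattern-on-left} costs $n_G^{O(n_P)}$, giving the stated construction time.

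\textbf{Defining $G_A$ and proving equivalence.} For the correctness direction, if $\prec_G$ is a solution with prefix $A$, then $\prec_G$ restricted to $V(G)\setminus A$ must avoid $P'$. Indeed, any realization of $P'$ there starts after $A$ and combines with the realization of $L(P)$ inside $A$ to give $P$; forbidden edges between the two parts are irrelevant by left-separation.

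The difficulty is that \PAOneF\ has no fixed prefix. The vertices of $A$ may themselves participate in $P'$ (for instance as the left endpoint of the forced edge), and minimality of $A$ constrains where $P'$ may begin. I would take $G_A = G - A'$, where $A' = A$ minus its last vertex, and use an argument in the style of \Cref{lem:deletion_to_pattern}. The vertices of $A'$ are those that can never start a $P'$-occurrence that completes $P$: before the last vertex of $A$, $L(P)$ is not yet realized.

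For the converse direction, given a solution $\prec'$ of $(G_A, P')$, output $\prec_A \oplus \prec'$ with the last vertex of $A$ placed consistently. Any realization of $P$ in this order has its $L(P)$ part ending at or after the last vertex of $A$, so its $P'$ part lies inside $G_A$, which is a contradiction.

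I expect this bookkeeping around the last vertex of $A$, where $L(P)$ first becomes realized and $P'$ may still start, to be the main obstacle. If needed, I would instead keep all of $A$ but replace $P'$ by $P'$ together with enough information to force the $P'$ realization to lie strictly after $A$. For example, prepend $2\ell$ isolated pattern vertices in the spirit of \Cref{lem:deletion_to_pattern}, and verify that the resulting pattern is still of \PAOneF-type after reindexing.
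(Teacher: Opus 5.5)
\textbf{Gap: your choice of $G_A$ makes the forward direction false.} Your forward argument is correct for $G - A$: if $A$ is the minimal $L(P)$-containing prefix of a solution $\prec_G$, then $\prec_G\mid_{V(G)\setminus A}$ must avoid $P'$. You then switch to $G_A = G - A'$, keeping the last vertex $z$ of $A$. The worry behind this switch is unfounded.

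By minimality, every realization of $L(P)$ inside $A$ uses $z$. So in any order beginning with $A$, the $L(P)$-part of a realization of $P$ ends at or after $z$. The $P'$-part, which starts at the left endpoint of the forced edge, therefore lies strictly after $z$. Hence $z$ can never play a role in $P'$ when completing $P$. A solution of $(G,P)$ may nevertheless contain $P'$ through $z$, and then your instance $(G_A,P')$ is negative even though $(G,P)$ is positive.

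Concretely, take $P=(\{x,u,v\}, x\prec u\prec v, \{uv\}, \emptyset)$. This is left-separated, $L(P)$ is a single isolated vertex, and $P'$ is a single forced edge. Every guess has $|A|=1$ and $A'=\emptyset$, so $G_A=G$. Let $G$ be the path with edges $ca$ and $cb$. The order $c,a,b$ avoids $P$, because the forced edge would need $u\in\{a,b\}$ and $ab\notin E(G)$. Yet $G_A=G$ has an edge, so every order of it contains $P'$, and every instance in your family is negative.

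Your fallback does not rescue this. Prepending isolated vertices to $P'$ destroys the defining property of \PAOneF{} that the forced edge starts at the leftmost pattern vertex.

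\textbf{The fix is to delete all of $A$, which is exactly the paper's construction.} The paper enumerates ordered sets of $2n_{L(P)}$ vertices. It computes the minimal prefix length $\ell_\prec$ containing $L(P)$ via \Cref{lem:poly-trees-preprocessing-pattern-on-left}, and uses $G_\prec = G - V(\prec\mid_{\ell_\prec})$.

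The converse direction then works for $\prec_A\oplus\prec'$ with no bookkeeping around $z$. Minimality of $A$ forces the $L(P)$-part of any realization of $P$ to end at position at least $|A|$. So the whole $P'$-part lies inside $G-A$ and would be a copy of $P'$ in $\prec'$.

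Your explicit brute-force handling of $n_G < 2\,n_{L(P)}$ is a sensible addition; the paper's family is simply empty in that case. To fit the lemma's format, you would need to encode its outcome as a trivially positive or negative \PAOneF{} instance.
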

\begin{proof}
    We enumerate all $\binom{n_G}{2n_{L(P)}}$-many subsets $V' \subseteq V(G)$ of size $\Size{V'} = 2 n_{L(P)}$ and all their potential linear orders $\prec_{V'}$.
    Observe that $m_{L(P)}^+ = 0$ by definition.
    Thus, by \Cref{lem:forests-no-forced-edge-no-instance}, every such $\prec_{V'}$ contains the pattern $L(P)$; observe that $G[V']$ is a forest.
    We apply \Cref{lem:poly-trees-preprocessing-pattern-on-left} to every of the above-considered linear orders $\prec$  to compute the minimum value $\ell_{\prec}$ such that the first $\ell_{\prec}$ vertices in $\prec$ contain $L(P)$.
    All these $\ell_{\prec}$ are well defined as every $\prec$ contains $L(P)$.
    By \Cref{lem:poly-trees-preprocessing-pattern-on-left}, this takes $\BigO{{(2n_{L(P)})}^{n_P} \cdot {n_P}^2}$ time per linear order $\prec$ and there are $\BigO{{n_G}^{2 n_{L(P)}} \cdot (2n_{L(P)})!}$ such linear orders.
    As $n_{L(P)} \leq n_P$, this amounts to $\BigO{{n_G}^{2 n_P} \cdot (2n_P)! \cdot {(2n_P)}^{n_P} \cdot {n_P}^2}$ time overall, which is in $\BigO{{n_G}^{5 n_P + 2}}$.

    We now construct the sought-after family $\mathcal{F}$ by adding, for every linear order $\prec$, a new instance $(G_{\prec}, P')$, where $G_{\prec} = G[V(G) \setminus V(\prec\mid_{\ell_{\prec}})]$, i.e., the graph induced on all but the first $\ell_{\prec}$ vertices of $\prec$, and $P' = P[V(P) \setminus V_L(P)]$.
    Observe that $P'$ corresponds to the pattern $P$ after removing its left part.
    Note that $V_L(P)$ is independent of the remaining vertices $V(P) \setminus V_L(P)$ since $P$ is left-separated.
    The number of instances in $\mathcal{F}$ can be bounded by $\BigO{{n_G}^{2n_P} \cdot (2n_P)!}$ due to above reasoning.

    Up to now, we have established Properties (1.)--(3.) of the statement.
    It remains to show Property~(4.), i.e., that \Instance admits a solution if and only if at least one instance $\Instance' \in \mathcal{F}$ admits a solution.
    \proofsubparagraph{($\boldsymbol{\Rightarrow}$)}
    Assume that \Instance admits a solution $\prec_G$.
    Again, we use \Cref{lem:poly-trees-preprocessing-pattern-on-left} to compute the value $\ell_{\prec_G}$ such that the first $\ell_{\prec_G}$ vertices of $\prec_G$ contain the left part $L(P)$ of $P$.
    Since $m_{L(P)}^+ = 0$, $\ell_{\prec_G}$ must exist and be $\ell_{\prec_G} \leq 2\cdot n_{L(P)}\leq 2\cdot n_P$.
    As $\prec_G$ is a solution, this implies that the remaining part of $\prec_G$, let it be the vertices $X$, cannot contain the remaining part $P'$ of~$P$; recall that $P$ is left-separated.
    Therefore, $\prec_G\mid_X$ witnesses that the instance $(G[X], P')$ is a positive instance.
    Note that $(G[X], P') \in \mathcal{F}$ must hold since we consider all possible subsets of size $2n_{L(P)}$ and their linear orders $\prec$, in particular the one with which $\prec_G$ starts.
    Thus, $\mathcal{F}$ contains at least one positive instance. 

    \proofsubparagraph{($\boldsymbol{\Leftarrow}$)}
    Assume that there exists a positive instance $\Instance' = (G_{\prec}, P') \in \mathcal{F}$.
    Let $\prec'$ be the linear order witnessing that $\Instance'$ is a positive instance.
    Consider the value $\ell_{\prec}$.
    Note that the linear order $\prec$ used to construct $\Instance'$ consists of $2\cdot n_{L(P)}$ vertices instead of all vertices of $G_{\prec}$.
    Moreover, $G_{\prec}$, and thus the solution $\prec'$ of $\Instance'$, contains all but the first $\ell_{\prec}$ vertices of $\prec$.
    We construct a linear order $\prec_G$ of $G$'s vertices by concatenating $\prec\mid_{\ell_{\prec}}$ and $\prec'$.
    By above reasoning, $\prec_G$ is a well-defined total order of $V(G)$.
    We now argue that it is also a solution to \Instance and assume, towards a contradiction, that this is not the case.
    Let $X \subseteq V(G)$ be a subset of vertices such that $(\prec_G, X)$ realizes $P$.
    Moreover, let $w_{\ell}$ be the $\ell_{\prec}$th vertex in $\prec_G$ and $u^* \in X$ be the vertex mapped to the left endpoint of the single forced edge in $P$.
    By construction of $\prec_G$, we have $w_{\ell} \prec_G u^*$.
    Let $X_R$ be the vertices in $X$ mapped to vertices in the part of $P$ that corresponds to $P'$.
    That $w_{\ell} \prec_G u^*$ holds implies that for every $u^* \in X_R$ we have $u^* \in V(G_{\prec})$, i.e., is part of the instance $\Instance'$.
    Moreover, $\prec_G\mid_{X_R}$ occurs as a sub-order of $\prec'$.
    Thus, $\prec'$ would contain $P'$, which is a contradiction.
    Hence, $\prec_G$ cannot contain $P$ and \Instance is, therefore, a positive instance.
    Combining both directions establishes Property~(4.).
\end{proof}

Observe that it was crucial in the proof of \Cref{lem:poly-trees-left-sided} that there is no forbidden edge between vertices from the left and right part of the pattern.
Otherwise, we could not simply separate the pattern, and, consequently, the instances, into two independent sub-patterns.

Combining \Cref{lem:poly-trees-left-sided,thm:poly-trees} yields the main result of this section.
\thmforest*

\subsection{Fixing Two Sides of the Solution Makes the Problem \textsf{NP}-hard}
\label{sec:poly-trees-hardness}
In this section, we show that a straight-forward extension of the approach behind \Cref{thm:non-mixed}, i.e., where we fix vertices at the start and end of the sought-after solution, cannot lead to an efficient algorithm for mixed patterns.
On a technical level, we show that the natural extension of \PAOneFOneS to $2$ sides (defined below) is \NP-hard.
As a byproduct that may be of independent interest, we show that checking whether a forest admits a linear order of bandwidth at most two with a fixed start and end vertex is \NP-hard.

\probdef{\PAOneFTwoSLongUnderline~(\PAOneFTwoS)}{A graph $G$, a pattern $P$, two sets $U_L, U_R \subseteq V(G)$, $U_L \cap U_R = \emptyset$, and two total orders $\prec_{U_L}$ on $U_L$ and $\prec_{U_R}$ on $U_R$.}{Does there exist a total order $\prec_G$ of $V(G)$ that starts with $\prec_{U_L}$, ends with $\prec_{U_R}$, and avoids the pattern $P$?}

\thmonesidedhard*

We reduce from the strongly \NP-hard \textsc{Bin Packing} problem \cite{GareyJohnson}, which asks whether, given a finite set $U$ of items with sizes $s(u) \in \mathbb{Z}^+$ for each $u \in U$, a bin capacity $B \in \mathbb{Z}^+$, and an integer $K > 0$, the set $U$ can be partitioned into disjoint subsets $U_1, U_2, \ldots, U_K$ such that $\sum_{u \in U_i} s(u) \le B$ for all $i \in [K]$.

Fix an instance of \textsc{Bin Packing}.
We construct an instance of \PAOneFTwoSLong 
with input graph $G$,
pattern $P$,
first vertex $U_L = \{s\}$ (with the trivial total order $\prec_{U_L}$),
and final vertex $U_R = \{t\}$ (with the trivial total order $\prec_{U_R}$),
as follows.
Let the forbidden pattern $P \coloneqq ( \{a,b,c,d\}, a \prec b \prec c \prec d, \{ad\}, \{\})$, i.e., $P$ is the pattern characterizing the graphs of bandwidth at most two displayed in \cref{fig:one-sided-hardness}~(a).
To construct the input graph $G$, 
we begin with a path of length $B \cdot K + 2\cdot(K-1)$, called the \emph{spine}, with first vertex $s$ and last vertex $t$.
For $1 \leq i \leq K-1$, call the $i \cdot(B+2)$'th vertex of the path the \emph{separator vertex $s_i$}.
Attach two pendant vertices to each separator vertex.
We call the $K$ subpaths between consecutive vertices in the sequence $s, s_1, s_2, \ldots, s_{K-1}, t$ (including both endpoints of each subpath) the \emph{bin paths}. Here, we call the subpath from $s$ to $s_1$ the first bin path, and number the remaining bin paths in sequence.
Finally, we construct the \emph{item gadgets}. For each item $u \in U$, add a disjoint path of length $s(u)-1$ (i.e., a path with $s(u)$ vertices) to $G$.
See \cref{fig:one-sided-hardness} for an example of the reduction.
As \textsc{Bin Packing} is strongly \NP-hard, the construction is feasible in polynomial time. To establish \cref{thm:one-sided-hard}, it remains to show that the reduction is correct.

\begin{figure}
    \centering
    \includegraphics{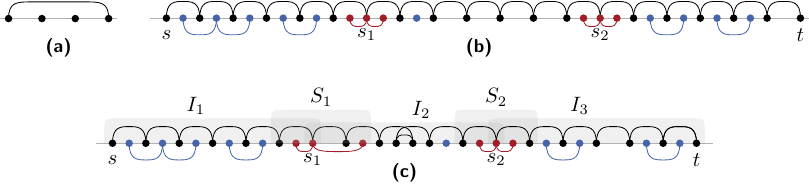}
    \caption{\textbf{\textsf{(a)}} The bandwidth-2 pattern $P$.
    \textbf{\textsf{(b)}} A solution for a \PAOneFTwoS instance constructed as described in the $(\Rightarrow)$-direction of \cref{lem:one-sided-hardness-correctness}, for a \textsc{Bin Packing} instance with $K=3$ bins, per-bin capacity $B=5$, and items of sizes $3,2,1,2,2$, partitioned into the three bins
    $(3,2)$, $(1)$, and $(2,2)$.  Item gadgets are drawn in blue, the spine in black (except separator vertices), and separator vertices and their pendants in red.
    \textbf{\textsf{(c)}} A solution for a \PAOneFTwoS instance corresponding to the same \textsc{Bin Packing} instance as before, with the intervals defined in the $(\Leftarrow)$-direction of \cref{lem:one-sided-hardness-correctness} marked in gray.
    }
    \label{fig:one-sided-hardness}
\end{figure}

\begin{lemma}\label{lem:one-sided-hardness-correctness}
    The \textsc{Bin Packing} instance is positive if and only if the instance of \PAOneFTwoSLong is.
\end{lemma}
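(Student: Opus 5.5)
We prove the two directions separately, using the fact that a total order $\prec$ avoids $P=(\{a,b,c,d\},a\prec b\prec c\prec d,\{ad\},\{\})$ if and only if every edge of $G$ joins vertices at positions differing by at most two. Throughout, let $n=|V(G)|$, and identify vertices with their positions $1,\dots,n$ in $\prec$, so that $s$ is at position $1$ and $t$ at position $n$.

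\emph{($\Rightarrow$).} Given a packing $U_1,\dots,U_K$, we build the order bin path by bin path, following \cref{fig:one-sided-hardness}(b). Inside the $i$-th bin path we walk along the spine. Between consecutive spine vertices we insert one vertex of an item gadget of $U_i$. The items of $U_i$ are placed one after another, each path laid out in its natural order. This way every item edge spans exactly one spine vertex, and every spine edge spans at most one item vertex. Since $\sum_{u\in U_i}s(u)\le B$ and a bin path offers at least $B$ such gaps, the items fit. Unused gaps are left empty, so those spine vertices are consecutive. Around a separator $s_i$ with spine neighbours $x,y$ we use the local order $x,\ell_1,s_i,\ell_2,y$, where $\ell_1,\ell_2$ are the pendants of $s_i$; all five edges then have length at most two. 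I will check the exact gap count against the spine length $BK+2(K-1)$ when writing this up.

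\emph{($\Leftarrow$).} This direction carries the substance. The plan has three steps.

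First, we show that the spine is essentially monotone. Every cut between positions $p$ and $p+1$ is crossed only by edges whose endpoints lie in $\{p-1,p,p+1,p+2\}$, so at most three edges cross it. The spine runs from position $1$ to position $n$, so it crosses every cut an odd number of times, that is, once or three times. If it crosses some cut three times, all three crossing edges belong to the spine and use up the four positions around the cut. We then want to show that this forces the spine to revisit positions, contradicting that it is a path, or that it leaves some non-spine vertex with no admissible position. Hence every cut is crossed exactly once by the spine, and every non-spine vertex sits at a position $p$ with spine vertices at $p-1$ and $p+1$, skipped by a spine edge of length two. \textbf{This step is the main obstacle}, and I would attack it by this local case analysis on cuts crossed three times.

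Second, we show that the separators act as barriers. A separator $s_i$ has degree four, so its neighbours occupy exactly the positions $s_i\pm1$ and $s_i\pm2$. By the first step these are the two spine neighbours of $s_i$ and its two pendants, in the pattern $x,\ell_1,s_i,\ell_2,y$. An item path is connected, and consecutive item vertices are at distance exactly two with a spine vertex between them. Such a path could only move from one bin region to the next by passing through one of the positions around $s_i$, and all of those are already taken. So every item gadget lies entirely within the region of a single bin path.

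Third, we count. The non-spine positions inside the $i$-th bin path form at most $B$ gaps between consecutive spine vertices, each holding at most one vertex. Letting $U_i$ be the set of items whose gadgets lie in the $i$-th region, we get $\sum_{u\in U_i}s(u)\le B$, which is a valid packing.
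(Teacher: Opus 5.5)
\textbf{Verdict: genuine gap in the ($\Leftarrow$) direction.} Your ($\Rightarrow$) direction is the paper's construction and is fine. The exact count is $B$ free gaps per bin once pendant gaps are subtracted: the first and last bin paths have $B+1$ edges and one pendant gap, the middle ones $B+2$ edges and two.

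The gap is Step~1 of ($\Leftarrow$). Your claim that the spine crosses every cut exactly once is false, so no local case analysis can turn a triple crossing into a contradiction. Take an instance where some bin has slack at least three, and take your ($\Rightarrow$) layout. That bin contains four spine vertices $w_1,w_2,w_3,w_4$ at consecutive positions. Reordering them as $w_1,w_3,w_2,w_4$ keeps every edge of length at most two, with $s$ still first and $t$ still last. Yet the cut between $w_3$ and $w_2$ is crossed by the three spine edges $w_1w_2$, $w_2w_3$, $w_3w_4$. The exact local pattern $x,\ell_1,s_i,\ell_2,y$ in Step~2 is not forced either: with $x'$ the other spine neighbour of $x$, the order $x',\ell_1,x,s_i,\ell_2,y$ is valid. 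Your bin regions are derived from this global monotonicity, and both the barrier argument and the gap count in Step~3 depend on them. So as planned, ($\Leftarrow$) is unsupported.

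\textbf{Repair (this is what the paper does): argue only locally at the separators.}

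\emph{Non-spine vertices.} The part of Step~1 about non-spine vertices is true and follows from your parity count alone. For a non-spine vertex at position $p$, the only spine edge that can cross the cut between $p-1$ and $p$ is the one joining positions $p-1$ and $p+1$.

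\emph{Sides of a separator.} Let $s_i$ sit at position $z$. Since $s_i$ has degree four, $N[s_i]$ fills positions $z-2,\dots,z+2$. Suppose the spine neighbour $x$ on the $s$-side were right of $s_i$. The spine subpath from $x$ to $s$ (at position $1$) contains no other vertex of $N[s_i]$. It would therefore need an edge from a position $\ge z+1$ to one $\le z-3$, which is too long. The same argument works for $y$ and $t$. Hence $\{x,\ell_1\}\prec s_i\prec\{y,\ell_2\}$, with the order inside each side arbitrary.

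\emph{Regions.} By the previous step, no edge not incident to $s_i$ spans $s_i$. This gives $s_1\prec\dots\prec s_{K-1}$. It also puts bin path $i$ inside the interval $[s_{i-1},s_i]$, and each item gadget inside a single such interval.

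\emph{Counting.} Bin path $i$ joins the endpoints of its interval, so it spans every pendant and item vertex inside that interval. Each of its edges spans at most one vertex. It has $B$ plus (number of pendants in the interval) edges. Hence $\sum_{u\in U_i}s(u)\le B$.
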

\begin{proof}
    $(\Rightarrow):$
    Let $U_1, \dots, U_K$ be a solution to the bin packing instance.
    Let $\prec$ be the total order of $V(G)$ obtained as follows.
    First, insert the vertices of the spine in order of increasing distance from $s$. Second, insert the pendant vertices of each separator vertex immediately before and after the vertex.
    Third, for each $U_i$, take any sequence of the items and, for each item, place its gadget's vertices contiguously in path order; then insert the resulting concatenated sequence into the
     positions in $\prec$ between consecutive vertices of bin path $i$.
    As by construction, each bin path has length $B + 1$ for the first and last path and $B + 2$ otherwise, there are enough free spots (accounting for the pendant vertices) to carry out the construction. See \cref{fig:one-sided-hardness}~(b).
    It is easy to verify that in the constructed order, no edge spans more than one vertex, i.e., $\prec$ certifies a bandwidth of at most two and hence avoids $P$.
    
     $(\Leftarrow):$
     Let $\prec$ be a total order of $V(G)$ avoiding $P$ where $s$ is the first element in $\prec$ and $t$ is the last.
     First, we make some observations about the structure of the solution.
     Let $s_i$ be a separator vertex, and let $a,b$ be its pendant vertices with $a \prec b$.
     Further, let $s_i^{-1}$ be the last vertex of bin path $i$, and let $s_i^{+1}$ be the first vertex of bin path $i+1$.
     Note that $N_G(s_i) = \{ s_i^{-1},s_i^{+1},a,b \}$.
     Let $S_i$ be the interval in $\prec$ consisting of the two direct predecessors of $s_i$, then $s_i$, then the two direct successors of $s_i$, and set $S_i \coloneqq (x,y,z,\alpha,\beta)$.
     We claim that $S_i$ is a permutation of $N_G(s_i) \cup \{s_i\}$ with $\{ s_i^{-1},a\} \prec s_i \prec \{ s_i^{+1},b\}$.
     The first part of the statement (and that $s_i$ is the central element of $S_i$, i.e., $z = s_i$) follows immediately from $(G,\prec)$ avoiding the bandwidth-2 pattern $P$,
     i.e., the endpoints of every edge are at most two positions apart in $\prec$, and from $|N_G(s_i)|=4$.
     For the second part it suffices to derive $s_i^{-1} \prec s_i \prec s_i^{+1}$. Suppose $s_i \prec s_i^{-1}$ (resp.\ $s_i^{+1} \prec s_i$ ).
     Following the vertices of the spine from $s_i^{-1}$, we must eventually reach vertex $s$ occupying the first position (resp. $t$ occupying the last position) in $\prec$. But this means there is a spine edge ``jumping over'' both $x$ and $y$ (resp.\ $\alpha$ and $\beta$), inducing the forbidden pattern $P$, a contradiction.

     Suppose there is a spine edge $xy$ with $x \prec y$ that ``jumps'' over a separator vertex $s_i$, i.e., $x \prec s_i \prec y$.
     If both $x,y \in S_i$, by the above observation the edge is adjacent to $s_i$, a contradiction.
     If otherwise, w.l.o.g., $x \in S_i$ and $y \not\in S_i$, we have $x \prec s_i$, and since by the above observation, $s_i$ is the central element of $S_i$, $x$ and $y$ are at least four positions apart in~$\prec$, a contradiction.
     Similarly, there is no item-gadget edge $xy$ with $x \prec s_i \prec y$:
     From the above, we have
     $\{x,y\} \cap S_i = \emptyset$, hence $xy$ and any two vertices of $S_i$ induce $P$, a contradiction.

     Let $(I_j)_{j\in [K] }$ be the set of intervals in $\prec$ with $I_1 = [s,s_1], I_2=[s_1,s_2], \dots, I_K = [s_{K-1},t]$.
     It follows that for each $i \in [K]$, each vertex of bin path $i$
     is contained in $I_i$,
     and for each item gadget, there is exactly one $I_i$ containing all of the gadget's vertices.
     See \cref{fig:one-sided-hardness}~(c) for an example annotated with both types of intervals.

     Finally, we are able to ``read off'' a bin packing.
     Let $U_1, \dots, U_K$ be the partition of $U$, where $u \in U_i$ if{}f the vertices of $u$'s item gadget are in $I_i$. By the above, the partition is well-defined.
     We claim that $U_1, \dots, U_K$ is a solution for the bin packing instance, i.e., it remains to show $\sum_{u\in U_i} s(u) \leq B$ for each $i \in [K]$.
     Let $i \in [K]$.
     Observe that, since the spine connects $s$ and $t$,
     for each item-gadget or pendant vertex $v$ in $I_i$, there is an edge $xy$ of bin path $i$ ``above'' $v$, i.e., $x \prec v \prec y$.
     Further, observe that bin path $i$ has length $B$ plus the number of pendant vertices in $I_i$,
     and contains $\sum_{u \in U_i}s(u)$ many item-gadget vertices by construction.
     Hence, if $\sum_{u \in U_i}s(u) > B$, by the pigeonhole principle, there is an edge $xy$ ``under'' which there are two vertices $\alpha, \beta$, i.e., $x \prec \alpha \prec \beta \prec y$.
     But this yields precisely the forbidden pattern $P$, a contradiction.
\end{proof}

As a corollary from \Cref{thm:one-sided-hard} we obtain:
\corbandwidthhard*

Finally, we lift \Cref{thm:one-sided-hard} to connected graphs at the cost of a slightly more complex pattern and two fixed vertices on the left side. 
\begin{theorem}\label{thm:one-sided-hard-on-trees}
    \PAOneFTwoSLong is \NP-hard even when the input graph $G$ is a tree, $n_P=5$, $m_P = m_P^+ = 1$, $\Size{U_L} = 2$ and $\Size{U_R} = 1$.
\end{theorem}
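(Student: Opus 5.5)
We reduce again from \textsc{Bin Packing}, reusing the construction behind \Cref{thm:one-sided-hard}. The only obstacle to connectivity is the set of item gadgets, which are disjoint paths. We connect everything to a new \emph{hub} vertex $h$ that is placed first: $U_L = (h, s)$ in this order, $U_R = \{t\}$. We attach each item path to $h$ by an edge from $h$ to one endpoint of the path, so $G$ becomes a tree. The pattern must tolerate the many long edges at $h$ while still enforcing bandwidth two on everything else. We use the $5$-vertex pattern $P = (\{z,a,b,c,d\}, z \prec a \prec b \prec c \prec d, \{ad\}, \{\})$, i.e., the bandwidth-$2$ pattern preceded by one isolated vertex. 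By \Cref{lem:deletion_to_pattern}-style reasoning, the vertex $z$ can only be matched to a vertex strictly left of the forced edge's left endpoint. So an order starting with $h$ avoids $P$ iff the order restricted to $V(G)\setminus\{h\}$ has no edge spanning two vertices, and no edge incident to $h$ can play the role of $ad$ because nothing lies left of $h$. The pattern has $n_P=5$ and $m_P=m_P^+=1$, as required.

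Forward direction: take the order built in the $(\Rightarrow)$ direction of \Cref{lem:one-sided-hardness-correctness} and prepend $h$. Every edge not incident to $h$ spans at most one vertex. Any realization of $P$ would use such an edge as $ad$ with two vertices beneath it, which is impossible. Edges at $h$ cannot be $ad$, since $z$ would need a position before $h$.

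Backward direction: let $\prec$ avoid $P$ with $h$ first, then $s$, and $t$ last. The key observation is that any edge $xy$ with $x,y\neq h$ and $s \preceq x$ admits $h$ as the image of $z$. Hence every such edge spans at most one vertex. Deleting $h$ therefore gives an order of $G-h$, which is exactly the forest from \Cref{thm:one-sided-hard}, with $s$ first, $t$ last, and bandwidth at most two. We then invoke the $(\Leftarrow)$ argument of \Cref{lem:one-sided-hardness-correctness} verbatim to read off a bin packing.

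The main point to verify carefully is the claim about the role of $z$. Since $P$ has no forbidden edges, $h$ may serve as $z$ regardless of its adjacencies. Consequently the forced edge can be matched by any non-$h$ edge, all of which lie right of $h$. Conversely, $h$ cannot be an endpoint of the matched forced edge, because $z$ has to precede $a$. Once this is checked, both directions reduce cleanly to the forest case, and the construction is clearly polynomial.
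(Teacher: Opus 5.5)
Your argument is the paper's proof: a hub vertex fixed first together with $s$, a single isolated vertex prepended to the bandwidth-$2$ pattern, and both directions reduced to the forest case of \Cref{lem:one-sided-hardness-correctness}. Your reasoning about $z$ and $h$ is correct, and it is in fact more explicit than the paper's on why $h$ can never be an endpoint of the matched forced edge.

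There is one concrete slip in the construction. You join $h$ only to one endpoint of each item path, so the spine and its pendant vertices remain a separate component, and $G$ is not yet a tree. You also need an edge from $h$ to the spine, for example $hs$ as in the paper. This changes neither direction of your proof, because no edge incident to $h$ can play the forced edge $ad$.
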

\begin{proof}
    Carry out the same construction as before, but add a new vertex $u$ adjacent to one vertex in each component (say to $s$ and to one endpoint of each item gadget), so the resulting graph is a tree. We require that the ordering begins with $u, s$ and must (still) end with $t$. Further, add an isolated vertex before the other vertices in $P$ to obtain the new forbidden pattern $P'$.
    Call the original instance of \PAOneFTwoS $I$ and the new $I'$.
    We claim that $I$ and $I'$ are equivalent.
    Suppose $I$ admits a solution $\prec$.
    Then the order $\prec'$ obtained by placing $u$ in front of $\prec$ is a solution for $I'$: if $\prec'$ contained $P'$, then either the added isolated vertex is $u$, in which case removing it yields a copy of $P$ in $\prec$, or the copy of $P'$ avoids $u$, in which case its four original vertices already form a copy of $P$ in $\prec$.  Either case contradicts that $\prec$ avoids $P$.
    Conversely, suppose $I'$ admits a solution $\prec'$.
    Removing $u$ from $\prec'$ yields an order $\prec$ of $V(G)$ that avoids $P$, since any occurrence of $P$ in $\prec$ together with $u$ would form $P'$ in $\prec'$. Thus $I$ is a yes-instance.
\end{proof}

\section{Concluding Remarks}
\label{sec:conclusion}
In this paper, we initiate a parameterized investigation of the complexity of \PA. Due to the generality of the problem, one cannot hope to achieve tractability under standard graph parameters such as treewidth as these are ruled out already w.r.t.\ very simple patterns. At the same time, even after our most recent results, \PA\ proves to be a challenging problem for which several open questions remain:

\begin{enumerate}
	\item Is \PA\ in \XP\ w.r.t.\ $n_P$ on forests? The results in \Cref{sec:to:poly-trees} answer this affirmatively for all but a single case---however, the lower bound provided in Theorem~\ref{thm:one-sided-hard} can be seen as a strong indication of intractability for the remaining non-left separated case (cf.\ Definition~\ref{def:poly-trees-preprocessing-left-separated}).
	\item What is the complexity of \PA\ w.r.t.\ $n_P$ plus the treedepth of $G$? Here, a natural first step would be to settle the complexity for the rainbow patterns, i.e., to settle the \textsc{Queue Number} problem w.r.t.\ the treedepth of $G$~\cite{bgmn-paql-22}.
	\item What is the complexity of \PA\ w.r.t.\ $m_P$ plus the vertex integrity of $G$? This once again is tied to settling a special case first: the parameterized complexity of bandwidth w.r.t.\ the vertex integrity is explicitly noted as open in the recent work targeting that problem~\cite{GKK+.BPC.2025}.
\end{enumerate}

\bibliography{references}

\end{document}